\documentclass[acmsmall,screen,review=false]{acmart}

\AtBeginDocument{%
  \providecommand\BibTeX{{%
    \normalfont B\kern-0.5em{\scshape i\kern-0.25em b}\kern-0.8em\TeX}}}

\setcopyright{acmcopyright}
\copyrightyear{0000}
\acmYear{0000}
\acmDOI{00.0000/0000000.0000000}

\acmJournal{TOCL}
\acmVolume{00}
\acmNumber{0}
\acmArticle{000}
\acmMonth{0}




\usepackage{booktabs}   
\usepackage{subcaption} 

\usepackage{setspace}

\usepackage{multicol,multicap}

\usepackage{amsmath,amssymb,amsthm,graphicx}


\usepackage{mathrsfs}
\usepackage{stmaryrd}

\usepackage{array}
\usepackage{booktabs}
\usepackage{multirow}

\usepackage{extarrows}
%

\UseRawInputEncoding


\newcommand{\Cc}{\mathbb{C}}
\newcommand{\Dd}{\mathcal{D}}
\newcommand{\Ee}{\mathbb{E}}
\newcommand{\Ff}{\mathcal{F}}

\newcommand{\Pp}{\mathcal{P}}

\newcommand{\Tt}{\mathcal{T}}

\newcommand{\Zz}{\mathbb{Z}}




\newcommand{\ket}[1]{|#1\rangle}
\newcommand{\bra}[1]{\langle#1|}
\newcommand{\inprod}[2]{\langle#1|#2\rangle}
\newcommand{\mixprod}[3]{\bra{#1}#2\ket{#3}}

\newcommand{\outprod}[2]{\ket{#1}\bra{#2}}
\newcommand{\leng}[1]{||#1||}
\newcommand{\Boolean}{\mathbf{Bool}}
\newcommand{\integer}{\mathbf{Int}}
\newcommand{\SKIP}{\mathbf{skip}}

\newcommand{\ifStat}[1]{\mathbf{if}\ #1\ \mathbf{fi}}
\newcommand{\whileStat}[2]{\mathbf{while}\ #1\ \mathbf{do}\ #2\ \mathbf{od}}
\newcommand{\WHILE}{\mathbf{while}}

\newcommand{\procedure}{\mathbf{Proc}}
\newcommand{\recDec}[2]{\procedure\ #1 \colon #2 }
\newcommand{\CALL}{\mathbf{call}}
\newcommand{\RELEASE}[1]{\mathbf{Rel}\ #1}
\newcommand{\lst}[1]{\bar{#1}}
\newcommand{\sem}[1]{\llbracket #1 \rrbracket}
\newcommand{\pcor}[3]{ \{ #1 \}\, #2\, \{ #3 \} }
\newcommand{\apcor}[4]{ #1\{ #2 #1\}\, #3\, #1\{ #4 #1\} }
\newcommand{\tcor}[3]{ \langle #1 \rangle\, #2\, \langle #3 \rangle }
\newcommand{\atcor}[4]{ #1\langle #2 #1\rangle\, #3\, #1\langle #4 #1\rangle }

\newcommand{\LOCAL}[1]{\mathbf{Loc}\ #1}
\newcommand{\IF}{\mathbf{if}}

\newcommand{\bottom}{\mathbf{bot}}
\newcommand{\qOperation}

\newcommand{\trueOrder}{{\mathbb I}_{\mathit \sqsubseteq}}
\newcommand{\trueEquality}{{\mathbb I}_{\mathit =}}





\newcommand\Hh{\mathcal{H}}

\newcommand{\hide}[1]{ }


\newcommand{\calB}{\mathcal{B}}
\newcommand{\calC}{\mathcal{C}}
\newcommand{\calD}{\mathcal{D}}
\newcommand{\calE}{\mathcal{E}}
\newcommand{\calF}{\mathcal{F}}
\newcommand{\calG}{\mathcal{G}}

\newcommand{\calO}{\mathcal{O}}
\newcommand{\calR}{\mathcal{R}}

\theoremstyle{definition} \newtheorem{example}{Example}[section]

\theoremstyle{plain}
\newtheorem{lemma}{Lemma}[section]
\newtheorem{proposition}{Proposition}[section]
\newtheorem{theorem}{Theorem}[section]

\newtheorem{claim}{Claim}
\newtheorem*{claim*}{Claim}

\theoremstyle{definition}
\newtheorem{definition}{Definition}[section]
\newtheorem{remark}{Remark}[section]

\makeatletter

\newcommand{\Rmnum}[1]{\expandafter\@slowromancap\romannumeral #1@}
\makeatother

\newcommand{\assnequal}{:=}
\newcommand{\starequal}{\; {\ast}{=}\; }

\newcommand{\qPL}{\mathit{qPL}}
\newcommand{\EqPL}{\mathit{EqPL}}
\newcommand{\qRP}{\mathit{RqPL}}

\newcommand{\EqRP}{\mathit{eRqPL}}
\newcommand{\Var}{\mathit{Var}}

\newcommand{\qSearch}{\mathit{qSearch}}
\newcommand{\qSearchDag}{\mathit{qSearch\_dag}}

\newcommand{\WP}{\mathit{WP}}
\newcommand{\WLP}{\mathit{WLP}}

\newcommand{\fwp}{\mathit{wp}}
\newcommand{\fwlp}{\mathit{wlp}}
\newcommand{\fxp}{\mathit{xp}}

\newcommand{\reVdash}[1]{\vdash_{\mathit{#1}}}
\newcommand{\qPD}{\mathit{qPD}}
\newcommand{\qBase}{\mathit{qBS}}
\newcommand{\qBE}{\mathit{qBE}}

\newcommand{\qPP}{\mathit{qPP}}
\newcommand{\bfH}{\mathbf{H}}

\newcommand{\qTP}{\mathit{qTP}}

\newcommand{\RFS}{\mathcal{RFS}}
\newcommand{\RQFS}{\mathit{RQFS}}

\newcommand{\itY}{\mathit{Y}}
\newcommand{\calX}{\mathcal{X}}
\newcommand{\bbX}{\mathbb{X}}
\newcommand{\bbY}{\mathbb{Y}}
\newcommand{\Prem}{\mathit{Prem}}
\newcommand{\voutprod}[3]{\ket{#1}_{#2}\bra{#3}}
\newcommand{\vinprod}[3]{\bra{#1}{#2}\ket{#3}}

\newcommand{\vpcor}[3]{ \begin{array}{c}
                          \{ #1 \} \\
                          #2 \\
                          \{ #3 \}
                        \end{array}
 }

\newcommand{\avpcor}[4]{ \begin{array}{c}
                          #1 \{ #2 #1 \} \\
                          #3 \\
                          #1 \{ #4 #1 \}
                        \end{array}
 }

\newcommand{\avtcor}[4]{ \begin{array}{c}
                          #1 \langle #2 #1 \rangle \\
                          #3 \\
                          #1 \langle #4 #1 \rangle
                        \end{array}
 }

 \newcommand{\proc}{\mathit{proc}}
 \newcommand{\tr}{\mathit{tr}}
 \newcommand{\QPL}{\mathrm{QPL}}
 \newcommand{\type}{\it type}

 \newcommand{\Alice}{\mathit{Alice}}
 \newcommand{\Bob}{\mathit{Bob}}

\newcommand{\PDOP}{\mathrm{PDO}}
 \newcommand{\QOP}{\mathrm{QOP}}

\newcommand{\QPRED}{\mathrm{QPRD}}
 \newcommand{\PQPT}{\mathrm{PQPT}}

\newcommand{\Params}[1]{\mathit{Prmt}(#1)}
  \newcommand{\RQMC}{\mathrm{RQMC}}
  \newcommand{\toy}{\mathit toy}
\newcommand{\MainProgram}{\mathit main}

  \newcommand{\pRule}{\rm{Rp}}
  \newcommand{\tRule}{\rm{Rt}}

\begin{document}

\title{Reasoning about Recursive Quantum Programs}

\subtitle{A new assertion logic for verifying quantum programs with probabilistic control}   


\author{Zhaowei Xu}
\affiliation{%
  \institution{Institute of Software, Chinese Academy of Sciences}
  \city{Haidian}
  \state{Beijing}
  \country{China}}
\affiliation{
  \institution{Laboratoire de Recherche en Informatique, Universit\'{e} Paris-Saclay}
  \city{Orsay}
  \country{France}}
\email{zhaowei@lri.fr}

\author{Mingsheng Ying}
\affiliation{%
  \institution{Institute of Software, Chinese Academy of Sciences}
  \city{Haidian}
  \state{Beijing}
  \country{China}}
\affiliation{%
  \institution{University of Technology Sydney}
  \country{Australia}}
\affiliation{%
  \institution{Tsinghua University}
  \city{Haidian}
  \state{Beijing}
  \country{China}}
\email{mingshengying@gmail.com}

\author{Beno\^{\i}t Valiron}
\affiliation{%
  \institution{\'{E}cole CentraleSup\'{e}lec}
  \city{Orsay}
  \country{France}}
\affiliation{
  \institution{Laboratoire de Recherche en Informatique, Universit\'{e} Paris-Saclay}
  \city{Orsay}
  \country{France}}
\email{benoit.valiron@lri.fr}

\renewcommand{\shortauthors}{Xu, et al.}

\begin{abstract}
Most modern (classical) programming languages support recursion.
Recursion has also been successfully applied to the design of several quantum algorithms
and introduced in a couple of quantum programming languages.
So, it can be expected that recursion will become one of the fundamental paradigms of quantum programming.
Several program logics have been developed for verification of quantum $\WHILE$-programs.
However, there are as yet no general methods for reasoning about (mutual) recursive procedures
and ancilla quantum data structure in quantum computing (with measurement).
We fill the gap in this paper by proposing a parameterized quantum assertion logic
and, based on which, designing a quantum Hoare logic
for verifying parameterized recursive quantum programs with ancilla data and probabilistic control.
The quantum Hoare logic can be used to prove partial, total, and even probabilistic correctness
(by reducing to total correctness) of those quantum programs.
In particular, two counterexamples for illustrating
incompleteness of non-parameterized assertions in verifying recursive procedures,
and, one counterexample for showing the failure of reasoning with exact probabilities based on partial correctness,
are constructed.
The effectiveness of our logic is shown by three main examples ---
recursive quantum Markov chain (with probabilistic control), fixed-point Grover's search, and recursive quantum Fourier sampling.
\end{abstract}

\begin{CCSXML}
<ccs2012>
 <concept>
  <concept_id>10010520.10010553.10010562</concept_id>
  <concept_desc>Computer systems organization~Embedded systems</concept_desc>
  <concept_significance>500</concept_significance>
 </concept>
 <concept>
  <concept_id>10010520.10010575.10010755</concept_id>
  <concept_desc>Computer systems organization~Redundancy</concept_desc>
  <concept_significance>300</concept_significance>
 </concept>
 <concept>
  <concept_id>10010520.10010553.10010554</concept_id>
  <concept_desc>Computer systems organization~Robotics</concept_desc>
  <concept_significance>100</concept_significance>
 </concept>
 <concept>
  <concept_id>10003033.10003083.10003095</concept_id>
  <concept_desc>Networks~Network reliability</concept_desc>
  <concept_significance>100</concept_significance>
 </concept>
</ccs2012>
\end{CCSXML}

\ccsdesc[500]{Computer systems organization~Embedded systems}
\ccsdesc[300]{Computer systems organization~Redundancy}
\ccsdesc{Computer systems organization~Robotics}
\ccsdesc[100]{Networks~Network reliability}

\keywords{recursive quantum programming, quantum variable localization, quantum assertion logic, program verification,
probabilistic reasoning}

\maketitle

\section{Introduction}

\subsection{Background and motivation}

Quantum computation nowadays has become a hot topic in computer science.
One of the fundamental incentives of this research direction is
to have successfully designed several quantum algorithms, particularly,
Shor's algorithm \cite{shor1994algorithms} and Grover's search algorithm \cite{grover1996fast},
which by employing the intriguing and unnatural effects of quantum mechanics,
e.g. superposition and entanglement,
can obtain significant computational advantages.
The design of quantum algorithms is mainly based on the slogan ``quantum data and classical control",
that is, that the data could be superposed and even entangled,
which can be manipulated by basic quantum operations --- unitary evolution and measurement,
but the high-level control is still classical
(either deterministic or probabilistic, e.g. case, loops, etc).
(For recent discussions about quantum control,
i.e. superposition of quantum programs or superposition of quantum processes,
see, e.g., \cite{BadescuP15}, \cite{chiribella2012perfect,chiribella2013quantum},
and Chaps. 6 and 7 of  \cite{ying2016foundations}.)

\paragraph{Quantum programming with recursion.}
Classical recursion, as a high-level control structure,
has been applied to quantum algorithm design,
and brought substantial advantages to quantum computation.
Instead of speaking about quantum algorithms with while-loop control
(as examples of tail recursion, say,
Shor's algorithm \cite{shor1994algorithms} and Grover's search algorithm \cite{grover1996fast}),
we shall exemplify quantum algorithms with general recursion.
The first representative is Grover's fixed-point search algorithm \cite{grover2005fixed,yoder2014fixed},
which, by applying (mutual) recursion,
provides his famous/popular original search algorithm \cite{grover1996fast} with an advantage
--- converging monotonically to the target state.
As another typical example,
recursive quantum Fourier sampling \cite{bernstein1997quantum} requires exponentially fewer queries than the classical one,
and has extensive applications in research on quantum complexity theory
(cf. the Introduction of \cite{mckague2012interactive}).
This quantum algorithm is described by a recursive procedure with pointer passing and ancilla qubits.
As a third example, recursive quantum Markov chain is
a quantum extension of Etessami and Yannakakis's Recursive Markov chains \cite{etessami2009recursive},
and can be used to simulate a multi-player game with probabilistic control \cite{feng2013reachability}
(to be the running example of this paper).

Implementation of quantum algorithms, i.e. quantum programming,
has been extensively investigated for the past two decades (also following the slogan ``quantum data and classical control"),
including both high-level (imperative) and low-level (functional) programming languages
and their semantics \cite{altenkirch2005functional,omer2003structured,sabry2003modeling,sanders2000quantum,Selinger2004},
as surveyed in \cite{Selinger2004, gay2006quantum, ying2016foundations}.
In particular,
recursive procedures with pointer passing have already been introduced by Selinger
in his high-level quantum programming language $\QPL$ \cite{Selinger2004}.
The literature \cite{ying2016foundations} defined
a quantum $\WHILE$-language with recursion and local variables (to describe ancilla quantum data).
In the last few years,
a number of mature low-level quantum programming languages have been developed,
e.g., Quipper \cite{green2013quipper}, Scaffold \cite{abhari2012scaffold},
LIQUi$|\rangle$ \cite{wecker2014liqui}, Q$\#$ \cite{svore2018q}, and QWIRE \cite{paykin2017qwire}.

\paragraph{Verification of quantum programs.}
In view of counter-intuitiveness of quantum effects,
quantum programming is an inherently error-prone process.
To ensure correctness and safety of quantum systems,
formal verification and formal program analysis (static approaches) is a right choice
\cite{baltag2011quantum,brunet2004dynamic,ying2017invariants, li2018termination, liu2019formal},
relative to running-time (dynamic) approaches like testing and debugging,
due to a series of frustrating facts:
quantum states fail to be directly observed before measurement;
the current state could be potentially destroyed after measurement;
and the measurement result is randomly distributed.
Among those formal methods on quantum programming,
we prefer program-logic-based approaches, e.g. quantum program logic \cite{baltag2011quantum,brunet2004dynamic},
compared with model-based methods, e.g. quantum model checking \cite{gay2008qmc,feng2013model,feng2015qpmc,ying2014model},
since the former is developed in a syntax-oriented style
making it easily extensible to various program features, e.g. recursion, non-determinism, parallelism, etc.

Hoare logic has been the fundamental method for formal program verification \cite{Apt}.
The basic idea of this logic is based on the intermediate assertion method \cite{floyd1967assigning,hoare1969axiomatic},
which was originated with Alan Turing \cite{apt2019fifty} (called Turing-Floyd-Hoare Principle).
Hoare's approach makes (interactive) theorem proving for verifying high-level algorithmic description language
proceed at the same abstraction level as the language itself.
Thus verification using Hoare logic is more human-friendly than low-level (machine-friendly) verification.
Several Hoare-like logics for reasoning about quantum programs have been developed \cite{chadha2006reasoning,feng2007proof,kakutani2009logic,Ying11,Unruh19a,Unruh19b,BartheHYYZ20}.
Among them, D'Hondt and Panangaden \cite{panangaden2006}
proposed a notion of quantum weakest precondition for a general quantum operation.
The attractiveness of this approach is that quantum predicates,
used as preconditions and postconditions,
are modelled by Hermitian operators and thus have a natural interpretation as physical observables.
Based on this, a quantum Hoare-like logic
for reasoning about quantum $\WHILE$-language was designed
and its (relative) completeness was established, by Ying in \cite{Ying11}.
In these quantum Hoare logics, we find that
\begin{enumerate}
  \item Quantum predicate can merely describe a fixed property on quantum states,
  and the mechanism for guaranteeing termination is not designed in a syntactical style.
  \item These programming languages don't support general recursion,
  and there is no compositional inference rule for the structure of ancilla quantum variables.
  \item These logics are mainly for verifying deterministic properties of quantum programs,
  and at most can do reasoning with approximate probabilities.
  In other words, there is no axiomatic basis for reasoning about quantum programs with exact probabilities.
\end{enumerate}

\subsection{Contributions of the Paper}

The aim of this paper is to propose an assertion logic for
verifying parameterized recursive quantum programs with ancilla qubits or quints
(the abbreviation of quantum int).
By using formulas of this assertion logic as pre- and post-conditions,
a quantum Hoare logic is designed for proving partial correctness, total correctness,
and even probabilistic correctness (including probabilistic termination) of those quantum programs.
The work extends D'Hondt and Panangaden's quantum predicates and quantum weakest preconditions \cite{panangaden2006},
and Ying's quantum Hoare logic \cite{Ying11}.
Concretely speaking, five main contributions are highlighted:
\begin{itemize}
\item We extends the syntax of quantum $\WHILE$-language defined in \cite{Ying11} with general recursive procedures.
    For the formal semantics of the extended quantum programming language,
    we define {\it a nondeterministic operational semantics} by introducing the concept of labeled transition relation,
    define {\it a quantum-operation-directed denotational semantics} (independent of program states),
    and {\it relate them to each other} (cf. Thm. \ref{thm_sem}).
    Note that the denotation of a recursive procedure can be defined as
    the least fixed point of a function over quantum operations,
    making the representation of this denotation have a closed form.
    The formal semantics can be used to describe behaviors of quantum programs with probabilistic control
    (cf. Exams. \ref{exam_opSem_RQMC} and \ref{exam_deSem_RQMC}).

\item By {\it constructing two counterexamples}
    (cf. Exm. \ref{exam_counter_par} for partial correctness, and Exam. \ref{exam_counter_tot} for total correctness),
    we illustrate {\it the failure of Hoare's approach (i.e. intermediate assertion method) with quantum predicates as pre- and post-conditions in verifying recursive procedures}.
    To extend the applicability of Hoare's approach from while loops to general recursion
    (cf. Rem. \ref{remark_tail_rec}),
    we have to {\it introduce a parameterized quantum assertion logic}
    extending quantum predicates by incorporating parameters.
    In the setting of this assertion logic,
    we redefine the notion of L\"{o}wner order and upper (resp. lower) limit of quantum predicates,
    quantum program correctness and expressiveness of intermediate assertions
    developed in \cite{panangaden2006,Ying11} (cf. Thms. \ref{thm_sem_dual} and \ref{thm_exp}).

\item We introduce {\it inference rules for proving both partial and total correctness of recursive procedures}
      with formulas of the new assertion logic as pre- and post-conditions.
     These rules usually should be used in combination with the {\it Substitution Rule,
     dealing with substitution for parameters in pre- and post-conditions of a Hoare's triple}
     (cf. Exms. \ref{exam_counter_par} and \ref{exam_counter_tot} for counterexamples).
     The rules of proving total correctness can be adapted to {\it proving probabilistic correctness,
     i.e. reasoning with both approximate and exact probabilities
     (cf. Thms. \ref{thm_reas_approx_prob} and \ref{thm_reas_exact_prob}),
     based on the result of (general or compact) soundness and completeness}.
     However, {\it the rules of proving partial correctness can't be used universally for reasoning with exact probabilities,
     even if the issue of nontermination is involved (cf. Rem. \ref{rem_exam_counter_prob} for a counterexample)}.
    Note that the rule for proving total correctness is purely syntactic,
    and, as a special case, we obtain a syntax-directed inference rule for proving total correctness of while loops.
    We also discuss the issues of synthesizing intermediate assertions and necessity of parameters
    in verifying recursive procedures and while loops.
    These discussions reveal that recursion is generally more complex than while loops in the setting of program logics.

\item Verification of some more sophisticated recursive quantum programs
    like the example of recursive quantum Fourier sampling cannot be done by merely using the above techniques;
    they further need the facilities of variable localization and parameter passing.
    So, our fourth main contribution is to develop
    {\it inference rules for proving both partial and total correctness of variable localization and recursive quantum procedures with pointer passing.}
    To this end,
    we propose {\it two different but equivalent inference rules for proving the two correctness of variable localization},
    and the {\it Adaptation Rule for dealing with the substitution of program variables in pointer passing}.
    Note that previous proof rules for recursive procedures should be used jointly with Adaptation Rule in verifying parameterized recursive procedures.
    The proof rule for variable localization endows the logic
    with the ability of verifying a (general) quantum operation in a compositional way.
    Various aforementioned results,
    like Thms. \ref{thm_sem}, \ref{thm_exp}, \ref{thm_reas_approx_prob}, \ref{thm_reas_exact_prob}
    (and soundness and completeness results), can be extended to covering these facilities.

    \item The fifth and last main contribution is to present various examples and related work.
    Among these examples, we adopt recursive quantum Markov chain as the running example of the paper,
    since it can fully show the ability of our logic in dealing with quantum programs with probabilistic control.
    Specifically speaking, this example illustrates operational semantics, denotational semantics,
    and reasoning with exact probability (including probabilistic correctness and probabilistic termination).
    The examples of Grover's fixed-point search and recursive quantum Fourier sampling
    are used to illustrate rules for proving partial and total correctness
    of recursive quantum programs with deterministic control (containing auxiliary facilities).
    We compare our quantum Hoare logic with other (deterministic, probabilistic or quantum) Hoare-like logics,
    and discuss local and global reasoning in the setting of quantum computing.
\end{itemize}

\subsection{Organization of the Paper}

We present preliminaries on quantum program verification in Sec. \ref{sec_pre};
syntax and semantics of recursive quantum programs are defined in Sec. \ref{sec_rqp};
quantum assertion logic is presented in Sec. \ref{sec_ass};
starting proof systems are shown in Sec. \ref{sec_prf};
expanded proof systems including auxiliary facilities are shown in Sec. \ref{sec_loc};
Case studies of Grover's fixed-point search and recursive quantum Fourier sampling are presented in Sec. \ref{sec_castu};
Comparison with the related work is given in Sec. \ref{sec_relwk};
Sec. \ref{sec_con} concludes the paper with a discussion of the future work.

The running example of recursive quantum Markov chain is throughout the paper from Subsec. \ref{subsec_syntax}
through Subsecs. \ref{subsec_op_sem} and \ref{subsec_de_sem} to Subsec. \ref{subsec_prob_corr}.
The proof (or proof schetch) of various results,
including proof for the two counterexamples --- Exms. \ref{exam_counter_par} and \ref{exam_counter_tot},
proof of soundness and completeness results,
and a full verification of Grover's fixed-point search and recursive quantum Fourier sampling,
are put into the appendix.

\section{Preliminaries}\label{sec_pre}

For convenience of the reader,
we briefly review the basics of quantum theory,
and fix the symbols and notations used in the subsequent sections.

\subsection{Quantum states}

\paragraph{Definition of linear operators}
The state space of a quantum system is a Hilbert space $\Hh$.
For any positive integer $n$,
an $n$-dimensional Hilbert space is essentially the space $\Cc^n$ of complex vectors.
We use Dirac's notation, $\ket{\psi}$, to denote a complex vector in $\Cc^n$.
The inner product (resp. outer product) of two vectors $\ket{\psi}$ and $\ket{\phi}$,
denoted $\inprod{\psi}{\phi}$ (resp. $\outprod{\psi}{\phi}$),
is the product of $\bra{\psi} \triangleq (\ket{\psi})^{\dag}$ (i.e. the conjugate transpose of $\ket{\psi}$)
and $\ket{\phi}$ (resp. the product of $\ket{\psi}$ and $\bra{\phi}$).
The norm of a vector $\ket{\psi}$ is denoted by $\leng{\psi} \triangleq \sqrt{\inprod{\psi}{\psi}}$.
We say that a set of vectors $\{ \ket{\psi_i} \}_i$ is an orthonormal basis of $\Hh$,
if $\inprod{\psi_i}{\psi_j} = \delta_{i,j}$ for all $i,j$ (where $\delta_{i,j} = 1$ if $i = j$, and $= 0$ otherwise).
Then every vector of $\Hh$ can be represented as a linear combination of any orthonormal basis of $\Hh$.
We define (linear) operators over $\Hh$ as a linear mapping.
In the space $\Cc^n$, an operator $A$ is represented by an $n\times n$ matrix.
We say that $A$ is Hermitian, if $A = A^\dag$ (where $A^\dag$ denotes the conjugate transpose of $A$).
Let $I_{\Hh}$ (resp. $0_{\Hh}$) be the identity (resp. zero) operator over $\Hh$.
The trace of an operator $A$ is defined by $\tr(A) \triangleq \Sigma_i \bra{i} A \ket{i}$
(the sum of entries on the main diagonal of $A$ w.r.t. any orthonormal basis of $A$).

\begin{lemma}[Spectral decomposition, cf. {\cite[Box 2.2]{nielsen2000quantum}}]\label{lem_spec_dec}
Every linear operator $A$ of Hilbert space $\Hh$ is Hermitian,
if, and only if, it can be decomposed as $A = \sum p_i \outprod{\psi_i}{\psi_i}$,
where $\{ p_i \}_i$ are reals and $\{ \ket{\psi_i} \}_i$ is an orthonormal basis of $\Hh$.
\end{lemma}

\paragraph{L\"{o}wner order between linear operators}
An operator $A$ is positive, if for all vectors $\ket{\psi}\in \Hh$, $\mixprod{\psi}{A}{\psi}\geq 0$.
Note that every positive operator is Hermitian,
and that every Hermitian operator of the form $\sum p_i \outprod{\psi_i}{\psi_i}$
is positive iff $p_i \geq 0, \forall i$ (cf. Lem. \ref{lem_spec_dec}).
The concept of positivity induces the L\"{o}wner order $\sqsubseteq$ between operators:
\begin{itemize}
  \item $A \sqsubseteq B$, if $B - A$ is positive;
  \item $A = B$, if $A \sqsubseteq B$ and $B \sqsubseteq A$.
\end{itemize}
By definition, it follows, for any operators $A$ and $B$, that
\begin{itemize}
  \item $0_{\Hh} \sqsubseteq A$ iff $A$ is positive;
  \item $A \sqsubseteq B$ iff there is a positive operator $C$ s.t. $A + C = B$.
\end{itemize}
The least upper bound L. U. B. (resp. greatest lower bound G. L. B.) operator in a complete partial order
generated by L\"{o}wner comparison is denoted as $\bigsqcup$ (resp. $\bigsqcap$).
For example, the L. U. B. (resp. G. L. B.) of a sequence of operators $\{A_n\}_{n\geq 0}$
with $\forall n \geq 0.\ A_n \sqsubseteq A_{n+1}$ (resp. $\forall n \geq 0.\ A_n \sqsupseteq A_{n+1}$)
will be denoted by $\bigsqcup_{n \geq 0} A_n$ (resp. $\bigsqcap_{n \geq 0} A_n$).
For the existence of those bounds, the reader is referred to the literature \cite{Selinger2004} and \cite{ying2016foundations}.

\paragraph{Pure quantum state}
A pure quantum state is represented by a unit vector,
i.e., a vector $\ket{\psi}$ with $\leng{\psi}=1$
(used to represent the data states of a quantum circuit).
For example, a qubit, or quantum bit, system refers to the case when $\Hh = \Cc^2$.
An important basis of a qubit system is the computational basis
with $\ket{0} \triangleq (1, 0)^\dag$ and $\ket{1} \triangleq (0, 1)^\dag$,
which corresponds to the $0/1$ in a classical bit.
Another important basis, called the $\pm$ basis,
consists of $\ket{\pm} \triangleq \frac{1}{\sqrt{2}}(\ket{0}\pm\ket{1})$.
One can represent multi-qubits by tensor-producting each qubit.
For instance, the classical two-bit string $01$ can be represented by $\ket{0}\otimes \ket{1}$ (or $\ket{01}$ for short).
An $m$-qubit system lives in the space $\Cc^{2^m} = (\Cc^2)^{\otimes m}$ that is the $m$-time tensor product of a single qubit system $\Cc^2$.

\paragraph{Mixed quantum state}
However, after applying a quantum measurement,
a (pure) quantum state is possibly changed to a mixed state,
i.e. a random distribution over an ensemble of pure states $\Ee = \{(p_i, \ket{\psi_i})\}_i$,
which states that the system is in state $\ket{\psi_i}$ with probability $p_i$.
One can also use density operators to represent both pure and mixed quantum states.
Formally, a density operator is a positive operator $\rho$ whose trace $\tr(\rho) = 1$.
For example, the density operator $\rho$ for a mixed state represented by the
ensemble $\Ee$ is $\rho = \sum_i p_i \outprod{\psi_i}{\psi_i}$; in particular,
a pure state $\ket{\psi}$ can be identified with the density operator $\rho = \outprod{\psi}{\psi}$.

\paragraph{Representation of quantum states}
If density operators are used directly to represent quantum states,
we will find that the resulting state after applying a quantum operation
is probably not a density operator but its sub-part.
Note that the missing part of the final state is due to non-termination.
To cope with this issue, the concept of partial density operator (abbr. $\PDOP$)
is introduced by Selinger \cite{selinger2004towards} to model a sub-part of a density operator.
Strictly speaking, a $\PDOP$ is a positive operator $\rho$ with $0 \leq \tr(\rho) \leq 1$
(Particularly, a density operator is a $\PDOP$ $\rho$ with $\tr(\rho) = 1$).
Defining quantum states as $\PDOP$s ensures that quantum states are closed under quantum operations.
The set of $\PDOP$s on $\Hh$ is denoted by $\Dd(\Hh)$.

\subsection{Quantum operations}

\paragraph{Definition of quantum operations}
The evolution of an open quantum system $\Hh$
can be characterized by an (admissible) quantum operation (abbr. $\QOP$) $\calE$,
which is a linear, trace-non-increasing and completely positive super operator from $\Dd(\Hh)$ to $\Dd(\Hh)$
(By complete positivity of $\calE$ on $\Hh$ is meant that
for all linear operators $A$ on $\Hh'\supseteq \Hh$ with $A \sqsupseteq 0_{\Hh'}$, $\calE(A) \sqsupseteq 0_{\Hh'}$).
Namely, for any state $\rho \in \Dd(\Hh)$,
the final state after the $\QOP$ $\calE$ is $\calE(\rho) \in \Dd(\Hh)$ with $\tr(\calE(\rho)) \leq \tr(\rho)$.
For every $\QOP$ $\calE$,
there exists a set of Kraus operators $\{E_k \}_k$ s.t.
$\calE(\rho) = \sum_k E_k \rho E_k^\dag, \forall \rho$
(See, e.g., \cite{nielsen2000quantum}).
We denote the Kraus form of $\calE$ by writing $\calE = \sum_k E_k \diamond E_k^\dag$.
Since $\calE$ is positive and trace-non-increasing, it holds that $0 \sqsubseteq \sum_k E_k^\dag E_k \sqsubseteq I$.
For example, an identity (resp. zero) operation refers to $I_{\Hh} \diamond I_{\Hh}$ (resp. $0_{\Hh} \diamond 0_{\Hh}$).

\paragraph{Dual of quantum operation}
The Schr\"{o}dinger-Heisenberg dual of a $\QOP$ $\calE$, denoted $\calE^*$, is defined as
\begin{itemize}
  \item $\tr\big(A \calE(\rho)\big) = \tr\big(\calE^*(A)\rho\big)$, $\forall \rho$ and $\forall A$;
  \item or, in the Kraus form, $\calE^* \triangleq \sum_k E_k^\dag \diamond E_k$, if $\calE = \sum_k E_k \diamond E_k^\dag$.
\end{itemize}
\begin{lemma}\label{lem_dual_qop}
Let $\lambda \geq 0$,
let $\calE_1$ and $\calE_2$ be respective $\QOP$s on $\Hh_1$ and $\Hh_2$,
let $\calF$, $\calG$ be $\QOP$s on $\Hh$,
and let $\{\calF_n\}$ be a non-decreasing sequence of $\QOP$s on $\Hh$.
It is the case that
\begin{description}
  \item[(1)] $(\calE_1 \otimes \calE_2)^* = \calE_1^* \otimes \calE_2^*$;
  \item[(2)] $(\lambda \calF)^* = \lambda \calF^*$;
  \item[(3)] $(\calF + \calG)^* = \calF^* + \calG^*$;
  \item[(4)] $(\calF \circ \calG)^* = \calG^* \circ \calF^*$;
  \item[(5)] $(\bigsqcup_{n=0}^{\infty} \calF_n)^* = \bigsqcup_{n=0}^{\infty} \calF_n^*$.
\end{description}
\end{lemma}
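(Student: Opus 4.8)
The plan is to verify the five identities by moving between the two characterizations of the dual given just before the lemma: the ``defining equation'' $\tr(A\,\calE(\rho)) = \tr(\calE^*(A)\rho)$ for all $\rho$ and $A$, and the Kraus-form description $\calE^* = \sum_k E_k^\dagger \diamond E_k$. For items (1)--(4), the cleanest route is to work purely with the defining equation, using the fact that $\calE^*$ is uniquely determined by it (if $\tr(B\,\calE(\rho)) = \tr(C\rho)$ for all $\rho$ and all $A$, with $B$ ranging appropriately, then the operator acting as $C$ is forced, since the trace pairing is nondegenerate on the space of operators). Concretely, for (2) I would write $\tr(A\,(\lambda\calF)(\rho)) = \lambda\,\tr(A\,\calF(\rho)) = \lambda\,\tr(\calF^*(A)\rho) = \tr((\lambda\calF^*)(A)\rho)$, and conclude $(\lambda\calF)^* = \lambda\calF^*$ by uniqueness; (3) is identical with $+$ in place of $\lambda$. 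For (4), $\tr(A\,(\calF\circ\calG)(\rho)) = \tr(A\,\calF(\calG(\rho))) = \tr(\calF^*(A)\,\calG(\rho)) = \tr(\calG^*(\calF^*(A))\,\rho) = \tr((\calG^*\circ\calF^*)(A)\rho)$, again invoking uniqueness. For (1), I would either use the same pairing argument on $\Hh_1\otimes\Hh_2$ together with the fact that such traces are determined by their values on product states $\rho_1\otimes\rho_2$ and product observables $A_1\otimes A_2$, or simply take Kraus families $\{E_j\}$ for $\calE_1$ and $\{F_k\}$ for $\calE_2$, note $\{E_j\otimes F_k\}$ is a Kraus family for $\calE_1\otimes\calE_2$, and observe that the dual Kraus family $\{(E_j\otimes F_k)^\dagger\} = \{E_j^\dagger\otimes F_k^\dagger\}$ is exactly a Kraus family for $\calE_1^*\otimes\calE_2^*$.

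For item (5) the defining-equation trick alone does not immediately suffice, because one must additionally justify interchanging the dual with a supremum, which is a continuity/limit statement rather than an algebraic one. Here the plan is: first, since $\{\calF_n\}$ is non-decreasing in the order induced pointwise by the L\"owner order on outputs, $\bigsqcup_n \calF_n$ exists as a $\QOP$ (citing the existence results for such bounds referenced in the excerpt, \cite{Selinger2004, ying2016foundations}), and for every fixed $\rho$ one has $(\bigsqcup_n\calF_n)(\rho) = \bigsqcup_n \calF_n(\rho)$. Then, applying the defining equation and using that $\tr(A\,\cdot\,)$ is, for fixed $A\sqsupseteq 0$, a L\"owner-monotone and Scott-continuous functional on the directed set $\{\calF_n(\rho)\}$ — i.e. $\tr(A\,\bigsqcup_n\calF_n(\rho)) = \sup_n \tr(A\,\calF_n(\rho))$ for $A\sqsupseteq 0$, and then extending to general Hermitian $A$ by the spectral decomposition (Lem.~\ref{lem_spec_dec}) splitting $A$ into positive and negative parts — we get $\tr(A\,(\bigsqcup_n\calF_n)(\rho)) = \sup_n \tr(\calF_n^*(A)\rho)$. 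On the other hand $\{\calF_n^*\}$ is non-decreasing (this is immediate from (3), since $\calF_{n+1} = \calF_n + \calG_n$ for some $\QOP$ $\calG_n$ forces $\calF_{n+1}^* = \calF_n^* + \calG_n^* \sqsupseteq \calF_n^*$, where the ordering on duals is again pointwise L\"owner), so $\bigsqcup_n\calF_n^*$ exists and $\tr((\bigsqcup_n\calF_n^*)(A)\rho) = \sup_n \tr(\calF_n^*(A)\rho)$ by the same continuity argument on the output side. Matching the two expressions for all $\rho$ and all $A$ and invoking uniqueness of the dual yields $(\bigsqcup_n\calF_n)^* = \bigsqcup_n\calF_n^*$.

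The main obstacle is item (5): the algebraic parts (1)--(4) are essentially one-line trace manipulations, but (5) requires being careful about \emph{which} order is being taken the supremum in — pointwise L\"owner order on the super-operators, meaning $\calF\sqsubseteq\calG$ iff $\calF(\rho)\sqsubseteq\calG(\rho)$ for all $\rho$ — and about the fact that forming the dual intertwines this order with itself (monotonicity of $\calE\mapsto\calE^*$, which follows from (3)), and then that the trace pairing with a fixed positive observable is Scott-continuous so that limits may be pushed through. I would state these two sub-facts (monotonicity of the dual, and continuity of $\tr(A\,\cdot)$ on increasing sequences of PDOs) as small observations before the main computation, so that the interchange of $\bigsqcup$ and $(\cdot)^*$ is transparent. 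One mild subtlety worth flagging: to pass from ``equality of traces against all positive $A$'' to ``equality of traces against all Hermitian $A$'' (needed to apply uniqueness in the stated form), one uses that every Hermitian operator is a difference of positive ones via Lem.~\ref{lem_spec_dec}; alternatively one observes that positive $A$ already span the real vector space of Hermitian operators, so testing against them is enough.
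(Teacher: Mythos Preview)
The paper does not actually give a proof of this lemma: it is stated in the preliminaries section as a background fact, with no accompanying \texttt{proof} environment, and the text moves directly on to the paragraph on unitary operators. So there is nothing to compare your proposal against.

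That said, your argument is correct and is the standard way one proves these identities. Items (2)--(4) are immediate from the trace-pairing definition and nondegeneracy; item (1) via either the pairing on simple tensors or the Kraus form is fine. For item (5) your plan is right and you have correctly isolated the only nontrivial point, namely the interchange of $\bigsqcup$ with the dual. Two small remarks: first, since the paper explicitly allows infinite-dimensional state spaces (the $\integer$ type), the step ``$\tr(A\,\bigsqcup_n B_n)=\sup_n\tr(A\,B_n)$ for positive $A$'' deserves a word of justification beyond finite-dimensional linear algebra---it follows because the $B_n$ here are partial density operators (hence trace-class and uniformly bounded in trace norm), so the increasing sequence converges in trace norm to its L\"owner supremum, and pairing with the bounded operator $A$ is trace-norm continuous. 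Second, your reduction from positive $A$ to Hermitian $A$ via the spectral decomposition is the right move; you could equally well restrict the uniqueness clause to positive $A$ from the outset, since positive operators span the Hermitian ones.
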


\paragraph{Representation of unitary operators}
Operations (or evolutions) on (closed) quantum systems can be characterized by a unitary operator.
An operator $U$ is a unitary operator if its conjugate transpose is its own inverse, i.e., $U^\dag U = U U^\dag = I$.
Common single-qubit unitary operators include the Hadamard operator $H$ and the Pauli operator $X$:
$$
H \triangleq \frac{1}{\sqrt{2}} \left[ \begin{array}{cc}
                                1 & 1 \\
                                1 & -1
                              \end{array}
 \right],
\quad
X \triangleq \left[ \begin{array}{cc}
                                0 & 1 \\
                                1 & 0
                              \end{array}
 \right]
$$
Intuitively, $H$ transforms between the computational and the $\pm$ basis,
i.e., $H\ket{0} = \ket{+}$ and $H\ket{1} = \ket{-}$,
and $X$ is a bit flip, i.e., $X \ket{0} = \ket{1}$ and $X \ket{1} = \ket{0}$.
The evolution of $\PDOP$ $\rho$ under unitary operator $U$ is $\rho \rightarrow U\rho U^\dag$,
and thus $U$ can be written as a $\QOP$ $\calE = U \diamond U^\dag$.

\paragraph{Representation of measurement}
The way to extract information about a quantum system is called a quantum measurement.
Mathematically, a quantum measurement on a system over $\Hh$
can be described by a set of linear operators $\{M_m\}_m$ with $\sum_m M_m^\dag M_m = I_{\Hh}$.
If we perform a measurement $\{M_m\}_m$ on a state $\rho$,
the outcome $m$ is observed with probability $p_m \triangleq \tr(M_m \rho M_m^\dag)$ for each $m$,
and, with the observation $m$, the state collapses to a post-measurement state $M_m \rho M_m^\dag / p_m$.
To characterize the evolution of a quantum measurement as a $\QOP$,
we remark that the probability $p_m$ can be encoded into the post-measurement state $M_m \rho M_m^\dag / p_m$,
resulting in $M_m \rho M_m^\dag$.
So, such an evolution can be written as the $\QOP$ $\calE = M_m \diamond M_m^\dag$.
One of the major differences between classical and quantum computation
is that a quantum measurement could potentially change the state itself.
For example, if we perform the standard (i.e. computational-basis) measurement
$M = \{M_0 \triangleq  \outprod{0}{0}, M_1 \triangleq \outprod{1}{1}\}$ on state $\rho = \outprod{+}{+}$,
then with probability $\frac{1}{2}$ the outcome is $0$ (resp. $1$),
and the final state becomes $\outprod{0}{0}$ (resp. $\outprod{1}{1}$).

\paragraph{Representation of open systems}
Suppose that we have a joint quantum system $\Hh \triangleq Q \otimes R$
and wish to trace out system $R$ by using partial trace function $\tr_{R}$.
Let $\{\ket{i}\}_{i}$ be an orthonormal basis for $R$.
Then $\tr_{R}$ can be defined by the $\QOP$ $\tr_{R} \triangleq \sum_i \bra{i} \diamond \ket{i}$.
We can represent a (separable) joint $\QOP$ for quantum systems $Q$ and $R$
by the tensor product $\calE_{Q} \otimes \calE_{R}$ of $\QOP$s $\calE_{Q}$ for $Q$ and $\calE_{R}$ for $R$.
It is a convention in the quantum information literature that
when operations only apply to part of a quantum system,
one should assume that an identity operation is applied on the rest.
For example, applying $\calE_{R}$ (resp. $\calE_{Q}$)
to $\rho \in \Dd(Q \otimes R)$ means applying $(I_{Q}\diamond I_{Q}) \otimes \calE_{R}$
[resp. $\calE_{Q} \otimes (I_{R} \diamond I_{R})$] to $\rho$.
Here the identity operation is usually omitted for simplicity.
Thus, writing $\calE(\rho)$ with $\QOP$ $\calE$ on $\Hh$ and $\PDOP$ $\rho \in \Dd(\Hh')$
will naturally imply that $\Hh \subseteq \Hh'$.

\paragraph{L\"{o}wner order between super operators}
Define the L\"{o}wner order between $\QOP$s based on that between linear operators:
\begin{itemize}
  \item $\calE \sqsubseteq \calF$, if $\calE(\rho) \sqsubseteq \calF(\rho)$, $\forall \rho$;
  \item $\calE = \calF$, if $\calE \sqsubseteq \calF$ and $\calF \sqsubseteq \calE$.
\end{itemize}
By definition, we have, for any $\QOP$s $\calE$ and $\calF$, that
\begin{itemize}
  \item $0\diamond 0 \sqsubseteq \calE$ iff $\calE$ is completely positive;
  \item $\calE \sqsubseteq \calF$ iff there is $\QOP$ $\calG$ s.t. $\calE + \calG = \calF$.
\end{itemize}
The least upper bound $\bigsqcup_{n \geq 0} \calE_n$ (resp. greatest lower bound $\bigsqcap_{n \geq 0} \calE_n$)
of a sequence of $\QOP$s $\{ \calE_n \}_{n \geq 0}$ with $\forall n \geq 0.\ \calE_n \sqsubseteq \calE_{n+1}$
(resp. $\forall n \geq 0.\ \calE_n \sqsupseteq \calE_{n+1}$) is defined as
\begin{itemize}
  \item $\big( \bigsqcup_{n \geq 0} \calE_n \big)(\rho) \triangleq \bigsqcup_{n \geq 0} \calE_n(\rho)$, $\forall \rho$;
  \item $\big( \bigsqcap_{n \geq 0} \calE_n \big)(\rho) \triangleq \bigsqcap_{n \geq 0} \calE_n(\rho)$, $\forall \rho$.
\end{itemize}

\subsection{Quantum predicates}

\paragraph{Definition of quantum predicates}
As defined in \cite{panangaden2006},
a quantum predicate (abbr. $\QPRED$) on a Hilbert space $\Hh$ is a Hermitian operator $M_{\Hh}$
such that $0_{\Hh} \sqsubseteq M_{\Hh} \sqsubseteq I_{\Hh}$.
The satisfiability of a state (i.e. $\PDOP$) $\rho \in \Dd(\Hh')$ in the $\QPRED$ $M_{\Hh}$ with $\Hh \subseteq \Hh'$ is defined by the trace $\tr(M_{\Hh} \rho)$ if $\Hh = \Hh'$,
and $\tr\big( (M_{\Hh}\otimes I_{\Hh''}) \rho \big)$ if $\Hh \otimes \Hh'' = \Hh'$.
Intuitively, $\tr(M_{\Hh} \rho)$ \big[resp. $\tr\big( (M_{\Hh}\otimes I_{\Hh''}) \rho \big)$\big]
is the expectation of the truth value of predicate $M_{\Hh}$ in state $\rho$.
Note that restricting $M_{\Hh}$ to between $0_{\Hh}$ and $I_{\Hh}$
ensures that $0 \leq \tr(M_{\Hh} \rho) \leq 1$
\big[resp. $0 \leq \tr\big( (M_{\Hh}\otimes I_{\Hh''}) \rho \big) \leq 1$\big]
for any $\rho \in \Dd(\Hh')$.
We shall write $\Pp(\Hh)$ for the set of $\QPRED$s on $\Hh$.

\paragraph{How to use quantum predicates}
By Lem. \ref{lem_spec_dec}, every $\PDOP$ $\rho$ has the spectral decomposition
$\rho \triangleq \sum_{i\in A} p_i \outprod{\varphi_i}{\varphi_i}$,
where $\forall i\in A.\ 0 \leq p_i \leq 1$,
$\sum_{i\in A} p_i \leq 1$ and $\sum_{i\in A} \outprod{\varphi_i}{\varphi_i} = I$.
In practice, we prefer to use projection operators, e.g.
$M \triangleq \sum_{i\in B} \outprod{\varphi_i}{\varphi_i}$,
with $B \subseteq A$, to define the (precise) properties of $\rho$.
Intuitively speaking,
$\tr(M \rho) = \sum_{i\in B} p_i$ is the probability of $\rho$ falling into the subspace represented by $M$.
In particular, $\tr(\outprod{\varphi_i}{\varphi_i} \rho) = p_i$ is the probability of $\rho$ in the state $\outprod{\varphi_i}{\varphi_i}$, which can be used to reveal internal ingredients of a quantum state;
$\tr(I \rho) = \tr(\rho)$ is the probability of $\rho$ falling into the whole space,
which is one of the usual statistical properties on quantum states.
For example, the $\PDOP$ $\rho \triangleq \frac{\outprod{0}{0} + \outprod{1}{1}}{2}$
represents the resulting state $\{(\frac{1}{2}, \ket{0}), (\frac{1}{2}, \ket{1})\}$ after a standard measurement on $\ket{+}$. Note that $\tr(\outprod{0}{0} \rho) = \frac{1}{2}$ (resp. $\tr(\outprod{1}{1} \rho) = \frac{1}{2}$) is the probability of $\rho$ in the state $\outprod{0}{0}$ (resp. $\outprod{1}{1}$), $\tr(I \rho) = 1$ is the probability of $\rho$ falling into the whole space.

\paragraph{Expressiveness of quantum predicates}
$\QPRED$s can be used to describe classical properties (at the propositional level).
For example, we can use a main diagonal matrix -- a kind of $\QPRED$ --
to express (probabilistic) boolean functions,
if classical information is encoded as (a random distribution of) states of a computational basis.
To sum up, the quantum counterpart of a classical predicate
is a main diagonal matrix that encodes its indicator function.
The strength of adopting $\QPRED$s as assertions, among many others,
lies in the fact that various properties of quantum effects can be represented thereof.
For example, the predicate $M =\outprod{+}{+}$, i.e.
\begin{eqnarray*}
  M &=& \frac{\outprod{0}{0} + \outprod{0}{1} + \outprod{1}{0} + \outprod{1}{1}}{2}
\end{eqnarray*}
describes that a state $\rho$ is in the equal superposition $\ket{+}$ with probability $\tr(M \rho)$;
the predicate $N = \frac{\ket{00} + \ket{11}}{\sqrt{2}} \frac{\bra{00} + \bra{11}}{\sqrt{2}}$, i.e.
\begin{eqnarray*}
  N &=& \frac{\outprod{00}{00} + \outprod{00}{11} + \outprod{11}{00} + \outprod{11}{11}}{2}
\end{eqnarray*}
describes that a state $\rho$ is in the maximally entangled state $\frac{\ket{00} + \ket{11}}{\sqrt{2}}$ with probability $\tr(N \rho)$.

\paragraph{Definition of quantum implication}
The L\"{o}wner comparison $M \sqsubseteq N$ between $\QPRED$s $M$ and $N$
is a quantum simulation of the classical implication ``$F \rightarrow G$''.
Recall that the validity of $F \rightarrow G$, denoted $\models F \rightarrow G$,
is defined as: for any assignment $v$, $v\models F \implies v\models G$,
where $v\models F$ denotes the satisfiability (truth value) of $F$ under $v$.
The following lemma extends the semantics of classical implication into the quantum case
\big(To better see this, we remark that
the less than or equal to $\leq$ defined on the closed real interval $[0,1]$
can be seen as a probabilistic extension of the classical implication $\implies$ on the set of truth values $\{0,1\}$\big).
\begin{lemma}[A semantic viewpoint of L\"{o}wner order {\cite[Lem. 2.1]{Ying11}}]\label{lem_lowner_compr}
  Let $M, N \in \Pp(\Hh)$. Then $M \sqsubseteq N$ if,
  and only if, $\tr(M \rho) \leq \tr(N \rho)$ for all $\rho\in \Dd(\Hh)$.
\end{lemma}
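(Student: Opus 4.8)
The plan is to prove the two implications separately, directly from the definition of the L\"owner order and the fact (recalled above, via Lem.~\ref{lem_spec_dec}) that every $\PDOP$ admits a spectral decomposition with non-negative weights. No deep machinery is needed; the two directions exploit complementary facts, namely that $\Dd(\Hh)$ contains all pure-state projections and that a positive operator has non-negative eigenvalues.

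For the ``only if'' direction I would assume $M \sqsubseteq N$, so that $N - M$ is positive by definition, and fix an arbitrary $\rho \in \Dd(\Hh)$. Writing $\rho = \sum_i p_i \outprod{\varphi_i}{\varphi_i}$ with $p_i \geq 0$ for all $i$ and $\{\ket{\varphi_i}\}_i$ an orthonormal basis of $\Hh$, linearity of the trace gives
\[
  \tr(N\rho) - \tr(M\rho) \;=\; \tr\big((N-M)\rho\big) \;=\; \sum_i p_i\,\mixprod{\varphi_i}{N-M}{\varphi_i}.
\]
Each term is the product of the non-negative scalar $p_i$ with $\mixprod{\varphi_i}{N-M}{\varphi_i}$, which is non-negative since $N-M$ is positive; hence the sum is non-negative and $\tr(M\rho) \leq \tr(N\rho)$.

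For the ``if'' direction I would assume $\tr(M\rho) \leq \tr(N\rho)$ for every $\rho \in \Dd(\Hh)$ and show $N-M$ is positive, i.e. $\mixprod{\psi}{N-M}{\psi} \geq 0$ for every $\ket{\psi}\in\Hh$. The case $\ket{\psi}=0$ is trivial; otherwise set $\ket{\hat\psi} \triangleq \ket{\psi}/\leng{\psi}$, so that $\rho_0 \triangleq \outprod{\hat\psi}{\hat\psi}$ is a density operator and in particular lies in $\Dd(\Hh)$. Since $\tr(M\rho_0) = \mixprod{\hat\psi}{M}{\hat\psi}$ and likewise for $N$, the hypothesis applied to $\rho_0$ yields $\mixprod{\hat\psi}{N-M}{\hat\psi} \geq 0$, and multiplying by $\leng{\psi}^2 > 0$ gives $\mixprod{\psi}{N-M}{\psi} \geq 0$, completing the argument.

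I do not expect a real obstacle here; the only step deserving attention is the ``only if'' direction, where one cannot argue pointwise on $\ket{\psi}$ as in the converse but must instead pass through a spectral decomposition of $\rho$ to reduce $\tr\big((N-M)\rho\big) \geq 0$ to a sum of manifestly non-negative contributions. The remainder is routine bookkeeping with linearity of the trace and the definitions of $\sqsubseteq$ and $\Dd(\Hh)$.
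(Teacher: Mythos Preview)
Your proof is correct. The paper does not give its own proof of this lemma; it simply states the result with a citation to \cite[Lem.~2.1]{Ying11}, so there is nothing to compare against beyond noting that your argument is the standard one and would be accepted without issue.
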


\section{Recursive quantum programs}\label{sec_rqp}

Recursive quantum programs can be viewed
as a recursive procedural extension of quantum base language $\qPL$
--- the non-while-loop part of quantum $\WHILE$-language introduced in \cite{Ying11,ying2016foundations}.
In this section, we define its syntax and formal semantics,
and introduce the example of recursive quantum Markov chain as the running example of the paper.

\subsection{Definition of the syntax}\label{subsec_syntax}

We assume a set $\Var$ of quantum variables annotated with a type $\Boolean$ or $\integer$,
and $\bar{q} \triangleq (q_1,\ldots,q_n) \subseteq \Var$.
For each $q\in \Var$, its state Hilbert space is denoted by $\Hh_q$.
Then $\bar{q}$ is associated with the Hilbert space $\Hh_{\bar{q}} \triangleq \bigotimes_{i=1}^{n} \Hh_{q_i}$.
If $\type(q) = \Boolean$,
then $\Hh_q$ is a two-dimensional Hilbert space with computational basis $\{\ket{0}, \ket{1}\}$
(i.e. the space of a qubit).
If $\type(q) = \integer$,
then $\Hh_q$ is an infinite-dimensional Hilbert space with computational basis $\{\ket{n} \colon n\in \Zz\}$
(i.e. the space of a quint, e.g. the space of a photon).
Note that we are able to use multiple (e.g. $n$) qubits to form any finite (e.g. $2^n$) dimensional Hilbert space,
and usually use the computational basis of an infinite-dimensional Hilbert space to simulate integers $\Zz$,
which can be used to implement the classical (deterministic) control of a quantum program (cf. Exm. \ref{exam_counter_par}).
(Working with infinite Hilbert space is as with finite space, see, e.g., \cite{prugovecki1982quantum}.)

Now we are able to define a (possibly recursive) procedural extension
of quantum base language $\qPL$
(the non-while-loop part of quantum $\WHILE$-language \cite{Ying11,ying2016foundations}),
denoted by $\qRP$ (We can implement a while loop as a tail recursion, cf. Subsec. \ref{subsec_loop}).
A recursive quantum program $P\in \qRP$ usually consists of a procedural declaration $D$,
associating some body with a procedure name,
followed by some statement $S$ possibly containing activation statements to declared procedures.
Formally, $\qRP$ is generated by the following grammar:
\begin{displaymath}
\begin{array}{rcll}
  P & \triangleq & D :: S & \mbox{Quantum program} \\
  D & \triangleq & \recDec{\langle \proc_1 \rangle}{S_1}, \ldots, \recDec{\langle \proc_n \rangle}{S_n} & \mbox{Procedure declaration} \\
  S & \triangleq & \bottom & \mbox{Bottom} \\
    & \mid & \SKIP & \mbox{No operation} \\
    & \mid & q \assnequal \ket{0} & \mbox{Initialization} \\
    & \mid & \lst{q} \starequal U & \mbox{Unitary operation} \\
    & \mid & S_1;S_2 & \mbox{Sequential composition} \\
    & \mid & \ifStat{\Box m\cdot M[\lst{q}] = m \rightarrow S_m} & \mbox{Probabilistic branching} \\
    & \mid & \CALL\ \langle \proc_i \rangle,\quad 1\leq i \leq n & \mbox{Procedure call}
\end{array}
\end{displaymath}
where for each declared (possibly recursive) procedure
$\recDec{\langle \proc_i \rangle}{S_i}$, $1 \leq i \leq n$,
$\proc_i$ and $S_i$ are the name and body of the procedure.
For the follow-up development,
we first assume that quantum programs have no local variables,
and that procedures $\proc_i$ dispense with parameter passing.
The treatment of these auxiliary facilities is deferred to Sec. \ref{sec_loc}.

The intended semantics of language constructs above is similar to that of their classical counterparts.
To see the quantum features of those constructs,
we remark that:
\begin{description}
  \item[(i)] for the initialization, the choice of the state $\ket{0}$ as the initial value is due to the fact that any known quantum state can be prepared by applying a unitary operator to $\ket{0}$;
  \item[(ii)] for the probabilistic branching, different branches $\{S_m\}_m$ are chosen
      according to (randomly distributed) outcomes of the measurement $M \triangleq \{M_m\}_m$ on the qubits $\bar{q}$,
      and the measurement process could possibly destroy the current state.
\end{description}

\begin{example}[Alternative definition of $\bottom$]\label{exam_bottom}
Quantum program $\bottom$, as a basic program construct, can also be implemented
as the call statement $\CALL\ P_{\bottom}$ with $P_{\bottom}$ declared by
$$
\recDec{P_{\bottom}}{\CALL\ P_{\bottom}}
$$
\end{example}

\begin{remark}
For the syntax of $\qRP$,
we choose $\bottom$ as a basic program construct, rather than as in Exm. \ref{exam_bottom},
because the denotational semantics of recursive procedures
and its derivatives (e.g. formal weakest preconditions)
will need to be defined on the ``absolute''  semantics of $\bottom$ as a meta-symbol,
otherwise a circular reasoning will occur.
\end{remark}

\paragraph{Running example of the paper.}
To illustrate that our method has the capacity for dealing with probabilistic control,
we adopt the example of recursive quantum Markov chains (abbr. $\RQMC$) \cite{feng2013reachability}
as the running example of the paper.

\begin{example}[Syntax of $\RQMC$]\label{exam_syn_RQMC}
Let us introduce a modified version of Exm. 1 in the literature \cite{feng2013reachability}.
This is a two-player (Alice and Bob) game of first tossing a dice,
simulated by a qubit system $q$,
and then making a decision for either being the final winner,
flagged as $\ket{\pm}$, or transferring $q$ to the other.
The protocol of Alice goes as follows.
She first measures $q$ immutably by the observable
\begin{eqnarray*}
  M &\triangleq& \bigg\{ M_0 \triangleq \sqrt{\frac{1}{4}} I,\ M_1 \triangleq \sqrt{\frac{1}{2}} I,\ M_2 \triangleq \sqrt{\frac{1}{4}} I \bigg\}
\end{eqnarray*}
If the outcome $0$ is observed, then she sets $q$ to be $\ket{+}$ and terminates;
if $1$ is observed, then she transfers $q$ to the Bob and lets him play;
if $2$ is observed, the game will get stuck.
The protocol of Bob goes similarly except that
he will use the following observable
\begin{eqnarray*}
  M' &\triangleq& \bigg\{ M_0' \triangleq \sqrt{\frac{1}{2}} I,\ M_1' \triangleq \sqrt{\frac{1}{2}} I \bigg\}
\end{eqnarray*}
instead, and if the measurement outcome $0$ is observed, he sets $q$ to be $\ket{-}$ and terminates.
After Bob wins, Alice will do nothing and terminate immediately.
The game starts with Alice.
The core of the game is mutually recursive procedures $\Alice$ and $\Bob$ programmed as
$$
\begin{array}{c}
  \recDec{\Alice}{\ifStat{\Box m\cdot M[q] = m \rightarrow S_m}} \\
  \recDec{\Bob}{\ifStat{\Box m\cdot M'[q] = m \rightarrow S_m'}}
\end{array}
$$
where $\{S_m\}_{m = 0,1,2}$ and $\{S_m'\}_{m = 0,1}$ are defined as
$$
\begin{array}{lll}
  S_0 \triangleq q \starequal H & S_1 \triangleq \CALL\ \Bob & S_2 \triangleq \bottom \\
  S_0' \triangleq \CALL\ \Alice & S_1' \triangleq q \starequal H X &
\end{array}
$$
The main program $\RQMC$ of the game is as follows.
\begin{eqnarray*}
  \RQMC &\triangleq& q \assnequal 0;\ \CALL\ \Alice
\end{eqnarray*}
\end{example}

\subsection{Nondeterministic operational semantics}\label{subsec_op_sem}

\begin{table}[!htbp]

  \centering

  \begin{tabular}{rlcrl}
  (Bot) & $\langle \bottom,\rho\rangle \xrightarrow{\epsilon} \langle E, 0\rangle$  &  &
  (Skip) & $\langle \SKIP,\rho \rangle \xrightarrow{\epsilon} \langle E, \rho \rangle$ \\
  \specialrule{0em}{2pt}{2pt}
  (Init) &
  $\dfrac{\sum_{i} \voutprod{i}{q}{i} = I_{q}}{\langle q \assnequal \ket{0}, \rho \rangle \xrightarrow{\epsilon} \langle E, \sum_i \ket{0}_q \bra{i}\rho \ket{i}_q \bra{0} \rangle}$
  &  & (Unit) &
  $\dfrac{U U^\dag = U^\dag U = I_{\lst{q}}}{\langle \lst{q} \starequal U, \rho \rangle \xrightarrow{\epsilon} \langle E, U\rho U^\dag \rangle}$
  \\
  \specialrule{0em}{2pt}{2pt}
  (Comp${}_1$) & $\dfrac{\langle S_1,\rho\rangle \xrightarrow{l} \langle S_1',\rho'\rangle \mbox{ and } S_1' \neq E}{\langle S_1;S_2,\rho\rangle \xrightarrow{l} \langle S_1';S_2,\rho'\rangle}$ & &
  (Comp${}_2$) & $\dfrac{\langle S_1,\rho\rangle \xrightarrow{l} \langle E,\rho'\rangle}{\langle S_1;S_2,\rho\rangle \xrightarrow{l} \langle S_2,\rho'\rangle}$ \\
  \specialrule{0em}{2pt}{2pt}
  (Case) & \multicolumn{3}{l}{$\dfrac{M = \{M_m\}_m \mbox{ and } \sum_{m} M_m^\dag M_m = I_{\lst{q}}}{\langle \IF, \rho\rangle \xrightarrow{m} \langle S_m,M_m\rho M_m^\dag \rangle}$} \\
  \specialrule{0em}{2pt}{2pt}
  (Proc) & $\dfrac{\recDec{\proc}{S} \in D}{\langle\CALL\ \proc,\rho\rangle \xrightarrow{\epsilon} \langle S, \rho\rangle}$ & &
  (Except) &$\dfrac{\mbox{Other cases of $S$ and $l$}}{\langle S,\rho\rangle \xrightarrow{l} \bot}$
  \end{tabular}
  \caption{Labeled transition relation $\xrightarrow{l}$ with $l \triangleq \epsilon \mid m$.}
  \label{tab_opSem}
\end{table}

The operational semantics of quantum programs can be defined as
a nondeterministic transition relation $\rightarrow$
between quantum configurations $\langle S,\rho \rangle$
--- a global description for quantum program $S\in \qRP$
on the current state represented as a $\PDOP$ $\rho$.
For the well-definedness of $\langle S,\rho \rangle$,
it is required that $\Var(S) \subseteq \Var(\rho)$.
Note that $S$ could be the empty statement $E$ indicating that $\rho$ is the final output.
By a labeled transition
$$
\langle S,\rho \rangle \xrightarrow{l} \langle S',\rho'\rangle
$$
we mean that program $S$ on input state $\rho$ is evaluated
in one step with label $l$ to program $S'$ with output state $\rho'$.
The transition relation $\xrightarrow{l}$ for $\qRP$ is defined in Tab. \ref{tab_opSem}.

To better understand the relation $\xrightarrow{l}$, we remark that
\begin{itemize}
  \item The transition relation $\xrightarrow{l}$ is defined in a nondeterministic manner.
The unique source of nondeterminism is execution of a case statement,
and each measurement outcome $m$ corresponds to a different path of execution.
For those deterministic one-step executions, we use $\xrightarrow{\epsilon}$ instead
($\epsilon$ means there are no other choices).
Otherwise (when the cases of $S$ and $l$ mismatch),
the execution will fail and fall into a bottom state $\bot$.
  \item The annotated outer product $\voutprod{\psi}{\lst{q}}{\phi}$ and identity operator $I_{\lst{q}}$
mean that $\outprod{\psi}{\phi}, I \in \Hh_{\lst{q}}$.
  \item Here, and in the sequel,
the statement $\ifStat{\Box m\cdot M[\lst{q}] = m \rightarrow S_m}$ is shortened with $\IF$.
\end{itemize}

\begin{example}[Operational semantics of $\RQMC$]\label{exam_opSem_RQMC}
  Let $\RQMC$ be the quantum program defined in Exm. \ref{exam_syn_RQMC}.
  Let annotated $\PDOP$ $\rho_{q} \in \Dd(\Hh_{q})$ with $\tr(\rho_{q}) = 1$.
  Then part of the operational semantics of $\RQMC$ (Alice wins after two rounds)
  can be developed step by step as follows.
  $$
  \begin{array}{rl}
     & \langle q \assnequal 0; \CALL\ \Alice, \rho_{q}\rangle \\
    \xrightarrow{\epsilon} & \langle \CALL\ \Alice, \voutprod{0}{q}{0}\rangle \\
    \xrightarrow{\epsilon} & \langle \ifStat{\Box m\cdot M[q] = m \rightarrow S_m}, \voutprod{0}{q}{0}\rangle \\
    \xrightarrow{1} & \langle \CALL\ \Bob, \frac{1}{2}\voutprod{0}{q}{0}\rangle \\
    \xrightarrow{\epsilon} & \langle \ifStat{\Box m\cdot M'[q] = m \rightarrow S_m'}, \frac{1}{2}\voutprod{0}{q}{0}\rangle \\
    \xrightarrow{0} & \langle \CALL\ \Alice, \frac{1}{4}\voutprod{0}{q}{0}\rangle \\
    \xrightarrow{\epsilon} & \langle \ifStat{\Box m\cdot M[q] = m \rightarrow S_m},
    \frac{1}{4}\voutprod{0}{q}{0}\rangle \\
    \xrightarrow{0} & \langle q \starequal H, \frac{1}{16}\voutprod{0}{q}{0}\rangle \\
    \xrightarrow{\epsilon} & \langle E, \frac{1}{16}\voutprod{+}{q}{+}\rangle
  \end{array}
  $$
  The last configuration shows that Alice wins after two rounds with probability $\frac{1}{16}$.
\end{example}

To extend the one-step labeled transition relation $\xrightarrow{l}$
to a multi-step labeled transition relation $\xrightarrow{\alpha}$
with $\alpha \triangleq l \mid \alpha_1 \alpha_2$,
we define $\langle S,\rho \rangle \xrightarrow{\alpha} C$
($C$ is either $\langle S', \rho' \rangle$ or $\bot$) as
$\xrightarrow{l}$, defined in Tab. \ref{tab_opSem}, if $\alpha = l$;
or as $\langle S,\rho \rangle \xrightarrow{\alpha_1} \langle S'',\rho'' \rangle$
and $\langle S'', \rho'' \rangle \xrightarrow{\alpha_2} C$
for some $S''$ and $\rho''$ otherwise.

\begin{remark}[Comparison with Ying's operational semantics]\label{rem_op_sem}
Ying's original operational semantics \cite{Ying11,ying2016foundations} is
defined in a nondeterministic way without resorting to the concept of labels,
and there is only one rule (followed by the side condition $E;S_2 \triangleq S_2$) for the case of composition,
which is able to combine together the (Comp${}_1$, Comp${}_2$) rules of Tab. \ref{tab_opSem}.
In other words, in order to define a more fine-grained operational semantics,
we have to introduce the concept of labeled transition relation,
at the price of introducing an extra rule (Except) dealing with the case of exception.
\end{remark}

\subsection{$\QOP$-directed denotational semantics}\label{subsec_de_sem}

\begin{table}[!htbp]

  \centering

  \begin{tabular}{rlrl}
    (Param) & $\sem{\Omega_{\lst{q}}} = \omega_{\lst{q}}$ & (Bot) & $\sem{\bottom} = 0 \diamond 0$ \\
    \specialrule{0em}{2pt}{2pt}
    (Skip) & $\sem{\SKIP} = I \diamond I$ & (Init) & $\sem{q \assnequal \ket{0}} = \sum_i \ket{0}_q \bra{i} \diamond \ket{i}_q \bra{0}$ \\
    \specialrule{0em}{2pt}{2pt}
    (Unit) & $\sem{\lst{q} \starequal U} = U \diamond U^\dag$ & (Comp) & $\sem{S_1;S_2} = \sem{S_2} \circ \sem{S_1}$
    \\
    \specialrule{0em}{2pt}{2pt}
    (Case) & $\sem{\IF} = \sum_m \sem{S_m} \circ (M_m \diamond M_m^\dag)$ & (Proc) & $\sem{\CALL\ \proc_i} = \bigsqcup_{n = 0}^{\infty} \sem{ S_i^{(n)} }$
  \end{tabular}
  \caption{Denotational semantics of $\qRP[\Omega]$.}
  \label{tab_deSem}
\end{table}

The denotational semantics of a quantum program, denoted $\sem{\cdot}$, is defined as a super operator (i.e. $\QOP$).
The semantics of each term is given in a compositional way, except for the case of call statements.
To handle this case,
we need to define the syntactic approximation (i.e., unrolling) of the bodies of mutually recursive procedures.
\begin{definition}[Syntactic approximation]\label{def_syn_approx}
For declared recursive procedures $\proc_i$ with body $S_i$, $1 \leq i \leq n$,
the $k$th syntactic approximation $S_i^{(k)}$ is defined as:
\begin{eqnarray*}
  S_i^{(0)} &\triangleq& \bottom \\
  S_i^{(k+1)} &\triangleq& S_i\big[ \ldots, \big(\SKIP; S_j^{(k)}\big) \big/ \CALL\ \proc_j, \ldots \big]
\end{eqnarray*}
where $S_i\big[\ldots, \big( \SKIP; S_j^{(k)}\big) \big/ \CALL\ \proc_j, \ldots\big]$
stands for simultaneous substitution of $\SKIP; S_j^{(k)}$ for $\CALL\ \proc_j$,
for all $1 \leq j \leq n$, occurring inside $S_i$
\big(Here $\SKIP$ is used to simulate the first-step transition $\xrightarrow{\epsilon}$ for the statement $\CALL\ \proc_j$,
cf. the (Skip, Proc) rules of Tab. \ref{tab_opSem}\big).
\end{definition}

The denotational semantics $\sem{\cdot}$ for $\qRP$ with parameters $\Omega$
(denoted $\qRP[\Omega]$) is defined in Tab. \ref{tab_deSem},
where the quantum program parameter $\Omega_{\lst{q}}$,
ranging over the set of all quantum programs for quantum variables $\lst{q}$,
is interpreted as the corresponding quantum operation parameter $\omega_{\lst{q}}$,
ranging over the set of all $\QOP$s on $\Hh_{\lst{q}}$.
This is justified by the fact that any $\QOP$ can be simulated by a non-parameterized quantum program.

\begin{lemma}[Well-definedness of $\sem{\cdot}$]\label{lem_well_def_den}
Let $S_i'$ be a parameterized adaptation of $S_i$, the body of $\CALL\ \proc_i$, with $1 \leq i \leq n$,
by substituting parameterized quantum program $\SKIP; \Omega_j$ for each call-statement $\CALL\ \proc_j$
with $1 \leq j \leq n$ occurring inside $S_i$:
\begin{eqnarray*}
  S_i' &\triangleq& S_i\big[\big( \SKIP; \Omega_1 \big) \big/ \CALL\ \proc_1, \ldots, \big( \SKIP; \Omega_j \big) \big/ \CALL\ \proc_j, \ldots, \big( \SKIP; \Omega_n \big) \big/ \CALL\ \proc_n]
\end{eqnarray*}
Let $\Ff(\omega_1, \ldots, \omega_n)$ be a vectorial function on $\QOP$s defined by
\begin{eqnarray*}
  \Ff &\triangleq& \big( \sem{S_1'}, \ldots, \sem{S_j'}, \ldots, \sem{S_n'} \big)
\end{eqnarray*}
Define the least sequence of $\QOP$s $\big\{ \Ff_i^{k}(\lst{0}) \triangleq \calE_i^{(k)} \big\}_{k \geq 0}$, $0 \leq i \leq n$, generated by $\calF$ as follows.
\begin{eqnarray*}
  \big( \calE_1^{(0)}, \ldots, \calE_j^{(0)}, \ldots, \calE_n^{(0)} \big) &\triangleq& \big( 0, \ldots, 0, \ldots, 0 \big) \\
  \big( \calE_1^{(k+1)}, \ldots, \calE_j^{(k+1)}, \ldots, \calE_n^{(k+1)} \big) &\triangleq& \calF \big( \calE_1^{(k)}, \ldots, \calE_j^{(k)}, \ldots, \calE_n^{(k)} \big)
\end{eqnarray*}
It is the case that
\begin{description}
  \item[(i)] $\sem{S_i^{(k)}} = \Ff_i^{k}(\lst{0})$ for all $k \geq 0$;
  \item[(ii)] $\sem{S_i^{(k)}} \sqsubseteq \sem{S_i^{(k+1)}}$ for all $k \geq 0$.
\end{description}
\end{lemma}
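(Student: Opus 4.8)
The plan is to prove (i) by induction on $k$, matching the syntactic unrolling $S_i^{(k)}$ of Definition~\ref{def_syn_approx} against the semantic iteration $\Ff_i^k(\lst 0)$, and then to derive (ii) as an easy consequence using monotonicity of $\Ff$ and the fact that $\sem{\bottom} = 0\diamond 0$ is the least $\QOP$. The key observation that makes (i) work is that the denotational semantics clauses of Table~\ref{tab_deSem} are \emph{compositional in the subprograms}: for every construct other than $\CALL\ \proc_i$, $\sem{S}$ is built from the $\sem{\cdot}$-values of the immediate subterms by operations ($\circ$, $+$, tensoring with fixed Kraus pairs, scaling by $M_m\diamond M_m^\dag$) that are all monotone and (Scott-)continuous in each argument; and substitution of a program $T$ for an atomic subprogram $X$ inside $S$ yields $\sem{S[T/X]} = \sem{S}$ with $\sem{T}$ plugged in wherever $\sem{X}$ appeared. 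I would first isolate this as a \emph{substitution lemma}: for any $S \in \qRP[\Omega]$ whose only parameter occurrences are the $\Omega_j$, and any $\QOP$s $\calG_1,\dots,\calG_n$, one has $\sem{S[\SKIP;T_1/\Omega_1,\dots]} = \sem{S}$ evaluated with $\sem{\Omega_j}$ set to $\sem{\SKIP}\circ\sem{T_j} = \sem{T_j}$, which is proved by a routine structural induction on $S$ using Table~\ref{tab_deSem}. In particular, taking $T_j = S_j^{(k)}$ gives $\sem{S_i^{(k+1)}} = \sem{S_i'}$ evaluated at $(\sem{S_1^{(k)}},\dots,\sem{S_n^{(k)}}) = (\calE_1^{(k)},\dots,\calE_n^{(k)})$ — under the inductive hypothesis — which is exactly the $i$th component of $\Ff(\calE_1^{(k)},\dots,\calE_n^{(k)}) = \Ff_i^{k+1}(\lst 0)$.

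In detail, for (i): the base case $k=0$ is immediate since $S_i^{(0)} = \bottom$ gives $\sem{S_i^{(0)}} = 0\diamond 0 = \calE_i^{(0)} = \Ff_i^0(\lst 0)$ by the (Bot) clause and the definition of the starting tuple. For the inductive step, assume $\sem{S_j^{(k)}} = \Ff_j^k(\lst 0) = \calE_j^{(k)}$ for all $1\le j\le n$. By Definition~\ref{def_syn_approx}, $S_i^{(k+1)} = S_i[\,(\SKIP;S_j^{(k)})/\CALL\ \proc_j\,]_{j}$, which is precisely $S_i'$ with each parameter $\Omega_j$ instantiated to the concrete program $S_j^{(k)}$. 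Applying the substitution lemma and then the inductive hypothesis,
\[
\sem{S_i^{(k+1)}} \;=\; \sem{S_i'}\big(\sem{S_1^{(k)}},\dots,\sem{S_n^{(k)}}\big) \;=\; \sem{S_i'}\big(\calE_1^{(k)},\dots,\calE_n^{(k)}\big) \;=\; \Ff_i\big(\calE_1^{(k)},\dots,\calE_n^{(k)}\big) \;=\; \calE_i^{(k+1)} \;=\; \Ff_i^{k+1}(\lst 0),
\]
where the third equality is the definition of $\Ff$. This closes the induction and establishes (i). Note the extra $\SKIP;$ in the substitution is harmless at the level of denotations because $\sem{\SKIP} = I\diamond I$ is the identity for $\circ$; it is present only to make the operational unrolling (Remark after Definition~\ref{def_syn_approx}) line up step-for-step, and $\sem{\SKIP;S_j^{(k)}} = \sem{S_j^{(k)}}$.

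For (ii): it suffices to show that $\Ff$ is monotone in the product L\"owner order on tuples of $\QOP$s, and then to argue $\calE_i^{(k)} \sqsubseteq \calE_i^{(k+1)}$ by induction on $k$; combined with (i) this gives $\sem{S_i^{(k)}} \sqsubseteq \sem{S_i^{(k+1)}}$. Monotonicity of $\Ff$ reduces, by the substitution lemma again, to the claim that for each program $S$, the map sending a tuple of $\QOP$s $(\omega_1,\dots,\omega_n)$ to $\sem{S}$ (with $\sem{\Omega_j} = \omega_j$) is monotone; this is a structural induction using the monotonicity of $\circ$, $+$, tensor product, and pre-/post-composition with fixed completely positive maps $M_m\diamond M_m^\dag$ in the L\"owner order — all of which are standard (cf.\ the discussion of L\"owner order between super operators in Sec.~\ref{sec_pre}). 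The base of the induction on $k$ is $\calE_i^{(0)} = 0 \sqsubseteq \calE_i^{(1)}$ since $0\diamond 0$ is the bottom element; the step is $\calE^{(k)} \sqsubseteq \calE^{(k+1)}$ (componentwise, inductive hypothesis) $\implies \calE^{(k+1)} = \Ff(\calE^{(k)}) \sqsubseteq \Ff(\calE^{(k+1)}) = \calE^{(k+2)}$ by monotonicity of $\Ff$.

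\emph{Main obstacle.} The only real work is the substitution lemma and the monotonicity claim for $\sem{\cdot}$ as a function of the parameter valuation — everything else is bookkeeping. The slightly delicate point is making sure the structural induction for the substitution lemma is stated over the right class of terms, namely parameterized programs $\qRP[\Omega]$ in which the $\Omega_j$ occur only in the positions dictated by the $\CALL\ \proc_j$ they replace, so that "substituting a program for $\Omega_j$" commutes with the $\sem{\cdot}$-clauses of Table~\ref{tab_deSem} clause by clause; one must also check the well-definedness of the variable sets ($\Var(S_j^{(k)}) = \Var(S_j)$, so the $\QOP$s live on the intended Hilbert spaces). I expect these to go through without surprises, and I would relegate the detailed structural inductions to the appendix.
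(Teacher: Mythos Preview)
Your proposal is correct and follows essentially the same approach as the paper. The paper's proof is much terser: for (i) it simply says ``by definition of $\sem{\cdot}$ and $\{\Ff_i^k(\lst 0)\}_{k\ge 0}$,'' which is exactly your induction on $k$ with the substitution lemma left implicit; for (ii) it argues monotonicity of $\Ff$ and then inducts on $k$, just as you do. The only stylistic difference is in the monotonicity argument: you do a structural induction showing each semantic constructor ($\circ$, $+$, pre-/post-composition with fixed maps) is monotone, whereas the paper invokes the additive characterization $\calE_j \sqsubseteq \calF_j \iff \calF_j = \calE_j + \calG_j$ for some $\QOP$ $\calG_j$, together with (multi)linearity of $\Ff$ in its arguments, to get monotonicity in one stroke---expanding $\Ff(\ldots,\calE_j+\calG_j,\ldots)$ yields $\Ff(\ldots,\calE_j,\ldots)$ plus further $\QOP$ summands. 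Both arguments are sound and amount to the same thing; yours is more explicit, the paper's slightly slicker.
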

\begin{proof}
  By definition of $\sem{\cdot}$ and $\big\{ \Ff_i^{k}(\lst{0}) \big\}_{k \geq 0}$, Stat. (i) follows.
  To show Stat. (ii), we remark that $\Ff$ is monotone,
  i.e. for any $\QOP$s $\calE_j$ and $\calF_j$ with $\calE_j \sqsubseteq \calF_j$,
\begin{eqnarray*}
  \Ff(\ldots, \calE_j, \ldots) &\sqsubseteq& \Ff(\ldots, \calF_j, \ldots)
\end{eqnarray*}
(Note that $\sqsubseteq$ distributes over components of the vector.)
This is the case due to linearity of (super operators) $\Ff$,
together with the fact that $\calE_j + \calG_j = \calF_j$ for some $\QOP$ $\calG_j$.
Then Stat. (ii) follows by induction on $k$, together with monotonicity of $\calF$.
\end{proof}

\begin{remark}[Comparison with Ying's denotational semantics]\label{rem_closure_den}
The denotational semantics $\sem{\cdot}$ of Table \ref{tab_deSem} is
defined as a composition of $\QOP$s independent of the input $\rho$,
which can be seen as a parameterized extension of
Ying's original denotational semantics \cite{Ying11,ying2016foundations}.
The parameterized version of denotational semantics helps to explain that
\begin{itemize}
  \item $\big\{ \sem{S_i^{(k)}} \big\}_{k \geq 0}$, $0 \leq i \leq n$, is the least sequence of $\QOP$s generated by $\calF$;
  \item $\big\{ \sem{\CALL\ \proc_i} \big\}_{0 \leq i \leq n}$ is the least fixed point of the vectorial function $\calF$, i.e.
      \begin{itemize}
        \item $( \ldots, \sem{\CALL\ \proc_j}, \ldots) = \calF( \ldots, \sem{\CALL\ \proc_j}, \ldots)$, and
        \item for any $\QOP$s $\{ \calE_i \}_{0 \leq i \leq n}$ with $( \calE_1, \ldots, \calE_j, \ldots, \calE_n ) = \calF ( \calE_1, \ldots, \calE_j, \ldots, \calE_n )$, we have that $\sem{\CALL\ \proc_j} \sqsubseteq \calE_j$ for all $0 \leq j \leq n$.
      \end{itemize}
\end{itemize}
as implied by Lem. \ref{lem_well_def_den}.
\end{remark}

\begin{example}[Denotational semantics of $\RQMC$]\label{exam_deSem_RQMC}
  Let $\RQMC$ be the quantum program defined in Exm. \ref{exam_syn_RQMC}.
  Let $S_a$ and $S_b$ denote the bodies of procedures $\Alice$ and $\Bob$.
  Then $S_a' \triangleq S_a[\Omega_b/ \CALL\ \Bob]$ and $S_b' \triangleq S_b[\Omega_a/ \CALL\ \Alice]$.
  Thus we have that
  \begin{eqnarray*}
    \Ff &\triangleq& (\sem{S_a'}, \sem{S_b'}) = (\frac{1}{4} H \diamond H + \frac{1}{2} \omega_b,
  \frac{1}{2} \omega_a + \frac{1}{2} HX \diamond XH)
  \end{eqnarray*}
  This implies that
  \begin{eqnarray*}
    \Ff_1^{n}(\lst{0}) &=& \Big( \sum_{k \geq 1}^{2k-1 \leq n}\frac{1}{4^k} \Big) H \diamond H + \Big( \sum_{k \geq 1}^{2k \leq n} \frac{1}{4^k} \Big) HX \diamond XH \\
    \Ff_2^{n}(\lst{0}) &=& \Big( \sum_{k \geq 1}^{2k \leq n}\frac{1}{2\cdot 4^k} \Big) H \diamond H + \Big( \sum_{k \geq 1}^{2k-1 \leq n} \frac{1}{2\cdot 4^{k-1}} \Big) HX \diamond XH
  \end{eqnarray*}
  By the (Proc) rule of Tab. \ref{tab_deSem}, it follows that
  \begin{eqnarray*}
    \sem{\CALL\ \Alice} &=& \bigsqcup_{n = 0}^{\infty} \Ff_1^{n}(\lst{0}) = \frac{1}{3} H \diamond H + \frac{1}{3} HX \diamond XH
  \end{eqnarray*}
  Finally, for any $\PDOP$ $\rho$ with $\tr(\rho) = 1$, we have that
  \begin{eqnarray*}
    \sem{\RQMC}(\rho) &=& \sem{q \assnequal 0; \CALL\ \Alice}(\rho) \\
     &=& (\sem{\CALL\ \Alice} \circ \sem{q \assnequal 0}) (\rho) \\
     &=& \sem{\CALL\ \Alice}\big( \sem{q \assnequal 0}(\rho) \big) \\
     &=& \sem{\CALL\ \Alice}(\outprod{0}{0}) \\
     &=& \frac{1}{3} H \outprod{0}{0} H + \frac{1}{3} HX \outprod{0}{0} XH \\
     &=& \frac{1}{3} \outprod{+}{+} + \frac{1}{3} \outprod{-}{-} = \frac{1}{3} I
  \end{eqnarray*}
\end{example}
\begin{remark}
After round $n = 2k - 1$ ($k \geq 1$) of the two-player game (cf. Exm. \ref{exam_opSem_RQMC}),
there is a computed result $\frac{1}{4^k}\outprod{+}{+}$,
but the resulting state as a whole (for all $k \geq 0$) should include the sum of all, i.e.,
$\sum_{k = 1}^{\infty} \frac{1}{4^k}\outprod{+}{+} = \frac{1}{3}\outprod{+}{+}$ (cf. Exm. \ref{exam_deSem_RQMC}).
\end{remark}

The following Theorem reveals the connection between operational and denotational semantics.
Namely, the meaning of running program $S$ on input state $\rho$
is the sum of all possible output states $\rho'$
\big(Note that $\tr(\rho')$ denotes the probability of reaching $\rho'/\tr(\rho')$\big).

\begin{theorem}\label{thm_sem}
  For any quantum program $S\in \qRP$, we have that
  \begin{eqnarray}\label{eq_sem}
    \sem{S}(\rho) &=& \sum_{\langle S,\rho\rangle\xrightarrow{\alpha}\langle E,\rho'\rangle} \rho'
  \end{eqnarray}
  where the summation of $\rho'$ is taken for every possible $\alpha$ s.t. $\langle S,\rho\rangle\xrightarrow{\alpha}\langle E,\rho'\rangle$.
\end{theorem}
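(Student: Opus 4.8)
The plan is to prove \eqref{eq_sem} by structural induction on $S$, with the procedure declaration $D$ fixed throughout, carried out in two passes: first for \emph{call-free} programs, then for arbitrary $S\in\qRP$, where the only genuinely new case is the procedure call. Before the induction I would record three routine preliminaries. (a) Given a configuration $\langle S,\rho\rangle$ and a label word $\alpha$, the relation $\langle S,\rho\rangle\xrightarrow{\alpha}\langle S',\rho'\rangle$ is functional: the rules of Tab.~\ref{tab_opSem} are deterministic once the next label is fixed, and (Comp$_1$)/(Comp$_2$) are mutually exclusive; hence the sum in \eqref{eq_sem} is genuinely indexed by $\alpha$. (b) The operators $\rho'$ in that sum are positive, and every finite subsum has trace at most $\tr(\rho)$, so the sum converges in the L\"owner order and equals the least upper bound of its finite subsums. (c) Every $\QOP$ $\calE$ is $\sigma$-additive on positive operators with bounded partial sums, $\calE(\sum_i\rho_i)=\sum_i\calE(\rho_i)$, by interchanging the two positive countable sums in a Kraus expansion of $\calE$.

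For the call-free pass, the base cases $\bottom$, $\SKIP$, $q\assnequal\ket0$ and $\lst q\starequal U$ follow by inspection of the matching rules of Tabs.~\ref{tab_opSem} and~\ref{tab_deSem}. For $\IF$, the first transition must be an instance of (Case), so terminating runs of $\IF$ from $\rho$ biject with the disjoint union over $m$ of terminating runs of $S_m$ from $M_m\rho M_m^\dag$; summing, using the induction hypothesis on each $S_m$, and invoking the (Case) clause of Tab.~\ref{tab_deSem} gives the claim. For $S_1;S_2$, rules (Comp$_1$)/(Comp$_2$) make every terminating run of $S_1;S_2$ from $\rho$ factor uniquely as a terminating run of $S_1$ from $\rho$ to some intermediate state $\rho''$ followed by a terminating run of $S_2$ from $\rho''$ (the split being the first moment the first component becomes $E$); summing over runs, applying the induction hypothesis to $S_2$, then $\sigma$-additivity of $\sem{S_2}$ from preliminary (c), then the induction hypothesis to $S_1$, yields $\sem{S_2}\bigl(\sem{S_1}(\rho)\bigr)=\sem{S_1;S_2}(\rho)$.

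It then remains to treat $S=\CALL\ \proc_i$ in the general pass; every other case of that pass reuses the call-free argument verbatim, since none of them used call-freeness. Here $\sem{\CALL\ \proc_i}=\bigsqcup_{n\geq0}\sem{S_i^{(n)}}$ by the (Proc) clause, and each $S_i^{(n)}$ is call-free, so the first pass gives $\sem{S_i^{(n)}}(\rho)=\sum_{\langle S_i^{(n)},\rho\rangle\xrightarrow{\alpha}\langle E,\rho'\rangle}\rho'$; by Lem.~\ref{lem_well_def_den}(ii) these partial sums are nondecreasing in $n$, so it suffices to prove $\bigsqcup_{n\geq0}\sum_{\langle S_i^{(n)},\rho\rangle\xrightarrow{\alpha}\langle E,\rho'\rangle}\rho'=\sum_{\langle\CALL\ \proc_i,\rho\rangle\xrightarrow{\beta}\langle E,\rho'\rangle}\rho'$. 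For ``$\sqsubseteq$'' I would, for each fixed $n$, set up an injection from terminating runs of $S_i^{(n)}$ from $\rho$ into terminating runs of $\CALL\ \proc_i$ from $\rho$ preserving the output state, obtained by replacing each $\epsilon$-step $\langle\SKIP;S_j^{(k)},\sigma\rangle\xrightarrow{\epsilon}\langle S_j^{(k)},\sigma\rangle$ that was introduced by the substitution in Def.~\ref{def_syn_approx} with the genuine step $\langle\CALL\ \proc_j,\sigma\rangle\xrightarrow{\epsilon}\langle S_j,\sigma\rangle$ of rule (Proc) (the $\SKIP$ in Def.~\ref{def_syn_approx} is present precisely to keep the step counts aligned); runs of $S_i^{(n)}$ that reach the artificial $\bottom=S^{(0)}$ at nesting depth $n$ output the zero operator and are harmless. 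For ``$\sqsupseteq$'', a terminating run $\langle\CALL\ \proc_i,\rho\rangle\xrightarrow{\beta}\langle E,\rho'\rangle$ has a finite label, hence uses (Proc) only finitely often, hence unfolds procedures to some finite nesting depth $k_0$; for every $n>k_0$ it coincides, under the above correspondence, with a terminating run of $S_i^{(n)}$ with the same output, so any finite subsum of the right-hand side is dominated, for $n$ large enough, by $\sum_{\langle S_i^{(n)},\rho\rangle\xrightarrow{\alpha}\langle E,\rho'\rangle}\rho'$, and taking suprema over such subsums (preliminary (b)) gives the inequality.

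I expect the procedure-call case to be the main obstacle: making the correspondence between finite syntactic unfoldings and runs of the recursive program fully precise---tracking the inserted $\SKIP$s of Def.~\ref{def_syn_approx} and the artificial $\bottom$ at the cutoff depth, and checking injectivity and output-preservation---and justifying the interchange of the supremum $\bigsqcup_n$ with the countable path-indexed sum. The composition case rests on the mild fact that $\QOP$s commute with countable sums of positive operators having bounded partial sums, which is why I would dispatch it once among the preliminaries rather than reprove it inline.
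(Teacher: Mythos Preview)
Your proposal is correct and follows essentially the same route as the paper: the paper defers Theorem~\ref{thm_sem} to Theorem~\ref{thm_sem_EqRP}, which also argues by structural induction, first establishing a call-free version so that the result applies to each approximation $S_i^{(n)}$, and then handling the $\CALL$ case by relating terminating runs of the approximations to terminating runs of the body. The correspondence you build by hand between runs of $S_i^{(n)}$ and runs of $\CALL\ \proc_i$ is exactly what the paper isolates as Lemma~\ref{lem_syn_approx_pre} (every terminating run of $S_i$ is a terminating run of some $S_i^{(k)}$, and conversely, monotonically in $k$); your explicit tracking of the inserted $\SKIP$s from Def.~\ref{def_syn_approx} and the zero-output runs hitting the artificial $\bottom$ at cutoff depth is precisely the content of that lemma's proof.
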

\begin{proof}
  See the proof of Thm. \ref{thm_sem_EqRP}.
\end{proof}
\begin{remark}
To the right-hand side of Eq. (\ref{eq_sem}) in Thm. \ref{thm_sem},
the summation should act upon any (possibly overlapping) $\rho'$,
as long as $\langle S,\rho\rangle\xrightarrow{\alpha}\langle E,\rho'\rangle$ holds for a different $\alpha$.
For instance, there are two branches of a program each having the computed result $\frac{1}{4}\outprod{0}{0}$,
then the resulting state as a whole should include the sum of both, i.e.,
$\frac{1}{4}\outprod{0}{0} + \frac{1}{4}\outprod{0}{0} = \frac{1}{2}\outprod{0}{0}$.
By comparison, Ying's original treatment of this issue \cite{Ying11,ying2016foundations},
due to lack of labels in transition rules, is to use a multi-set instead to collect up all possible outputs.
\end{remark}

\section{Quantum assertion logic}\label{sec_ass}

$\QPRED$s are simply employed in \cite{panangaden2006, Ying11}
as pre- and post-conditions of quantum Hoare's triples.
However, to achieve an effectively checkable (symbolic) L\"{o}wner comparison between parameterized quantum predicates,
we have to first define a parameterized symbolic abstraction for $\QPRED$s
--- Parameterized Quantum Predicate Terms (abbr. $\PQPT$),
whose definition should be a trade off between simplicity and expressibility
so that we make a minimal use of parameters ranging over a continuous space, and at the same time,
make sure both $\QPRED$s and all (possibly parameterized) intermediate assertions,
e.g. weakest (liberal) preconditions, in quantum program verification can be expressed thereof.

\paragraph{$\QOP$'s dual as predicate transformer}
To see the role of a $\QOP$'s dual in defining its weakest precondition (i.e. $\QPRED$),
note that every $\QOP$ $\calE$ can be seen as a mapping over $\PDOP$s $\rho$
and its dual $\calE^*$ as a mapping over $\QPRED$s $M$, i.e.
\begin{displaymath}
  \begin{array}{rlccc}
   \calE & \colon & \rho & \mapsto & \calE(\rho) \\
   \calE^* & \colon & \calE^*(M) & \mapsfrom & M
  \end{array}
\end{displaymath}
By definition of Schr\"{o}dinger-Heisenberg dual, we have that
\begin{eqnarray*}
  \tr\big(\calE^*(M) \rho\big) &=& \tr\big(M \calE(\rho)\big)
\end{eqnarray*}
Intuitively, the truth value of $\calE^*(M)$ at $\rho$ is equal to the truth value of $M$ at $\calE(\rho)$.
That is that, $\calE^*(M)$ is the weakest precondition of $\calE$ w.r.t. the postcondition $M$ \cite{panangaden2006}.
The beautiful duality between state-transformer (forwards) and predicate-transformer (backwards) semantics
plays a key role in defining quantum assertion logic using the predicate transform $\calE^*$.

\subsection{Parameterized quantum predicate terms}

\begin{definition}[Syntax of $\PQPT$s]\label{def_syn_pqpt}
Let $\lst{q}$, $\lst{r}$ and $\lst{s}$ be lists of pairwise distinct quantum variables
with $\lst{q} = (\lst{r}, \lst{s})$ (if any).
Let $I_{\lst{q}}$ be the constant symbol denoting the identity operator on $\Hh_{\lst{q}}$,
$\calX_{\lst{q}}$ a metavariable for all first-order variables ranging over $\Pp(\Hh_{\lst{q}})$
with $\calX_{\lst{q}}^\dag = \calX_{\lst{q}}$ (i.e. quantum predicate variables),
and $\calE_{\lst{q}}, \calF_{\lst{q}}$ $\QOP$s on $\Hh_{\lst{q}}$
with $\calE_{\lst{q}} + \calF_{\lst{q}} \sqsubseteq I_{\lst{q}}\diamond I_{\lst{q}}$.

A parameterized base $\calB_{\lst{q}}$ of $\PQPT$s on $\lst{q}$
\big(and its set of parameters $\Params{\calB_{\lst{q}}}$\big) is defined as
$$
  \calB_{\lst{q}}\ \big( \Params{\calB_{\lst{q}}} \big) \quad \triangleq \quad
  \left\{
    \begin{array}{ll}
      \calX_{\lst{q}} & \big( \{ \calX_{\lst{q}} \} \big) \\
      I_{\lst{r}}\otimes \calB_{\lst{s}} & \big( \Params{\calB_{\lst{s}}} \big) \\
      \calB_{\lst{r}}\otimes I_{\lst{s}} & \big( \Params{\calB_{\lst{r}}} \big) \\
      \calB_{\lst{r}}\otimes \calB_{\lst{s}} & \big( \Params{\calB_{\lst{r}}} \uplus \Params{\calB_{\lst{r}}} \big)
    \end{array}
  \right.
$$

A $\PQPT$ $P_{\lst{q}}$ on $\lst{q}$
\big(and its set of parameters $\Params{P_{\lst{q}}}$\big) is defined as
$$
  P_{\lst{q}}\ \big( \Params{P_{\lst{q}}} \big) \quad \triangleq \quad \calE_{\lst{q}}^*(\calB_{\lst{q}}) + \calF_{\lst{q}}^*(I_{\lst{q}})\ \big( \Params{\calB_{\lst{q}}} \big)
$$
\end{definition}
Note that a $\PQPT$ $P_{\lst{q}}$ is usually composed of two parts:
the parameterized part $\calE_{\lst{q}}^*(\calB_{\lst{q}})$
and the non-parameterized part $\calF_{\lst{q}}^*(I_{\lst{q}})$,
which can also be written as
\begin{eqnarray*}
  P_{\lst{q}} &\triangleq& \calE_{\lst{q}}^* \bigg( \bigotimes_{\calX\in \Params{\calB_{\lst{q}}}} \calX \bigg) + \calF_{\lst{q}}^*(I_{\lst{q}})
\end{eqnarray*}
due to the convention that the identity operators in $\calB_{\lst{q}}$ can be omitted.

\vspace{2mm}
\noindent {\bf Semantics of $\PQPT$s.}
Let $\mathbb{I}$ be the standard interpretation of nonlogical symbols
in the syntax of $\PQPT$s to the semantic counterparts.
Let $v$ be a mapping (i.e. assignment) from variables $\calX_{\lst{q}}$ to $\QPRED$s $\Pp(\Hh_{\lst{q}})$,
i.e. $v(\calX_{\lst{q}})\in \Pp(\Hh_{\lst{q}})$.
The denotation of a $\PQPT$ $P_{\lst{q}}$ under interpretation $\mathbb{I}$ and assignment $v$, denoted $P_{\lst{q}}^{\mathbb{I},v}$, can be defined as usual (cf., e.g., Def. \ref{def_sem_pqpt}).

To see well-definedness of $\PQPT$s, i.e. $P_{\lst{q}}^{\mathbb{I},v} \in \Pp(\Hh_{\lst{q}})$,
we note that $P_{\lst{q}}^{\mathbb{I},v}$ is Hermitian and
$$
0_{\lst{q}} = 0_{\lst{q}} \calB_{\lst{q}}^{\mathbb{I},v} 0_{\lst{q}} + 0_{\lst{q}} I_{\lst{q}} 0_{\lst{q}} \sqsubseteq P_{\lst{q}}^{\mathbb{I},v} = \calE_{\lst{q}}^*(\calB_{\lst{q}}^{\mathbb{I},v}) + \calF_{\lst{q}}^*(I_{\lst{q}}) \sqsubseteq \big( \calE_{\lst{q}} + \calF_{\lst{q}} \big)^*(I_{\lst{q}}) \sqsubseteq I_{\lst{q}}
$$
The set of $\PQPT$s on $\lst{q}$ is denoted $\Tt(\lst{q})$.
We shall write $P_{\lst{q}}$ as $P$ if $\lst{q}$ is clear from the context.

\begin{lemma}[From $\QPRED$s to $\PQPT$s]\label{lem_qpred_to_pqpt}
  For every $\QPRED$ $M \in \Pp(\Hh_{\lst{q}})$,
  there is a $\PQPT$ $P_{\lst{q}}$ of the form
  $0_{\lst{q}} \calB_{\lst{q}} 0_{\lst{q}} + \calF_{\lst{q}}^*(I_{\lst{q}}) = \calF_{\lst{q}}^*(I_{\lst{q}})$
  such that $M = P_{\lst{q}}$.
\end{lemma}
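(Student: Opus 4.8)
The plan is to realize $M$ as the weakest precondition $\calF_{\lst q}^*(I_{\lst q})$ of a suitably chosen $\QOP$ $\calF_{\lst q}$, taking the parameterized part $\calE_{\lst q}$ of the target $\PQPT$ to be the zero operation $0_{\lst q}\diamond 0_{\lst q}$, so that $\calE_{\lst q}^*(\calB_{\lst q}) = 0_{\lst q}\calB_{\lst q}0_{\lst q} = 0_{\lst q}$ for any choice of $\calB_{\lst q}$ and every assignment $v$; concretely one may just take $\calB_{\lst q}\triangleq\calX_{\lst q}$, a single quantum predicate variable. What remains is to produce $\calF_{\lst q}$ with a single Kraus operator $E$, i.e. $\calF_{\lst q} = E\diamond E^\dag$, satisfying the side condition of Def.~\ref{def_syn_pqpt}, namely $\calE_{\lst q}+\calF_{\lst q} = \calF_{\lst q}\sqsubseteq I_{\lst q}\diamond I_{\lst q}$, and such that $\calF_{\lst q}^*(I_{\lst q}) = E^\dag I_{\lst q} E = E^\dag E = M$.

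First I would take the positive square root of $M$. Since $0_{\lst q}\sqsubseteq M\sqsubseteq I_{\lst q}$, $M$ is positive, hence Hermitian, so by Lem.~\ref{lem_spec_dec} it decomposes as $M = \sum_i p_i\outprod{\psi_i}{\psi_i}$ with each $p_i\in[0,1]$ real and $\{\ket{\psi_i}\}_i$ an orthonormal basis of $\Hh_{\lst q}$. Put $\sqrt{M}\triangleq\sum_i\sqrt{p_i}\,\outprod{\psi_i}{\psi_i}$; this is again positive (all $\sqrt{p_i}\ge 0$), in particular Hermitian, and orthonormality of $\{\ket{\psi_i}\}_i$ yields $\sqrt{M}^\dag\sqrt{M} = \sum_i p_i\outprod{\psi_i}{\psi_i} = M$.

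Then I would set $E\triangleq\sqrt{M}$ and $\calF_{\lst q}\triangleq\sqrt{M}\diamond\sqrt{M}$ (using $\sqrt{M}^\dag = \sqrt{M}$). Being presented in Kraus form, $\calF_{\lst q}$ is automatically completely positive; it is trace-non-increasing because $\sqrt{M}^\dag\sqrt{M} = M\sqsubseteq I_{\lst q}$, which is precisely $\calF_{\lst q}\sqsubseteq I_{\lst q}\diamond I_{\lst q}$ and hence also the side condition $\calE_{\lst q}+\calF_{\lst q}\sqsubseteq I_{\lst q}\diamond I_{\lst q}$ with $\calE_{\lst q}\triangleq 0_{\lst q}\diamond 0_{\lst q}$. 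Therefore $P_{\lst q}\triangleq\calE_{\lst q}^*(\calB_{\lst q}) + \calF_{\lst q}^*(I_{\lst q})$ is a well-formed $\PQPT$ of the shape claimed, i.e. $0_{\lst q}\calB_{\lst q}0_{\lst q} + \calF_{\lst q}^*(I_{\lst q})$, and under every interpretation $\mathbb{I}$ and assignment $v$ its denotation is $P_{\lst q}^{\mathbb{I},v} = 0_{\lst q} + \sqrt{M}\,I_{\lst q}\,\sqrt{M} = M$, as required.

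I do not expect a genuine obstacle; the only point deserving care is the existence of the positive square root on the possibly infinite-dimensional space $\Hh_{\lst q}$ (e.g. the state space of a quint), and this is exactly what the spectral decomposition of Lem.~\ref{lem_spec_dec} supplies — defining $\sqrt{M}$ through the spectral basis as above, with no convergence worry since $0\le\sqrt{p_i}\le 1$ and $\{\ket{\psi_i}\}_i$ is orthonormal. The remaining bookkeeping (that $(0_{\lst q}\diamond 0_{\lst q})^* = 0_{\lst q}\diamond 0_{\lst q}$ and $(0_{\lst q}\diamond 0_{\lst q})(X) = 0_{\lst q}$, and that $\calF_{\lst q}^* = \sqrt M\diamond\sqrt M$) is routine from the definition of the Schr\"odinger--Heisenberg dual.
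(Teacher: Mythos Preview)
Your proposal is correct and follows essentially the same approach as the paper: both use the spectral decomposition $M=\sum_k a_k\outprod{\varphi_k}{\varphi_k}$ from Lem.~\ref{lem_spec_dec} to build a $\QOP$ $\calF_{\lst q}$ with $\calF_{\lst q}^*(I_{\lst q})=M$. The only cosmetic difference is the Kraus form chosen: the paper takes the rank-one operators $E_k=\sqrt{a_k}\,\outprod{\varphi_k}{\varphi_k}$ as separate Kraus operators, i.e. $\calF_{\lst q}=\sum_k E_k\diamond E_k^\dag$, whereas you bundle them into the single Kraus operator $E=\sqrt{M}=\sum_k\sqrt{a_k}\,\outprod{\varphi_k}{\varphi_k}$; either choice yields $\calF_{\lst q}^*(I_{\lst q})=\sum_k a_k\outprod{\varphi_k}{\varphi_k}=M$ and satisfies the side condition $\calF_{\lst q}\sqsubseteq I_{\lst q}\diamond I_{\lst q}$.
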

\begin{proof}
By Lem. \ref{lem_spec_dec},
$\QPRED$ $M$ has the spectral decomposition $M = \sum_k a_k \outprod{\varphi_k}{\varphi_k}$.
Then the lemma follows by defining $\calF_{\lst{q}}$ as
$\calF_{\lst{q}} \triangleq \sum_k \big(\sqrt{a_k} \outprod{\varphi_k}{\varphi_k}\big) \diamond \big(\sqrt{a_k} \outprod{\varphi_k}{\varphi_k}\big)^\dag$.
\end{proof}

\begin{remark}
The quantum tautology $I$, quantum absurdity $0$, quantum predicate variable $\calX$,
and the negation $I - P$ of $\PQPT$ $P$ with $\Params{P} = \emptyset$
are $\PQPT$s of the form $0 (\calB) 0 + I (I) I$, $0 (\calB) 0 + 0 (I) 0$, $I (\calX) I + 0 (I) 0$,
and $\calF_{\lst{q}}^*(I_{\lst{q}})$ (cf. Lem. \ref{lem_qpred_to_pqpt}), respectively.
\end{remark}

\begin{definition}[Operations on $\PQPT$s]\label{def_quant_bool_op}
  Let $P$, $\{ P_m \}_m$ (resp. $Q_i, \calX_i$ with $1 \leq i \leq l$)
  be $\PQPT$s on quantum variables $\lst{q}$ (resp. $\lst{q}_i$ s.t. $\biguplus_{1 \leq i \leq l} \lst{q}_i$ exists),
  $P_{\lst{r}}, P_{\lst{s}}$ $\PQPT$s on $\lst{r}, \lst{s}$,
  and $\{\calE_m\}_m$ $\QOP$s on $\lst{q}$ with $\sum_{m} \calE_m \sqsubseteq I_{\lst{q}} \diamond I_{\lst{q}}$.
  Define operations on $\PQPT$s as follows.
  \begin{description}
    \item[(Substitution).] $P[Q_1/\calX_1, \ldots, Q_i/\calX_i, \ldots, Q_l/\calX_l]$ is the result of (simultaneously) substituting $Q_i$ for the (at most one) occurrence of $\calX_i$ in $P$ for all $1 \leq i \leq l$, if
        \begin{itemize}
          \item $|\Params{P}| \leq 1$ with $\big\{ Q_i \triangleq \calE_{\lst{q}_i}^*(\calX_i) + \calF_{\lst{q}_i}^*(I_{\lst{q}_i}) \big\}_{1 \leq i \leq l}$, or
          \item $|\Params{P}| \geq 2$ with $\big\{ Q_i \triangleq \calE_{\lst{q}_i}^*(\calX_i) \mid \calF_{\lst{q}_i}^*(I_{\lst{q}_i}) \big\}_{1 \leq i \leq l}$.
        \end{itemize}
    \item[(Conjunction).] $P_{\lst{r}} \otimes P_{\lst{s}}$ is the quantum conjunction of $P_{\lst{r}}$ and $P_{\lst{s}}$, if
        \begin{itemize}
          \item $P_{\lst{r}} \triangleq \calE_{\lst{r}}^*(\calB_{\lst{r}})$ and $P_{\lst{s}} \triangleq \calE_{\lst{s}}^*(\calB_{\lst{s}})$, or
          \item $\Params{P_{\lst{r}}} = \emptyset$ or $\Params{P_{\lst{s}}} = \emptyset$.
        \end{itemize}
    \item[(Disjunction).] $\sum_{m} \calE_m^*( P_m )$ is the quantum disjunction of $\{P_m\}_m$ under the exclusive case selection $\{\calE_m\}_m$, if
        \begin{itemize}
          \item $\big\{ \Params{P_m} \big\}_m$ is a singleton.
        \end{itemize}
  \end{description}
\end{definition}

\begin{lemma}\label{lem_closure_of_pqpt}
$\PQPT$s are closed under those operations defined in Def. \ref{def_quant_bool_op}.
\end{lemma}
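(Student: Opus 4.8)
The plan is to prove the lemma by a direct, case-by-case verification over the three operations of Def.~\ref{def_quant_bool_op}. For each of Substitution, Conjunction and Disjunction I would assume its side condition holds and then rewrite the resulting expression into the canonical two-part shape $\calE'^*_{\lst{q}}(\calB'_{\lst{q}}) + \calF'^*_{\lst{q}}(I_{\lst{q}})$ of Def.~\ref{def_syn_pqpt}, checking that $\calB'_{\lst{q}}$ is again a base (a tensor of identities and pairwise-distinct quantum predicate variables on pairwise-disjoint sublists) and that the accompanying $\QOP$s satisfy $\calE'_{\lst{q}} + \calF'_{\lst{q}} \sqsubseteq I_{\lst{q}}\diamond I_{\lst{q}}$. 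No genuine induction on the structure of the $\PQPT$s is needed, since their canonical form is supplied by definition; and once the result is \emph{syntactically} a $\PQPT$, its well-definedness as a $\QPRED$ is free from the argument already given below Def.~\ref{def_syn_pqpt}. The only algebraic tools required are the three dual identities $(\calE_1\otimes\calE_2)^* = \calE_1^*\otimes\calE_2^*$, $(\calF+\calG)^* = \calF^*+\calG^*$, $(\calF\circ\calG)^* = \calG^*\circ\calF^*$ from Lem.~\ref{lem_dual_qop}, together with the two evident monotonicity facts for the L\"{o}wner order on $\QOP$s: $\calA\sqsubseteq\calB$ entails $\calA\circ\calC\sqsubseteq\calB\circ\calC$ and $\calA\otimes\calD\sqsubseteq\calB\otimes\calD$.

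For Conjunction I would split on the two clauses. If $P_{\lst{r}} = \calE_{\lst{r}}^*(\calB_{\lst{r}})$ and $P_{\lst{s}} = \calE_{\lst{s}}^*(\calB_{\lst{s}})$, then $P_{\lst{r}}\otimes P_{\lst{s}} = (\calE_{\lst{r}}\otimes\calE_{\lst{s}})^*(\calB_{\lst{r}}\otimes\calB_{\lst{s}})$, where $\calB_{\lst{r}}\otimes\calB_{\lst{s}}$ is a base on $\lst{q}=(\lst{r},\lst{s})$ and $\calE_{\lst{r}}\otimes\calE_{\lst{s}}\sqsubseteq I_{\lst{q}}\diamond I_{\lst{q}}$ by tensoring $\calE_{\lst{r}}\sqsubseteq I_{\lst{r}}\diamond I_{\lst{r}}$ with $\calE_{\lst{s}}\sqsubseteq I_{\lst{s}}\diamond I_{\lst{s}}$. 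If instead $\Params{P_{\lst{r}}}=\emptyset$, so $P_{\lst{r}}=\calF_{\lst{r}}^*(I_{\lst{r}})$, I would distribute the tensor over the two summands of $P_{\lst{s}}$ to obtain $(\calF_{\lst{r}}\otimes\calE_{\lst{s}})^*(I_{\lst{r}}\otimes\calB_{\lst{s}}) + (\calF_{\lst{r}}\otimes\calF_{\lst{s}})^*(I_{\lst{q}})$, whose side condition is just $\calF_{\lst{r}}\otimes(\calE_{\lst{s}}+\calF_{\lst{s}})\sqsubseteq I_{\lst{q}}\diamond I_{\lst{q}}$; the symmetric subcase is identical. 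For Disjunction, the hypothesis that $\{\Params{P_m}\}_m$ is a singleton forces all $P_m$ to carry the same base $\calB_{\lst{q}}$, so $P_m = \calE_{\lst{q},m}^*(\calB_{\lst{q}}) + \calF_{\lst{q},m}^*(I_{\lst{q}})$; pushing each $\calE_m^*$ through the two summands via $(\calF\circ\calG)^*=\calG^*\circ\calF^*$ and collecting with $(\calF+\calG)^*=\calF^*+\calG^*$ turns $\sum_m\calE_m^*(P_m)$ into $\big(\sum_m\calE_{\lst{q},m}\circ\calE_m\big)^*(\calB_{\lst{q}}) + \big(\sum_m\calF_{\lst{q},m}\circ\calE_m\big)^*(I_{\lst{q}})$, and the side condition drops out of $\sum_m(\calE_{\lst{q},m}+\calF_{\lst{q},m})\circ\calE_m \sqsubseteq \sum_m\calE_m \sqsubseteq I_{\lst{q}}\diamond I_{\lst{q}}$.

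Substitution is the case I expect to cost the most care, because the admissible shape of the $Q_i$ is tied to $|\Params{P}|$. When $|\Params{P}|\le 1$, with $Q_1 = \calE_{\lst{q}_1}^*(\calX_1)+\calF_{\lst{q}_1}^*(I_{\lst{q}_1})$, I would expand $\calE_{\lst{q}}^*$ linearly over $Q_1$ (tacitly tensored with an identity $\QOP$ on the untouched variables), rewrite each summand as $(\calE_{\lst{q}_1}\otimes\id)^*(\calX_1\otimes I)$ resp.\ $(\calF_{\lst{q}_1}\otimes\id)^*(I)$, and compose with $\calE_{\lst{q}}^*$ to land on $\big((\calE_{\lst{q}_1}\otimes\id)\circ\calE_{\lst{q}}\big)^*(\calX_1) + \big((\calF_{\lst{q}_1}\otimes\id)\circ\calE_{\lst{q}}+\calF_{\lst{q}}\big)^*(I_{\lst{q}})$; the side condition follows from $\big((\calE_{\lst{q}_1}+\calF_{\lst{q}_1})\otimes\id\big)\circ\calE_{\lst{q}}\sqsubseteq\calE_{\lst{q}}$ and $\calE_{\lst{q}}+\calF_{\lst{q}}\sqsubseteq I_{\lst{q}}\diamond I_{\lst{q}}$. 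When $|\Params{P}|\ge 2$, I would pull the tensor of the $Q_i$-duals out via $(\calE_1\otimes\calE_2)^*=\calE_1^*\otimes\calE_2^*$, leaving one dual $\calG^*$ with $\calG$ the tensor of the $\calE_{\lst{q}_i}$ (for the parameterized $Q_i$), the $\calF_{\lst{q}_i}$ (for the constant $Q_i$) and an $\id$ on the untouched variables, and then compose to get $(\calG\circ\calE_{\lst{q}})^*(\calX_1\otimes\cdots\otimes\calX_k) + \calF_{\lst{q}}^*(I_{\lst{q}})$ (merging the two $I$-terms by $(\calF+\calG)^*=\calF^*+\calG^*$ if no variable survives); every factor of $\calG$ is below the corresponding identity $\QOP$, so $\calG\sqsubseteq I_{\lst{q}}\diamond I_{\lst{q}}$ and again $\calG\circ\calE_{\lst{q}}+\calF_{\lst{q}}\sqsubseteq\calE_{\lst{q}}+\calF_{\lst{q}}\sqsubseteq I_{\lst{q}}\diamond I_{\lst{q}}$. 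The real obstacle here is bookkeeping rather than mathematics: I must track the variable lists so that $\calB'_{\lst{q}}$ stays a legal tensor of predicate variables on pairwise-disjoint sublists --- which is fine because the $\lst{q}_i$ in the statement are pairwise disjoint and the original base already had its variables on disjoint sublists --- and I must keep the order of composition right so that the $\QOP$ inequalities line up with the monotonicity facts above.
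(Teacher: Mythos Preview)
Your proposal is correct and follows exactly the route the paper indicates: the paper's own proof is the one-line ``By Def.~\ref{def_syn_pqpt}, together with Lem.~\ref{lem_dual_qop}'', and what you have written is precisely the case analysis that this sentence is meant to abbreviate. The only ingredients beyond those two references are the monotonicity facts for $\sqsubseteq$ under composition and tensor of $\QOP$s, which you correctly identify and which are implicit in the paper's treatment of the L\"{o}wner order on super operators.
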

\begin{proof}
By Def. \ref{def_syn_pqpt}, together with Lem. \ref{lem_dual_qop}.
\end{proof}

\begin{example}[Quantum predicate variables]$\ $
\begin{itemize}
  \item The $\PQPT$ $P_{q} \otimes \calX_{r} \otimes \calX_{s}$ can induce, by substitution for $\calX_{r}$ and $\calX_{s}$, any $\PQPT$ of the form $P_{q} \otimes Q_{r} \otimes R_{s}$ (no entanglement between $r$ and $s$).
  \item $P_{q} \otimes \calX_{r,s}$ can produce any $\PQPT$ of the form $P_{q} \otimes Q_{r,s}$ (where $Q_{r,s}$ possibly expresses an entanglement between $r$ and $s$).
\end{itemize}
\end{example}

\begin{example}[Modeling classical predicates]\label{exam_simulation}
  Let quantum variable $q$ be such that $\Hh_{q}$ has the computational basis $\{\ket{i}\}_i$
  (i.e., $I_q = \sum_{i} \voutprod{i}{q}{i}$).
  Then the $\PQPT$
  \begin{eqnarray*}
    \sum_{i} \voutprod{i}{q}{i} \calX_q \voutprod{i}{q}{i} &=& \sum_{i} \mixprod{i}{\calX_q}{i} \voutprod{i}{q}{i}
  \end{eqnarray*}
  (By comparison, $\calX_q$ has the outer product representation
  $\calX_q = \sum_{i,j} \mixprod{i}{\calX_q}{j} \voutprod{i}{q}{j}$)
  can be used to simulate a classical first-order variable $x$
  over the domain $\{ (\mixprod{i}{\calX_q}{i}, i) \}_i$,
  each element $i$ occurring with probability $\mixprod{i}{\calX_q}{i}$.
  Based on this, the classical parameterized predicate $\varphi(x)$ can be simulated by
  $\sum_{i} \mixprod{i}{\calX_q}{i} \voutprod{i}{q}{i} \otimes \ulcorner \varphi(i) \urcorner$,
  where $\ulcorner \varphi(i) \urcorner$ is the $\PQPT$ for simulating the closed predicate $\varphi(i)$.
  For illustrating examples, see, e.g., case studies.
\end{example}

\begin{remark}[Discretization]
Every $\PQPT$ can be defined in a discrete space up to approximation.
To see this, it suffices to show that $\QOP$s can be discretized in the sense of approximation.
This is the case due to the fact that
any $\QOP$ can be obtained by tracing out the environmental part of a global unitary operation,
and any unitary operation can be approximated to arbitrary accuracy
by a quantum circuit composed of a (fixed) finite set of gates,
e.g. Hadamard, CNOT and $\pi/8$ \cite{nielsen2000quantum}.
\end{remark}

\subsection{Parameterized orders and limits}

\paragraph{Parameterized order.}
In our quantum program logic,
we shall use $\PQPT$s as pre- and post-conditions in place of $\QPRED$s.
In accordance with this change, the L\"{o}wner comparison between $\QPRED$s
will be replaced by a L\"{o}wner ordering formula for $\PQPT$s,
and quantum assertion theories will be redefined so as to provide these L\"{o}wner ordering formulas.

\begin{definition}[L\"{o}wner order between $\PQPT$s]\label{def_order}
  Let $P$ and $Q$ be $\PQPT$s on quantum variables $\lst{q}$.
  A (legitimate) L\"{o}wner ordering formula is of the form $P \sqsubseteq Q$ or $P = Q$ with $\Params{P} = \Params{Q}$,
  and its formal semantics (i.e. truth value) is defined as follows.
  \begin{itemize}
    \item $\models_{\mathbb{I}} P \sqsubseteq Q$, if $P^{\mathbb{I},v}\sqsubseteq Q^{\mathbb{I},v}, \forall v$.
    \item $\models_{\mathbb{I}} P = Q$, if $\models_{\mathbb{I}} P \sqsubseteq Q$ and $\models_{\mathbb{I}} Q \sqsubseteq P$.
  \end{itemize}
\end{definition}

\begin{example}\label{exam_order}
We illustrate valid (parameterized) L\"{o}wner ordering formulas by two items.
\begin{itemize}
  \item $\mixprod{0}{\calX}{0} \outprod{0}{0} \sqsubseteq \mixprod{0}{\calX}{0} \outprod{0}{0} + \mixprod{1}{\calX}{1} \outprod{1}{1}$;
  \item $X \outprod{0}{0} \calX \outprod{0}{0} X = \outprod{+}{0} \calX \outprod{0}{+}$.
\end{itemize}
\end{example}

\begin{definition}[Quantum assertion theories]
$\ $
\begin{itemize}
  \item The set of true $\sqsubseteq$-ordered $\PQPT$s under $\mathbb{I}$, denoted $\trueOrder$, is defined as
  $$
  \trueOrder \quad \triangleq \quad \bigcup_{\bar{q}} \big\{ P \sqsubseteq Q \colon P,Q\in \Tt(\bar{q}), \Params{P} = \Params{Q}, \textup{ and } \models_{\mathbb{I}} P \sqsubseteq Q \big\}
  $$
  \item The set of true $=$-ordered $\PQPT$s under $\mathbb{I}$, denoted $\trueEquality$, is defined as
  $$
  \trueEquality \quad \triangleq \quad \bigcup_{\bar{q}} \big\{ P = Q \colon P,Q\in \Tt(\bar{q}), \Params{P} = \Params{Q}, \textup{ and } \models_{\mathbb{I}} P = Q \big\}
  $$
\end{itemize}
\end{definition}

Note that $\trueOrder$ provides all (true) formal assertions on $\PQPT$s needed in this paper;
in the sequel, $\trueEquality$, as a subset of $\trueOrder$, can be used to reason with exact probabilities.

\begin{lemma}\label{lem_pqpt_order}
  Let $P \triangleq \calC^*(\calB) + \calD^*(I)$ and $Q \triangleq \calE^*(\calB) + \calF^*(I)$
  be $\PQPT$s with the same base $\calB$.
  Suppose that $\calC \sqsubseteq \calE$ or $\calC \sqsupset \calE$ (i.e. $\calC$ and $\calE$ are L\"{o}wner comparable).
  Then we have that
  \begin{description}
    \item[(1)] $\models_{\mathbb{I}} P \sqsubseteq Q$ if, and only if,
    \begin{itemize}
      \item $\calC \sqsubseteq \calE$ and $\calD^*(I) \sqsubseteq \calF^*(I)$, or
      \item $\calC \sqsupset \calE$ and $\calC^*(I) + \calD^*(I) \sqsubseteq \calE^*(I) + \calF^*(I)$.
    \end{itemize}
    \item[(2)] $\models_{\mathbb{I}} P = Q$ if, and only if, $\calC = \calE$ and $\calD^*(I) = \calF^*(I)$.
  \end{description}
\end{lemma}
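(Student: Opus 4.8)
The plan is to reduce the universally quantified semantic condition $\models_{\mathbb I}P\sqsubseteq Q$ (a statement about \emph{all} assignments $v$) to a check at just two distinguished assignments, exploiting that $P^{\mathbb I,v}$ and $Q^{\mathbb I,v}$ depend affinely on $\calB^{\mathbb I,v}$ and that duals of $\QOP$s are linear, positive, and hence monotone. Two preliminary observations do the groundwork. First, by the grammar of Definition~\ref{def_syn_pqpt} every base $\calB_{\lst q}$ expands to a tensor word over identity operators and quantum predicate variables containing at least one variable; instantiating all its variables by identities gives an assignment $v_\top$ with $\calB^{\mathbb I,v_\top}=I_{\lst q}$, and sending one variable to $0$ gives $v_\bot$ with $\calB^{\mathbb I,v_\bot}=0_{\lst q}$, while by well-definedness of $\PQPT$s we always have $0_{\lst q}\sqsubseteq\calB^{\mathbb I,v}\sqsubseteq I_{\lst q}$. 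Second, since $\calC$ and $\calE$ are assumed L\"owner comparable, the characterization of the $\QOP$-order gives a $\QOP$ $\calG$ with $\calE=\calC+\calG$ in the case $\calC\sqsubseteq\calE$, resp.\ $\calC=\calE+\calG$ (and $\calG\neq 0\diamond 0$) in the strict case $\calC\sqsupset\calE$; dualizing via Lemma~\ref{lem_dual_qop} yields $\calE^{*}=\calC^{*}+\calG^{*}$ (resp.\ $\calC^{*}=\calE^{*}+\calG^{*}$), so the $\calC$- and $\calE$-contributions inside $P$ and $Q$ differ by the single positive, monotone term $\calG^{*}$.

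For part~(1), case $\calC\sqsubseteq\calE$: substituting $\calE^{*}=\calC^{*}+\calG^{*}$ gives $Q^{\mathbb I,v}-P^{\mathbb I,v}=\calG^{*}(\calB^{\mathbb I,v})+\bigl(\calF^{*}(I)-\calD^{*}(I)\bigr)$. The first summand is positive for every $v$, so $\calD^{*}(I)\sqsubseteq\calF^{*}(I)$ suffices for $\models_{\mathbb I}P\sqsubseteq Q$; conversely, evaluating at $v_\bot$ makes the first summand vanish and forces $\calD^{*}(I)\sqsubseteq\calF^{*}(I)$. Case $\calC\sqsupset\calE$: now $Q^{\mathbb I,v}-P^{\mathbb I,v}=\bigl(\calF^{*}(I)-\calD^{*}(I)\bigr)-\calG^{*}(\calB^{\mathbb I,v})$, which is positive for all $v$ iff $\calF^{*}(I)-\calD^{*}(I)\sqsupseteq\calG^{*}(\calB^{\mathbb I,v})$ for all $v$; by monotonicity of $\calG^{*}$ together with $\calB^{\mathbb I,v}\sqsubseteq I_{\lst q}$ the right-hand side is largest at $v_\top$, with value $\calG^{*}(I)=\calC^{*}(I)-\calE^{*}(I)$, so the condition is equivalent to $\calC^{*}(I)+\calD^{*}(I)\sqsubseteq\calE^{*}(I)+\calF^{*}(I)$. (Along the way one notes this last inequality already entails $\calD^{*}(I)\sqsubseteq\calF^{*}(I)$, since $\calC^{*}(I)\sqsupseteq\calE^{*}(I)$.)

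For part~(2), $\models_{\mathbb I}P=Q$ means $\models_{\mathbb I}P\sqsubseteq Q$ and $\models_{\mathbb I}Q\sqsubseteq P$, and I would apply part~(1) to each, once with the pair $(\calC,\calE)$ and once with $(\calE,\calC)$. Taking without loss of generality $\calC\sqsubseteq\calE$ (the opposite orientation is symmetric), the $P\sqsubseteq Q$ direction gives $\calD^{*}(I)\sqsubseteq\calF^{*}(I)$; for $Q\sqsubseteq P$, if $\calE\sqsupset\calC$ strictly then part~(1) applied with the roles swapped yields $\calE^{*}(I)+\calF^{*}(I)\sqsubseteq\calC^{*}(I)+\calD^{*}(I)$, and combining this with $\calD^{*}(I)\sqsubseteq\calF^{*}(I)$ and $\calC^{*}(I)\sqsubseteq\calE^{*}(I)$ squeezes all of these into equalities, in particular $\calC^{*}(I)=\calE^{*}(I)$, i.e.\ $\sum_k G_k^{\dag}G_k=0$ for the Kraus operators of $\calG=\calE-\calC$, whence $\calG=0\diamond 0$ --- contradicting strictness. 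So $\calC=\calE$, and then both halves reduce to $\calD^{*}(I)=\calF^{*}(I)$; the converse is immediate because $\calC=\calE$ and $\calD^{*}(I)=\calF^{*}(I)$ make $P^{\mathbb I,v}=Q^{\mathbb I,v}$ for every $v$.

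The only genuinely delicate points I anticipate are (a) justifying that testing the two boundary assignments $v_\bot,v_\top$ suffices, which hinges on the base grammar guaranteeing that $\calB^{\mathbb I,v}$ can realize both $0_{\lst q}$ and $I_{\lst q}$ and on positivity and monotonicity of $\QOP$-duals, and (b) in part~(2), upgrading the numerical collapse $\calC^{*}(I)=\calE^{*}(I)$ to the operator-level identity $\calC=\calE$ via $\calG^{*}(I)=\sum_k G_k^{\dag}G_k$ and positivity; the remainder is routine manipulation with Lemma~\ref{lem_dual_qop} and linearity of $\calG^{*}$.
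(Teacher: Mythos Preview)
Your proposal is correct and follows essentially the same approach as the paper: reduce the universal check over all assignments to the two extremal instantiations $\calB\mapsto 0$ and $\calB\mapsto I$, using positivity and monotonicity of $\calG^{*}$ where $\calG$ is the difference witnessed by L\"owner comparability. The paper's proof is just a two-line hint (``take $\calB$ to be $0$ if $\calC\sqsubseteq\calE$, and $I$ otherwise''; for (2), ``first take $\calB$ to be $0$; then use comparability''), and you have filled in exactly the details it omits. The only cosmetic difference is in part~(2): the paper goes directly---set $\calB=0$ to get $\calD^{*}(I)=\calF^{*}(I)$, then use comparability at $\calB=I$ to force $\calG^{*}(I)=0$---whereas you route through two applications of part~(1) and a contradiction; both arrive at the same Kraus-operator collapse $\sum_k G_k^{\dagger}G_k=0\Rightarrow\calG=0\diamond 0$.
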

\begin{proof}
For the proof of Stat. (1), $\calB$ is taken to be $0$ if $\calC \sqsubseteq \calE$, and $I$ otherwise.
For the proof of Stat. (2), first take $\calB$ to be $0$; then use the fact that $\calC$, $\calE$ are L\"{o}wner comparable.
\end{proof}

\begin{remark}\label{rem_lowner_order}
  The decision of $\models_{\mathbb{I}} P \sqsubseteq Q$ (resp. $\models_{\mathbb{I}} P = Q$)
  will have to resort to positivity (resp. equality) of super operators on separable states,
  which is beyond the scope of the current paper.
  However, as Lem. \ref{lem_pqpt_order} entails,
  a restricted semantics of $P \sqsubseteq Q$ (resp. $P = Q$) independent of parameters
  can be defined based on L\"{o}wner order of $\QOP$s and $\QPRED$s,
  in case that $\calC$ and $\calE$ are comparable.
  In what follows,
  we shall adopt the standard semantics of a L\"{o}wner ordering formula,
  but the restricted semantics applies too.
\end{remark}

\paragraph{Parameterized limits.}
For the definition of necessary intermediate assertions in verifying recursive procedures,
we need to introduce the concept of the L. U. B. (i.e. upper limit)
and G. L. B. (i.e. lower limit) of an infinite sequence of $\PQPT$s.

\begin{definition}[The upper and lower limits]\label{def_limits}
  Let $I$ and $P$ be $\PQPT$s, $\calE$ and $\{\calE_n\}_{n \geq 0}$ $\QOP$s
  with $\forall n \geq 0.\ \calE_n \sqsubseteq \calE_{n+1}$ and $\calE = \lim\limits_{n\to\infty} \calE_n$
  (For the existence of $\lim\limits_{n\to\infty} \calE_n$, cf. Sec. \ref{sec_pre}).
  \begin{itemize}
    \item Define the upper limit $\bigsqcup_{n = 0}^{\infty} P_n$ of the sequence of $\PQPT$s $\big\{ P_n \triangleq \calE_n^*(P) \big\}_{n \geq 0}$ by
        \begin{eqnarray*}
          \bigsqcup_{n = 0}^{\infty} P_n &\triangleq& \lim\limits_{n\to\infty} P_n = \calE^*(P)
        \end{eqnarray*}
    \item Define the lower limit $\bigsqcap_{n = 0}^{\infty} Q_n$ of the sequence of $\PQPT$s $\big\{ Q_n \triangleq \calE_n^*(P) + \big( I - \calE_n^*(I) \big) \big\}_{n \geq 0}$ by
        \begin{eqnarray*}
          \bigsqcap_{n = 0}^{\infty} Q_n &\triangleq& \lim\limits_{n\to\infty} Q_n = \calE^*(P) + \big( I - \calE^*(I) \big)
        \end{eqnarray*}
  \end{itemize}
\end{definition}

\begin{lemma}[Well-definedness of the limits]\label{lem_well_definedness_of_limits}
  Let $\{P_n\}_{n \geq 0}$ and $\{Q_n\}_{n \geq 0}$ be as in Def. \ref{def_limits}.
  Then we have that
  \begin{itemize}
    \item $\{P_n\}_{n \geq 0}$ are $\PQPT$s with $\models_{\mathbb{I}} P_n \sqsubseteq P_{n+1}$ for all $n \geq 0$ (thus denoted $\{P_n\}_{n \geq 0}^{\sqsubseteq}$);
    \item $\{Q_n\}_{n \geq 0}$ are $\PQPT$s with $\models_{\mathbb{I}} Q_n \sqsupseteq Q_{n+1}$ for all $n \geq 0$ (thus denoted $\{Q_n\}_{n \geq 0}^{\sqsupseteq}$).
  \end{itemize}
\end{lemma}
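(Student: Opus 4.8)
The plan is to handle the two families $\{P_n\}_{n\ge 0}$ and $\{Q_n\}_{n\ge 0}$ in parallel, proving for each member (a) that it is a legitimate $\PQPT$ with the same parameter set as $P$, and (b) the claimed monotonicity; the equality $\Params{P_n}=\Params{Q_n}=\Params{P}$ is immediate since $\calE_n^\ast$ and the constants $I_{\lst{q}}$ do not touch the predicate variable of the base $\calB$, so all the ordering formulas in the statement are well-formed per Definition~\ref{def_order}. For (a) I would not try to match the literal canonical shape $\calE^\ast(\calB)+\calF^\ast(I)$ of Definition~\ref{def_syn_pqpt} together with its side condition, but instead argue at the semantic level, exactly as in the well-definedness check following that definition. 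Fix $\mathbb{I}$ and an assignment $v$. Then $P^{\mathbb{I},v}$ is a $\QPRED$, so $0_{\lst{q}}\sqsubseteq P^{\mathbb{I},v}\sqsubseteq I_{\lst{q}}$. Since each $\calE_n$ is a $\QOP$, its dual $\calE_n^\ast$ is completely positive, hence monotone, and $\calE_n^\ast(I_{\lst{q}})\sqsubseteq I_{\lst{q}}$ because $\calE_n$ is trace-non-increasing (this is the fact $0\sqsubseteq\sum_k E_k^\dag E_k\sqsubseteq I$ recalled in Section~\ref{sec_pre}). Consequently $0_{\lst{q}}\sqsubseteq\calE_n^\ast(P^{\mathbb{I},v})\sqsubseteq\calE_n^\ast(I_{\lst{q}})\sqsubseteq I_{\lst{q}}$, which disposes of $P_n$; and $0_{\lst{q}}\sqsubseteq I_{\lst{q}}-\calE_n^\ast(I_{\lst{q}})$, so $0_{\lst{q}}\sqsubseteq Q_n^{\mathbb{I},v}=\calE_n^\ast(P^{\mathbb{I},v})+(I_{\lst{q}}-\calE_n^\ast(I_{\lst{q}}))\sqsubseteq\calE_n^\ast(I_{\lst{q}})+(I_{\lst{q}}-\calE_n^\ast(I_{\lst{q}}))=I_{\lst{q}}$. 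If one prefers an explicit syntactic form, write $P=\calC^\ast(\calB)+\calD^\ast(I)$ and rewrite $P_n$, $Q_n$ using Lemma~\ref{lem_dual_qop}(3),(4) and Lemma~\ref{lem_qpred_to_pqpt}, noting that $\QOP$s are closed under composition and under addition of a dominated map.

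For (b) I would use the characterization from Section~\ref{sec_pre}: $\calE_n\sqsubseteq\calE_{n+1}$ iff $\calE_n+\calG_n=\calE_{n+1}$ for some $\QOP$ $\calG_n$. Taking duals and using linearity of the dual (Lemma~\ref{lem_dual_qop}(3)) gives $\calE_{n+1}^\ast=\calE_n^\ast+\calG_n^\ast$, with $\calG_n^\ast$ again completely positive. Hence, for every $v$, $P_{n+1}^{\mathbb{I},v}-P_n^{\mathbb{I},v}=\calG_n^\ast(P^{\mathbb{I},v})\sqsupseteq 0_{\lst{q}}$ because $P^{\mathbb{I},v}\sqsupseteq 0_{\lst{q}}$ and $\calG_n^\ast$ is positive, which is precisely $\models_{\mathbb{I}}P_n\sqsubseteq P_{n+1}$. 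Similarly $Q_{n+1}^{\mathbb{I},v}-Q_n^{\mathbb{I},v}=\calG_n^\ast(P^{\mathbb{I},v})-\calG_n^\ast(I_{\lst{q}})=\calG_n^\ast(P^{\mathbb{I},v}-I_{\lst{q}})\sqsubseteq 0_{\lst{q}}$ since $P^{\mathbb{I},v}-I_{\lst{q}}\sqsubseteq 0_{\lst{q}}$ and $\calG_n^\ast$ is monotone, which is $\models_{\mathbb{I}}Q_{n+1}\sqsubseteq Q_n$.

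I expect the only real friction to lie in part (a): after rewriting $\calE_n^\ast(P)$, the $\QOP$s that appear satisfy the \emph{sufficient} condition of being trace-non-increasing rather than literally being $\sqsubseteq I_{\lst{q}}\diamond I_{\lst{q}}$ as super operators, so I would keep well-definedness at the semantic level ($0_{\lst{q}}\sqsubseteq\,\cdot\,\sqsubseteq I_{\lst{q}}$ for all $v$) rather than insisting on a syntactic form match. Given Lemmas~\ref{lem_dual_qop} and~\ref{lem_qpred_to_pqpt} and the elementary $\QOP$ facts from Section~\ref{sec_pre}, everything else is routine bookkeeping.
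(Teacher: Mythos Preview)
Your argument is correct and in substance coincides with the paper's, which simply cites Lemma~\ref{lem_closure_of_pqpt} (closure of $\PQPT$s under disjunction) for the ``are $\PQPT$s'' part and Lemma~\ref{lem_pqpt_order} for the monotonicity. Your decomposition $\calE_{n+1}=\calE_n+\calG_n$ with $\calG_n$ a $\QOP$, and the positivity of $\calG_n^\ast$ on $P^{\mathbb{I},v}$ and on $I-P^{\mathbb{I},v}$, is exactly the computation that underlies Lemma~\ref{lem_pqpt_order}, so part~(b) is the same proof in different clothing.

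The one place you genuinely diverge is your preference for a purely semantic well-definedness check in part~(a). Since the lemma literally asserts that $P_n$, $Q_n$ \emph{are} $\PQPT$s---a syntactic datum per Definition~\ref{def_syn_pqpt}, not merely that their interpretations land in $\Pp(\Hh_{\lst{q}})$---the paper's route via Lemma~\ref{lem_closure_of_pqpt} is the cleaner way to close that claim; your ``explicit syntactic form'' fallback, rewriting $\calE_n^\ast(\calC^\ast(\calB)+\calD^\ast(I))$ via Lemma~\ref{lem_dual_qop}, is precisely that lemma spelled out. Your observation about the side-condition friction is apt and more careful than the paper: the well-definedness check after Definition~\ref{def_syn_pqpt} only uses $(\calE+\calF)^\ast(I)\sqsubseteq I$ (equivalently, $\calE+\calF$ trace-non-increasing), and it is this weaker reading that is stable under pre-composition by a general $\QOP$ $\calE_n$, so the closure argument really runs under that interpretation.
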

\begin{proof}
  By Lem. \ref{lem_closure_of_pqpt} (closure of $\PQPT$s under disjunction), together with Lem. \ref{lem_pqpt_order}.
\end{proof}

\begin{remark}
  We define the L. U. B. and G. L. B. of a sequence of $\PQPT$s
  as a specialized $\PQPT$ (i.e. the limit of a restricted sequence of $\PQPT$s),
  rather than directly introducing their general form into the syntax of a $\PQPT$,
  due to the fact that the current one is enough for the follow-up development
  while keeping a simple form of $\PQPT$s.
\end{remark}

\subsection{Program correctness and expressiveness}\label{subsec_cor_exp}

\begin{table}[!htbp]

  \centering

  \begin{tabular}{lcl}
    $\fwp.(\CALL\ \proc_i).P = \bigsqcup_{n=0}^\infty \fwp.S_i^{(n)}.P$ & & $\fwlp.(\CALL\ \proc_i).P = \bigsqcap_{n=0}^\infty \fwlp.S_i^{(n)}.P$ \\
      \specialrule{0em}{2pt}{2pt}
    $\fwp.\bottom.P = 0$ & & $\fwlp.\bottom.P = I$ \\
      \specialrule{0em}{2pt}{2pt}
    $\fxp.\SKIP.P = P$ & & $\fxp.(q \assnequal \ket{0}).P = \sum_i \ket{i}_q \bra{0} P \ket{0}_q \bra{i}$ \\
      \specialrule{0em}{2pt}{2pt}
    $\fxp.(\lst{q} \starequal U).P = U^\dag P U$ & & $\fxp.(S_1;S_2).P = \fxp.S_1.(\fxp.S_2.P)$ \\
      \specialrule{0em}{2pt}{2pt}
    \multicolumn{3}{l}{$\fxp.\IF.P = \sum_m M_m^\dag(\fxp.S_m.P)M_m$}
  \end{tabular}
  \caption{Definition of formal weakest (liberal) preconditions --- $\fxp \in \{ \fwp, \fwlp\}$.}
  \label{tab_wp_wlp} 
  \vspace{-8pt}
\end{table}

For now, a (legitimate) quantum partial (resp. total) correctness formula
can be defined as a quantum Hoare's triple $\pcor{P}{S}{Q}$ (resp. $\tcor{P}{S}{Q}$),
where $S$ is a quantum program, and $P,Q$ are $\PQPT$s with $\Params{P} = \Params{Q}$.
Since quantum programs can be viewed semantically as a $\QOP$,
to define the semantics of a quantum Hoare's triple,
we first need to define the correctness semantics of a $\QOP$.

\begin{definition}[Correctness of $\QOP$s]\label{def_cor_qop}
Let $M$, $N$ be $\QPRED$s and $\calE$ a $\QOP$.
We say that
\begin{description}
  \item[(Partial correctness).] $\calE$ is partially correct w.r.t. precondition $M$ and postcondition $N$, written $\pcor{M}{\calE}{N}$, if
\begin{eqnarray}\label{eq_par_sem}
  \tr(M\rho) &\leq& \tr\big(N\calE(\rho)\big) + \big[ \tr(\rho) - \tr\big(\calE(\rho)\big) \big],\quad \forall \rho.
\end{eqnarray}
  \item[(Total correctness).] $\calE$ is totally correct w.r.t. precondition $M$ and postcondition $N$, written $\tcor{M}{\calE}{N}$, if
\begin{eqnarray}\label{eq_tot_sem}
  \tr(M\rho) &\leq& \tr\big(N\calE(\rho)\big),\quad \forall \rho.
\end{eqnarray}
\end{description}
\end{definition}

\begin{remark}\label{rem_qop_cor}
Eq. (\ref{eq_par_sem}) can be seen as a probabilistic version of the following statement:
if state $\rho$ satisfies predicate $M$, then, applying operation $\calE$ to $\rho$,
either $\calE$ fails to terminate or the resulting state $\calE(\rho)$ satisfies predicate $N$;
and total correctness is a stronger version of partial correctness
by guaranteeing termination once the precondition is satisfied.
For more information on classical partial and total correctness,
the reader is referred to \cite{Francez}.
\end{remark}

As in classical Hoare logic,
the notion of weakest (liberal) precondition can be a candidate for the definition of
intermediate assertions involved in proving quantum program correctness.
The semantical weakest (liberal) preconditions (for a $\QOP$) can be defined as:

\begin{theorem}[Quantum duality theorem]\label{thm_sem_dual}
Let $N$ be a $\QPRED$ and $\calE$ a $\QOP$.
Define the semantical weakest (resp., liberal) precondition $\WP(\calE, N)$
(resp., $\WLP(\calE, N)$) of $\calE$ w.r.t. $N$ by
\begin{itemize}
  \item $\WP(\calE, N) \triangleq \calE^*(N)$
  \item $\WLP(\calE, N) \triangleq I - \WP(\calE, I - N)$
\end{itemize}
It is the case, for any $\QPRED$ $M$, that
\begin{description}
    \item[(a)] $\tcor{M}{\calE}{N}$ if, and only if, $M \sqsubseteq \WP(\calE, N)$; and
    \item[(b)] $\pcor{M}{\calE}{N}$ if, and only if, $M \sqsubseteq \WLP(\calE, N)$.
  \end{description}
\end{theorem}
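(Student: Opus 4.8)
The plan is to reduce both equivalences to the semantic characterization of the Löwner order in Lemma \ref{lem_lowner_compr}, after rewriting the trace inequalities of Definition \ref{def_cor_qop} via the Schrödinger--Heisenberg duality $\tr\big(A\calE(\rho)\big) = \tr\big(\calE^*(A)\rho\big)$. Before doing so I would dispatch the well-definedness bookkeeping, i.e. check that $\WP(\calE,N)$ and $\WLP(\calE,N)$ are genuine $\QPRED$s on $\Hh$, so that Lemma \ref{lem_lowner_compr} actually applies. For $\WP(\calE,N) = \calE^*(N)$: it is Hermitian because $\calE^*$ preserves Hermiticity (immediate from the Kraus form $\calE^* = \sum_k E_k^\dag \diamond E_k$); it is positive because $\calE^*$ is completely positive and $0_{\Hh}\sqsubseteq N$; and $\calE^*(N) \sqsubseteq \calE^*(I) = \sum_k E_k^\dag E_k \sqsubseteq I$ by monotonicity of $\calE^*$ together with $N\sqsubseteq I$ and trace-non-increasingness of $\calE$. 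Since $0_{\Hh}\sqsubseteq N\sqsubseteq I$ also gives $I-N\in\Pp(\Hh)$, the same argument shows $\WP(\calE,I-N)\in\Pp(\Hh)$, hence $\WLP(\calE,N) = I-\WP(\calE,I-N)\in\Pp(\Hh)$.

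For part (a), by Definition \ref{def_cor_qop} the triple $\tcor{M}{\calE}{N}$ holds iff $\tr(M\rho)\le \tr\big(N\calE(\rho)\big)$ for all $\rho$. Applying the defining property of $\calE^*$, $\tr\big(N\calE(\rho)\big) = \tr\big(\calE^*(N)\rho\big) = \tr\big(\WP(\calE,N)\rho\big)$, so the condition becomes exactly $\tr(M\rho)\le\tr\big(\WP(\calE,N)\rho\big)$ for all $\rho\in\Dd(\Hh)$, which by Lemma \ref{lem_lowner_compr} is equivalent to $M\sqsubseteq\WP(\calE,N)$.

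For part (b), I would start from Eq. (\ref{eq_par_sem}) and massage its right-hand side:
\[
\tr\big(N\calE(\rho)\big) + \tr(\rho) - \tr\big(\calE(\rho)\big) \;=\; \tr(\rho) - \tr\big((I-N)\calE(\rho)\big) \;=\; \tr(\rho) - \tr\big(\WP(\calE,I-N)\rho\big) \;=\; \tr\big(\WLP(\calE,N)\rho\big),
\]
using $\tr\big(\calE(\rho)\big) = \tr\big(I\calE(\rho)\big)$ and linearity of the trace for the first equality, the duality identity for the second, and $\tr(\rho) = \tr(I\rho)$ for the third. Hence $\pcor{M}{\calE}{N}$ is equivalent to $\tr(M\rho)\le\tr\big(\WLP(\calE,N)\rho\big)$ for all $\rho$, and a second application of Lemma \ref{lem_lowner_compr} finishes the argument.

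There is no deep obstacle here; the work is essentially bookkeeping. The two points that need care are (i) confirming $\WP$ and $\WLP$ land in $\Pp(\Hh)$ — otherwise invoking Lemma \ref{lem_lowner_compr} is unjustified — and (ii) the algebraic rearrangement in part (b), where one must resist cancelling traces of operators such as $N-I$ or $I - \WP(\calE,I-N)$ as if they were positive: the cancellation is legitimate only at the level of the scalars $\tr(\cdot\,\rho)$, via linearity of the trace. A minor additional subtlety is the tensoring convention, since $M,N$ live on $\Hh$ while states may live on a larger $\Hh'\supseteq\Hh$; but because the convention silently tensors identities and $(I_{\Hh''}\otimes\cdot)$ commutes with everything involved, it suffices to verify the inequalities for $\rho\in\Dd(\Hh)$, which is precisely the regime of Lemma \ref{lem_lowner_compr}.
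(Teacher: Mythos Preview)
Your proposal is correct and follows essentially the same approach as the paper, which simply cites Lemma~\ref{lem_lowner_compr} together with the definition of the Schr\"odinger--Heisenberg dual. You have additionally spelled out the well-definedness of $\WP$ and $\WLP$ as $\QPRED$s and the algebraic rearrangement for part~(b), which the paper leaves implicit.
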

\begin{proof}
  By Lem. \ref{lem_lowner_compr}, together with definition of Schr\"{o}dinger-Heisenberg dual.
\end{proof}
Quantum duality theorem implies that:
(a) $\WP$ of a $\QOP$ can be represented by its Schr\"{o}dinger-Heisenberg dual;
(b) $\WP$ and $\WLP$ are logically dual to each other.

\begin{theorem}[Quantum expressiveness theorem]\label{thm_exp}
Let quantum program $S \in \qRP$, and $P$ a $\PQPT$.
Define the formal weakest (resp. liberal) precondition
$\fwp.S.P$ (resp., $\fwlp.S.P$) of $S$ w.r.t. $P$ in Tab. \ref{tab_wp_wlp}.
It is the case that
\begin{description}
  \item[(a)] $\models_\mathbb{I} \fwp.S.P = \sem{S}^*(P)$;
  \item[(b)] $\models_\mathbb{I} \fwlp.S.P = I - \fwp.S.(I - P)$ $\big( \triangleq \fwp.S.P + (I - \fwp.S.I) \big)$.
\end{description}
\end{theorem}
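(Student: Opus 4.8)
The plan is to prove (a) and (b) together by induction on the structure of the quantum program $S$, with $\CALL\ \proc_i$ treated as an atomic construct. For the $\CALL$-case the key observation is that each syntactic approximation $S_i^{(n)}$ of Def.~\ref{def_syn_approx} is recursion-free (every procedure call has been unrolled away), and hence lies in the recursion-free fragment $\qPL$ of $\qRP$; so it suffices to establish (a) and (b) first for $\qPL$ by an ordinary structural induction (in which no $\CALL$-statement occurs at all, so there is no circularity), and then to extend to $\qRP$ by reducing the new leaf $\CALL\ \proc_i$ to the already-settled $\qPL$ statements $S_i^{(n)}$ via the $\bigsqcup$-characterisation of Table~\ref{tab_deSem}.

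For part (a) on $\qPL$, the base cases $\bottom,\SKIP, q\assnequal\ket{0},\lst{q}\starequal U$ are settled by a direct computation of the Schr\"{o}dinger--Heisenberg dual of the Kraus form listed in Table~\ref{tab_deSem} and a term-by-term comparison with the matching clause of Table~\ref{tab_wp_wlp} (for instance $\big(\sum_i\ket{0}_q\bra{i}\diamond\ket{i}_q\bra{0}\big)^{*}=\sum_i\ket{i}_q\bra{0}\diamond\ket{0}_q\bra{i}$, applied to $P$, is exactly the $q\assnequal\ket{0}$ clause). For $S_1;S_2$ I would use $\sem{S_1;S_2}^{*}=(\sem{S_2}\circ\sem{S_1})^{*}=\sem{S_1}^{*}\circ\sem{S_2}^{*}$ from Lemma~\ref{lem_dual_qop}(4) together with the induction hypothesis, matching $\fxp.S_1.(\fxp.S_2.P)$; for $\IF$ I would use additivity and positive homogeneity of the dual (Lemma~\ref{lem_dual_qop}(2,3)) plus $(M_m\diamond M_m^{\dag})^{*}=M_m^{\dag}\diamond M_m$ and the induction hypothesis on each $S_m$.

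The main obstacle is the $\CALL\ \proc_i$ case. By the $(\mathrm{Proc})$ clause of Table~\ref{tab_deSem}, $\sem{\CALL\ \proc_i}=\bigsqcup_{n=0}^{\infty}\sem{S_i^{(n)}}$, and by Lemma~\ref{lem_well_def_den} the sequence $\{\calE_n\triangleq\sem{S_i^{(n)}}\}_{n\ge 0}$ of $\QOP$s is non-decreasing, so by Lemma~\ref{lem_dual_qop}(5) $\sem{\CALL\ \proc_i}^{*}=\bigsqcup_{n=0}^{\infty}\calE_n^{*}$. Since each $S_i^{(n)}$ is recursion-free, the already-established case of (a) gives $\fwp.S_i^{(n)}.P=\sem{S_i^{(n)}}^{*}(P)=\calE_n^{*}(P)$; hence the symbolic sequence $\{\fwp.S_i^{(n)}.P\}_{n\ge 0}$ is precisely the restricted sequence $\{\calE_n^{*}(P)\}_{n\ge 0}$ appearing in Definition~\ref{def_limits}, so its upper limit is a well-defined $\PQPT$ (Lemma~\ref{lem_well_definedness_of_limits}) and, by that definition, equals $\big(\bigsqcup_n\calE_n\big)^{*}(P)=\sem{\CALL\ \proc_i}^{*}(P)$, which is exactly $\fwp.(\CALL\ \proc_i).P$. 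The delicate points here are (i) commuting $(\cdot)^{*}$ with the least upper bound, which is exactly what Lemma~\ref{lem_dual_qop}(5) buys us, and (ii) checking that the sequence generated by the $\fwp$-rules fits the admissible shape of Definition~\ref{def_limits} rather than some more general one.

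For part (b) I would induct on $S$ proving $\models_{\mathbb{I}}\fwlp.S.P=I-\fwp.S.(I-P)$. The clauses for $\SKIP, q\assnequal\ket{0},\lst{q}\starequal U, S_1;S_2$ and $\IF$ are literally shared by $\fwp$ and $\fwlp$ in Table~\ref{tab_wp_wlp}, so in each the induction hypothesis combines with linearity and an identity such as $\sum_m M_m^{\dag}(I-X)M_m=I-\sum_m M_m^{\dag}XM_m$ (using $\sum_m M_m^{\dag}M_m=I$), or $U^{\dag}(I-X)U=I-U^{\dag}XU$; for $\bottom$, $I-\fwp.\bottom.(I-P)=I-0=I=\fwlp.\bottom.P$. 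For $\CALL\ \proc_i$, the induction hypothesis and part (a) give $\fwlp.S_i^{(n)}.P=I-\sem{S_i^{(n)}}^{*}(I-P)=\calE_n^{*}(P)+\big(I-\calE_n^{*}(I)\big)$, which is exactly the lower-limit sequence $\{Q_n\}$ of Definition~\ref{def_limits}; therefore $\bigsqcap_{n}\fwlp.S_i^{(n)}.P=\calE^{*}(P)+\big(I-\calE^{*}(I)\big)=I-\sem{\CALL\ \proc_i}^{*}(I-P)=I-\fwp.(\CALL\ \proc_i).(I-P)$. The parenthetical identity $I-\fwp.S.(I-P)=\fwp.S.P+(I-\fwp.S.I)$ is then immediate from part (a) and linearity of $\sem{S}^{*}$, matching the pattern of the quantum duality theorem (Thm.~\ref{thm_sem_dual}). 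Finally, that all terms above indeed lie in $\Tt(\bar q)$ — so that the asserted equalities are legitimate L\"{o}wner ordering formulas — follows from Lemma~\ref{lem_closure_of_pqpt} and Lemma~\ref{lem_well_definedness_of_limits}.
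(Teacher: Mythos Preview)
Your proposal is correct and follows essentially the same route as the paper: establish (a) first for the recursion-free fragment by structural induction (using the clauses of Lemma~\ref{lem_dual_qop} to compute $\sem{S}^{*}$ compositionally), then handle $\CALL\ \proc_i$ by reducing to the recursion-free approximants $S_i^{(n)}$ and passing the dual through the least upper bound via Lemma~\ref{lem_dual_qop}(5); part~(b) is then obtained from linearity and the $\bigsqcup$/$\bigsqcap$ duality. Your write-up is in fact more explicit than the paper's sketch, particularly in checking that the symbolic sequences $\{\fwp.S_i^{(n)}.P\}_n$ and $\{\fwlp.S_i^{(n)}.P\}_n$ match the admissible shapes of Definition~\ref{def_limits}.
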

\begin{proof}
  See App. \ref{app_syn_wp}.
\end{proof}

\begin{lemma}[Well-definedness of $\fwp$ and $\fwlp$]\label{lem_wp_wlp}
Let quantum program $S \in \qRP$, and $P$ a $\PQPT$.
It is the case that
\begin{description}
  \item[(a)] $\models_\mathbb{I} \fwp.S_i^{(n)}.P \sqsubseteq \fwp.S_i^{(n+1)}.P$, for all $n \geq 0$;
  \item[(b)] $\models_\mathbb{I} \fwlp.S_i^{(n)}.P \sqsupseteq \fwlp.S_i^{(n+1)}.P$, for all $n \geq 0$.
\end{description}
\end{lemma}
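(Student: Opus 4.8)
The plan is to reduce both parts to the semantic characterization $\models_{\mathbb{I}} \fwp.S.P = \sem{S}^*(P)$ from Theorem~\ref{thm_exp}(a), the monotonicity of denotational semantics in Lemma~\ref{lem_well_def_den}(ii), and one elementary observation about the Schr\"odinger--Heisenberg dual that I would establish first: the dual is monotone on positive operators, i.e., if $\calE \sqsubseteq \calF$ are $\QOP$s on $\Hh$ and $A \sqsupseteq 0_{\Hh}$, then $\calE^*(A) \sqsubseteq \calF^*(A)$. Indeed, $\calE \sqsubseteq \calF$ provides a $\QOP$ $\calG$ with $\calE + \calG = \calF$; by Lemma~\ref{lem_dual_qop}(3) we get $\calF^* = \calE^* + \calG^*$, and writing $\calG = \sum_k E_k \diamond E_k^{\dag}$ shows $\calG^*(A) = \sum_k E_k^{\dag} A E_k \sqsupseteq 0$, so $\calE^*(A) \sqsubseteq \calF^*(A)$. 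I would apply this with $\calE = \sem{S_i^{(n)}}$ and $\calF = \sem{S_i^{(n+1)}}$, which are L\"owner-comparable by Lemma~\ref{lem_well_def_den}(ii); fix also the associated $\QOP$ $\calG$ with $\sem{S_i^{(n)}} + \calG = \sem{S_i^{(n+1)}}$.

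For part (a), fix $n \geq 0$ and an assignment $v$. Each approximation $S_i^{(n)}$ is a call-free, parameter-free program in $\qRP$, so Theorem~\ref{thm_exp}(a) gives $(\fwp.S_i^{(n)}.P)^{\mathbb{I},v} = \sem{S_i^{(n)}}^*(P^{\mathbb{I},v})$, and likewise for $n+1$. Since $P^{\mathbb{I},v} \in \Pp(\Hh_{\lst q})$ is positive, the monotonicity observation yields $\sem{S_i^{(n)}}^*(P^{\mathbb{I},v}) \sqsubseteq \sem{S_i^{(n+1)}}^*(P^{\mathbb{I},v})$. As $v$ was arbitrary, $\models_{\mathbb{I}} \fwp.S_i^{(n)}.P \sqsubseteq \fwp.S_i^{(n+1)}.P$.

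For part (b), combine parts (a) and (b) of Theorem~\ref{thm_exp} to obtain, for $k \in \{n, n+1\}$ and each $v$, the identity $(\fwlp.S_i^{(k)}.P)^{\mathbb{I},v} = \sem{S_i^{(k)}}^*(P^{\mathbb{I},v}) + \big( I - \sem{S_i^{(k)}}^*(I) \big)$. Using $\sem{S_i^{(n+1)}}^* = \sem{S_i^{(n)}}^* + \calG^*$ (Lemma~\ref{lem_dual_qop}(3)) and linearity of $\calG^*$, the difference of the two sides simplifies to $(\fwlp.S_i^{(n)}.P)^{\mathbb{I},v} - (\fwlp.S_i^{(n+1)}.P)^{\mathbb{I},v} = \calG^*(I) - \calG^*(P^{\mathbb{I},v}) = \calG^*\big( I - P^{\mathbb{I},v} \big)$, which is positive because $I - P^{\mathbb{I},v}$ is a $\QPRED$. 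Hence $(\fwlp.S_i^{(n)}.P)^{\mathbb{I},v} \sqsupseteq (\fwlp.S_i^{(n+1)}.P)^{\mathbb{I},v}$ for all $v$, i.e., $\models_{\mathbb{I}} \fwlp.S_i^{(n)}.P \sqsupseteq \fwlp.S_i^{(n+1)}.P$.

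The argument is essentially routine once Theorem~\ref{thm_exp} is in hand, so I do not expect a serious obstacle; the one point to handle carefully is that the claims are about validity under $\mathbb{I}$, so one must quantify over assignments $v$ and reason with the operators $P^{\mathbb{I},v}$ rather than with $\PQPT$s syntactically. This matters in part (b): $I - P$ need not be a $\PQPT$ when $P$ carries parameters, but $(I - P)^{\mathbb{I},v} = I - P^{\mathbb{I},v}$ is always a bona fide $\QPRED$, which is all the monotonicity step needs.
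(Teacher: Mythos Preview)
Your proof is correct and follows essentially the same route as the paper's: reduce via Theorem~\ref{thm_exp} to the semantic form $\sem{S_i^{(n)}}^*(P)$ (resp.\ $\sem{S_i^{(n)}}^*(P)+(I-\sem{S_i^{(n)}}^*(I))$), invoke Lemma~\ref{lem_well_def_den}(ii) for the monotonicity of $\sem{S_i^{(n)}}$, and conclude by monotonicity of the dual. The only cosmetic difference is that where the paper cites Lemma~\ref{lem_well_definedness_of_limits} for the last step, you spell out the dual-monotonicity argument by hand; your direct computation in part~(b), isolating the difference as $\calG^*(I-P^{\mathbb{I},v})$, is in fact the content behind that lemma.
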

\begin{proof}
  By Lems. \ref{lem_well_def_den}, \ref{lem_well_definedness_of_limits}, and Thm. \ref{thm_exp}.
\end{proof}

\begin{definition}[Correctness of quantum programs]\label{def_cor_qpr}
Let $P$, $Q$ be $\PQPT$s with $\Params{P} = \Params{Q}$, and $S$ a quantum program.
We say that
\begin{description}
  \item[(Partial correctness).] $S$ is partially correct w.r.t. precondition $P$ and postcondition $Q$ under interpretation $\mathbb{I}$, denoted $\models_\mathbb{I} \pcor{P}{S}{Q}$, if $\models_\mathbb{I} P \sqsubseteq \fwlp.S.Q$;
  \item[(Total correctness).] $S$ is totally correct w.r.t. precondition $P$ and postcondition $Q$ under interpretation $\mathbb{I}$, denoted $\models_\mathbb{I} \tcor{P}{S}{Q}$, if $\models_\mathbb{I} P \sqsubseteq \fwp.S.Q$.
\end{description}
\end{definition}

\begin{remark}\label{rem_fwp_and_fwlp}
Quantum duality and expressiveness theorems together entail that
$\fwp.S.Q$ (resp. $\fwlp.S.Q$) is the weakest $\PQPT$ $R$ s.t.
$\models_\mathbb{I} \tcor{R}{S}{Q}$ (resp. $\models_\mathbb{I} \pcor{R}{S}{Q}$).
This justifies the well-definedness of correctness of quantum programs (cf. Def. \ref{def_cor_qpr}),
which can be seen as a parameterized extension of correctness of $\QOP$s (cf. Def. \ref{def_cor_qop}).
\end{remark}

\section{Starting proof systems}\label{sec_prf}

\begin{table}[!htbp]
  \centering
  \begin{tabular}{rlcrl}
    (A Bot)  & $\pcor{I}{\bottom}{P}$ \big(resp. $\tcor{0}{\bottom}{P}$\big) & &
    (A Skip) & $\pcor{P}{\SKIP}{P}$ \\
    \specialrule{0em}{3pt}{3pt}
    (A Init) & $\dfrac{\sum_{i} \voutprod{i}{q}{i} = I_{q}}{\pcor{ \sum_{i} \ket{i}_q \bra{0} P \ket{0}_q \bra{i} }{ q \assnequal \ket{0} }{ P }}$ & &
    (A Unit) & $\dfrac{U U^\dag = U^\dag U = I_{\lst{q}}}{\pcor{U^\dag P U}{\lst{q} \starequal U}{ P }}$
    \\
    \specialrule{0em}{3pt}{3pt}
    (R Comp) & $\dfrac{\pcor{P}{S_1}{Q} \quad \pcor{Q}{S_2}{R}}{\pcor{P}{ S_1;S_2 }{R}}$ & &
    (R Case) & $\dfrac{\pcor{P_m}{S_m}{Q} \mbox{ \small for each } m}{\pcor{\sum_m M_m^\dag P_m M_m}{\IF}{Q}}$ \\
    \specialrule{0em}{3pt}{3pt}
    (R Order) & $\dfrac{P\sqsubseteq P'\quad \pcor{P'}{S}{Q'}\quad Q'\sqsubseteq Q}{\pcor{P}{S}{Q}}$ & &
    (R Subst) & $\dfrac{\pcor{P}{S}{Q}}{\pcor{P[R/\calX]}{S}{Q[R/\calX]}}$
  \end{tabular}
  \caption{Base proof system $\qBase$.}
  \label{tab_qBase}
  \vspace{-8pt}
\end{table}

This section is devoted to presenting different axiom systems
for proving partial, total and even probabilistic correctness of recursive quantum programs $\qRP$.

\paragraph{Base proof system.}
The first step is to present an extension $\qBase$ (quantum Base System) to part of proof system $\qPD$ of \cite{Ying11}
for both partial and total correctness of quantum base language $\qPL$,
so that we can deal with syntactic pre- and post-conditions (i.e. $\PQPT$s).
Every formula of $\qBase$ is either a legitimate quantum Hoare's triple $\pcor{P}{S}{Q}$ or $\tcor{P}{S}{Q}$
\big(where $P,Q$ are $\PQPT$s with $\Params{P} = \Params{Q}$\big),
or a legitimate L\"{o}wner ordering formula $P \sqsubseteq Q$ or $P = Q$
\big(where $P,Q$ are $\PQPT$s with $\Params{P} = \Params{Q}$\big).

Proof system $\qBase$ features the newly added inference rule --- (R Subst) --- handling the substitution in $\PQPT$s,
where $R$ (resp. $\calX$) is an arbitrary $\PQPT$ (resp. quantum predicate variable),
and $P[R/\calX]$ stands for the result of simultaneously substituting $R$ for each occurrence of $\calX$ in $P$.
For the presentation of $\qBase$, the reader is referred to Tab. \ref{tab_qBase}.

\begin{remark}
  Every proof rule of $\qBase$ except for (A Bot) is only presented in the form of partial correctness formulas,
  but nevertheless, applies to proving total correctness.
  To see this, we note that partial and total correctness are distinguished by whether terminating almost surely
  (cf. Rem. \ref{rem_qop_cor}).
  This justifies why $\bottom$ and recursive procedures need a distinguish between partial and total correctness proof rules,
  because they are sources of non-termination.
\end{remark}

\vspace{1mm}
\noindent {\bf Intuition behind $\qBase$.}
To see the intuition of proof rules in $\qBase$, we remark that
\begin{itemize}
  \item (A Bot, A Skip, A Init, A Unit) have the form $\pcor{\fxp.S.P}{S}{P}$;
  \item (R Comp, R Case, R Subst) preserve the form $\pcor{\fxp.S.P}{S}{P}$ (bidirectionally);
  \item (R Order) can be used to relax $\pcor{\fxp.S.P}{S}{P}$ to $\pcor{Q}{S}{P}$ with
  $\models_\mathbb{I} Q \sqsubseteq \fxp.S.P$.
\end{itemize}
For the proof rule (R Case), the annotated $\IF$-statement is illustrated as follows.
$$
\{l_1: P\}\, \ifStat{\Box m\cdot M[\lst{q}] = m \rightarrow \{l_2^m: P_m\}\, S_m} \, \{l_3: Q\}
$$
Fix the input $\rho$ (at program point $l_1$).
By semantics of the $\IF$-statement,
every post-measurement state $M_m \rho M_m^{\dag}$ (containing the probability of observing outcome $m$)
will go to the corresponding branch labeled by $l_2^m$ in which $P_m$ should be satisfied and $S_m$ will be executed.
By the Turing-Floyd-Hoare principle, we have that
\begin{eqnarray*}
  \tr(P \rho) &\leq& \sum_m \tr(P_m M_m \rho M_m^{\dag})
\end{eqnarray*}
Due to the arbitrariness of $\rho$, by properties of $\tr$ and $\sqsubseteq$,
it follows that $P \sqsubseteq \sum_m M_m^{\dag} P_m M_m$.
Note that after the execution of each $S_m$,
the program point $l_3$ is reached and the attached assertion $Q$ is satisfied.
By weakening $P$ to $\sum_m M_m^{\dag} P_m M_m$ and lifting the above reasoning process into a proof rule,
the inference rule (R Case) follows.
Weakening $P$ to $\sum_m M_m^{\dag} P_m M_m$ guarantees that
(R Case) preserves the form $\pcor{\fxp.S.P}{S}{P}$ forward (i.e. compact soundness).
To make (R Case) preserve the form $\pcor{\fxp.S.P}{S}{P}$ backward (i.e. compact completeness),
we have to choose $P_m$ as $\fxp.S_m.Q$ for each $m$.

\paragraph{Soundness and Completeness.}
Of not only theoretical but also practical interest is
the question of soundness and completeness of proof systems presented as before or after.
The question of soundness concerns the correctness of the method,
whereas the question of completeness concerns the scope of its applicability
(under what circumstances it can be successfully applied).
(For a systematic introduction to the soundness and completeness issues of classical Hoare logic,
the reader is referred to the famous survey paper \cite{apt1981ten}.)

For presentational convenience in what follows,
assume that all formulas $F$ are legitimate quantum Hoare's triples or L\"{o}wner ordering formulas.
For sets of formulas $A$ and $B$,
\begin{eqnarray*}
  A &\models_\mathbb{I}& B
\end{eqnarray*}
means that if $\models_\mathbb{I} A$ then $\models_\mathbb{I} B$,
where by $\models_\mathbb{I} A$ is meant that for all formulas $F$ of $A$, $\models_\mathbb{I} F$.
Let $T$ be a quantum assertion theory (e.g. $\trueOrder$).
For a proof system $\bfH$, e.g. $\qPD$, by
\begin{eqnarray*}
  T, A &\vdash_\bfH& \bigwedge_{F\in B} F
\end{eqnarray*}
is meant that {\it every} formula of $B$ can be deduced from $T$, $A$,
or axioms of $\bfH$ by finitely applying inference rules of $\bfH$.
(We can replace $\vdash_\bfH$ by $\vdash$, if $\bfH$ is clear from the context.)
Note that the assertion theory $T$ is used to provide L\"{o}wner ordering formulas
as antecedents of the inference rule (R Order).
Let $\bfH$ be for the programming language $L$. We say that
  \begin{itemize}
    \item $\bfH$ is sound, if for all Hoare's triples $F$ of $L$ with $\trueOrder \vdash_\bfH F$,
    we have $\models_\mathbb{I} F$;
    \item $\bfH$ is (relatively) complete, if for all Hoare's triples $F$ of $L$ with $\models_\mathbb{I} F$,
    we have $\trueOrder \vdash_\bfH F$.
  \end{itemize}
Note that $\trueOrder$ provides all true L\"{o}wner ordering formulas for (R Order),
which, together with the condition of expressiveness (cf. Subsec. \ref{subsec_cor_exp}),
is sufficient to make $\bfH$ complete \cite{Cook78,BergstraT82}
($\QPRED$s are enough for while loops \cite{Ying11};
while general recursive procedures need $\PQPT$s).
As with $\qPD$, the proof system $\qBase$ (for the base language $\qPL$) is sound and complete.

Not to mention it explicitly,
various proof systems presented in the sequel are sound and complete
in the above sense (e.g. Lem. \ref{lem_sound_complete}),
except that a compact version of soundness and completeness
is introduced for exact probabilistic reasoning (cf. Lem. \ref{lem_cmpt_sound_complete}).
For a complete proof of these soundness and completeness results,
the reader is referred to App. \ref{app_sec_sound_complete}.

\begin{remark}
  The above discussion on soundness and completeness issues is purely theoretical,
  because we adopt the assertion theory as an oracle
  (rather than as a recursively axiomatizable theory),
  following the technical line of classical Hoare logic \cite{Cook78}.
  In practice, as discussed in Rem. \ref{rem_lowner_order},
  a restricted L\"{o}wner comparison between $\PQPT$s independent of parameters
  is enough to cover the correctness checking of L\"{o}wner ordering formulas.
\end{remark}

\subsection{Partial correctness}

\begin{table}[!htbp]

\centering

  \begin{minipage}[b]{\textwidth}
    \centering
    $$\dfrac{\pcor{P}{\CALL\ \proc}{Q} \reVdash{\qBase} \pcor{P}{S}{Q}}{\pcor{P}{\CALL\ \proc}{Q}}$$
    \subcaption{($\pRule$ Rec).}
    \label{sim_par_rule}
  \end{minipage}
  \vfill
  \begin{minipage}[b]{\textwidth}
    \centering
    $$\dfrac{\big\{ \pcor{P_i}{\CALL\ \proc_i}{Q_i} \big\}_{1 \leq i \leq n} \reVdash{\qBase} \bigwedge_{1 \leq i \leq n} \pcor{P_i}{S_i}{Q_i} } { \bigwedge_{1 \leq i \leq n} \pcor{P_i}{\CALL\ \proc_i}{Q_i} }$$
    \subcaption{($\pRule$ gRec).}
    \label{com_par_rule}
    \vspace{-8pt}
  \end{minipage}
  \caption{Proof rules for partial correctness.}
  \label{par_rules} 
\end{table}

We are now in a position to present inference rules for proving partial correctness of recursive procedures.
We begin with the case of simple recursion.

\paragraph{Simple recursion.}
Consider first the case of simple recursion, that is that,
the body $S$ of recursive quantum procedure $\proc$ should itself contain the re-invocation statement $\CALL\ \proc$,
but retain the exclusion of invoking other recursive quantum procedures.
The proof rule --- ($\pRule$ Rec) --- for proving partial correctness of $\CALL\ \proc$ is shown in Tab. \ref{sim_par_rule}.

\begin{table}[!htbp]

  \centering

  \begin{minipage}[b]{\textwidth}
  \centering
  \begin{displaymath}
  \begin{array}{c}
    \pcor{l_0: P}{\CALL\ \proc}{l_1: Q}
  \end{array}
  \end{displaymath}
  \subcaption{Annotated program for $\CALL\ \proc$.}
  \label{anot_call}
  \end{minipage}
  \vfill
  \begin{minipage}[b]{\textwidth}
  \centering
  \begin{displaymath}
     \{l_2: P\}
     \cdots
     \pcor{l_3: P'}{\CALL\ \proc}{l_4: Q'}
     \cdots
     \pcor{l_5: P''}{\CALL\ \proc}{l_6: Q''}
     \cdots
     \{l_7: Q\}
  \end{displaymath}
  \subcaption{Annotated program for the body $S$.}
  \label{anot_body}
  \vspace{-8pt}
  \end{minipage}
  \caption{Intermediate assertion method --- labels ($l_0$ - $l_7$) are used to indicate different program points;
  and each program point is annotated with a $\PQPT$.}
  \label{inter_asser_meth}
    \vspace{-6pt}
\end{table}

\vspace{1mm}
\noindent {\bf Intuition of (Rp Rec).}
To derive the correctness formula $\pcor{P}{\CALL\ \proc}{Q}$ about $\CALL\ \proc$,
it suffices to derive $\pcor{P}{S}{Q}$ for its body $S$ (cf. Tab. \ref{inter_asser_meth});
since $S$ itself contains the re-invocation statement (or inner) $\CALL\ \proc$,
it suffices to derive correctness formulas about the inner $\CALL\ \proc$, say $\pcor{P'}{\CALL\ \proc}{Q'}$;
by the Turing-Floyd-Hoare Principle, $\pcor{P'}{\CALL\ \proc}{Q'}$ should be adapted
from the premise $\pcor{P}{\CALL\ \proc}{Q}$ possibly by using (R Subst).
\big(In this case, data flow goes first from $l_3$ to $l_0$; and then from $l_1$ to $l_4$.
The case of $\pcor{P''}{\CALL\ \proc}{Q''}$ can be analyzed similarly.\big)
This reveals the reason for introducing (R Subst):
without it the above derivation might not proceed as desired.

\begin{example}[Counterexample, cf. App. \ref{app_exam_counter_par}]\label{exam_counter_par}
Let $q$ be a quantum variable with $\type(q) = \integer$.
We define the $(+i)$-operator $U_{+i}$ over the computational basis of $\Hh_q$ by
$$
  U_{+i}\colon \ket{x}\rightarrow\ket{x+i}
$$
and similarly for the $(-i)$-operator $U_{-i}$.
Declare the procedure $\toy$ by
  $$
  \recDec{\langle \toy \rangle}{\ifStat{\Box m\cdot M[q] = m \rightarrow S_m}}
  $$
with $M \triangleq \big\{ M_0 = \sum_{i \leq 0} \outprod{i}{i},\ M_1 = \sum_{i \geq 1} \outprod{i}{i} \big\}$
and $\{S_m\}_{m = 0, 1}$ defined by
$$
S_0 \triangleq \SKIP, \quad S_1 \triangleq q \starequal U_{-1};\ \CALL\ \toy;\ q \starequal U_{+1}
$$
Fix $n \geq 0$. We can derive the partial correctness formula
$$
\pcor{\voutprod{n}{q}{n}}{\CALL\ \toy}{\voutprod{n}{q}{n}}
$$
by using ($\pRule$ Rec). However, this is not the case if the use of (R Subst) is disallowed.
\end{example}

\paragraph{General recursion.}
We now extend (Rp Rec) for simple recursion to the general case.
For (mutual) recursive procedures $\proc_i$ with body $S_i$, $1 \leq i \leq n$,
the inference rule (Rp gRec) is introduced to prove their partial correctness in a simultaneous way
(cf. Tab. \ref{com_par_rule}).

\begin{remark}
  Suppose that the procedure $\proc$ with body $S$ has no re-invocation,
  then the inference rule (Rp Rec) will be degenerated to
  $$
  \mbox{(R Proc)} \quad \dfrac{\pcor{P}{S}{Q}}{\pcor{P}{\CALL\ \proc}{Q}}
  $$
  which is precisely the inference rule for non-recursive procedures.
  In other words,
  the proof rule (R Proc) for non-recursive procedures is a special case of (Rp Rec) for recursive procedures.
  Also, the proof rule (Rp Rec) for simple recursion can be seen as a special case of (Rp gRec) for general recursion,
  if the index variable $i$ is required to range over a singleton.
\end{remark}

\paragraph{Synthesis of recursive invariants.}
When proof rule (Rp gRec) is successfully applied to
proving partial correctness formula $\pcor{P_i}{\CALL\ \proc_i}{Q_i}$ for recursive procedures $\proc_i$,
$1 \leq i \leq n$, we call $(P_i, Q_i)$ a recursive invariant of $\proc_i$,
where $\PQPT$ $P_i$ can be replaced by $\fwlp.(\CALL\ \proc_i).Q_i$ (cf. Tab. \ref{tab_wp_wlp}),
which has the following form
\begin{eqnarray*}
  \fwlp.(\CALL\ \proc_i).Q_i &=& \sem{\CALL\ \proc_i}^*(Q_i) + \big( I - \sem{\CALL\ \proc_i}^*(I) \big)
\end{eqnarray*}
where $\big\{ \sem{\CALL\ \proc_i} \big\}_{1 \leq i \leq n}$ is the least fixed point of $\calF$
(cf. Rem. \ref{rem_closure_den} and Thm. \ref{thm_exp}).
However, $\PQPT$ $Q_i$ sometimes should be parameterized,
and the substitution for parameters will highly depend on $\proc_i$ itself (cf. Exm. \ref{exam_counter_par}),
which means that there is no uniform characterization, say fixed-point characterization,
for the recursive invariant $\big(\fwlp.(\CALL\ \proc_i).Q_i, Q_i\big)$.
In other words, the synthesis of recursive invariants is generally not purely automatic,
yet $\fwlp.(\CALL\ \proc_i).Q_i$ can be automatically synthesised provided $Q_i$ is given.

\paragraph{The scope of applicability.}
Recalling the semantical base of a partial-correctness formula,
i.e. Eq. (\ref{eq_par_sem}) in Def. \ref{def_cor_qop},
one can see that it is a straightforward extension of classical partial-correctness semantics for deterministic programs.
Thus, (Rp gRec) is applicable to reasoning about programs with ``deterministic control and quantum data''.
For this purpose, we typically use quantum variables to model classical variables,
i.e. encode classical values as states of a computational basis.
See, e.g., case studies.

On the other hand, our programming language should include
nondeterministic quantum programs (i.e. those branched by non-deterministic quantum observations),
where each nondeterministic branch is associated with an exact probability
(encoded into states).
For these quantum programs (with ``probabilistic control and quantum data''),
we need to do reasoning with exact probability,
say, given a precondition, with what probability a program
will output a particular state (or a particular class of states) or terminate?
E.g., quantum program $\RQMC$ on any input always outputs $\ket{+}$ with probability $\frac{1}{3}$
(cf. Exm. \ref{exam_deSem_RQMC}), i.e.
\begin{eqnarray}\label{ass_prob_cor_RQMC}
  \forall \rho.\ \tr(I\rho)=1 &\implies& \tr\big(\outprod{+}{+}\sem{\RQMC}(\rho)\big)=\frac{1}{3}
\end{eqnarray}
Unfortunately, interfered by probability of nontermination, i.e. $\tr(\rho)-\tr\big(\sem{\RQMC}(\rho)\big)$,
partial-correctness semantics fails to fully express Ass. (\ref{ass_prob_cor_RQMC}).
Therefore, (Rp gRec) is not very suitable for
reasoning about programs with ``probabilistic control and quantum data''.

\subsection{Total correctness}\label{subsec_prf_sys_tot}

\begin{table}[!htbp]

  \centering

  \begin{minipage}[b]{\textwidth}
    \centering
    $$\dfrac{\begin{array}{c}
    \exists\ \{P_n\}_{n \geq 0}^{\sqsubseteq} \mbox{ with } P_0 = 0 \mbox{ s.t.} \\
    \tcor{P_n}{\CALL\ \proc}{Q} \reVdash{\qBase} \tcor{P_{n+1}}{S}{Q} \mbox{ for all } n\geq 0, \\
    P \sqsubseteq \bigsqcup_{n=0}^{\infty} P_n
    \end{array}
 }{\tcor{P}{\CALL\ \proc}{Q}}$$
    \subcaption{($\tRule$ Rec).}
    \label{sim_tot_rule}
  \end{minipage}

  \vfill

  \begin{minipage}[b]{\textwidth}
    \centering
    $$\dfrac{\begin{array}{c}
    \mbox{for } 1 \leq i \leq n,\ \exists\ \{P_i^j\}_{j \geq 0}^{\sqsubseteq}  \mbox{ with } P_i^0 = 0 \mbox{ s.t.} \\
    \big\{ \tcor{P_i^j}{\CALL\ \proc_i}{Q_i} \big\}_{1 \leq i \leq n} \reVdash{\qBase} \bigwedge_{1 \leq i \leq n} \tcor{P_i^{j+1}}{S_i}{Q_i} \mbox{ for all } j \geq 0, \\
    P_i \sqsubseteq\bigsqcup_{j=0}^{\infty} P_i^j
    \end{array}
    }{\bigwedge_{1 \leq i \leq n} \tcor{P_i}{\CALL\ \proc_i}{Q_i} }$$
    \subcaption{($\tRule$ gRec).}
    \label{com_tot_rule}
  \end{minipage}
  \vspace{-20pt}
  \caption{Proof rules for total correctness.}

  \label{tot_rules} 

  \vspace{-6pt}
\end{table}

We are now positioned to present inference rules for proving total correctness of recursive procedures.
To begin with, recall from Lem. \ref{lem_well_definedness_of_limits} that
$\{P_n\}_{n \geq 0}^{\sqsubseteq}$ is an increasing sequence of $\PQPT$s (ordered by $\sqsubseteq$)
defined by $\big\{ P_n \triangleq \calE_n^*(R) \big\}_{n \geq 0}$,
where $R$ is a $\PQPT$ and $\{\calE_n\}_{n \geq 0}$ is an increasing sequence of $\QOP$s (also ordered by $\sqsubseteq$).

\paragraph{Simple recursion.}
To deal with the termination problem of recursive procedure $\proc$ (cf. Tab. \ref{inter_asser_meth}),
introduce a sequence of $\PQPT$s $\{P_n\}_{n\geq 0}^{\sqsubseteq}$ with $P_0 = 0$.
Intuitively, if the entry point of $\proc$ (say $l_0$) is attached currently with assertion $P_{n+1}$,
then, upon re-invocation in the body $S$ of $\proc$,
the data flow at entry points of inner $\CALL\ \proc$ (e.g., $l_3$ or $l_5$)
needs to be constrained by a stronger assertion,
namely $P_n$, or its substitution by using (R Subst).
Finally, to cease re-invocation, i.e. treating inner $\CALL\ \proc$ as $\bottom$,
the attached assertion at entry points should be $0$, namely $P_0$.
Combining this idea with (Rp Rec),
we thus obtain the inference rule (Rt Rec) for proving total correctness of $\CALL\ \proc$ (cf. Tab. \ref{sim_tot_rule}).

\begin{example}[Counterexample, cf. App. \ref{app_exam_counter_tot}]\label{exam_counter_tot}
Let the recursive procedure $\toy$ be as defined in Exm. \ref{exam_counter_par}.
Fix $n \geq 0$. We can derive the total correctness formula
$$
\tcor{\voutprod{n}{q}{n}}{\CALL\ \toy}{\voutprod{n}{q}{n}}
$$
by using (Rt Rec). However, this is not the case if the use of (R Subst) is disallowed.
\end{example}

\paragraph{General recursion.}
We now extend the (Rt Rec) for simple recursion to the general case.
For recursive procedures $\proc_i$ with body $S_i$, $1 \leq i \leq n$,
the inference rule (Rt gRec) is introduced to simultaneously prove their total correctness (cf. Tab. \ref{com_tot_rule}).
Note that (Rt Rec) can be seen as a special case of (Rt gRec), if the index $i$ is required to range over a singleton.

\begin{remark}
In applications, sequences of $\PQPT$s $\{ P_i^j \}^{\sqsubseteq}_{j \geq 0}$ with $P_i^0 = 0$ in (Rt gRec)
\big(or, equivalently, sequences of $\QOP$s $\{ \calE_i^j \}^{\sqsubseteq}_{j \geq 0}$ with $\calE_i^0 = 0\diamond 0$,
cf. Lem. \ref{lem_well_definedness_of_limits}\big)
usually have a closed form with $j$ as an (index) variable or are defined by induction on $j$,
thus the statement
\begin{eqnarray*}
  \big\{ \atcor{\big}{P_i^j}{\CALL\ \proc_i}{Q_i} \big\}_{1 \leq i \leq n} &\reVdash{\qBase}& \bigwedge_{1 \leq i \leq n} \atcor{\big}{P_i^{j+1}}{S_i}{Q_i}, \quad \mbox{ for all } j \geq 0
\end{eqnarray*}
can be proved either for one pass by treating $j$ as an arbitrary (but fixed) variable,
or for two passes by induction on $j$ (one for the basis and the other for the inductive step).
\end{remark}

\paragraph{Synthesis of intermediate assertions.}
To make (Rt gRec) be successfully applied to
proving total correctness of recursive procedures $\proc_i$, $1 \leq i \leq n$,
we need to provide the intermediate assertions $\{P_i^j\}_{j \geq 0}^{\sqsubseteq}$ and $Q_i$ involved.
To this end, $P_i^j$ can be replaced by $\fwp.S_i^{(j)}.Q_i$
\big(in this case $P_i$ can be selected as $\fwp.(\CALL\ \proc_i).Q_i$\big),
which has the following form
\begin{eqnarray*}
  \fwp.S_i^{(j)}.Q_i &=& \sem{S_i^{(j)}}^*(Q_i)
\end{eqnarray*}
where $\big\{ \sem{S_i^{(j)}} \big\}_{j \geq 0}$ with $1 \leq i \leq n$
is the least sequence of $\QOP$s generated by $\calF$ (cf. Rem. \ref{rem_closure_den} and Thm. \ref{thm_exp}).
As in the case of recursive invariants, there is no uniform (fixed-point) characterization for the $\PQPT$s
$\big\{ \fwp.S_i^{(j)}.Q_i \big\}_{j \geq 0}$ and $Q_i$ (entailed by Exm. \ref{exam_counter_tot}).
Therefore, the synthesis of these intermediate assertions is semi-automatic, that is to say that,
the assertions $\big\{ \fwp.S_i^{(j)}.Q_i \big\}_{j \geq 0}$ can be automatically synthesised provided that $Q_i$ is given.

\paragraph{The scope of applicability.}
Recalling Ass. (\ref{eq_tot_sem}) in Def. \ref{def_cor_qop},
one can see that the total-correctness semantics for quantum programs
is a natural extension of classical counterpart for deterministic programs.
Thus, (Rt gRec) is applicable to reasoning about programs with ``deterministic control and quantum data''.
See, for example, case studies.

However, due to inequality in Ass. (\ref{eq_tot_sem}), this (general) version of total-correctness semantics
can merely be used for reasoning with approximate probabilities, and thus fails to support precise probabilistic reasoning,
e.g. precisely describing Ass. (\ref{ass_prob_cor_RQMC}), as in the case of partial correctness.
Fortunately, a restrictive use of (Rt gRec) applies to
reasoning with exact probabilities about programs with ``probabilistic control and quantum data''.
We shall develop an axiomatic basis for (approximate or exact) probabilistic reasoning as follows.

\subsection{Probabilistic correctness}\label{subsec_prob_corr}

\paragraph{Reasoning with approximate probabilities}
As discussed above,
Ass. (\ref{eq_tot_sem}) can be used for the semantical basis of reasoning with approximate probabilities.
Then an axiomatic basis of the (approximate) probabilistic correctness follows from the soundness and completeness lemma.

\begin{lemma}[Soundness and completeness]\label{lem_sound_complete}
  For any quantum program $S\in \qRP$ and any $\PQPT$s $P,Q$, it is the case that
  $$
  \trueOrder \vdash \tcor{P}{S}{Q} \mbox{ if and only if } \models_\mathbb{I} P \sqsubseteq \fwp.S.Q
  $$
\end{lemma}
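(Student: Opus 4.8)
The plan is to prove the two implications separately, using the observation (Def. \ref{def_cor_qpr}) that $\models_\mathbb{I}\tcor{P}{S}{Q}$ is \emph{by definition} $\models_\mathbb{I} P\sqsubseteq\fwp.S.Q$; thus the statement is exactly soundness ($\trueOrder\vdash\tcor{P}{S}{Q}$ implies $\models_\mathbb{I}\tcor{P}{S}{Q}$) and relative completeness (the converse) of the total-correctness system $\qBase$ together with $(\tRule$ Rec$)$ and $(\tRule$ gRec$)$, in the style of Cook \cite{Cook78,BergstraT82}.

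\textbf{Soundness.} First I would check that every axiom and rule of $\qBase$ preserves validity, even in the relativized sense in which procedure-call statements are read as arbitrary $\QOP$s constrained by the available assumptions. For the axioms (A Bot for total correctness, A Skip, A Init, A Unit) the asserted precondition is literally $\fwp.S.(\text{postcondition})$ read off Tab. \ref{tab_wp_wlp}; for (R Comp), (R Case) and (R Order) one uses monotonicity of $R\mapsto\fwp.S.R$ (from Thm. \ref{thm_exp} and monotonicity of $\QOP$-duals on $\QPRED$s) together with the identities $\fwp.(S_1;S_2).R=\fwp.S_1.(\fwp.S_2.R)$ and $\fwp.\IF.R=\sum_m M_m^\dag(\fwp.S_m.R)M_m$; and (R Subst) follows from the syntactic commutation $\fwp.S.(Q[R/\calX])=(\fwp.S.Q)[R/\calX]$ (established in App. \ref{app_syn_wp}) plus the elementary fact that replacing a predicate variable by a $\PQPT$ preserves validity of a L\"{o}wner ordering formula. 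The only genuinely non-routine case is $(\tRule$ gRec$)$: here I would argue by induction on the unrolling index $j$ that $\models_\mathbb{I} P_i^j\sqsubseteq\fwp.S_i^{(j)}.Q_i$ for all $i,j$. The base case $j=0$ is $0\sqsubseteq\fwp.\bottom.Q_i=0$; for the step, I read each inner $\CALL\ \proc_k$ in the $\qBase$-derivation as the $j$-th syntactic approximation $S_k^{(j)}$, note that under this reading the assumptions $\tcor{P_k^j}{\CALL\ \proc_k}{Q_k}$ hold by the induction hypothesis, and invoke relativized soundness of $\qBase$ to obtain $\models_\mathbb{I}\tcor{P_i^{j+1}}{S_i^{(j+1)}}{Q_i}$ (absorbing the $\SKIP$ inserted by Def. \ref{def_syn_approx} via $\fwp.(\SKIP;T).R=\fwp.T.R$), which is $P_i^{j+1}\sqsubseteq\fwp.S_i^{(j+1)}.Q_i$. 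Passing to upper limits with Lems. \ref{lem_well_definedness_of_limits} and \ref{lem_wp_wlp} and the (Proc) clause $\fwp.(\CALL\ \proc_i).Q_i=\bigsqcup_n\fwp.S_i^{(n)}.Q_i$ yields $P_i\sqsubseteq\bigsqcup_j P_i^j\sqsubseteq\bigsqcup_j\fwp.S_i^{(j)}.Q_i=\fwp.(\CALL\ \proc_i).Q_i$, i.e. $\models_\mathbb{I}\tcor{P_i}{\CALL\ \proc_i}{Q_i}$; a final structural induction on $S$ (again using soundness of $\qBase$) propagates this to arbitrary $S$.

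\textbf{Completeness.} I would prove the ``key lemma'' $\trueOrder\vdash\tcor{\fwp.S.Q}{S}{Q}$ for every statement $S$ and every $\PQPT$ $Q$; the stated implication then follows by one use of (R Order), since $\models_\mathbb{I} P\sqsubseteq\fwp.S.Q$ means exactly that this ordering formula lies in $\trueOrder$, while $\fwp.S.Q\sqsubseteq\fwp.S.Q$ and $Q\sqsubseteq Q$ are trivially in $\trueOrder$. For the key lemma I would first handle the mutually recursive procedures simultaneously by applying $(\tRule$ gRec$)$ with each $Q_i$ taken to be a \emph{fresh quantum predicate variable} $\calX_i$ on the variables of $\proc_i$, with $P_i:=\fwp.(\CALL\ \proc_i).\calX_i$ and $P_i^j:=\fwp.S_i^{(j)}.\calX_i$; these form an increasing chain with $P_i^0=0$ (Lem. \ref{lem_wp_wlp}) and $P_i=\bigsqcup_j P_i^j$ by Tab. \ref{tab_wp_wlp}, so the side conditions of the rule hold. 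Its premise $\{\tcor{\fwp.S_k^{(j)}.\calX_k}{\CALL\ \proc_k}{\calX_k}\}_k\reVdash{\qBase}\bigwedge_i\tcor{\fwp.S_i^{(j+1)}.\calX_i}{S_i}{\calX_i}$ I would establish by a sub-induction on the structure of an arbitrary call-containing statement $T$: for every $\PQPT$ $R$ one derives $\tcor{\fwp.T^{[j]}.R}{T}{R}$ in $\qBase$ from those assumptions, where $T^{[j]}$ denotes $T$ with each $\CALL\ \proc_k$ replaced by $\SKIP;S_k^{(j)}$ (so $S_i^{[j]}=S_i^{(j+1)}$). The basic-construct, composition and branching cases use the axioms, (R Comp) and (R Case); the decisive case $T=\CALL\ \proc_k$ is where (R Subst) is indispensable — applying it to the assumed $\tcor{\fwp.S_k^{(j)}.\calX_k}{\CALL\ \proc_k}{\calX_k}$ with $[R/\calX_k]$ and using $(\fwp.S_k^{(j)}.\calX_k)[R/\calX_k]=\fwp.S_k^{(j)}.R$ (legitimate since $S_k^{(j)}$ contains no calls) gives precisely $\tcor{\fwp.(\SKIP;S_k^{(j)}).R}{\CALL\ \proc_k}{R}$. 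With the procedures in hand, the key lemma for general $S$ follows by structural induction, the case $S=\CALL\ \proc_i$ being one more (R Subst) step $[Q/\calX_i]$ combined with the commutation of substitution through $\bigsqcup_n\fwp.S_i^{(n)}.(-)$.

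\textbf{Main obstacle.} I expect the subtle point to be the completeness treatment of $(\tRule$ gRec$)$: one must commit to \emph{parameterized} postconditions $\calX_i$ so that every precondition required for an inner recursive call is a substitution instance of the assumed triple — this is exactly why the Substitution Rule (R Subst) is needed and what Exms. \ref{exam_counter_par}--\ref{exam_counter_tot} are designed to dramatize — and one must verify that $\fwp$ commutes both with substitution for predicate variables and with the $\PQPT$-limits of Def. \ref{def_limits}; although routine, these commutations are the load-bearing facts. By comparison the soundness direction is largely mechanical once relativized soundness of $\qBase$ and continuity of $\fwp.(\CALL\ \proc_i).(-)$ are available.
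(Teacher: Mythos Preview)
Your proposal is correct and follows essentially the same route as the paper. The paper's proof of this lemma is a one-liner citing Def.~\ref{def_cor_qpr} and Thm.~\ref{thm_sound_complete_EqRP}, but the latter (adapted from Thm.~\ref{thm_sound_complete_qPP}) unfolds into exactly your two ingredients: for soundness, an induction on the unrolling index $j$ showing that the rule's hypotheses propagate to the syntactic approximations $S_i^{(j)}$ (their Claim~\ref{claim_sound_qPP}), and for completeness, the ``most general correctness formula'' $\tcor{\fwp.(\CALL\ \proc_i).\calX_i}{\CALL\ \proc_i}{\calX_i}$ with a fresh predicate variable as postcondition, whose needed instances are produced by (R~Subst). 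The only cosmetic difference is that the paper phrases the completeness premise-verification via the ``compact'' soundness/completeness of the base system (every triple in the derivation has precondition literally equal to $\fwp$ of its postcondition) rather than your explicit structural sub-induction, but these amount to the same argument.
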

\begin{proof}
  By Def. \ref{def_cor_qpr} and Thm. \ref{thm_sound_complete_EqRP}.
\end{proof}

\begin{theorem}[Reasoning with approximate probabilities]\label{thm_reas_approx_prob}
  For any quantum program $S\in \qRP$, any $\QPRED$s $P,Q$ and any $\delta\in [0, 1]$, it is the case that
  $$
  \trueOrder \vdash \tcor{\delta P}{S}{Q} \mbox{ if and only if } \forall \rho.\ \tr(P\rho) = 1 \implies \tr\big(Q\sem{S}(\rho)\big) \geq \delta
  $$
\end{theorem}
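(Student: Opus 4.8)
The plan is to derive the theorem from the soundness and completeness lemma (Lem.~\ref{lem_sound_complete}) together with the quantum duality and expressiveness theorems, collapsing the equivalence to an elementary statement about two trace conditions. The only genuinely delicate point will be one direction of that final arithmetic equivalence.

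First I would unfold the left-hand side. By Lem.~\ref{lem_sound_complete}, $\trueOrder \vdash \tcor{\delta P}{S}{Q}$ holds iff $\models_{\mathbb{I}} \delta P \sqsubseteq \fwp.S.Q$, and by Thm.~\ref{thm_exp}(a), $\models_{\mathbb{I}} \fwp.S.Q = \sem{S}^{*}(Q)$; so the left-hand side is equivalent to $\delta P \sqsubseteq \sem{S}^{*}(Q)$. (This triple is legitimate: $\delta P$ is a $\QPRED$ since $0 \sqsubseteq \delta P \sqsubseteq P \sqsubseteq I$ for $\delta\in[0,1]$, hence a $\PQPT$ by Lem.~\ref{lem_qpred_to_pqpt}, with $\Params{\delta P} = \Params{Q} = \emptyset$.) Next I would apply Lem.~\ref{lem_lowner_compr} on $\Hh$ — noting $\sem{S}^{*}(Q) \in \Pp(\Hh)$ because $0 \sqsubseteq \sem{S}^{*}(Q) \sqsubseteq \sem{S}^{*}(I) \sqsubseteq I$ — together with the defining identity $\tr\big(\sem{S}^{*}(Q)\rho\big) = \tr\big(Q\,\sem{S}(\rho)\big)$ of the Schr\"{o}dinger--Heisenberg dual, to rewrite the left-hand side as
\[
  \delta\,\tr(P\rho) \;\leq\; \tr\big(Q\,\sem{S}(\rho)\big) \qquad \text{for every } \rho \in \Dd(\Hh).
\]
Thus the whole theorem reduces to showing that this condition is equivalent to ``$\tr(P\rho)=1$ implies $\tr\big(Q\,\sem{S}(\rho)\big)\geq\delta$, for every $\rho$''. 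The implication from the displayed inequality to the conditional one is immediate: just instantiate it at any $\rho$ with $\tr(P\rho)=1$.

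The converse is where the work lies, and I expect it to be the main obstacle, because it requires passing from inputs on which the precondition is met with certainty to \emph{arbitrary} $\PDOP$s. The idea is normalization-and-projection: in the situation of interest $P$ is a projection (indeed $I$ in Ass.~(\ref{ass_prob_cor_RQMC})), so for any $\rho$ with $t \triangleq \tr(P\rho) > 0$ the operator $\rho' \triangleq P\rho P / t$ is a $\PDOP$ with $\tr(P\rho')=1$ (using $P^{2}=P$ and cyclicity of $\tr$); the hypothesis then gives $\tr\big(Q\,\sem{S}(P\rho P)\big) \geq \delta t = \delta\,\tr(P\rho)$, which already settles the case $P=I$ (where $P\rho P=\rho$) and, for general projections, is combined with positivity of $Q$ and complete positivity of the $\QOP$ $\sem{S}$ to compare $\sem{S}(\rho)$ with $\sem{S}(P\rho P)$; the remaining case $t=0$ is trivial since $\tr\big(Q\,\sem{S}(\rho)\big)\geq 0$ because $Q$ and $\sem{S}(\rho)$ are positive. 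Once this arithmetic equivalence is in place the theorem follows, and the same template --- soundness/completeness, then Schr\"{o}dinger--Heisenberg duality, then normalization --- is exactly what I would reuse for the exact-probability statement (Thm.~\ref{thm_reas_exact_prob}), invoking the compact soundness and completeness lemma (Lem.~\ref{lem_cmpt_sound_complete}) and $\trueEquality$ in place of Lem.~\ref{lem_sound_complete} and $\trueOrder$.
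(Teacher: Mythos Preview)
Your chain of reductions --- soundness/completeness (Lem.~\ref{lem_sound_complete}), expressiveness (Thm.~\ref{thm_exp}), then Lem.~\ref{lem_lowner_compr} and Schr\"{o}dinger--Heisenberg duality --- is exactly the paper's approach; the paper's proof (deferred to Thm.~\ref{thm_reas_EqPR_with_prob}) carries out these same steps and labels the final passage from $\forall\rho.\ \delta\,\tr(P\rho)\le \tr\big(Q\,\sem{S}(\rho)\big)$ to the implication form as ``an easy transformation''. You are right to flag that last step as the delicate one, and your argument for the case $P=I$ is correct and covers the paper's actual applications (e.g.\ Ass.~(\ref{ass_prob_cor_RQMC}) and Rem.~\ref{rem_reas_with_prob} with $P=I$).

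The gap is in your sketch for general projections $P$. After normalization you reach $\tr\big(Q\,\sem{S}(P\rho P)\big)\ge \delta\,\tr(P\rho)$, but comparing $\sem{S}(\rho)$ with $\sem{S}(P\rho P)$ does \emph{not} follow from positivity of $Q$ and complete positivity of $\sem{S}$, because $\rho - P\rho P$ need not be positive (e.g.\ $\rho=\outprod{+}{+}$, $P=\outprod{0}{0}$). Worse, the equivalence itself fails for such $P$: take $P=\outprod{0}{0}$, $Q=\outprod{+}{+}$, $S=\SKIP$, $\delta=\tfrac12$ on a single qubit. The only $\PDOP$ $\rho$ with $\tr(P\rho)=1$ is $\outprod{0}{0}$, giving $\tr(Q\rho)=\tfrac12\ge\delta$, so the right-hand side holds; yet $\tfrac12\outprod{0}{0}\not\sqsubseteq\outprod{+}{+}$ (the difference has determinant $-\tfrac14$), so the left-hand side is false. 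The paper's ``easy transformation'' is thus not valid for arbitrary $\QPRED$s $P$; it \emph{is} valid for $P=I$, and that is the case you should retain.
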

\begin{proof}
  Contained in the proof of Thm. \ref{thm_reas_EqPR_with_prob}.
\end{proof}

\paragraph{Reasoning with exact probabilities}
A semantical basis of (exact) probabilistic reasoning
can be adapted from Eq. (\ref{eq_tot_sem}) with $=$ in place of $\leq$ (for approximate reasoning).
Based on this, the semantics of a total-correctness formula
$\tcor{P}{S}{Q}$ has the following property
$$
\models_\mathbb{I} \tcor{P}{S}{Q} \mbox{ if and only if } \models_\mathbb{I} P = \fwp.S.Q
$$
To build an axiomatic basis of this (exact) probabilistic correctness,
we propose the concept of compact soundness and completeness in Lem. \ref{lem_cmpt_sound_complete}, and, as a consequence,
a (syntactically checkable) condition for exact probabilistic reasoning is identified in Thm. \ref{thm_reas_exact_prob}.

\begin{lemma}[Compact soundness and completeness]\label{lem_cmpt_sound_complete}
  For any quantum program $S\in \qRP$ and any $\PQPT$s $P,Q$, it is the case that
  $$
  \trueEquality \vdash \tcor{P}{S}{Q} \mbox{ if and only if } \models_\mathbb{I} P = \fwp.S.Q
  $$
\end{lemma}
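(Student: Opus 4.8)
The plan is to refine the proof of Lemma~\ref{lem_sound_complete} so that the weak invariant ``$\models_\mathbb{I} P \sqsubseteq \fwp.S.Q$'' is everywhere replaced by the tight invariant ``$\models_\mathbb{I} P = \fwp.S.Q$'', i.e.\ by the property that $\tcor{P}{S}{Q}$ has the \emph{compact form} $\tcor{\fwp.S.Q}{S}{Q}$ (up to $\mathbb{I}$-equality). I read $\trueEquality \vdash \tcor{P}{S}{Q}$ as derivability in the total-correctness system in which (i) the only ordering formulas fed to (R Order) are the equalities of $\trueEquality$, so that (R Order) collapses to a \emph{congruence rule} (replace an assertion by an $\mathbb{I}$-equal $\PQPT$), and (ii) the side conditions ``$P \sqsubseteq \bigsqcup_n P_n$'' of ($\tRule$ Rec)/($\tRule$ gRec) are strengthened to ``$P = \bigsqcup_n P_n$'' (without this the rule could return a strictly smaller precondition and the ``only if'' direction would fail). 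I then prove \emph{compact soundness} (left to right) and \emph{compact completeness} (right to left) separately.

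For compact soundness, I would show by induction on the derivation that every total-correctness formula $\tcor{P}{S}{Q}$ occurring in it satisfies $\models_\mathbb{I} P = \fwp.S.Q$. The base axioms (A Bot) in its total form $\tcor{0}{\bottom}{P}$, (A Skip), (A Init), (A Unit) literally have the shape $\tcor{\fwp.S.P}{S}{P}$ by inspection of Table~\ref{tab_wp_wlp}. The structural rules preserve the invariant via the defining clauses of $\fwp$: (R Comp) by $\fwp.(S_1;S_2).R = \fwp.S_1.(\fwp.S_2.R)$; (R Case) by $\fwp.\IF.Q = \sum_m M_m^\dag(\fwp.S_m.Q)M_m$; (R Subst) by the commutation identity $(\fwp.S.Q)[R/\calX] = \fwp.S.(Q[R/\calX])$, which I would prove by a side structural induction on $S$ using that every clause of Table~\ref{tab_wp_wlp} applies a parameter-free dual to the predicate argument. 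For (R Order) as congruence: by Theorem~\ref{thm_exp}, $\fwp.S.\_$ denotes $\sem{S}^*$ applied to the interpretation of its argument, a genuine function, so $\models_\mathbb{I} Q' = Q$ gives $\models_\mathbb{I} \fwp.S.Q' = \fwp.S.Q$, and chaining with $\models_\mathbb{I} P = P'$ and $\models_\mathbb{I} P' = \fwp.S.Q'$ yields the invariant. The recursion case ($\tRule$ gRec) needs a nested induction on the unrolling index $j$: $P_i^0 = 0 = \fwp.\bottom.Q_i = \fwp.S_i^{(0)}.Q_i$, and assuming $\models_\mathbb{I} P_k^j = \fwp.S_k^{(j)}.Q_k$ for all $k$, a ``compact soundness of $\qBase$ relative to the leaf assumptions $\{\tcor{P_k^j}{\CALL\ \proc_k}{Q_k}\}_k$'' statement applied to the sub-derivation gives $\models_\mathbb{I} P_i^{j+1} = \fwp.S_i^{(j+1)}.Q_i$, matching the leaf $\CALL\ \proc_k$ inside $S_i$ with the substituted sub-program $\SKIP;S_k^{(j)}$ inside $S_i^{(j+1)}$ (Definition~\ref{def_syn_approx}), which is legitimate since $\fwp.(\SKIP;S_k^{(j)}).Q = \fwp.S_k^{(j)}.Q$. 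The strengthened side condition together with the (Proc) clause of Table~\ref{tab_wp_wlp} (and Lemma~\ref{lem_wp_wlp}, Definition~\ref{def_limits} for the limit) then give $\models_\mathbb{I} P_i = \bigsqcup_j P_i^j = \bigsqcup_j \fwp.S_i^{(j)}.Q_i = \fwp.(\CALL\ \proc_i).Q_i$.

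For compact completeness, I would reuse the completeness construction behind Lemma~\ref{lem_sound_complete}: for any valid triple it produces a derivation of the compact triple $\tcor{\fwp.S.Q}{S}{Q}$ in which every intermediate assertion is a formal weakest precondition, every internal ordering fact has the shape ``$\fwp.S'.Q_1 = \fwp.S'.Q_2$ with $\models_\mathbb{I} Q_1 = Q_2$'' or ``$P' = \fwp.S'.Q'$'' (all equalities, hence all in $\trueEquality$), and the ($\tRule$ gRec) side condition is exactly the equality $\fwp.(\CALL\ \proc_i).Q_i = \bigsqcup_j \fwp.S_i^{(j)}.Q_i$ of the (Proc) clause, matching the strengthened rule. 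Hence $\trueEquality \vdash \tcor{\fwp.S.Q}{S}{Q}$. Finally, the hypothesis $\models_\mathbb{I} P = \fwp.S.Q$, noting $\Params{P} = \Params{Q} = \Params{\fwp.S.Q}$, is precisely a formula of $\trueEquality$, so one congruence step ((R Order) with this equality) yields $\trueEquality \vdash \tcor{P}{S}{Q}$.

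I expect the main obstacle to be the recursion case of compact soundness. There are two sources of difficulty: first, one must justify that the ``compact'' version of ($\tRule$ Rec)/($\tRule$ gRec) is the one with the side condition tightened to an equality --- i.e.\ that the $\sqsubseteq$-version genuinely fails exact reasoning while the $=$-version stays complete for the tight semantics --- and fold this into the statement; second, one must carry out the nested induction on the unrolling index, which rests on a ``soundness relative to assumptions'' form of compact soundness for $\qBase$ and on the bookkeeping that the $\SKIP$ of Definition~\ref{def_syn_approx} is transparent for $\fwp$, so that the leaf spec $\tcor{\fwp.S_k^{(j)}.Q_k}{\CALL\ \proc_k}{Q_k}$ is exactly what the sub-derivation consumes. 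The commutation of $\fwp$ with parameter substitution needed for (R Subst) is a secondary point that should follow routinely from the structural definition in Table~\ref{tab_wp_wlp}.
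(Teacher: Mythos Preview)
Your proposal is correct and matches the paper's approach. The paper's own proof is a bare pointer (``contained in the proof of Thm.~\ref{thm_sound_complete_EqRP}'', itself ``similar to'' Thm.~\ref{thm_sound_complete_qPP}), and what you have written is precisely the fleshing-out that pointer implies: bidirectional preservation of the compact form $\tcor{\fwp.S.Q}{S}{Q}$ by every rule of $\qBase$ except (R~Order), congruence use of (R~Order) under $\trueEquality$, the nested induction on the unrolling index for ($\tRule$~gRec) reducing to the syntactic approximations $S_i^{(j)}$, and completeness via the ``most general'' triples with predicate variables $\calX_i$. Your observation that the side condition $P_i \sqsubseteq \bigsqcup_j P_i^j$ must collapse to an equality under $\trueEquality$ is exactly what makes compact soundness go through for total correctness (in contrast to the partial-correctness rule, cf.\ Thm.~\ref{thm_cmpt_sound_complete_qPP}), and is implicit in the paper's convention that all ordering formulas are supplied by the assertion theory $T$.
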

\begin{proof}
  Contained in the proof of Thm. \ref{thm_sound_complete_EqRP}.
\end{proof}

\begin{theorem}[Reasoning with exact probabilities]\label{thm_reas_exact_prob}
  For any quantum program $S\in \qRP$, any $\QPRED$s $P,Q$ and any $\delta\in [0, 1]$, it is the case that
  $$
  \trueEquality \vdash \tcor{\delta P}{S}{Q} \mbox{ if and only if } \forall \rho.\ \tr(P\rho) = 1 \implies \tr\big(Q\sem{S}(\rho)\big) = \delta
  $$
\end{theorem}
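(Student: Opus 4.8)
The plan is to turn the syntactic provability claim into a semantic identity between quantum predicates, show that identity is equivalent to the probabilistic condition, and translate back. Since $P$ and $Q$ are $\QPRED$s, they---and $\delta P$---are $\PQPT$s with empty parameter set (Lem.~\ref{lem_qpred_to_pqpt}), so $\tcor{\delta P}{S}{Q}$ is a legitimate total-correctness formula. By the compact soundness and completeness lemma (Lem.~\ref{lem_cmpt_sound_complete}), $\trueEquality\vdash\tcor{\delta P}{S}{Q}$ holds iff $\models_\mathbb{I}\ \delta P=\fwp.S.Q$, and by the expressiveness theorem (Thm.~\ref{thm_exp}(a)), $\models_\mathbb{I}\ \fwp.S.Q=\sem{S}^*(Q)$; as neither side carries parameters this is the plain operator identity $\delta P=\sem{S}^*(Q)$. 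So it suffices to prove
\[
\delta P=\sem{S}^*(Q)\quad\Longleftrightarrow\quad\forall\rho.\ \tr(P\rho)=1\ \Rightarrow\ \tr\big(Q\sem{S}(\rho)\big)=\delta .
\]
The left-to-right direction is immediate from the definition of the Schr\"{o}dinger--Heisenberg dual: if $\delta P=\sem{S}^*(Q)$ then for every $\rho$ with $\tr(P\rho)=1$ we get $\tr\big(Q\sem{S}(\rho)\big)=\tr\big(\sem{S}^*(Q)\rho\big)=\delta\,\tr(P\rho)=\delta$.

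For the converse I would split $\delta P=\sem{S}^*(Q)$ into the two L\"{o}wner inequalities $\delta P\sqsubseteq\sem{S}^*(Q)$ and $\sem{S}^*(Q)\sqsubseteq\delta P$ and conclude by antisymmetry of $\sqsubseteq$. The first follows by weakening ``$=\delta$'' to ``$\geq\delta$'' and feeding it to the approximate-probability theorem (Thm.~\ref{thm_reas_approx_prob}): this gives $\trueOrder\vdash\tcor{\delta P}{S}{Q}$, hence $\delta P\sqsubseteq\fwp.S.Q=\sem{S}^*(Q)$ via Lem.~\ref{lem_sound_complete} and Thm.~\ref{thm_exp}. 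For the second, the hypothesis reads $\tr\big((\sem{S}^*(Q)-\delta P)\rho\big)=0$ for every $\rho$ with $\tr(P\rho)=1$; in the principal case $P=I$ (as in the running example, Ass.~(\ref{ass_prob_cor_RQMC})) this is \emph{every} density operator $\rho$, and since the pure density operators $\outprod{\psi}{\psi}$ span the Hermitian operators while $\sem{S}^*(Q)-\delta I$ is Hermitian, the identity $\sem{S}^*(Q)=\delta I$ follows outright; for a general $\QPRED$ $P$ one runs the same linear-algebra argument inside the $1$-eigenspace of $P$ (where $\tr(P\rho)=1$ is attainable) and combines it with the inequality $\delta P\sqsubseteq\sem{S}^*(Q)$ already in hand. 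Either way $\delta P=\sem{S}^*(Q)$, and then Lem.~\ref{lem_cmpt_sound_complete}---now invoked over the theory $\trueEquality$, which supplies the matched $\sqsubseteq$/$\sqsupseteq$ facts as a single legitimate $=$-formula for (R Order)---yields $\trueEquality\vdash\tcor{\delta P}{S}{Q}$.

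The main obstacle is exactly this converse: the probabilistic hypothesis only constrains states $\rho$ with $\tr(P\rho)=1$, whereas $\delta P=\sem{S}^*(Q)$ is a statement about all states, so one must bridge the gap with the span/linearity argument above, and this is why the one-sided L\"{o}wner bound from Thm.~\ref{thm_reas_approx_prob} has to be secured first, so that matching the two predicates on the support of $P$ already forces equality everywhere. The remaining ingredients---the reduction through Lem.~\ref{lem_cmpt_sound_complete} and Thm.~\ref{thm_exp}, and the forward direction---are routine manipulations with the Schr\"{o}dinger--Heisenberg dual and the definitions of total correctness.
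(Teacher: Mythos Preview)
Your reduction to the operator identity $\delta P=\sem{S}^*(Q)$ via Lem.~\ref{lem_cmpt_sound_complete} and Thm.~\ref{thm_exp}(a), and your forward direction, are correct and match the paper's route; the paper then passes through Lem.~\ref{lem_lowner_compr} to $\forall\rho.\ \delta\,\tr(P\rho)=\tr(Q\sem{S}(\rho))$ and declares the remaining step an ``easy transformation''.

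The gap is in your converse for general $P$. The constraint $\tr(P\rho)=1$ with $\rho\in\Dd(\Hh)$ forces $\rho$ to be a density operator supported on the $1$-eigenspace $V$ of $P$, so the probabilistic hypothesis pins down $\sem{S}^*(Q)$ only on $V$. Combining this with the one-sided bound $\delta P\sqsubseteq\sem{S}^*(Q)$ does kill the off-diagonal block in the decomposition relative to $V\oplus V^\perp$, but it leaves the $V^\perp$-block of $\sem{S}^*(Q)$ bounded only from \emph{below} by $\delta P|_{V^\perp}$, never from above. Concretely, on a qubit take $P=\outprod{0}{0}$, $S=\SKIP$, $Q=\outprod{1}{1}$, $\delta=0$: the only admissible $\rho$ is $\outprod{0}{0}$ and $\tr(Q\,\outprod{0}{0})=0=\delta$, so the right-hand side of the theorem holds; yet $\sem{S}^*(Q)=\outprod{1}{1}\neq 0=\delta P$, so by Lem.~\ref{lem_cmpt_sound_complete} the left-hand side fails. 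Your sentence ``matching the two predicates on the support of $P$ already forces equality everywhere'' is therefore false, and no amount of combining with $\delta P\sqsubseteq\sem{S}^*(Q)$ repairs it. The argument does go through in the case $P=I$ (where $V$ is the whole space and the span argument applies), which is exactly the instance you treat correctly and the one used in the running example; the paper's ``easy transformation'' glosses over the same issue for $P\neq I$.
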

\begin{proof}
  Contained in the proof of Thm. \ref{thm_reas_EqPR_with_prob}.
\end{proof}

\begin{remark}\label{rem_reas_with_prob}
  Thm. \ref{thm_reas_exact_prob} (resp. Thm. \ref{thm_reas_approx_prob})
  establishes an axiomatic basis for reasoning with exact (resp. approximate) probabilities.
  Concretely speaking, if $\PQPT$s $P$ and $Q$ are chosen as projection operators,
  then Hoare's triple $\tcor{\delta P}{S}{Q}$ is able to express that
  ``In case the inputs of $S$ fall into the subspace $P$,
  the outputs will fall into $Q$ with probability $= \delta$ (resp. $\leq \delta$)''.
  In particular, when $P,Q$ are the identity operator $I$,
  Hoare's triple $\tcor{\delta I}{S}{I}$ represents termination on any input with probability $= \delta$ (resp. $\leq \delta$); and $\tcor{I}{S}{I}$ almost-sure termination in both cases.
  Note that during the reasoning with exact probabilities,
  the necessary L\"{o}wner ordering formulas are of the form $P = P'$, provided by $\trueEquality$.
\end{remark}

\begin{example}[Reasoning about $\RQMC$ with exact probabilities]\label{exam_pcor_RQMC}
Recall the game $\RQMC$ from Exms. \ref{exam_syn_RQMC}, \ref{exam_opSem_RQMC} and \ref{exam_deSem_RQMC}.
We illustrate how to do reasoning with exact probabilities
by showing probabilistic correctness and probabilistic termination of $\RQMC$.

{\bf (i) (Probabilistic correctness).}
To formally prove that Alice wins with probability $\frac{1}{3}$,
it suffices to prove the total correctness formula
\begin{equation*}
\atcor{\Big}{\frac{1}{3}I}{\RQMC}{\outprod{+}{+}}
\end{equation*}
By (A Init, R Comp), it suffices to prove
\begin{equation*}
\atcor{\Big}{\frac{\outprod{0}{0} + \outprod{1}{1}}{3}}{\CALL\ \Alice}{\outprod{+}{+}} \mbox{ and } \atcor{\Big}{\frac{\outprod{0}{0} + 4\outprod{1}{1}}{6}}{\CALL\ \Bob}{\outprod{+}{+}}
\end{equation*}
simultaneously. Defining $P_A^n$, $P_B^n$ by
\begin{equation*}
  P_A^n \triangleq \Big( \sum_{k \geq 1}^{2k-1 \leq n}\frac{1}{4^k} \Big) \outprod{0}{0} + \Big( \sum_{k \geq 1}^{2k \leq n} \frac{1}{4^k} \Big) \outprod{1}{1}, \quad P_B^n \triangleq \frac{1}{2} P_A^{n-1} + \frac{1}{2} \outprod{1}{1}
\end{equation*}
and $\Prem_A^n$, $\Prem_B^n$ by
\begin{equation*}
  \Prem_A^n \triangleq \tcor{P_A^n}{\CALL\ \Alice}{\outprod{+}{+}},
  \quad \Prem_B^n \triangleq \tcor{P_B^n}{\CALL\ \Bob}{\outprod{+}{+}}
\end{equation*}
by ($\tRule$ gRec), it suffices to prove, for all $n\geq 0$, that
\begin{eqnarray*}
  \Prem_A^n, \Prem_B^n & \vdash & \tcor{P_A^{n+1}}{\ifStat{\Box m\cdot M[q] = m \rightarrow S_m}}{\outprod{+}{+}} \\
  \Prem_A^n, \Prem_B^n & \vdash & \tcor{P_B^{n+1}}{\ifStat{\Box m\cdot M'[q] = m \rightarrow S_m'}}{\outprod{+}{+}}
\end{eqnarray*}
The proof is done by applying (R Case) to Hoare's triples (1-3) and (4,5) respectively.
$$
\begin{array}{clr}
  (1) & \tcor{\outprod{0}{0}}{q \starequal H}{\outprod{+}{+}} & \mbox{(A Unit)} \\
  (2) & \tcor{P_B^n}{\CALL\ \Bob}{\outprod{+}{+}} & \Prem_B^n \\
  (3) & \tcor{0}{\bottom}{\outprod{+}{+}} & \mbox{(A Bot)} \\
  (4) & \tcor{P_A^n}{\CALL\ \Alice}{\outprod{+}{+}} & \Prem_A^n \\
  (5) & \tcor{\outprod{1}{1}}{q \starequal H X}{\outprod{+}{+}} & \mbox{(A Unit)}
\end{array}
$$

{\bf (ii) (Probabilistic termination).}
To formally prove that $\RQMC$ terminates with probability $\frac{2}{3}$,
it suffices to prove the total correctness formula
\begin{equation*}
\atcor{\Big}{\frac{2}{3}I}{\RQMC}{I}
\end{equation*}
By (A Init, R Comp), it suffices to prove
\begin{equation*}
\atcor{\Big}{\frac{2}{3} I}{\CALL\ \Alice}{I} \mbox{ and } \atcor{\Big}{\frac{5}{6}I}{\CALL\ \Bob}{I}
\end{equation*}
simultaneously. The proof proceeds as above, by redefining $P_A^n$, $P_B^n$ by
\begin{equation*}
  P_A^n \triangleq \Big( \sum_{k \geq 1}^{2k-1 \leq n}\frac{1}{4^k} \Big) I + \Big( \sum_{k \geq 1}^{2k \leq n} \frac{1}{4^k} \Big) I, \quad P_B^n \triangleq \frac{1}{2} P_A^{n-1} + \frac{1}{2} I
\end{equation*}
and $\Prem_A^n$, $\Prem_B^n$ by
\begin{equation*}
  \Prem_A^n \triangleq \tcor{P_A^n}{\CALL\ \Alice}{I},
  \quad \Prem_B^n \triangleq \tcor{P_B^n}{\CALL\ \Bob}{I}
\end{equation*}
\end{example}

\begin{remark}[Counterexample, cf. Thm. \ref{thm_cmpt_sound_complete_qPP}]\label{rem_exam_counter_prob}
  The proof system for partial correctness of $\qRP$ has no compact soundness.
  To see this, suppose that $S \triangleq \CALL\ P_{\bottom}$ (cf. Exam. \ref{exam_bottom}),
  and $P,Q$ are $\PQPT$s with $\models_\mathbb{I} P \sqsubset I$.
  Then, by definition of $\fwlp$ (cf. Tab. \ref{tab_wp_wlp}), we have that
  \begin{eqnarray*}
    &\models_\mathbb{I}& P \sqsubset I = \fwlp.(\CALL\ P_{\bottom}).Q
  \end{eqnarray*}
  However, by (Rp pRec), it follows that
  \begin{eqnarray*}
    \trueEquality &\vdash& \pcor{P}{ \CALL\ P_{\bottom} }{Q}
  \end{eqnarray*}
  This reveals that the standard intermediate assertion method for partial correctness
  (i.e. the Turing-Floyd-Hoare Principle, cf. Tab. \ref{inter_asser_meth})
  can't be used universally for reasoning about recursive procedures with exact probabilities
  (even if involving nontermination).
\end{remark}

\subsection{Proof rules for while loops}\label{subsec_loop}

\begin{table}[!htbp]

\vspace{-12pt}

\centering

  \begin{minipage}[b]{\textwidth}
    \centering
    $$\dfrac{\pcor{P}{S}{M_0^{\dag} Q M_0 + M_1^{\dag} P M_1}}
    {\pcor{M_0^{\dag} Q M_0 + M_1^{\dag} P M_1}{\whileStat{ M[\lst{q}] = 1 }{ S }}{Q}}$$
    \subcaption{($\pRule$ Loop).}
    \label{par_loop_rule}
  \end{minipage}
  \vfill
  \begin{minipage}[b]{\textwidth}
    \centering
    $$\dfrac{\begin{array}{c}
    \exists\ \{P_n\}_{n \geq 0}^{\sqsubseteq} \mbox{ with } P_0 = 0 \mbox{ s.t.}  \\
    \tcor{P_{n+1}}{S}{M_0^\dag Q M_0 + M_1^\dag P_n M_1} \mbox{ for all } n\geq 0, \\
    P \sqsubseteq \bigsqcup_{n=0}^{\infty} P_n
    \end{array}
    }{\tcor{M_0^\dag Q M_0 + M_1^\dag P M_1}{\whileStat{ M[\lst{q}] = 1 }{ S }}{Q} }$$
    \subcaption{($\tRule$ Loop).}
    \label{tot_loop_rule}
  \end{minipage}

\vspace{-8pt}

\caption{Proof rules for while loops.}

\label{tot_rules} 

\vspace{-6pt}

\end{table}

The while-loop program $\WHILE \triangleq \whileStat{ M[\lst{q}] = 1 }{ S }$ with $M \triangleq \{M_0, M_1\}$,
can be defined as a call of tail recursion $\CALL\ T_{\WHILE}$, where $T_{\WHILE}$ has the body
\begin{eqnarray*}
  \IF &\triangleq& \ifStat{\Box m\cdot M[\lst{q}] = m \rightarrow S_m},
\end{eqnarray*}
with $S_0 \triangleq \SKIP$ and $S_1 \triangleq S; \CALL\ T_{\WHILE}$.

\vspace{1mm}
\noindent {\bf Partial correctness.}
To derive $\pcor{R}{\CALL\ T_{\WHILE}}{Q}$, by ($\pRule$ Rec), it suffices to show
\begin{eqnarray*}
  \pcor{R}{\CALL\ T_{\WHILE}}{Q} &\vdash& \pcor{R}{\IF}{Q}
\end{eqnarray*}
By (R Case), together with (A Skip) $\pcor{Q}{\SKIP}{Q}$, it suffices to show
\begin{eqnarray*}
  \pcor{R}{\CALL\ T_{\WHILE}}{Q} &\vdash& \pcor{P}{S; \CALL\ T_{\WHILE}}{Q}
\end{eqnarray*}
Here we let $R \triangleq M_0^{\dag} Q M_0 + M_1^{\dag} P M_1$.
By (R Comp), it suffices to derive
$$
\pcor{P}{S}{M_0^{\dag} Q M_0 + M_1^{\dag} P M_1}
$$
Thus, the proof rule (Rp Loop) for partial correctness of $\WHILE$ is designed in Tab. \ref{par_loop_rule}.

\vspace{1mm}
\noindent {\bf Total correctness.}
To prove total correctness of $\WHILE$,
by ($\tRule$ Rec),
we need to introduce a sequence of assertions at the same program point,
in which $R = M_0^{\dag} Q M_0 + M_1^{\dag} P M_1$ lies,
each with a different time point.
Instead of doing so, introduce $\{P_n\}_{n \geq 0}^{\sqsubseteq}$ with $P_0 = 0$
at the program point where $P$ lies.
(We remark that each time the data flow enters the loop body, the assertion $P_n$ will be encountered;
yet only after exiting the loop, should $Q$ be met.)
Thus, the proof rule ($\tRule$ Loop) for total correctness of $\WHILE$ can be designed in Tab. \ref{tot_loop_rule}.

\vspace{1mm}
\noindent {\bf Ying's rules revisited.}
Ying's proof rule for partial correctness of while loops is the same as ($\pRule$ Loop) \cite{Ying11}.
However, his solution to solving the issue of termination is based on a (semantical) notion of $(P, \epsilon)$-boundedness,
where $\epsilon$ bounds the trace of the diverging computation.
If, for any $\epsilon > 0$,
there is a $(M_1^\dag Q M_1, \epsilon)$-bound function of a while loop starting in $Q$,
then the loop terminates.
Thus, ($\pRule$ Loop) is used there jointly with the above condition
to prove total correctness of while loops \cite{Ying11}.

\begin{remark}\label{remark_tail_rec}
Illustrated by the process of
deducing proof rules for while loops from those for recursive procedures (and also by Exm. \ref{exam_pcor_RQMC}),
one can see that reasoning about a tail recursion doesn't necessarily require (R Subst),
since the Hoare's triple on a call statement as premise
can directly provide all the needed triples of that call statement in the body.
\end{remark}

\begin{lemma}[Cf. Props. 4.2.2 and 4.2.3 of \cite{ying2016foundations}]
  Let
  \begin{eqnarray*}
    \WHILE &\triangleq& \whileStat{ M[\lst{q}] = 1 }{ S }
  \end{eqnarray*}
  with $M \triangleq \{M_0, M_1\}$, and $Q$ a (non-parameterized) $\PQPT$.
  Define the $\PQPT$ $\Ff_{\fxp}^Q(\calX)$ by
  \begin{eqnarray*}
    \Ff_{\fxp}^Q(\calX) &\triangleq& M_0^{\dag} Q M_0 + M_1^{\dag} (\fxp.S.\calX) M_1
  \end{eqnarray*}
  where $\fxp\in \{\fwp, \fwlp\}$. It is the case that
  \begin{description}
    \item[(i)] $\fwlp.\WHILE.Q = \bigsqcap_{n = 0}^{\infty} Q_n$,
    where $Q_0 \triangleq I$ and $Q_{n+1} \triangleq \Ff_{\fwlp}^Q(Q_n)$, for all $n \geq 0$;
    \item[(ii)] $\fwp.\WHILE.Q = \bigsqcup_{n = 0}^{\infty} P_n$,
    where $P_0 \triangleq 0$ and $P_{n+1} \triangleq \Ff_{\fwp}^Q(P_n)$, for all $n \geq 0$.
  \end{description}
\end{lemma}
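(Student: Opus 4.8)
The plan is to reduce the statement to the denotational machinery for call statements already in place. By the definition in Subsec.~\ref{subsec_loop}, $\WHILE$ \emph{is} the tail recursion $\CALL\ T_{\WHILE}$ whose body is $\IF \triangleq \ifStat{\Box m\cdot M[\lst{q}] = m \rightarrow S_m}$ with $S_0 \triangleq \SKIP$, $S_1 \triangleq S;\CALL\ T_{\WHILE}$ and $M = \{M_0,M_1\}$; hence $\fwp.\WHILE.Q = \fwp.(\CALL\ T_{\WHILE}).Q$ and $\fwlp.\WHILE.Q = \fwlp.(\CALL\ T_{\WHILE}).Q$. By the (Proc) clause of Tab.~\ref{tab_wp_wlp} (equivalently Thm.~\ref{thm_exp}), the former equals $\bigsqcup_{n=0}^{\infty}\fwp.W^{(n)}.Q$ and the latter $\bigsqcap_{n=0}^{\infty}\fwlp.W^{(n)}.Q$, where $W^{(n)}$ denotes the $n$th syntactic approximation (Def.~\ref{def_syn_approx}) of the body $\IF$ of $T_{\WHILE}$. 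So it suffices to prove $\fwp.W^{(n)}.Q = P_n$ and $\fwlp.W^{(n)}.Q = Q_n$ for all $n$, and then to identify these $\bigsqcup,\bigsqcap$ with the limits of Def.~\ref{def_limits}.

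For the first part I would induct on $n$. For $n=0$, $W^{(0)}=\bottom$, so $\fwp.W^{(0)}.Q = 0 = P_0$ and $\fwlp.W^{(0)}.Q = I = Q_0$ by Tab.~\ref{tab_wp_wlp}. For the inductive step, unfolding Def.~\ref{def_syn_approx} shows that $W^{(n+1)}$ is $\IF$ with its two branches replaced by $\SKIP$ and $S;(\SKIP;W^{(n)})$. Applying the (Case), (Comp) and (Skip) clauses of Tab.~\ref{tab_wp_wlp} then yields, for $\fxp\in\{\fwp,\fwlp\}$,
\begin{align*}
\fxp.W^{(n+1)}.Q
&= M_0^\dag\,(\fxp.\SKIP.Q)\,M_0 + M_1^\dag\,\big(\fxp.S.(\fxp.\SKIP.(\fxp.W^{(n)}.Q))\big)\,M_1\\
&= M_0^\dag Q M_0 + M_1^\dag\,(\fxp.S.(\fxp.W^{(n)}.Q))\,M_1 \;=\; \Ff_{\fxp}^Q\!\big(\fxp.W^{(n)}.Q\big),
\end{align*}
where we used $\fxp.\SKIP.R = R$. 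Feeding in the induction hypothesis gives $\fwp.W^{(n+1)}.Q = \Ff_{\fwp}^Q(P_n) = P_{n+1}$ and $\fwlp.W^{(n+1)}.Q = \Ff_{\fwlp}^Q(Q_n) = Q_{n+1}$, which closes the induction.

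For the second part I would invoke the expressiveness theorem: $P_n = \fwp.W^{(n)}.Q = \sem{W^{(n)}}^*(Q)$ and $Q_n = \fwlp.W^{(n)}.Q = \sem{W^{(n)}}^*(Q) + \big(I-\sem{W^{(n)}}^*(I)\big)$ by Thm.~\ref{thm_exp}, while $\calE_n \triangleq \sem{W^{(n)}}$ is a non-decreasing sequence of $\QOP$s by Lem.~\ref{lem_well_def_den}(ii). Hence $\{P_n\}_{n\geq0}$ and $\{Q_n\}_{n\geq0}$ have exactly the shapes required in Def.~\ref{def_limits}, so $\bigsqcup_{n}P_n$ and $\bigsqcap_{n}Q_n$ are the well-defined upper and lower limits of that definition (well-definedness also following from Lem.~\ref{lem_well_definedness_of_limits}), and they coincide with $\fwp.\WHILE.Q$ and $\fwlp.\WHILE.Q$ respectively, which is the claim.

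The argument is almost entirely bookkeeping; the points that need care are (a) checking that the auxiliary $\SKIP$ inserted by Def.~\ref{def_syn_approx} when unrolling $\CALL\ T_{\WHILE}$ is transparent for $\fxp$, and (b) reading the $\bigsqcup/\bigsqcap$ in the (Proc) clause of Tab.~\ref{tab_wp_wlp} through Def.~\ref{def_limits} (they are legitimate precisely because $\fxp.W^{(n)}.Q = \sem{W^{(n)}}^*(Q)$, up to the liberal correction term, has the required form). I do not expect any genuine obstacle beyond this alignment of definitions; the substantive work has already been carried out in Thm.~\ref{thm_exp} and Lem.~\ref{lem_well_def_den}.
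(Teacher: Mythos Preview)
Your proposal is correct. The paper itself does not supply a proof of this lemma; it merely points to Props.~4.2.2 and~4.2.3 of \cite{ying2016foundations}, where $\WHILE$ is a primitive construct and the result is established directly. Your route is different in spirit but entirely appropriate here: since in this paper $\WHILE$ is \emph{defined} as the tail recursion $\CALL\ T_{\WHILE}$, you derive the lemma from the general (Proc) clause of Tab.~\ref{tab_wp_wlp} together with Def.~\ref{def_syn_approx}, Thm.~\ref{thm_exp}, and Lem.~\ref{lem_well_def_den}. The inductive computation $\fxp.W^{(n+1)}.Q = \Ff_{\fxp}^Q(\fxp.W^{(n)}.Q)$ via the (Case), (Comp), (Skip) clauses is clean, and the two points you flag as needing care --- transparency of the inserted $\SKIP$ and the alignment of the $\bigsqcup/\bigsqcap$ with Def.~\ref{def_limits} --- are exactly the right places to be explicit. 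What your approach buys is a self-contained derivation internal to the paper's framework, showing that the loop characterisation is a genuine special case of the recursion machinery rather than an independent import.
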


\vspace{1mm}
\noindent {\bf Synthesis of intermediate assertions.}
When applying (Rp Loop) to proving partial correctness of a while loop,
we have to provide the loop invariant $M_0^{\dag} Q M_0 + M_1^{\dag} P M_1$,
which can be selected as $\fwlp.\WHILE.Q$.
Note that $\fwlp.\WHILE.Q$ is the greatest fixed point of $\Ff_{\fwlp}^Q(\calX)$.
Similarly, in case of applying (Rt Loop),
we need to provide the intermediate assertions $\{P_n\}_{n \geq 0}$,
which can be the lease sequence of assertions generated by $\Ff_{\fwp}^Q(\calX)$.

\begin{example}[Almost-sure termination]
The following while loop
$$
\whileStat{ M[q] = 1 }{ \SKIP }
$$
with $M \triangleq \{ M_0 = M_1 \triangleq \frac{1}{\sqrt{2}} I_{q}\}$
is abstracted from quantum random walks with absorbing boundaries (modeled by quantum measurements) \cite{bach2004one}
and quantum Bernoulli factory for random number generation \cite{dale2015provable}.
To show its almost-sure termination, it suffices to prove the total correctness formula
$$
\tcor{I}{\whileStat{ M[q] = 1 }{ \SKIP }}{I}
$$
by using (Rt Loop), where the assertions $\{P_n \triangleq \sum_{i=1}^n \frac{1}{2^i} I_q \}_{n \geq 0}$ is generated by
$$
\Ff_{\fwp}^I(\calX) = M_0^{\dag} I M_0 + M_1^{\dag} (\fwp.\SKIP.\calX) M_1  =  \frac{1}{2} I_q + \frac{1}{2} \calX
$$
\end{example}

\begin{remark}
  Observe that the necessary intermediate assertions
  in proving correctness of while loops
  have a uniform (fixed-point) characterization,
  yet this is not always the case for (non-tail) recursion.
  This observation, jointly with Rem. \ref{remark_tail_rec}, entails that
  recursion is essentially more complex than while loops in the setting of program logics.
\end{remark}

\section{Expanded proof systems}\label{sec_loc}

\begin{table}[!htbp]

  \centering

  \begin{minipage}[b]{\textwidth}

    \centering

    \begin{tabular}{rcl}
       $P$ & $\triangleq$ & $D;S$ \\
       $D$ & $\triangleq$ & $\recDec{\langle \proc \rangle(\lst{y})}{S}$ \\
       $S$ & $\triangleq$ & $\LOCAL{\lst{q}}; S; \RELEASE{\lst{q}} \mid \CALL\ \langle \proc \rangle(\lst{p} ) \mid \bottom \mid \SKIP$ \\
         && $\mid q \assnequal \ket{0} \mid \lst{q} \starequal U \mid S_1;S_2 \mid \ifStat{\Box m\cdot M[\lst{q}] = m \rightarrow S_m}$
    \end{tabular}

    \subcaption{Syntax of $\EqRP$.}

    \label{syn_aux_fac}
  \end{minipage}

  \vfill

  \begin{minipage}[b]{\textwidth}
    \centering
    \begin{tabular}{rlcrl}
  (Loc) & \multicolumn{4}{l}{$\dfrac{\lst{r} \cap \Var(\rho) = \emptyset,\ |\lst{r}| = |\lst{q}|,\ \type(r_i) = \type(q_i)\ \forall\ i}{\langle \LOCAL{\lst{q}}; S; \RELEASE{\lst{q}}, \rho \rangle \xrightarrow{\epsilon} \langle S[\lst{r}/\lst{q}]; \RELEASE{\lst{r}}, \rho \otimes \ket{0}_{\lst{r}}\bra{0} \rangle}$} \\
  \specialrule{0em}{3pt}{3pt}
  (Rel) & $\dfrac{tr_{\lst{r}} \triangleq \sum_{i} \bra{i} \diamond \ket{i} \mbox{ with } \sum_{i} \outprod{i}{i} = I_{\lst{r}}}{\langle \RELEASE{\lst{r}}, \rho \rangle \xrightarrow{\epsilon} \langle E , tr_{\lst{r}}(\rho) \rangle}$
  & & (Proc) & $\dfrac{\recDec{\proc(\lst{y})}{S} \in D}{\langle\CALL\ \proc(\lst{p}),\rho\rangle \xrightarrow{\epsilon} \langle S[\lst{p}/\lst{y}], \rho\rangle}$
  \end{tabular}
    \vspace{2pt}
    \subcaption{Labeled transition rules for auxiliary facilities.}
    \label{nop_aux_fac}
  \end{minipage}
  \vfill
  \begin{minipage}[b]{\textwidth}
    \centering
    \begin{tabular}{rl}
    (Loc) & $\sem{\LOCAL{\lst{q}}; S; \RELEASE{\lst{q}}} = \tr_{ \lst{r} } \circ \sem{ S[\lst{r}/\lst{q}] } \circ (\ket{0}_{\lst{r}} \diamond \bra{0}_{\lst{r}})$ \\
    \specialrule{0em}{3pt}{3pt}
    (Proc) & $\sem{\CALL\ \proc_i(\lst{a}_i)} = \bigsqcup_{n = 0}^{\infty} \sem{ S_i^{(n)}[\lst{a}_i / \lst{y}_i] }$
  \end{tabular}
    \vspace{2pt}
    \subcaption{Denotational semantics for auxiliary facilities.}
    \label{den_aux_fac}
  \end{minipage}
  \vfill
  \begin{minipage}[b]{\textwidth}
    \centering
    \begin{tabular}{lcl}
    $\fwp.\big(\CALL\ \proc_i(\lst{a}_i)\big).P$ & = & $\bigsqcup_{n=0}^\infty \fwp.S_i^{(n)}[\lst{a}_i / \lst{y}_i].P$ \\
      \specialrule{0em}{3pt}{3pt}
    $\fwlp.\big(\CALL\ \proc_i(\lst{a}_i)\big).P$ & = & $\bigsqcap_{n=0}^\infty \fwlp.S_i^{(n)}[\lst{a}_i / \lst{y}_i].P$ \\
    \specialrule{0em}{3pt}{3pt}
    $\fxp(\LOCAL{\lst{q}}; S; \RELEASE{\lst{q}}).P$ & = & $\bra{0}_{\lst{r}} \big( \fxp.S[\lst{r}/\lst{q}].(P \otimes I_{\lst{r}}) \big) \ket{0}_{\lst{r}}$
  \end{tabular}
    \vspace{2pt}
    \subcaption{$\fwp$ and $\fwlp$ for auxiliary facilities --- $\fxp \in \{\fwp, \fwlp\}$.}
    \label{wp_aux_fac}
  \end{minipage}
  \vfill
  \begin{minipage}[b]{\textwidth}
    \centering
    \begin{tabular}{rlcrl}
    (R Loc) & $\dfrac{\pcor{P \otimes I_{\lst{r}}}{\lst{r}\assnequal \ket{0}; S[\lst{r}/\lst{q}]}{Q \otimes I_{\lst{r}}}}{\pcor{P}{\LOCAL{\lst{q}}; S; \RELEASE{\lst{q}}}{Q}}$ & &
    (R Adap) & $\dfrac{\pcor{P}{S}{Q}}{\pcor{ P[\lst{p}/\lst{q}] }{S[\lst{p}/\lst{q}]}{Q[\lst{p}/\lst{q}] }}$ \\
    \specialrule{0em}{4pt}{4pt}
    (Rp pRec) & \multicolumn{4}{l}{$\dfrac{\big\{ \pcor{P_i}{\CALL\ \proc_i(\lst{y}_i)}{Q_i} \big\}_{1 \leq i \leq n} \reVdash{\qBE} \bigwedge_{1 \leq i \leq n} \pcor{P_i}{S_i}{Q_i}}{\bigwedge_{1 \leq i \leq n} \pcor{P_i}{\CALL\ \proc_i(\lst{y}_i)}{Q_i}}$} \\
    \specialrule{0em}{4pt}{4pt}
    (Rt pRec) & \multicolumn{4}{l}{$\dfrac{\begin{array}{c}
    \mbox{for } 1 \leq i \leq n,\ \exists\ \{P_i^j\}_{j \geq 0}^{\sqsubseteq}  \mbox{ with } P_i^0 = 0 \mbox{ s.t.} \\
    \big\{ \atcor{\big}{P_i^j}{\CALL\ \proc_i(\lst{y}_i)}{Q_i} \big\}_{1 \leq i \leq n} \reVdash{\qBE} \bigwedge_{1 \leq i \leq n} \atcor{\big}{P_i^{j+1}}{S_i}{Q_i} \mbox{ for all } j \geq 0, \\
    P_i \sqsubseteq\bigsqcup_{j=0}^{\infty} P_i^j
    \end{array}
    }{\bigwedge_{1 \leq i \leq n} \tcor{P_i}{\CALL\ \proc_i(\lst{y}_i)}{Q_i} }$}
    \end{tabular}
    \vspace{2pt}
    \subcaption{Proof rules for auxiliary facilities.}
    \label{prf_aux_fac}
  \end{minipage}

  \vspace{-8pt}
\caption{QHL for $\EqRP$ --- $\qBE \triangleq \qBase + \mbox{(R Loc)} + \mbox{(R Adap)}$.}

\label{tab_QHL_eRqPL} 
  \vspace{-6pt}
\end{table}

In this section we augment the language $\qRP$ with facilities of variable localization and parameter passing,
with which the applicability scope of recursive quantum programs will be broadened.
This argumentation is also in line with the spirit of $\QPL$ \cite{Selinger2004}.

\subsection{Quantum variable localization}

\paragraph{Definition of the syntax.}
The construct of variable localization allows variables whose value is accessible only in a specified program fragment.
The syntax of such a construct with header $\LOCAL{\langle \mathit{qvar\_list} \rangle}$, body $S$ and tailer $\RELEASE{\langle \mathit{qvar\_list} \rangle}$ is given by
$$
\LOCAL{\langle \mathit{qvar\_list} \rangle};\ S;\ \RELEASE{\langle \mathit{qvar\_list} \rangle}
$$
\begin{example}[The system-environment model of a $\QOP$]\label{exam_quant_op}
The dynamics of an open quantum system (modeled by quantum variables $\lst{q}$),
interacted by a unitary interaction $U$ with an environment (modeled by $\lst{p}$ with initial state $\ket{0}$),
can be programmed as a structure of quantum variable localization:
$$
\LOCAL{\lst{p}};\ (\lst{p},\lst{q}) \starequal U;\ \RELEASE{\lst{p}}
$$
For instance, we can use this structure to program a circuit implementation
for the controlled operation $C^n(U)$ in Fig. 4.10 of \cite{nielsen2000quantum}.
The circuit makes use of a small number $(n - 1)$ of working qubits, which all start and end in the state $\ket{0}$.
\end{example}

\paragraph{Definition of the semantics.}
The intended meaning of the construct of variable localization is first expanding the state
with the default value $\ket{0}$ of local variables $\mathit{qvar\_list}$
declared by the header,
then executing the body possibly accessing $\mathit{qvar\_list}$,
and finally releasing $\mathit{qvar\_list}$ by the tailer.
Since the names of local variables, say $\lst{q}$, may conflict with those of state variables outside the structure,
to define the formal semantics of variable localization, we need a reservoir of fresh quantum variables,
say $\lst{r}$ (of the same length and of the same componentwise type as $\lst{q}$), to be used to express different instances of the local variables before binding them to values.
We can use partial trace function, say $tr_{\Hh_{\lst{r}}}$ (abbr. $tr_{\lst{r}}$),
to define the formal semantics of quantum variable localization
(cf. Tabs. \ref{nop_aux_fac} and \ref{den_aux_fac}).

\paragraph{Proof rules for the correctness.}
We invent the proof rule --- (R Loc)
--- for proving both partial and total correctness of variable localization (cf. Tab. \ref{prf_aux_fac}).
Here, by convention, $P \otimes I_{\lst{r}}$ and $Q \otimes I_{\lst{r}}$ can be simplified to $P$ and $Q$ respectively.

\vspace{1mm}
\noindent {\bf Intuition of (R Loc).}
Note that $\LOCAL{\lst{q}}; S; \RELEASE{\lst{q}}$ is semantically equivalent to
$\lst{r}\assnequal \ket{0}; S[\lst{r}/\lst{q}]$,
if the local variables $\lst{q}$ are thought of as the fresh global variables $\lst{r}$.
Under this assumption, Hoare's triple $\pcor{P}{\LOCAL{\lst{q}}; S; \RELEASE{\lst{q}}}{Q}$
is semantically equivalent to
\begin{equation}\label{QHT_loc_1}
  \pcor{P\otimes I_{\lst{r}}}{\lst{r}\assnequal \ket{0}; S[\lst{r}/\lst{q}]}{Q\otimes I_{\lst{r}}}
\end{equation}
By lifting this semantical equivalence to the syntactical case, (R Loc) follows naturally.

If, on the other hand, we choose to substitute $\lst{r}$ for $\lst{q}$ in assertions instead of in programs,
then we find that Hoare's triple $\pcor{P}{\LOCAL{\lst{q}}; S; \RELEASE{\lst{q}}}{Q}$
is semantically equivalent to
\begin{equation}\label{QHT_loc_2}
  \pcor{P[\lst{r}/\lst{q}]\otimes \voutprod{0}{\lst{q}}{0}}{S}{Q[\lst{r}/\lst{q}]\otimes I_{\lst{q}}}
\end{equation}

By elevating this semantical deduction to an inference rule, we obtain
$$
\mbox{(R' Loc)} \quad \dfrac{\pcor{P[\lst{r}/\lst{q}]\otimes \voutprod{0}{\lst{q}}{0}}{S}{Q[\lst{r}/\lst{q}]\otimes I_{\lst{q}}}}{\pcor{P}{\LOCAL{\lst{q}}; S; \RELEASE{\lst{q}}}{Q}}
$$

\vspace{1mm}
\noindent {\bf Comparison of (R Loc) and (R' Loc).}
To show the (syntactic) equivalence of the two proof rules,
it suffices to show Hoare's triples (\ref{QHT_loc_1}) and (\ref{QHT_loc_2}) can be transformed to each other.
This is the case by using (R Adap) (cf. Tab. \ref{prf_aux_fac}), (A Init) and (R Order).

To see the difference of the two proof rules,
we remark that (R Loc) is more in line with the formal semantics
and weakest (liberal) preconditions of variable localization (cf. Tabs. \ref{den_aux_fac} and \ref{wp_aux_fac}),
but (R' Loc) is purely inductive and thus more applicable in practice.

\begin{example}[Grover's search]\label{}
In Grover's original search algorithm (cf. Chap. 6 of \cite{nielsen2000quantum}),
we can use a (unitary) oracle $O$, defined by its action on the computational basis:
\begin{eqnarray*}
  \ket{x}\ket{y} &\overset{O}{\longrightarrow}& \ket{x}\ket{y\oplus f(x)}
\end{eqnarray*}
to check whether an item $x$ is a solution to the search problem.
Note that $f$ is the characteristic function of the search problem,
and the oracle ancilla $\ket{y}$ is a single qubit which is flipped if $f(x) = 1$,
and is unchanged otherwise.
It is useful to initialize the oracle ancilla in state $\ket{-}$,
in which case the state of the ancilla is not changed,
and $f(x)$ will occur as the exponent of a factor $(-1)^{f(x)}$ of relative phases.
Thus the action of the oracle can be rewritten:
\begin{eqnarray*}
  \ket{\varphi} \triangleq \sum_{x} \alpha_x \ket{x} &\overset{O}{\longrightarrow}& \ket{\psi} \triangleq \sum_{x} (-1)^{f(x)} \alpha_x \ket{x}
\end{eqnarray*}
Let quantum variables $q$, $p$ denote resp. $\ket{x}$, $\ket{y}$.
The verified program of $O$ is as follows.
$$
\begin{array}{clr}
   & \{ \voutprod{\varphi}{q}{\varphi} \} & \\
   & \LOCAL{p}; \{ \voutprod{\varphi}{q}{\varphi} \otimes \voutprod{0}{p}{0} \} & \mbox{(R' Loc)} \\
   & p \starequal HX; \{ \voutprod{\varphi}{q}{\varphi} \otimes \voutprod{-}{p}{-} \} & \mbox{(A Unit)} \\
   & (q,p) \starequal O; \{ \voutprod{\psi}{q}{\psi} \otimes \voutprod{-}{p}{-} \} & \mbox{(A Unit)} \\
   & p \starequal XH; \{ \voutprod{\psi}{q}{\psi} \otimes \voutprod{0}{p}{0} \} & \mbox{(A Unit)} \\
   & \{ \voutprod{\psi}{q}{\psi} \otimes I_{p} \} & \mbox{(R Order)} \\
   & \RELEASE{p}\ \{ \voutprod{\psi}{q}{\psi} \} & \mbox{(R' Loc)}
\end{array}
$$
\end{example}

\subsection{Quantum pointer passing}

Due to the no-cloning theorem,
it's impossible to realize all quantum value copying implicitly by a universal copying machine,
implemented as a unitary operator, as required by the principle of quantum mechanics.
The problem of parameter passing is well-understood
in the context of functional quantum programming languages,
and type systems for such quantum languages
usually rely on linear types and pointer-passing.
In other words, instead of passing values,
function calls pass wire identifiers, or register names.

\paragraph{Definition of the syntax.}
We now extend $\qRP$ with parameterized procedures.
The parameters consists of names of registers used in the global
environment that can be referred to inside the procedure.
A quantum program $P$ with parameterized procedures now have the form of Tab. \ref{syn_aux_fac}
(The extended programming language is coined $\EqRP$).
Note that the two lists of quantum variables $\lst{y}$ and $\lst{p}$,
called, respectively, formal and actual parameters,
are required to have equal length and equal componentwise type.

\paragraph{Definition of the semantics.}
The intended meaning of invoking a parameterized recursive procedure
is first expanding the body of the procedure with actual parameters in place of formal parameters,
which makes the procedure now able to act on already existing registers, and then executing this expanded body.
To define the formal semantics of parameterized procedures,
we can adapt the counterpart of non-parameterized procedures by adding syntactic substitution
(cf. Tabs. \ref{nop_aux_fac} and \ref{den_aux_fac}),
where the syntactic approximation of the bodies of parameterized procedures can be defined
by parameterizing Def. \ref{def_syn_approx} (cf. Def. \ref{def_para_unroll}).

\paragraph{Proof rules for the correctness.}
To parameterized recursive procedures $\proc_i(\lst{y}_i)$ with body $S_i$,
$1 \leq i \leq n$, the proof rule (Rp pRec) together with (R Adap)
can be used to prove their partial correctness;
the proof rule (Rt pRec) together with (R Adap)
can be used to prove their total correctness (cf. Tab. \ref{prf_aux_fac}).
Note that those non-parameterized proof rules for recursion,
e.g. (Rp gRec) defined in Tab. \ref{sim_par_rule},
can be thought of as a special case of their parameterized counterpart
by restricting the formal parameters $\lst{y}$ to $\emptyset$.

\begin{example}
Let the parameterized procedure $\proc(\lst{p})$ be defined by
$$
\recDec{\proc(\lst{p})}{\LOCAL{\lst{p}};\ \SKIP;\ \RELEASE{\lst{p}}}
$$
The operational semantics of $\CALL\ \proc(\lst{q})$ is developed step by step as
$$
\begin{array}{rl}
     & \langle \CALL\ \proc(\lst{q}), \rho\rangle \\
    \xrightarrow{\epsilon} & \langle \LOCAL{\lst{q}};\ \SKIP;\ \RELEASE{\lst{q}}, \rho\rangle \\
    \xrightarrow{\epsilon} & \langle \SKIP;\ \RELEASE{\lst{r}}, \rho \otimes \voutprod{0}{\lst{r}}{0}\rangle \\
    \xrightarrow{\epsilon} & \langle \RELEASE{\lst{r}}, \rho \otimes \voutprod{0}{\lst{r}}{0}\rangle \\
    \xrightarrow{\epsilon} & \langle E, \rho \rangle
  \end{array}
$$
The denotational semantics of $\CALL\ \proc(\lst{q})$ is defined by
\begin{eqnarray*}
  \sem{\CALL\ \proc(\lst{q})}
   &=& \bigsqcup_{n = 0}^{\infty} \sem{ ( \LOCAL{\lst{p}};\ \SKIP;\ \RELEASE{\lst{p}} )^{(n)}[\lst{q}/\lst{p}] } \\
   &=& \bigsqcup_{n = 0}^{\infty} \sem{ \LOCAL{\lst{q}};\ \SKIP;\ \RELEASE{\lst{q}} } \\
   &=& \bigsqcup_{n = 0}^{\infty} \tr_{ \lst{r} } \circ \sem{ \SKIP[\lst{r}/\lst{q}] } \circ (\ket{0}_{\lst{r}} \diamond \bra{0}_{\lst{r}}) = I_{\lst{q}}
\end{eqnarray*}
To prove the Hoare's triple
$$
\pcor{P_{\lst{q}}}{\CALL\ \proc(\lst{q})}{P_{\lst{q}}}
\quad \big(\mbox{resp. } \tcor{P_{\lst{q}}}{\CALL\ \proc(\lst{q})}{P_{\lst{q}}}\big)
$$
by (R Adap), it suffices to prove
$$
\pcor{P_{\lst{p}}}{\CALL\ \proc(\lst{p})}{P_{\lst{p}}}
$$
By (Rp pRec) \big(resp. (Rt pRec)\big), it suffices to prove
$$
\pcor{P_{\lst{p}}}{\LOCAL{\lst{p}};\ \SKIP;\ \RELEASE{\lst{p}}}{P_{\lst{p}}}
$$
following by (R Loc), together with (A Skip), (A Init) and (R Comp).
\end{example}

\subsection{Extension of previous results}

Various results developed in previous sections,
e.g. Thms. \ref{thm_sem}, \ref{thm_exp}, \ref{thm_reas_approx_prob}, \ref{thm_reas_exact_prob}
and the soundness and completeness results (for both partial and total correctness),
can be extended to covering the case of auxiliary facilities discussed in this section
(cf. Apps. \ref{app_sec_prqp}, \ref{app_sec_ass_exp} and \ref{app_sec_sound_complete}).

\section{Case studies}\label{sec_castu}

\subsection{Grover's fixed-point search}

\begin{table}
 \centering
 \begin{minipage}[b]{0.48\textwidth}
  \centering
  \begin{tabular}{rcl}
  \hline
  \multicolumn{3}{c}{$\recDec{\qSearch}{S}$} \\
  \hline
  $S$   & $\triangleq$ & $\ifStat{\Box m\cdot M[q_1] = m \rightarrow S_m}$ \\
  $S_0$ & $\triangleq$ & $q_2 \starequal V$ \\
  $S_1$ & $\triangleq$ & $q_1 \starequal U_{-1};$ \\
      & & $\CALL\ \qSearch;$ \\
      & & $q_2 \starequal R_t;$ \\
      & & $\CALL\ \qSearchDag;$ \\
      & & $q_2 \starequal R_s;$ \\
      & & $\CALL\ \qSearch;$ \\
      & & $q_1 \starequal U_{+1}$ \\
  \hline
\end{tabular}
  \subcaption{$\qSearch$}
  \label{qSearch}
  \end{minipage}
  \hfill
  \begin{minipage}[b]{0.48\textwidth}
  \centering
  \begin{tabular}{rcl}
  \hline
  \multicolumn{3}{c}{$\recDec{\qSearchDag}{S'}$}  \\
  \hline
  $S'$   & $\triangleq$ & $\ifStat{\Box m\cdot M[q_1] = m \rightarrow S_m'}$ \\
  $S_0'$ & $\triangleq$ & $q_2 \starequal V^\dag$ \\
  $S_1'$ & $\triangleq$ & $q_1 \starequal U_{-1};$ \\
      & & $\CALL\ \qSearchDag;$     \\
      & & $q_2 \starequal R_s^{\dag};$   \\
      & & $\CALL\ \qSearch;$ \\
      & & $q_2 \starequal R_t^{\dag};$  \\
      & & $\CALL\ \qSearchDag;$   \\
      & & $q_1 \starequal U_{+1}$ \\
  \hline
\end{tabular}
  \subcaption{$\qSearchDag$}
  \label{qSearchDag}
  \end{minipage}
  \vspace{-12pt}
\caption{
  $\qSearch$ and $\qSearchDag$ implement the search engine $V_n$ and its adjoint $V_n^{\dag}$.
  }
  \label{qSearch_examPrg}
  \vspace{-8pt}
\end{table}

Grover's search is a quantum algorithm of finding a target item in an unsorted database,
which has a square-root speedup over the corresponding classical algorithm.
The original idea is to design an iterative transformation
in a way that each iteration results in a small rotation of the moving state
in a two-dimensional plane spanned by the (orthogonal) target and nontarget vectors.
The moving state rotates in the plane from a starting state to the target state.
If we choose the right number of iterative steps,
the moving state will stop close to the target state, otherwise it will drift away.
Fixed-point Grover's search supplements the original search algorithm by permitting
the moving state converges monotonically to the target state as the number of iteration goes from zero to infinity.
This feature leads to robust search algorithms and also to new schemes for quantum control and error correction \cite{grover2005fixed}.

\paragraph{Programming the algorithm.}
Let $\ket{s}$ and $\ket{t}$ be the respective starting and target states in a Hilbert space,
where $\ket{s}$ is possibly superposed,
and $\ket{t}$ is a (not necessarily uniform) superposition of all possible solutions.
The core of the algorithm is to design a search engine
--- a series of unitary operators $\{V_n\}_{n \geq 0}$ inductively defined by
\begin{equation*}
  \begin{array}{cc}
     V_0 \triangleq V, & V_{n+1} \triangleq V_n R_s V_n^{\dag} R_t V_n
  \end{array}
\end{equation*}
where the $\frac{\pi}{3}$-phase shifts (i.e., unitary operators) $R_s$ and $R_t$ for $\ket{s}$ and $\ket{t}$ are defined as
\begin{equation*}
  \begin{array}{cc}
     R_s \triangleq I - \big(1-\exp(i \frac{\pi}{3})\big)\outprod{s}{s},
     & R_t \triangleq I - \big(1-\exp(i \frac{\pi}{3})\big)\outprod{t}{t}
  \end{array}
\end{equation*}
such that the resulting state $V_n \ket{s}$ after applying $V_n$ to $\ket{s}$
converges monotonically to $\ket{t}$ as $n$ approaches infinity, i.e.,
\begin{eqnarray*}
  \lim\limits_{n \to \infty }{V_n \ket{s}} &=& \ket{t}
\end{eqnarray*}
Then we are able to fetch information of the solution $\ket{t}$ by a measurement on $V_n \ket{s}$.
Note that $\ket{t}$ can be thought of as the least fixed point of a function induced by $\big\{V_n \ket{s}\big\}_{n \geq 0}$.
This is the reason why this version of Grover's search is called fixed-point Grover's search.
For the sake of simplicity, $\ket{s}$, $\ket{t}$ and $V$ are treated as black boxes.

To program the search engine,
let us use quantum variable $q_1$ to denote the moving state from $\ket{s}$ to $\ket{t}$.
To model the counter of the search engine (used to denote the subscript $n$ of $V_n$),
we shall use quantum variable $q_2$ over a $2^m$-dimensional Hilbert space $\Hh_c$
with orthonormal basis states $\big\{ \ket{n}\colon 0\leq n < 2^m\big\}$,
which can be used to encode an upper-bounded set of natural numbers.
Here $m$ should be large enough
so that the basis states of $\Hh_c$ suffice to encode all needed counter values.
We define $(+i)$-operator $U_{+i}$ of $\Hh_c$ by
$$
\begin{array}{c}
  U_{+i}\colon \ket{x}\rightarrow\ket{(x+i) \mod 2^m},
\end{array}
$$
to model the classical modular $(+i)$-operator,
and similarly for $(-i)$-operator $U_{-i}$.
Whether the value of the counter is zero can be identified by the outcome of the measurement
\begin{eqnarray*}
  M &\triangleq& \bigg\{ M_0 \triangleq \ket{0}\bra{0},\ M_1 \triangleq \sum_{i=1}^{2^m-1} \ket{i}\bra{i} \bigg\}
\end{eqnarray*}
Recursive quantum procedure $\qSearch$ for the search engine is designed in Tab. \ref{qSearch_examPrg}.

\paragraph{Partial correctness.}
We claim that, on input $\ket{n}_{q_1}\otimes \ket{s}_{q_2}$,
quantum activation statement $\CALL\ \qSearch$ executes with output $\ket{n}_{q_1}\otimes V_n \ket{s}_{q_2}$ (if terminates).
Formally speaking, the claim can be expressed as a partial-correctness formula:
\begin{equation}\label{eq_qSearch_par}
\apcor{\big}{\voutprod{n}{q_1}{n}\otimes \voutprod{s}{q_2}{s}}{\CALL\ \qSearch}{\voutprod{n}{q_1}{n}\otimes V_n \voutprod{s}{q_2}{s} V_n^\dag}
\end{equation}
Let $A$ be a quantum predicate variable on $\Hh_c$, and $B$ a quantum predicate variable on $\Hh_s$.
The $(i,j)$-component $\mixprod{i}{A}{j}$ of $A$ is abbreviated as $A_{i,j}$,
so $A = \sum_{i,j} A_{i,j} \outprod{i}{j}$.
To prove Hoare's triple (\ref{eq_qSearch_par}),
by (R Subst), together with the simultaneous substitution
$$
\big[\voutprod{n}{q_1}{n} / A,\ \voutprod{s}{q_2}{s} / B\big]
$$
it suffices to prove
\begin{equation*}
\apcor{\bigg}{\sum_{i=0}^{2^m-1} A_{i,i} \outprod{i}{i} \otimes B}{\CALL\ \qSearch}{\sum_{i=0}^{2^m-1} A_{i,i} \outprod{i}{i} \otimes V_i B V_i^\dag}
\end{equation*}
\big(Intuitively, the precondition (resp. postcondition) of the last Hoare's triple says that
the control flow arrives at each recursion depth $0 \leq i \leq 2^m-1$ (denoted by variable $q_1$) of procedure $\qSearch$ with probability $A_{i,i}$, and at depth $i$, the state of variable $q_2$ should satisfy
the predicate $B$ (resp. $V_i B V_i^\dag$).\big) Defining $\big \{ \Prem_i \big \}_{i = 0,1}$ by
\begin{eqnarray*}
  \Prem_1 &\triangleq& \apcor{\bigg}{\sum_{i=0}^{2^m-1} A_{i,i} \outprod{i}{i} \otimes B}{\CALL\ \qSearch}{\sum_{i=0}^{2^m-1} A_{i,i} \outprod{i}{i} \otimes V_i B V_i^\dag} \\
  \Prem_2 &\triangleq& \apcor{\bigg}{\sum_{i=0}^{2^m-1} A_{i,i} \outprod{i}{i} \otimes B}{\CALL\ \qSearchDag}{\sum_{i=0}^{2^m-1} A_{i,i} \outprod{i}{i} \otimes V_i^\dag B V_i}
\end{eqnarray*}
by (Rp gRec), it suffices to show that
\begin{eqnarray*}
  \trueOrder, \big\{ \Prem_i \big\}_{i = 0,1}
  &\reVdash{\qBase}& \apcor{\bigg}{\sum_{i=0}^{2^m-1} A_{i,i} \outprod{i}{i} \otimes B}{S}{\sum_{i=0}^{2^m-1} A_{i,i} \outprod{i}{i} \otimes V_i B V_i^\dag} \\
  \trueOrder, \big\{ \Prem_i \big\}_{i = 0,1}
  &\reVdash{\qBase}& \apcor{\bigg}{\sum_{i=0}^{2^m-1} A_{i,i} \outprod{i}{i} \otimes B}{S'}{\sum_{i=0}^{2^m-1} A_{i,i} \outprod{i}{i} \otimes V_i^\dag B V_i}
\end{eqnarray*}
The routine verification work is left to App. \ref{app_qSearch_par}.

\paragraph{Total correctness.}
We claim that quantum activation statement $\CALL\ \qSearch$, on input $\ket{n}_{q_1}\otimes \ket{s}_{q_2}$,
always terminates with output $\ket{n}_{q_1}\otimes V_n \ket{s}_{q_2}$.
In a formal way, the claim can be expressed as a total-correctness formula:
\begin{equation}\label{eq_qSearch_tot}
\atcor{\big}{\voutprod{n}{q_1}{n}\otimes \voutprod{s}{q_2}{s}}{\CALL\ \qSearch}{\voutprod{n}{q_1}{n}\otimes V_n \voutprod{s}{q_2}{s} V_n^\dag}
\end{equation}
Let quantum predicate variables $A$ and $B$ be as defined above.
To prove Hoare's triple (\ref{eq_qSearch_tot}), by (R Subst),
together with the simultaneous substitution
$$
\big[\voutprod{n}{q_1}{n} / A,\ \voutprod{s}{q_2}{s} / B \big]
$$
it suffices to prove
\begin{equation*}
\atcor{\bigg}{\sum_{i=0}^{2^m-1} A_{i,i} \outprod{i}{i} \otimes B}{\CALL\ \qSearch}{\sum_{i=0}^{2^m-1} A_{i,i} \outprod{i}{i} \otimes V_i B V_i^\dag}
\end{equation*}
Defining a sequence of $\PQPT$s $\big\{ P_j[A,B]\big\}_{j \geq 0}^{\sqsubseteq}$ by
\begin{eqnarray*}
  P_j[A,B] &\triangleq& \left\{
                      \begin{array}{ll}
                        \sum_{i=0}^{j} A_{i,i} \outprod{i}{i} \otimes B & \hbox{if $0 \leq j < 2^m$} \\
                        \sum_{i=0}^{2^m-1} A_{i,i} \outprod{i}{i} \otimes B & \hbox{if $j\geq 2^m$}
                      \end{array}
                    \right.
\end{eqnarray*}
and a set of premises $\big \{ \Prem_i^j \big \}_{i = 1,2}^{j \geq 0}$ by
\begin{eqnarray*}
  \Prem_1^j &\triangleq& \atcor{\bigg}{P_j[A,B]}{\CALL\ \qSearch}{\sum_{i=0}^{2^m-1} A_{i,i} \outprod{i}{i} \otimes V_i B V_i^\dag} \\
  \Prem_2^j &\triangleq& \atcor{\bigg}{P_j[A,B]}{\CALL\ \qSearchDag}{\sum_{i=0}^{2^m-1} A_{i,i} \outprod{i}{i} \otimes V_i^\dag B V_i}
\end{eqnarray*}
by (Rt gRec), it suffices to show, for all $j \geq 0$, that
\begin{eqnarray*}
  \trueOrder, \big \{\Prem_i^j \big \}_{i=1,2}
  &\reVdash{\qBase}& \atcor{\bigg}{P_{j+1}[A,B]}{S}{\sum_{i=0}^{2^m-1} A_{i,i} \outprod{i}{i} \otimes V_i B V_i^\dag} \\
  \trueOrder, \big \{\Prem_i^j \big \}_{i=1,2}
  &\reVdash{\qBase}& \atcor{\bigg}{P_{j+1}[A,B]}{S'}{\sum_{i=0}^{2^m-1} A_{i,i} \outprod{i}{i} \otimes V_i^\dag B V_i}
\end{eqnarray*}
The routine verification work is left to App. \ref{app_qSearch_tot}.

\subsection{Recursive quantum Fourier sampling}

\begin{table}[!htbp]

 \centering

 \begin{tabular}{rcl}
  \hline
  \multicolumn{3}{c}{$\recDec{\RQFS(q,Y)}{\ifStat{\Box m\cdot M[q] = m \rightarrow S_m}}$} \\
  \hline
  $S_0$ & $\triangleq$ & $q \starequal U_{+1};$  \\
  &&$\LOCAL{\bbX[q],Y'};$ \\
  &&$\big( \bbX[q], Y' \big) \starequal H^{\otimes n}\otimes HX;$ \\
  &&$\CALL\ \RQFS(q,Y');$   \\
  &&$\bbX[q] \starequal H^{\otimes n};$ \\
  &&$\big( \bbX[q], Y \big) \starequal \calG;$  \\
  &&$\bbX[q] \starequal H^{\otimes n};$ \\
  &&$\CALL\ \RQFS(q,Y');$           \\
  &&$\big( \bbX[q], Y' \big) \starequal H^{\otimes n}\otimes XH;$ \\
  &&$\RELEASE{\bbX[q],Y'};$ \\
  &&$q \starequal U_{-1}$ \\
  \hline
  $S_1$ & $\triangleq$ & $\big( \bbX[1], \ldots, \bbX[l], Y \big) \starequal \calO$ \\
  \hline
  $S_2$ & $\triangleq$ & $\bottom$       \\
  \hline
  \end{tabular}

  \caption{Recursive quantum procedure $\RQFS$.}

  \label{tab_RQFS_proc}

\end{table}

\paragraph{Problem description.}
Let us first briefly recall recursive quantum Fourier sampling \cite{mckague2012interactive}.
We begin by defining a type of tree.
Let $n,l$ be positive integers and consider a symmetric tree where each node,
except the leaves, has $2^n$ children, and the depth is $l$.
Let the root be labelled by $(\emptyset)$.
The root's children are labelled $(x_1)$ with $x_1\in \{0,1\}^n$.
Each child of $(x_1)$ is, in turn, labelled $(x_1,x_2)$ with $x_2 \in \{0,1\}^n$.
We continue until we have reached the leaves, which are labelled by $(x_1,\ldots,x_l)$.

Next we add the Fourier component to the tree.
We begin by fixing a computable function $g\colon \{0,1\}^n\rightarrow \{0,1\}$.
With each node of the tree $(x_1,\ldots,x_k)$
we associate a ``secret'' string $s_{(x_1,\ldots,x_k)} \in \{0,1\}^n$ s.t.
\begin{eqnarray*}
  g( s_{(x_1,\ldots,x_k)} ) &\triangleq& s_{(x_1,\ldots,x_{k-1})} \cdot x_k \mod 2
\end{eqnarray*}
(Here we take $s_{(x_1,\ldots,x_{k-1})}$ to mean $s_{(\emptyset)}$ if $k = 1$.)
In this way, each node's secret encodes one bit of information about its parent's secret.
Suppose we are given an oracle $o\colon (\{ 0, 1 \}^n)^l \rightarrow \{0,1\}$
for the leaves of the tree s.t.
\begin{eqnarray*}
  o(x_1,\ldots,x_l) &\triangleq& g\big(s_{(x_1,\ldots,x_l)}\big)
\end{eqnarray*}
Our goal is to find $g(s_{(\emptyset)})$.

\paragraph{Quantum solution.}
Define the descendant space $\Hh_{d}$ to be the $2^n$-dimensional Hilbert space with orthonormal basis states
--- $\big\{\ket{i}\colon 0 \leq i < 2^n \big\}$ --- to index each of $2^n$ children for any parental node.
The counting space $\Hh_{c}$, $(+i)$-operator $U_{+i}$ and $(-i)$-operator $U_{-i}$ of $\Hh_c$
are defined as previous (Here, to index the depth of the tree, we require that $l < 2^m$).

Let $p,q$ be quantum (individual) variables over $\Hh_c$,
$Y,Y',Z$ quantum variables over $\Hh_2$,
and $\bbX$ a quantum array-like variable over $\Hh_{d}$
with one argument, say $q$, indexing each component of the array,
s.t. each component $\bbX[q]$ acts like a quantum variable over $\Hh_d$.
We shall treat $\bbX[\ket{i}]$ as $\bbX[i]$ for simplicity.
Let quantum oracle $\calG$ \big(on $\bbX[q]$, $\itY$\big) calculate $g$ as
\begin{eqnarray*}
  \calG\ \ket{s}\ket{y} &\triangleq& \ket{s}\ket{y\oplus g(s)}
\end{eqnarray*}
Let quantum oracle $\calO$ \big(on $\bbX[1],\ldots,\bbX[l]$, $\itY$\big) model $o$ as
\begin{eqnarray*}
  \calO\ \ket{x_1}\ldots\ket{x_l}\ket{y} &\triangleq& \ket{x_1}\ldots\ket{x_l}\ket{y\oplus g(s_{(x_1,\ldots,x_l)})}
\end{eqnarray*}

The procedure for Recursive quantum Fourier sampling is designed in Tab. \ref{tab_RQFS_proc},
where the measurement $M$ (on $q$) is defined by
\begin{eqnarray*}
  M &\triangleq& \bigg\{ M_0 \triangleq \sum_{0 \leq i < l} \ket{i}\bra{i},\ M_1 \triangleq \ket{l}\bra{l},\
  M_2 \triangleq \sum_{l < i < 2^m} \ket{i}\bra{i} \bigg\}
\end{eqnarray*}

\vspace{1mm}
\noindent {\bf Notations and Definitions.}
Let $K$ be a quantum predicate variable over $\Hh_{c}$.
Let $\big\{\ket{i}\big\}_i$ be the computational basis of $\Hh_{q}$.
For notational convenience, the $(i,i)$-component $\mixprod{i}{K}{i}$ of $K$ is abbreviated as $K_{i}$.
(To see the intuitive meaning of $K_{i}$, we remark that $K = \sum_{i,j} \mixprod{i}{K}{j} \outprod{i}{j}$.)
Define $C(i)$ and $D(i)$ by
\begin{eqnarray*}
  C(i) &\triangleq& \bigotimes_{j=0}^i \Big ( \sum_{k = 0}^{ 2^n - 1 }\voutprod{k}{x_j}{k} \otimes \alpha_j \Big ) \\
  D(i) &\triangleq& \bigotimes_{j=0}^i \Big ( \sum_{k = 0}^{ 2^n - 1 }\voutprod{k}{x_j}{k} \otimes \beta_j \Big )
\end{eqnarray*}
where $\alpha_j$ and $\beta_j$ are defined by
\begin{eqnarray*}
  \alpha_j &\triangleq& \left\{
                      \begin{array}{ll}
                        \voutprod{0}{y_0}{0}, & \hbox{if $j = 0$} \\
                        \voutprod{-}{y_j}{-}, & \hbox{if $1 \leq j \leq l$}
                      \end{array}
                    \right. \\
  \beta_j &\triangleq& \left\{
                      \begin{array}{ll}
                        \voutprod{g(s_{(\emptyset)})}{y_0}{g(s_{(\emptyset)})}, & \hbox{if $j = 0$} \\
                        \voutprod{-}{y_j}{-}, & \hbox{if $1 \leq j \leq l$}
                      \end{array}
                    \right.
\end{eqnarray*}
Define the $\PQPT$s $P(K)$ and $Q(K)$ by
\begin{eqnarray*}
  P(K) &\triangleq& \sum_{i=0}^l  K_i \voutprod{i}{q}{i} \otimes C(i) \\
  Q(K) &\triangleq& \sum_{i=0}^l  K_i \voutprod{i}{q}{i} \otimes D(i)
\end{eqnarray*}
Intuitively, $P(K)$ \big(resp. $Q(K)$\big) says that the control flow arrives
at each recursion depth $0 \leq i \leq l$ (denoted by variable $q$) of algorithm $\RQFS$ with probability $K_i$
(where $i$ corresponds to each level of the tree, in particular, $i=0$ points to the root and $i=l$ to the leaves),
and at depth $i$, variable $\bbY[0]$ lies in the state $\ket{0}$ \big(resp. $\ket{g(s_{(\emptyset)})}$\big),
$\bbY[j]$ with $1 \leq j \leq i$ in $\ket{-}$,
and $\bbX[j]$ with $0 \leq j \leq i$ can lie in any state (by the predicate $I_{x_j}$).

\paragraph{Partial correctness.}
We claim the partial correctness of $\RQFS$ by proving Hoare's triple
\begin{equation*}
\apcor{\big}{\voutprod{0}{q}{0}\otimes \voutprod{0}{y_0}{0}}{\CALL\ \RQFS\big( q, \bbY[q] \big)}{\voutprod{0}{q}{0}\otimes \voutprod{g(s_{(\emptyset)})}{y_0}{g(s_{(\emptyset)})}}
\end{equation*}
By (R Subst), together with the substitution $\big[ \voutprod{0}{q}{0} / K \big]$, it suffices to show
$$
(\Prem \triangleq \:) \pcor{P(K)}{\CALL\ \RQFS\big( q, \bbY[q] \big)}{Q(K)}
$$
By (Rp pRec), it suffices to show
\begin{eqnarray*}
  \trueOrder, \Prem &\reVdash{\qBE}& \pcor{P(K)}{\ifStat{\Box m\cdot M[q] = m \rightarrow S_m}}{Q(K)}
\end{eqnarray*}
The routine verification work is left to App. \ref{app_RQFS_par}.

\paragraph{Total correctness.}
We claim the total correctness of $\RQFS$ by proving Hoare's triple
\begin{equation*}
\atcor{\big}{\voutprod{0}{q}{0}\otimes \voutprod{0}{y_0}{0}}{\CALL\ \RQFS\big( q, \bbY[q] \big)}{\voutprod{0}{q}{0}\otimes \voutprod{g(s_{(\emptyset)})}{y_0}{g(s_{(\emptyset)})}}
\end{equation*}
By (R Subst), together with the substitution $\big[ \voutprod{0}{q}{0} / K \big]$, it suffices to show
\begin{equation*}
\tcor{P(K)}{\CALL\ \RQFS\big( q, \bbY[q] \big)}{Q(K)}
\end{equation*}
Defining a sequence of $\PQPT$s $\big\{P_h(K)\big\}_{h \geq 0}^{\sqsubseteq}$ by
\begin{eqnarray*}
   P_h(K)&\triangleq& \left\{
                    \begin{array}{ll}
                      \sum_{i=l+1-h}^l K_i \voutprod{i}{q}{i} \otimes C(i) & \hbox{$0 \leq h < l$} \\
                      P(K), & \hbox{otherwise}
                    \end{array}
                  \right.
\end{eqnarray*}
and a set of premises $\big\{\Prem_h\big\}_{h \geq 0}$ by
\begin{eqnarray*}
  \Prem_h &\triangleq& \atcor{\big}{P_h(K)}{\CALL\ \RQFS\big( q, \bbY[q] \big)}{Q(K)}
\end{eqnarray*}
by (Rt pRec), it suffices to show, for all $h \geq 0$, that
\begin{eqnarray*}
  \trueOrder, \Prem_h &\vdash_{\qBE}& \atcor{\big}{P_{h+1}(K)}{\ifStat{\Box m\cdot M[q] = m \rightarrow S_m}}{Q(K)}
\end{eqnarray*}
The routine verification work is left to App. \ref{app_RQFS_tot}.

\section{Related work}\label{sec_relwk}

In this section, we will compare our quantum Hoare logic (QHL)
with classical deterministic Hoare logic (DHL), classical probabilistic Hoare logic (PHL), and other QHLs.
We shall discuss global and local reasoning in the setting of quantum computation.
Comparison with other related work, e.g. on termination and loop invariants etc., is also presented.

\subsection{Comparison with DHL}

Verification techniques for classical (deterministic) recursive programs
have been systematically developed since Hoare's pioneering work \cite{Hoare1971}
(see, e.g., Chaps. 4, 5 of \cite{Apt} and Chap. 6 of \cite{Francez}).
However, the semantics of quantum programs and quantum assertions are complex-matrix-based,
neither relations nor formulas on a discrete space as in the classical case.
This semantical difference entails that
there is no uniform mechanism for encoding any finite computational sequence of quantum programs
(For classical coding functions, e.g. G\"{o}del's $\beta$-function and pairing functions,
the reader is referred to \cite{boolos2002computability}),
so quantum logics have no closed (finite) form as in classical logics.
To address the challenge, we have to accept an infinite representation of program semantics and assertions,
e.g. the upper and lower limits,
and try to find a closed-form (e.g. fixed-point) characterization for them (cf. Rem. \ref{rem_closure_den}).

\subsection{Comparison with PHL}

A theory of probabilistic recursion and recursive probabilistic programming
has recently been developed in a series of papers \cite{LagoZG14,BreuvartLH17}.
In the last few years, significant progress has been made in verifying recursive probabilistic programs,
including weakest pre-expectation calculus and proof rules \cite{OlmedoKKM16,KaminskiKMO18}
as well as termination problem \cite{kobayashi2020termination}.
We remark that the assertion languages of \cite{OlmedoKKM16,KaminskiKMO18,barthe2018assertion} have no parameters.
The lack of parameters would restrict the scope of application of their logics to the case of tail recursion
(cf. Exms \ref{exam_counter_par}, \ref{exam_counter_tot} and Rem. \ref{remark_tail_rec}).
The introduction of parameters for random variables,
which can be thought of as a probabilistic degenerate of quantum predicate variables proposed in this paper,
could help to repair the technical deficiency for the sake of completeness.
Note that the termination problem of probabilistic higher-order programs \cite{kobayashi2020termination}
is reduced to the probabilistic reachability problem of a higher-order extension of recursive Markov chains
(not in an axiomatic way).

\subsection{Comparison with other QHLs}

Other quantum Hoare-like logics,
e.g. \cite{chadha2006reasoning,feng2007proof,kakutani2009logic},
have been developed with the Turing-Floyd-Hoare principle.
For the landscape of some of these quantum logics, the reader is referred to the survey paper \cite{rand2019verification}.
The assertion languages of these QHLs surveyed in \cite{rand2019verification} are compositional
and contain parameters, yet are not purely quantum (matrix-based).
It's worthy to note that a kind of (existentially quantified) ghost variables,
which can be entangled with program variables,
are introduced in the assertion language of \cite{Unruh19a}.
In contrast, the quantum predicate variables defined in the current paper
are free higher-order variables (if program variables are seen as first-order variables),
and can be used to describe properties of entangled program variables
(yet quantum predicate variables never entangle with program variables).
The ghost variables can be used to simulate local variables,
but the interaction between ghost and program variables is uncontrolled due to existential quantification,
thus Unruh's QHL fails to reason about a particular quantum operation as in Exm. \ref{exam_quant_op}.
The paper \cite{hung2019quantitative} develops a formal semantics for erroneous quantum $\WHILE$-programs,
as well as a logic for reasoning about their robustness (i.e. error bounds of outputs).
We remark that assertions of this QHL are $\QPRED$s,
and that inference rules for a while loop are not designed in a purely syntactical way.
An applied QHL \cite{zhou2019applied} (also supporting reasoning about robustness of quantum programs)
is defined by restricting assertions of QHL \cite{Ying11} (i.e. $\QPRED$s)
to subspaces of a Hilbert space (i.e. projection operators).
Thus the applied QHL fails to do reasoning with probabilities $\delta\in (0,1)$
(cf. Rem. \ref{rem_reas_with_prob} for a detailed comparison).

\subsection{Comparison with local reasoning}

Hoare logic is a general framework for global reasoning about programs.
As an alternative technical line,
local reasoning is proposed by using the Frame Rule under the assumption that
the underlying program states be separated
(in, e.g., separation logic \cite{reynolds2002separation, batz2019quantitative}
and quantum relational Hoare logic \cite{Unruh19b,BartheHYYZ20}).
In principle, the scope of applicability of global reasoning is larger than that of local reasoning.
In practice, local reasoning could bring some advantages, e.g. simpler semantics and shorter assertions,
leading to a good trade-off between application scope and program scale.
However, in the quantum field, global reasoning is indispensable,
since sometimes a global quantum state should be treated as an inseparable entity (e.g. an entangled state),
in which case a global assertion (e.g. an inseparable Hermitian matrix) should be used instead of concatenating local assertions by using the Frame Rule.

\subsection{Comparison with other related work}

The (almost-sure) termination problem of quantum programs \cite{li2018termination}
is reduced to the realisability and synthesis problem of linear ranking super-martingales,
which can be solved by resorting to an SDP (Semi-Definite Programming) solver
(This approach comes from the constraint-based solution to the termination problem of probabilistic programs \cite{chakarov2013probabilistic,chatterjee2016algorithmic,mciver2017new,kaminski2019hardness}).
In contrast, this paper proposes an axiomatic approach to
the probabilistic (including almost-sure) termination problem of quantum programs.
Characterizations and generation of loop invariants of quantum programs \cite{ying2017invariants}
have been developed in a framework of super-operator-valued transition system based on SDP.
By contrast, this paper discusses the existence issue of
a uniform fixed-point characterisation for (general) recursive invariants,
and explains why the synthesis of recursive invariants can not be automated completely.
A theorem prover \cite{liu2019formal} for verifying partial correctness of quantum $\WHILE$-programs
using QHL \cite{Ying11} has been developed within the framework of Isabelle/HOL
(A detailed partial-correctness proof of Grover's original search algorithm is implemented thereof).
Quantum relational Hoare logic \cite{Unruh19b,BartheHYYZ20}
allows to reason about how the outputs of two quantum programs relate given a relation between their inputs.
Finally, quantum Hoare type theory \cite{singhal2020quantum} is inspired by classical Hoare type theory
and extends Quantum IO Monad by indexing it with pre- and post-conditions that serve as program specifications,
which has the potential to be a unified system for programming, specifying, and reasoning about quantum programs.

\section{Conclusion and future work}\label{sec_con}

This paper has systematically investigated the problem of how to verify
parameterized recursive quantum programs with ancilla data and probabilistic control.
We have defined a new quantum assertion logic, a parameterized extension of quantum predicates,
so that, by using formulas of this assertion logic as pre- and post-conditions,
Hoare's approach for both partial and total correctness can be extended to the case of general recursive procedures
(i.e. soundness and completeness of our quantum Hoare logic).
The assertion logic makes it realizable to reduce
reasoning about quantum programs with both approximate and exact probabilities
to a total-correctness proof.
In particular, two counterexamples for illustrating
incompleteness of quantum predicates in verifying recursive procedures,
and, one counterexample for showing the failure of reasoning with exact probabilities based on partial correctness,
have also been constructed.
The usefulness of the quantum Hoare logic has been illustrated by three main examples:
recursive quantum Markov chain (with probabilistic control),
fixed-point Grover's search,
and recursive quantum Fourier sampling.

For the future work, we find that satisfiability of L\"{o}wner order (resp. equality) on the quantum assertion logic
has to be reduced to positivity (resp. equality) of super operators on separable states (cf. Rem. \ref{rem_lowner_order}).
Note that complete positivity of a super operator can be reduced to positivity of a linear operator
(called Choi-Jamiolkowski isomorphism or Channel-state duality;
for more information on this topic, cf., e.g., \cite{wolf2012quantum}).
However, a useful characterization of positivity (resp. equality) of super operators is still out of reach
(For more information on this area of research, cf., e.g., \cite{johnstonequivalencesI,johnstonequivalencesII}).
We could also consider introducing quantifiers into the assertion logic.
It would be interesting to compare this logic with other quantum logics,
e.g. linear logic \cite{girard1987linear} (a logic for a linear use of quantum resources),
within the framework of orthomodular lattices or category theory.
On the other hand, quantum programs with quantum control have been studied in a series of previous work
\big(see, e.g., \cite{altenkirch2005functional}, Chaps. 6 and 7 of \cite{ying2016foundations},
\cite{BadescuP15, SabryVV18}\big),
and the notions of quantum recursion with quantum control were already introduced there.
However, we are still at a very beginning stage of understanding quantum recursions with quantum controls,
and we feel that a program logic for reasoning about them requires some ideas very different from those used in this paper.

%

\bibliographystyle{ACM-Reference-Format}
\bibliography{RQFHL}

\appendix

\section{Parameterized recursive quantum programs}\label{app_sec_prqp}

This section is devoted to investigating the formal semantics of parameterized recursive quantum programs $\EqRP$.
Formally, $\EqRP$ can be defined by the following grammar:
\begin{displaymath}
\begin{array}{rcll}
  P & \triangleq & D :: S & \mbox{Quantum program} \\
  D & \triangleq & \recDec{\langle \proc_1 \rangle(\lst{y}_1)}{S_1}, \ldots, \recDec{\langle \proc_n \rangle(\lst{y}_n)}{S_n} & \mbox{Procedure declaration} \\
  S & \triangleq & \bottom & \mbox{Bottom} \\
    & \mid & \SKIP & \mbox{No operation} \\
    & \mid & q \assnequal \ket{0} & \mbox{Initialization} \\
    & \mid & \lst{q} \starequal U & \mbox{Unitary operation} \\
    & \mid & S_1;S_2 & \mbox{Sequential composition} \\
    & \mid & \ifStat{\Box m\cdot M[\lst{q}] = m \rightarrow S_m} & \mbox{Probabilistic branching} \\
    & \mid & \LOCAL{\lst{q}}; S; \RELEASE{\lst{q}} & \mbox{Variable localization} \\
    & \mid & \CALL\ \langle \proc_i \rangle(\lst{p}_i),\quad 1\leq i \leq n & \mbox{Procedure call}
\end{array}
\end{displaymath}

\subsection{Nondeterministic operational semantics}

\begin{table}[!htbp]

  \centering

  \begin{tabular}{rlcrl}
  (Bot) & $\langle \bottom,\rho\rangle \xrightarrow{\epsilon} \langle E, 0\rangle$  &  &
  (Skip) & $\langle \SKIP,\rho \rangle \xrightarrow{\epsilon} \langle E, \rho \rangle$ \\
  \specialrule{0em}{2pt}{2pt}
  (Init) &
  $\dfrac{\sum_{i} \voutprod{i}{q}{i} = I_{q}}{\langle q \assnequal \ket{0}, \rho \rangle \xrightarrow{\epsilon} \langle E, \sum_i \ket{0}_q \bra{i}\rho \ket{i}_q \bra{0} \rangle}$
  &  & (Unit) &
  $\dfrac{U U^\dag = U^\dag U = I_{\lst{q}}}{\langle \lst{q} \starequal U, \rho \rangle \xrightarrow{\epsilon} \langle E, U\rho U^\dag \rangle}$
  \\
  \specialrule{0em}{2pt}{2pt}
  (Comp${}_1$) & $\dfrac{\langle S_1,\rho\rangle \xrightarrow{l} \langle S_1',\rho'\rangle \mbox{ and } S_1' \neq E}{\langle S_1;S_2,\rho\rangle \xrightarrow{l} \langle S_1';S_2,\rho'\rangle}$ & &
  (Comp${}_2$) & $\dfrac{\langle S_1,\rho\rangle \xrightarrow{l} \langle E,\rho'\rangle}{\langle S_1;S_2,\rho\rangle \xrightarrow{l} \langle S_2,\rho'\rangle}$ \\
  \specialrule{0em}{2pt}{2pt}
  (Case) & \multicolumn{4}{l}{$\dfrac{M = \{M_m\}_m \mbox{ and } \sum_{m} M_m^\dag M_m = I_{\lst{q}}}{\langle \ifStat{\Box m\cdot M[\lst{q}] = m \rightarrow S_m},\rho\rangle \xrightarrow{m} \langle S_m,M_m\rho M_m^\dag \rangle}$} \\
  \specialrule{0em}{2pt}{2pt}
  (Loc) & \multicolumn{4}{l}{$\dfrac{\lst{r}\not\in \Var(\rho),\ |\lst{r}| = |\lst{q}|,\ \type(r_i) = \type(q_i)\ \forall\ i.}{\langle \LOCAL{\lst{q}}; S; \RELEASE{\lst{q}}, \rho \rangle \xrightarrow{\epsilon} \langle S[\lst{r}/\lst{q}]; \RELEASE{\lst{r}}, \rho \otimes \ket{0}_{\lst{r}}\bra{0} \rangle}$} \\
  \specialrule{0em}{2pt}{2pt}
  (Rel) & \multicolumn{4}{l}{$\dfrac{tr_{\lst{r}} \triangleq \sum_{i} \bra{i} \diamond \ket{i} \mbox{ with } \sum_{i} \outprod{i}{i} = I_{\lst{r}}}{\langle \RELEASE{\lst{r}}, \rho \rangle \xrightarrow{\epsilon} \langle E , tr_{\lst{r}}(\rho) \rangle}$} \\
  \specialrule{0em}{2pt}{2pt}
  (Proc) & $\dfrac{\recDec{\proc(\lst{y})}{S} \in D}{\langle\CALL\ \proc(\lst{p}),\rho\rangle \xrightarrow{\epsilon} \langle S[\lst{p}/\lst{y}], \rho\rangle}$ & &
  (Except) & $\dfrac{\mbox{Other cases of $S$ and $l$}}{\langle S,\rho\rangle \xrightarrow{l} \bot}$
  \end{tabular}
  \caption{Labeled transition relation $\xrightarrow{l}$ with $l \triangleq \epsilon \mid m$.}
  \label{tab_EopSem}
\end{table}

The transition relation $\xrightarrow{l}$ for $\EqRP$ is defined in Tab. \ref{tab_EopSem}.
The multi-step labeled transition relation $\xrightarrow{\alpha}$
with $\alpha \triangleq l \mid \alpha_1 \alpha_2$ can be defined on $\xrightarrow{l}$ as before.

\subsection{$\QOP$-directed denotational semantics}

\begin{table}[!htbp]

  \centering

  \begin{tabular}{rlrl}
    (Bot) & $\sem{\bottom} = 0 \diamond 0$ & (Skip) & $\sem{\SKIP} = I \diamond I$ \\
    \specialrule{0em}{2pt}{2pt}
    (Init) & $\sem{q \assnequal \ket{0}} = \sum_i \ket{0}_q \bra{i} \diamond \ket{i}_q \bra{0}$ & (Unit) & $\sem{\lst{q} \starequal U} = U \diamond U^\dag$ \\
    \specialrule{0em}{2pt}{2pt}
    (Comp) & $\sem{S_1;S_2} = \sem{S_2} \circ \sem{S_1}$ & (Case) & $\sem{\IF} = \sum_m \sem{S_m} \circ (M_m \diamond M_m^\dag)$
    \\
    \specialrule{0em}{2pt}{2pt}
    (Loc) & \multicolumn{3}{l}{$\sem{\LOCAL{\lst{q}}; S; \RELEASE{\lst{q}}} = \tr_{ \lst{r} } \circ \sem{ S[\lst{r}/\lst{q}] } \circ (\ket{0}_{\lst{r}} \diamond \bra{0}_{\lst{r}})$} \\
    \specialrule{0em}{2pt}{2pt}
    (Proc) & \multicolumn{3}{l}{$\sem{\CALL\ \proc_i(\lst{a}_i)} = \bigsqcup_{n = 0}^{\infty} \sem{ S_i^{(n)}[\lst{a}_i / \lst{y}_i] }$}
  \end{tabular}
  \caption{Denotational semantics of $\EqRP$.}
  \label{tab_EdeSem}
\end{table}

The denotational semantics of a quantum program, denoted $\sem{\cdot}$, is defined as a $\QOP$.
The semantics of each term is given in a compositional way, except for the case of parameterized call statements.
To handle this case, we need to define the syntactic approximation (i.e., unrolling) of the bodies
of mutually recursive procedures with parameter passing.
\begin{definition}[Parameterized approximation]\label{def_para_unroll}
Let $\proc_i(\lst{y}_i)\in D$, $1 \leq i \leq n$, be parameterized recursive quantum procedure with body $S_i$.
Let $\CALL\ \proc_j(\lst{p}_{j_k})$, $1 \leq j \leq n$, be any parameterized call statement inside $S_i$
\big(Here $k$ means $S_i$ has (possibly different) activations of $\proc_j$ each parameterized $\lst{p}_{j_k}$\big).
Then the $k$th syntactic approximation $S_i^{(k)}$ of $S_i$ is defined as:
\begin{eqnarray*}
  S_i^{(0)} &\triangleq& \bottom \\
  S_i^{(k+1)} &\triangleq& S_i\big[\ldots, \big( \SKIP; S_j^{(k)}[\lst{p}_{j_k}/\lst{y}_j] \big) \big/ \CALL\ \proc_j(\lst{p}_{j_k}), \ldots\big]
\end{eqnarray*}
where $S_i\big[\ldots, \big( \SKIP; S_j^{(k)}[\lst{p}_{j_k}/\lst{y}_j] \big) \big/ \CALL\ \proc_j(\lst{p}_{j_k}), \ldots\big]$
stands for simultaneous substitution of the statement $\SKIP; S_j^{(k)}[\lst{p}_{j_k}/\lst{y}_j]$
for every $\CALL\ \proc_j(\lst{p}_{j_k})$ in $S_i$
\big(Here $\SKIP$ is used to simulate the first-step transition $\xrightarrow{\epsilon}$
for the statement $\CALL\ \proc_j(\lst{p}_{j_k})$, cf. the (Skip, Proc) rules of Tab. \ref{tab_EopSem}\big).
\end{definition}

The denotational semantics $\sem{\cdot}$ for $\EqRP[\Omega]$ is defined in Tab. \ref{tab_EdeSem}.
Note that the side conditions for many equations of denotational semantics are omitted,
because they follow the same ones as in the rules of operational semantics.
By extending Lem. \ref{lem_well_def_den} to the parameterized case,
well-definedness of the (Proc) rule in Tab. \ref{tab_EdeSem} follows.

\subsection{Connection between the two semantics}

As a preliminary, we need to prove two structural properties of the operational semantics
for the syntactic approximation of a parameterized recursive procedure.

\begin{lemma}\label{lem_syn_approx_pre}
  Let $S_i$ be the body of procedure $\proc_i(\lst{y}_i)$ with $1 \leq i \leq n$.
  \begin{description}
    \item[(1)] For any $\rho,\rho'$ and $\alpha$, we have that
  \begin{eqnarray*}
    \langle S_i, \rho \rangle \xrightarrow{\alpha} \langle E, \rho' \rangle &\iff& \exists k \geq 0.\ \langle S_i^{(k)}, \rho \rangle \xrightarrow{\alpha} \langle E, \rho' \rangle
  \end{eqnarray*}
    \item[(2)] Suppose, for some $k \geq 0$, that
  \begin{eqnarray*}
    \langle S_i^{(k)}, \rho \rangle &\xrightarrow{\alpha}& \langle E, \rho' \rangle
  \end{eqnarray*}
  Then we have, for all $l \geq k$, that
  \begin{eqnarray*}
    \langle S_i^{(l)}, \rho \rangle &\xrightarrow{\alpha}& \langle E, \rho' \rangle
  \end{eqnarray*}
  \end{description}
\end{lemma}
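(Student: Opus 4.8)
The plan is to prove both statements of Lemma \ref{lem_syn_approx_pre} by structural induction, exploiting the definition of syntactic approximation (Def. \ref{def_para_unroll}) together with the transition rules of Tab. \ref{tab_EopSem}. The key observation is that $S_i^{(k+1)}$ is obtained from $S_i$ by replacing each inner call $\CALL\ \proc_j(\lst{p}_{j_k})$ with $\SKIP; S_j^{(k)}[\lst{p}_{j_k}/\lst{y}_j]$, and that the operational semantics of $\CALL\ \proc_j(\lst{p}_{j_k})$ (rules (Proc) and (Skip)) reaches $\langle S_j[\lst{p}_{j_k}/\lst{y}_j], \rho\rangle$ in exactly the same way (modulo the bookkeeping $\xrightarrow{\epsilon}$ step that $\SKIP$ simulates) as $\CALL\ \proc_j$ does in the non-parameterized setting. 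So each finite computation to $\langle E,\rho'\rangle$ through the actual procedure body can only invoke inner calls to some finite depth, and that depth bounds the $k$ we need; conversely, an approximation computation never "does more" than the real body.

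For Statement (1), the $\Leftarrow$ direction: I would argue by induction on $k$. For $k = 0$, $S_i^{(0)} = \bottom$, whose only transition is $\langle\bottom,\rho\rangle\xrightarrow{\epsilon}\langle E,0\rangle$, and one checks that if $\langle S_i,\rho\rangle$ can reach $\langle E,0\rangle$ via $\epsilon$ then it does so (or, more carefully, the statement is vacuous/trivial here depending on whether $S_i$ can terminate with $0$ — I'd handle $\bottom$ by noting $\langle S_i,\rho\rangle\xrightarrow{\epsilon}\cdots$ whenever $S_i$ contains a top-level call that diverges; actually the cleaner route is to observe that any terminating run of $S_i^{(k)}$ can be "inlined back" since $\SKIP; S_j^{(k)}[\lst{p}_{j_k}/\lst{y}_j]$ simulates $\CALL\ \proc_j(\lst{p}_{j_k})$ followed by an approximation of its body, which by IH is simulated by the real body). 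The inductive step replaces each terminating sub-run through an $S_j^{(k)}$-occurrence by the corresponding run through $S_j$, using the IH; the label sequence $\alpha$ is preserved because the only label-producing rule is (Case), which is untouched by the substitution, and the extra $\SKIP$ steps carry label $\epsilon$ which is the unit of concatenation. The $\Rightarrow$ direction: given a terminating run $\langle S_i,\rho\rangle\xrightarrow{\alpha}\langle E,\rho'\rangle$, it is finite, hence it unfolds finitely many nested procedure calls; take $k$ to be (one more than) the maximal nesting depth of such unfoldings along the run, and check by induction on run length that $S_i^{(k)}$ reproduces it — each call-unfolding in the run corresponds to descending one level in the approximation index, and $k$ was chosen large enough that we never hit $\bottom$ prematurely.

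For Statement (2), the monotonicity-in-the-index claim, I would again induct on $k$ (or directly on $l \geq k$). The point is that $S_i^{(l)}$ differs from $S_i^{(k)}$ only by replacing occurrences of $S_j^{(m)}$-subterms with $S_j^{(m')}$-subterms for $m' \geq m$ (tracking the recursion in Def. \ref{def_para_unroll}), and by the same inlining/simulation argument as in (1), any terminating run through the smaller-index approximation is reproduced by the larger-index one — intuitively, unrolling more can only enable more terminating behaviours, never fewer, since $\bottom$ (the base case) has the most restrictive semantics. A clean way to package this: first establish the special case $l = k+1$ by structural induction on $S_i$, showing $\langle S_i^{(k)},\rho\rangle\xrightarrow{\alpha}\langle E,\rho'\rangle$ implies $\langle S_i^{(k+1)},\rho\rangle\xrightarrow{\alpha}\langle E,\rho'\rangle$ using that $\SKIP;S_j^{(k)}$ terminating implies $\SKIP;S_j^{(k+1)}$ terminating by the inner IH; then iterate.

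The main obstacle I anticipate is the bookkeeping around the (Comp) rules and the interaction of the substitution with sequential composition contexts: when a run of $S_i$ is "in the middle of" executing an inlined body $S_j^{(k)}[\lst{p}_{j_k}/\lst{y}_j]$, the configuration is of the form $\langle S''; (\text{rest of } S_i), \rho''\rangle$ where $S''$ is a residual of the inlined body, and I need a careful invariant describing exactly how residuals of $S_i^{(k)}$-configurations correspond to residuals of $S_i$-configurations (a kind of simulation relation between the two transition systems) so that the induction on run length goes through cleanly. Setting up this correspondence relation precisely — and verifying it is preserved by every transition rule, especially (Comp${}_1$), (Comp${}_2$), and (Loc) with its fresh-variable renaming — is where the real work lies; the label-preservation and the $\SKIP$-absorption are then routine. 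I expect this lemma is a direct analogue of the corresponding fact for the non-parameterized language (implicitly needed for Thm. \ref{thm_sem}), with the parameter substitution $[\lst{p}_{j_k}/\lst{y}_j]$ commuting past everything because it only renames variables and the transition rules are equivariant under such renamings.
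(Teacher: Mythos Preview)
Your overall strategy is sound and identifies the core simulation mechanism correctly, but you organize the argument through several separate inductions --- on $k$ for (1)$\Leftarrow$, on nesting depth plus run length for (1)$\Rightarrow$, and on $l$ or structurally for (2) --- whereas the paper dispatches both statements with a single induction on the length $|\alpha|$ of the label sequence, together with a case analysis on the outermost constructor of $S_i$. That one induction handles both directions and both parts uniformly, and it obviates the explicit simulation relation on residual configurations that you anticipate needing: at every step of the run, the residuals of $S_i$ and of $S_i^{(k)}$ differ syntactically only at inner call positions, and those are dealt with locally when the run arrives there. Your concern about bookkeeping around (Comp$_1$)/(Comp$_2$) is legitimate, but under induction on $|\alpha|$ it dissolves, since each such step just decrements the run length and the IH applies directly to the tail.

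Your hesitation at the $k=0$ base case of the induction-on-$k$ scheme is well-placed: $S_i^{(0)}=\bottom$ always yields $\langle\bottom,\rho\rangle\xrightarrow{\epsilon}\langle E,0\rangle$, yet an arbitrary body $S_i$ need not reproduce that exact $(\alpha,\rho')$ pair. This is really a wrinkle in the lemma's literal statement rather than in either proof approach --- harmless for Thm.~\ref{thm_sem_EqRP} since a $\rho'=0$ term contributes nothing to the sum --- but it does illustrate why inducting on run length rather than on the approximation index $k$ is the cleaner organizing principle here.
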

\begin{proof}
  The proof proceeds by induction on the length $|\alpha|$ of $\alpha$,
  and do a case analysis for the last step of the inductive definition of $S_i$.
\end{proof}

\begin{remark}
  Lem. \ref{lem_syn_approx_pre} says that every execution of $S_i$ (labeled by $\alpha$)
  can be simulated in a finite unrolling of $S_i$ (denoted by $S_i^{(k)}$),
  and also in any larger unrolling of $S_i$ (denoted by $S_i^{(l)}$ with $l \geq k$).
  Note that, by the least number principle, there exists the least such $k$.
\end{remark}

We are now positioned to relate the operational and denotational semantics.

\begin{theorem}\label{thm_sem_EqRP}
  For any quantum program $S\in \EqRP$, we have that
  \begin{eqnarray*}
    \sem{S}(\rho) &=& \sum_{\langle S,\rho\rangle\xrightarrow{\alpha}\langle E,\rho'\rangle} \rho'
  \end{eqnarray*}
  where the summation of $\rho'$ is taken for every possible $\alpha$ s.t. $\langle S,\rho\rangle\xrightarrow{\alpha}\langle E,\rho'\rangle$.
\end{theorem}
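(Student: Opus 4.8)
The plan is to prove Theorem~\ref{thm_sem_EqRP} by structural induction on the quantum program $S \in \EqRP$, with the case of procedure calls being the crux of the argument and handled separately via the structural properties of the syntactic approximations established in Lemma~\ref{lem_syn_approx_pre}. For the base cases ($\bottom$, $\SKIP$, $q \assnequal \ket{0}$, $\lst{q} \starequal U$), the claim is immediate: each such program makes exactly one $\xrightarrow{\epsilon}$ transition to $\langle E, \rho'\rangle$ where $\rho'$ matches the definition of $\sem{S}$ in Table~\ref{tab_EdeSem} (e.g. for $\bottom$ the unique transition leads to $\langle E, 0\rangle$ and $\sem{\bottom}(\rho) = (0\diamond 0)(\rho) = 0$). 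For $\RELEASE{\lst{r}}$ the single transition yields $tr_{\lst{r}}(\rho)$, matching $\sem{\RELEASE{\lst{r}}}$.

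For the compositional cases I would proceed as follows. For $S_1;S_2$: by the (Comp${}_1$, Comp${}_2$) rules, every terminating run $\langle S_1;S_2,\rho\rangle\xrightarrow{\alpha}\langle E,\rho'\rangle$ factors \emph{uniquely} as $\langle S_1,\rho\rangle\xrightarrow{\alpha_1}\langle E,\rho''\rangle$ followed by $\langle S_2,\rho''\rangle\xrightarrow{\alpha_2}\langle E,\rho'\rangle$ with $\alpha = \alpha_1\alpha_2$; conversely any such pair composes. Summing over all $\rho'$ and using the induction hypothesis for $S_1$ and $S_2$ together with linearity of $\sem{S_2}$ gives
\[
\sum_{\langle S_1;S_2,\rho\rangle\xrightarrow{\alpha}\langle E,\rho'\rangle}\rho' \;=\; \sem{S_2}\Big(\sum_{\langle S_1,\rho\rangle\xrightarrow{\alpha_1}\langle E,\rho''\rangle}\rho''\Big) \;=\; \sem{S_2}\big(\sem{S_1}(\rho)\big) \;=\; \sem{S_1;S_2}(\rho).
\]
For $\IF \triangleq \ifStat{\Box m\cdot M[\lst{q}] = m \rightarrow S_m}$: the (Case) rule says every run starts with $\langle \IF,\rho\rangle\xrightarrow{m}\langle S_m, M_m\rho M_m^\dag\rangle$, so runs partition by the first label $m$; applying the induction hypothesis to each $S_m$ on input $M_m\rho M_m^\dag$ and summing gives $\sum_m \sem{S_m}(M_m\rho M_m^\dag) = \sum_m (\sem{S_m}\circ(M_m\diamond M_m^\dag))(\rho) = \sem{\IF}(\rho)$. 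The localization case $\LOCAL{\lst{q}};S;\RELEASE{\lst{q}}$ is handled by chaining: the (Loc) rule does one $\xrightarrow{\epsilon}$ step to $\langle S[\lst{r}/\lst{q}];\RELEASE{\lst{r}}, \rho\otimes\ket{0}_{\lst{r}}\bra{0}\rangle$, and then the $S_1;S_2$ and $\RELEASE$ reasoning applies together with the induction hypothesis for $S[\lst{r}/\lst{q}]$, yielding $tr_{\lst{r}}\circ\sem{S[\lst{r}/\lst{q}]}\circ(\ket{0}_{\lst{r}}\diamond\bra{0}_{\lst{r}})$ as required.

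The main obstacle is the procedure-call case $\CALL\ \proc_i(\lst{p}_i)$. Here the (Proc) rule gives $\langle \CALL\ \proc_i(\lst{p}_i),\rho\rangle\xrightarrow{\epsilon}\langle S_i[\lst{p}_i/\lst{y}_i],\rho\rangle$, so I must show $\sum_{\langle S_i[\lst{p}_i/\lst{y}_i],\rho\rangle\xrightarrow{\alpha}\langle E,\rho'\rangle}\rho' = \bigsqcup_{k\geq 0}\sem{S_i^{(k)}[\lst{p}_i/\lst{y}_i]}(\rho)$. The key is Lemma~\ref{lem_syn_approx_pre}: part~(1) shows every terminating run of $S_i$ is realised in some finite unrolling $S_i^{(k)}$, and conversely; part~(2) shows the set of $(\alpha,\rho')$ pairs realised by $S_i^{(k)}$ is nondecreasing in $k$. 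The subtlety is that the denotational semantics is a \emph{least upper bound} of $\QOP$s, not a set union, and distinct $\alpha$'s may produce overlapping $\rho'$'s that must be genuinely added (not merged) — so I need the induction hypothesis applied to each $S_i^{(k)}$ (which is a program strictly simpler in the well-founded unrolling order, since it contains no $\CALL$) to get $\sem{S_i^{(k)}[\lst{p}_i/\lst{y}_i]}(\rho) = \sum_{\langle S_i^{(k)}[\lst{p}_i/\lst{y}_i],\rho\rangle\xrightarrow{\alpha}\langle E,\rho'\rangle}\rho'$, then take the limit $k\to\infty$ using monotonicity from part~(2) and the fact that the L.U.B. of $\QOP$s is computed pointwise on states (Section~\ref{sec_pre}). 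One must also be careful that the induction is well-founded: the structural induction is on a measure that decreases both for syntactic subterms and when passing from $S_i$ to its approximations $S_i^{(k)}$ — this is why Definition~\ref{def_para_unroll} inserts $\SKIP$ before each $S_j^{(k)}$, matching the $\xrightarrow{\epsilon}$ step of (Skip, Proc), so that the operational runs line up exactly. The parameter substitution $[\lst{p}_i/\lst{y}_i]$ commutes with everything and causes no difficulty beyond bookkeeping.
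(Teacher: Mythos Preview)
Your proposal is correct and follows essentially the same approach as the paper: structural induction on $S$, with the procedure-call case handled by reducing to the call-free approximations $S_i^{(k)}$ via Lemma~\ref{lem_syn_approx_pre} and then taking the pointwise limit. The only presentational difference is that the paper cleanly stratifies the argument into two stages---first proving a ``nonrecursive version of this theorem'' for call-free programs by ordinary structural induction, then invoking that version at the (Proc) case---whereas you fold both into a single induction on a compound well-founded measure; the paper's staging makes the well-foundedness transparent, but the content is the same.
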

\begin{proof}
  The proof can be done by induction on the depth of the formation tree of $S$.
  We only consider the cases of variable localization and parameterized activation
  \big(The proof of other cases can be adapted from the corresponding proof in \cite{Ying11}\big).

  \paragraph{Case: (Loc).}
  \begin{eqnarray}
    \sem{\LOCAL{\lst{q}}; S; \RELEASE{\lst{q}}}(\rho) &=& \big[ tr_{ \lst{r} } \circ \sem{ S[\lst{r}/\lst{q}] } \circ (\ket{0}_{\lst{r}} \diamond \bra{0}_{\lst{r}}) \big](\rho) \nonumber\\
     &=& tr_{ \lst{r} } \big( \sem{ S[\lst{r}/\lst{q}] }(\rho \otimes \ket{0}_{\lst{r}}\bra{0}) \big) \nonumber\\
     &=& tr_{ \lst{r} } \Big( \sum_{\langle S[\lst{r}/\lst{q}], \rho \otimes \ket{0}_{\lst{r}}\bra{0}\rangle \xrightarrow{\alpha} \langle E,\rho'\rangle} \rho' \Big) \label{eq_param_sem_6} \\
     &=& \sum_{\langle S[\lst{r}/\lst{q}], \rho \otimes \ket{0}_{\lst{r}}\bra{0}\rangle \xrightarrow{\alpha} \langle E,\rho'\rangle} tr_{ \lst{r} } ( \rho' ) \label{eq_param_sem_7} \\
     &=& \sum_{\langle S[\lst{r}/\lst{q}], \rho \otimes \ket{0}_{\lst{r}}\bra{0}\rangle \xrightarrow{\alpha} \langle E,\rho'\rangle} \Big( \sum_{\langle \RELEASE{\lst{r}}, \rho'\rangle \xrightarrow{\alpha'} \langle E,\rho''\rangle} \rho'' \Big) \label{eq_param_sem_8} \\
     &=& \sum_{\langle S[\lst{r}/\lst{q}]; \RELEASE{\lst{r}}, \rho \otimes \ket{0}_{\lst{r}}\bra{0}\rangle \xrightarrow{\alpha} \langle \RELEASE{\lst{r}}, \rho'\rangle} \Big( \sum_{\langle \RELEASE{\lst{r}}, \rho'\rangle \xrightarrow{\alpha'} \langle E,\rho''\rangle} \rho'' \Big) \label{eq_param_sem_9} \\
     &=& \sum_{\langle S[\lst{r}/\lst{q}]; \RELEASE{\lst{r}}, \rho \otimes \ket{0}_{\lst{r}}\bra{0}\rangle \xrightarrow{\alpha} \langle E, \rho''\rangle} \rho'' \nonumber \\
     &=& \sum_{\langle \LOCAL{\lst{q}}; S; \RELEASE{\lst{q}}, \rho \rangle \xrightarrow{\alpha'} \langle E, \rho''\rangle} \rho'' \nonumber
  \end{eqnarray}
  where Eq. (\ref{eq_param_sem_6}) follows by induction hypothesis applied to $S[\lst{r}/\lst{q}]$,
  (\ref{eq_param_sem_7}) by linearity of partial trace function,
  (\ref{eq_param_sem_8}) by (Rel) rule of operational semantics,
  and (\ref{eq_param_sem_9}) by (Comp${}_2$) rule of operational semantics.
  \paragraph{Case: (Proc).}
  \begin{eqnarray}
    \sem{\CALL\ \proc_i(\lst{a}_i)}(\rho) &=& \big( \bigsqcup_{n = 0}^{\infty} \sem{ S_i^{(n)}[\lst{a}_i / \lst{y}_i] } \big) (\rho) \nonumber \\
     &=& \bigsqcup_{n = 0}^{\infty} \sem{ S_i^{(n)}[\lst{a}_i / \lst{y}_i] }(\rho) \nonumber \\
     &=& \bigsqcup_{n = 0}^{\infty} \Big( \sum_{\langle S_i^{(n)}[\lst{a}_i / \lst{y}_i], \rho \rangle\xrightarrow{\alpha}\langle E,\rho'\rangle} \rho' \Big) \label{eq_param_sem_10} \\
     &=& \sum_{\langle S_i[\lst{a}_i / \lst{y}_i],\rho\rangle\xrightarrow{\alpha}\langle E,\rho'\rangle} \rho' \label{eq_param_sem_11} \\
     &=& \sum_{\langle \CALL\ \proc_i(\lst{a}_i),\rho\rangle\xrightarrow{\alpha}\langle E,\rho'\rangle} \rho' \label{eq_param_sem_12}
  \end{eqnarray}
  where Eq. (\ref{eq_param_sem_10}) follows from a nonrecursive version of this theorem (proved similarly),
  (\ref{eq_param_sem_11}) by Lem. \ref{lem_syn_approx_pre},
  and (\ref{eq_param_sem_12}) by (Proc) rule of operational semantics.
\end{proof}

\section{Quantum assertion language and expressiveness}\label{app_sec_ass_exp}

\subsection{Formal semantics of $\PQPT$s}\label{app_subsec_pqpt}

\begin{definition}[Semantics of $\PQPT$s]\label{def_sem_pqpt}
Follow symbols and notations in Def. \ref{def_syn_pqpt}.
Recall that $\mathbb{I}$ is the standard interpretation
from constant and function symbols in the syntax of $\PQPT$s to their semantic counterparts,
and $v$ is an assignment s.t. $v(\calX_{\lst{q}})\in \Pp(\Hh_{\lst{q}})$.

The denotation of $P_{\lst{q}}$ under interpretation $\mathbb{I}$ and assignment $v$,
denoted $P_{\lst{q}}^{\mathbb{I},v}$, is defined as
\begin{eqnarray*}
  P_{\lst{q}}^{\mathbb{I},v} &\triangleq& \calE_{\lst{q}}^*(\calB_{\lst{q}}^{\mathbb{I},v}) + \calF_{\lst{q}}^*(I_{\lst{q}})
\end{eqnarray*}
where $\calB_{\lst{q}}^{\mathbb{I},v}$,
the denotation of $\calB_{\lst{q}}$ under interpretation $\mathbb{I}$ and assignment $v$,
is defined as
\begin{eqnarray*}
  \calB_{\lst{q}}^{\mathbb{I},v} &\triangleq& \left\{
                           \begin{array}{ll}
                             v(\calX_{\lst{q}}) & \hbox{if $\calB_{\lst{q}} = \calX_{\lst{q}}$} \\
                             I_{\lst{r}}\otimes \calB_{\lst{s}}^{\mathbb{I},v} & \hbox{if $\calB_{\lst{q}} = I_{\lst{r}}\otimes \calB_{\lst{s}}$} \\
                             \calB_{\lst{r}}^{\mathbb{I},v} \otimes I_{\lst{s}} & \hbox{if $\calB_{\lst{q}} = \calB_{\lst{r}}\otimes I_{\lst{s}}$} \\
                             \calB_{\lst{r}}^{\mathbb{I},v} \otimes \calB_{\lst{s}}^{\mathbb{I},v} & \hbox{if $\calB_{\lst{q}} = \calB_{\lst{r}}\otimes \calB_{\lst{s}}$}
                           \end{array}
                         \right.
\end{eqnarray*}
where a bit notational abuses between semantics (the left) and syntax (the right) are allowed, e.g., the left $I_{\lst{r}}$, or strictly ${I_{\lst{r}}}^{\mathbb{I}}$, is the denotation of the right $I_{\lst{r}}$ under interpretation $\mathbb{I}$.
\end{definition}

\subsection{Weakest (liberal) preconditions}\label{app_syn_wp}

\begin{table}[!htbp]

  \centering

  \begin{tabular}{lcl}
    \multicolumn{3}{l}{$\fwp.\big(\CALL\ \proc_i(\lst{a}_i)\big).P = \bigsqcup_{n=0}^\infty \fwp.S_i^{(n)}[\lst{a}_i / \lst{y}_i].P$} \\
      \specialrule{0em}{2pt}{2pt}
    \multicolumn{3}{l}{$\fwlp.\big(\CALL\ \proc_i(\lst{a}_i)\big).P = \bigsqcap_{n=0}^\infty \fwlp.S_i^{(n)}[\lst{a}_i / \lst{y}_i].P$} \\
      \specialrule{0em}{2pt}{2pt}
    $\fwp.\bottom.P = 0$ & & $\fwlp.\bottom.P = I$ \\
      \specialrule{0em}{2pt}{2pt}
    $\fxp.\SKIP.P = P$ & & $\fxp.(q \assnequal \ket{0}).P = \sum_i \ket{i}_q \bra{0} P \ket{0}_q \bra{i}$ \\
      \specialrule{0em}{2pt}{2pt}
    $\fxp.(\lst{q} \starequal U).P = U^\dag P U$ & & $\fxp.(S_1;S_2).P = \fxp.S_1.(\fxp.S_2.P)$ \\
      \specialrule{0em}{2pt}{2pt}
    \multicolumn{3}{l}{$\fxp.\IF.P = \sum_m M_m^\dag(\fxp.S_m.P)M_m$} \\
      \specialrule{0em}{2pt}{2pt}
    \multicolumn{3}{l}{$\fxp.(\LOCAL{\lst{q}}; S; \RELEASE{\lst{q}}).P = \bra{0}_{\lst{r}} \big( \fxp.S[\lst{r}/\lst{q}].(P \otimes I_{\lst{r}}) \big) \ket{0}_{\lst{r}}$}
  \end{tabular}
  \caption{Definition of formal weakest (liberal) preconditions --- $\fxp \in \{ \fwp, \fwlp\}$.}
  \label{app_tab_wp_wlp} 
  \vspace{-8pt}
\end{table}

\begin{theorem}[Quantum expressiveness theorem]\label{app_thm_exp}
Let quantum program $S \in \EqRP$, and $P$ a $\PQPT$.
Define $\fwp.S.P$ and $\fwlp.S.P$ as in Tab. \ref{app_tab_wp_wlp}.
(Note that by extending Lem. \ref{lem_wp_wlp} to the parameterized case,
well-definedness of $\fwp$ and $\fwlp$ for parameterized activation follows.)
It is the case that
\begin{description}
  \item[(a)] $\models_\mathbb{I} \fwp.S.P = \sem{S}^*(P)$;
  \item[(b)] $\models_\mathbb{I} \fwlp.S.P = I - \fwp.S.(I - P)$.
\end{description}
\end{theorem}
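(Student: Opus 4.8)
The plan is to prove (a) and (b) simultaneously by induction on the depth of the formation tree of $S$. For every construct already present in $\qRP$ — $\bottom$, $\SKIP$, initialisation, unitary operation, sequential composition and probabilistic branching — the argument for (a) is verbatim that of the non-parameterised case (Thm. \ref{thm_exp}) and I would not repeat it; the genuinely new work consists of the two cases $\LOCAL{\lst q}; S; \RELEASE{\lst q}$ and $\CALL\ \proc_i(\lst a_i)$. The underlying mechanism is Lem. \ref{lem_dual_qop}: clauses (3) and (4) push the Schr\"{o}dinger--Heisenberg dual through sums and compositions of $\QOP$s, clause (5) pushes it through least upper bounds, and linearity of $\calE^*$ (clause (3)) reduces (b) to (a) for every construct where Tab. \ref{app_tab_wp_wlp} defines $\fwp$ and $\fwlp$ by a single clause $\fxp$; the case $\bottom$ is immediate and the call is treated below.

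First I would treat variable localisation. Using Tab. \ref{tab_EdeSem}, $\sem{\LOCAL{\lst q}; S; \RELEASE{\lst q}}=\tr_{\lst r}\circ\sem{S[\lst r/\lst q]}\circ(\ket{0}_{\lst r}\diamond\bra{0}_{\lst r})$, so Lem. \ref{lem_dual_qop}(4) gives $\sem{\LOCAL{\lst q}; S; \RELEASE{\lst q}}^*=(\ket{0}_{\lst r}\diamond\bra{0}_{\lst r})^*\circ\sem{S[\lst r/\lst q]}^*\circ\tr_{\lst r}^*$. Computing the two elementary duals — $(\ket{0}_{\lst r}\diamond\bra{0}_{\lst r})^*$ is $M\mapsto\bra{0}_{\lst r}M\ket{0}_{\lst r}$ and $\tr_{\lst r}^*$ is $M\mapsto M\otimes I_{\lst r}$ — yields $\sem{\LOCAL{\lst q}; S; \RELEASE{\lst q}}^*(P)=\bra{0}_{\lst r}\big(\sem{S[\lst r/\lst q]}^*(P\otimes I_{\lst r})\big)\ket{0}_{\lst r}$, and the induction hypothesis applied to $S[\lst r/\lst q]$ rewrites the inner term as $\fwp.S[\lst r/\lst q].(P\otimes I_{\lst r})$, which is exactly the right-hand side of the matching line of Tab. \ref{app_tab_wp_wlp}. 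A minor point to discharge here is that $\lst r$ is fresh and componentwise type-compatible with $\lst q$, so that $P\otimes I_{\lst r}$ and $[\lst r/\lst q]$ are well-formed and agree with the semantics fixed in Thm. \ref{thm_sem_EqRP}.

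Next the parameterised call. I would first prove a call-free instance of the theorem by the structural induction above; since every approximant $S_i^{(n)}[\lst a_i/\lst y_i]$ (Def. \ref{def_para_unroll}) is call-free, this instance gives $\fwp.S_i^{(n)}[\lst a_i/\lst y_i].P=\sem{S_i^{(n)}[\lst a_i/\lst y_i]}^*(P)$, and by the parameterised extensions of Lems. \ref{lem_well_def_den} and \ref{lem_wp_wlp} the sequence $\{\sem{S_i^{(n)}[\lst a_i/\lst y_i]}\}_n$ is non-decreasing. Hence, using the definition in Tab. \ref{app_tab_wp_wlp}, Lem. \ref{lem_dual_qop}(5) and the upper-limit clause of Def. \ref{def_limits}, $\fwp.(\CALL\ \proc_i(\lst a_i)).P=\bigsqcup_n\sem{S_i^{(n)}[\lst a_i/\lst y_i]}^*(P)=\big(\bigsqcup_n\sem{S_i^{(n)}[\lst a_i/\lst y_i]}\big)^*(P)=\sem{\CALL\ \proc_i(\lst a_i)}^*(P)$, which is (a); statement (b) for this case then follows from the call-free instance of (b) together with the lower-limit clause of Def. \ref{def_limits} applied to $\calE_n\triangleq\sem{S_i^{(n)}[\lst a_i/\lst y_i]}$.

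The step I expect to be the main obstacle is precisely this interchange of $(\cdot)^*$ with the limit in the call case: one must check that the \emph{syntactic} upper and lower limits of $\PQPT$s, introduced in Def. \ref{def_limits} only in the restricted forms $\calE^*(P)$ and $\calE^*(P)+(I-\calE^*(I))$, really coincide with the \emph{semantic} $\bigsqcup$ and $\bigsqcap$ of the relevant $\QOP$ limits, and that the induction is well-founded — the call case is not structurally smaller than $S$, so it has to bottom out in the separately established call-free version. Once the duals of the ``append-$\ket{0}$'' and partial-trace super-operators are pinned down, everything else is routine.
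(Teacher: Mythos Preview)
Your proposal is correct and follows essentially the same route as the paper: derive an inductive form for $\sem{S}^*$ via Lem.~\ref{lem_dual_qop}, establish a call-free version of (a) by structural induction (handling localisation exactly as you do), and then treat $\CALL\ \proc_i(\lst a_i)$ by applying that version to the approximants $S_i^{(n)}[\lst a_i/\lst y_i]$ together with Lem.~\ref{lem_dual_qop}(5); part (b) is obtained from (a) by linearity and the $\bigsqcup/\bigsqcap$ duality. Your explicit identification of the duals of the ``append-$\ket 0$'' and partial-trace maps and your remark on well-foundedness are exactly the points the paper leaves implicit.
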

\begin{proof}
Since $\sem{S}$ can be seen as a $\QOP$,
by the inductive definition of $\sem{S}$ (cf. Tab. \ref{tab_EdeSem}), together with Lem. \ref{lem_dual_qop},
we are able to obtain an inductive definition of $\sem{S}^*$ (cf. Tab. \ref{tab_dualOfSem}).
Then Stat. (a) follows by induction on $S$, in which for the case of parameterized $\CALL$-statement,
it suffices to show, for all $n \geq 0$, that
\begin{eqnarray}\label{eq_syn_wp_1}
  &\models_\mathbb{I}& \fwp.S_i^{(n)}[\lst{a}_i / \lst{y}_i].P = \sem{S_i^{(n)}[\lst{a}_i / \lst{y}_i]}^*(P)
\end{eqnarray}
This is indeed the case because $S_i^{(n)}$ is a non-recursive quantum program.
\big(We can first prove a non-recursive version of Stat. (a) by induction on $S$.\big)
Stat. (b) can be obtained similar to (a) by using the fact that
the G. L. B. operator $\bigsqcap$ can be defined
as the logical dual of the L. U. B. operator $\bigsqcup$,
together with linearity of super operators.
\begin{table}[!htbp]
  \centering
  \begin{tabular}{lcl}
    $\sem{\bottom}^* = 0 \diamond 0$ & & $\sem{\SKIP}^* = I \diamond I$ \\
    \specialrule{0em}{2pt}{2pt}
    $\sem{q \assnequal \ket{0}}^* = \sum_i \ket{i}_q \bra{0} \diamond \ket{0}_q \bra{i}$ & & $\sem{\lst{q} \starequal U}^* = U^\dag \diamond U$ \\
    \specialrule{0em}{2pt}{2pt}
    $\sem{S_1;S_2}^* = \sem{S_1}^* \circ \sem{S_2}^*$ & & $\sem{\IF}^* = \sum_m (M_m^\dag \diamond M_m) \circ \sem{S_m}^*$ \\
    \specialrule{0em}{2pt}{2pt}
    \multicolumn{3}{l}{$\sem{\LOCAL{\lst{q}}; S; \RELEASE{\lst{q}}}^* = (\bra{0}_{\lst{r}} \diamond \ket{0}_{\lst{r}}) \circ \sem{ S[\lst{r}/\lst{q}] }^* \circ (I_{\lst{r}} \diamond I_{\lst{r}})$} \\
    \specialrule{0em}{2pt}{2pt}
    \multicolumn{3}{l}{$\sem{\CALL\ \proc_i(\lst{a}_i)}^* = \bigsqcup_{n = 0}^{\infty} \sem{ S_i^{(n)}[\lst{a}_i / \lst{y}_i] }^*$}
  \end{tabular}
  \caption{The inductive definition of $\sem{S}^*$}
  \label{tab_dualOfSem}
\end{table}
\end{proof}

\section{Two counterexamples for no (R Subst)}\label{app_exams_counter}

\subsection{Counterexample for (Rp Rec)}\label{app_exam_counter_par}

{\bf Example \ref{exam_counter_par}.}
Let $q$ be a quantum variable with $\type(q) = \integer$.
We define the $(+i)$-operator $U_{+i}$ over the computational basis of $\Hh_q$ by
$$
  U_{+i}\colon \ket{x}\rightarrow\ket{x+i}
$$
and similarly for the $(-i)$-operator $U_{-i}$.
Declare the procedure $\toy$ by
  $$
  \recDec{\langle \toy \rangle}{\ifStat{\Box m\cdot M[q] = m \rightarrow S_m}}
  $$
with $M$ defined by
\begin{eqnarray*}
  M &\triangleq& \bigg\{ M_0 \triangleq \sum_{i \leq 0} \outprod{i}{i},\ M_1 \triangleq \sum_{i \geq 1} \outprod{i}{i} \bigg\}
\end{eqnarray*}
and $\{S_m\}_{m = 0, 1}$ defined by
$$
S_0 \triangleq \SKIP, \quad S_1 \triangleq q \starequal U_{-1};\ \CALL\ \toy;\ q \starequal U_{+1}
$$
Fix $n \geq 0$. We can derive the partial correctness formula
\begin{equation}\label{toy_par_cor}
  \apcor{\big}{\voutprod{n}{q}{n}}{\CALL\ \toy}{\voutprod{n}{q}{n}}
\end{equation}
by using (Rp Rec).
However, this is not the case if the use of (R Subst) is disallowed.

\begin{proof}
The proof of this lemma is divided into the following two parts:

\vspace{2mm}
\noindent {\bf Unprovability without (R Subst).}
Suppose for a contradiction that one can derive Hoare's triple (\ref{toy_par_cor}) without using (R Subst).
By (R Order), we have to show that there exist $\PQPT$s $P$ and $Q$ s.t.
\begin{gather}
  \voutprod{n}{q}{n} \sqsubseteq P \nonumber\\
  \pcor{P}{\CALL\ \toy}{Q} \label{exam_counter_par_21} \\
  Q \sqsubseteq \voutprod{n}{q}{n} \label{exam_counter_par_22}
\end{gather}
where Hoare's triple (\ref{exam_counter_par_21}) is derived by using (Rp Rec).
Then we have to show that
\begin{eqnarray*}
  \trueOrder, \pcor{P}{\CALL\ \toy}{Q} &\reVdash{\qBase^-}& \pcor{P}{\ifStat{\Box m\cdot M[q] = m \rightarrow S_m}}{Q}
\end{eqnarray*}
where the proof system $\qBase^-$ is defined by
\begin{eqnarray*}
  \qBase^- &\triangleq& \qBase - \mbox{(R Subst)}
\end{eqnarray*}
By (R Case), we have to prove that there exist $\PQPT$s $\{P_m\}_{m = 0, 1}$ s.t.
\begin{eqnarray}
  P &\sqsubseteq& \sum_{m} M_m^{\dag} P_m M_m \nonumber \\
  \trueOrder, \pcor{P}{\CALL\ \toy}{Q} &\reVdash{\qBase^-}& \pcor{P_0}{S_0}{Q} \nonumber \\
  \trueOrder, \pcor{P}{\CALL\ \toy}{Q} &\reVdash{\qBase^-}& \pcor{P_1}{S_1}{Q} \label{exam_counter_par_24}
\end{eqnarray}
To prove Ass. (\ref{exam_counter_par_24}), we have to show that
$$
\{ P_1 \}\  q \starequal U_{-1}\ \{ P \}\ \CALL\ \toy\ \{ Q \}\ q \starequal U_{+1}\ \{ Q \}
$$
By (A Unit, R Order), we have to show that
\begin{eqnarray}
  P_1 &\sqsubseteq& (U_{-1})^{\dag} P (U_{-1}) \nonumber \\
  Q &\sqsubseteq& (U_{+1})^{\dag} Q (U_{+1}) \label{exam_counter_par_25}
\end{eqnarray}
By Ass. (\ref{exam_counter_par_22}), it follows that
\begin{eqnarray}\label{exam_counter_par_26}
  Q &=& \delta \voutprod{n}{q}{n}, \quad \mbox{ with } \delta \in [0, 1]
\end{eqnarray}
This together with Ass. (\ref{exam_counter_par_25}) implies that
\begin{eqnarray*}
  \delta \voutprod{n}{q}{n} &\sqsubseteq& \delta \voutprod{n-1}{q}{n-1}
\end{eqnarray*}
A contradiction.

\vspace{2mm}
\noindent {\bf Provability with (R Subst).}
  We introduce quantum predicate variable $\calX_{q}$ for $q$ (abbr. $\calX$)
  with the $i$th main-diagonal element $\vinprod{i}{\calX}{i}$ abbreviated $\calX_i$.
  To prove Hoare's triple (\ref{toy_par_cor}),
  by (R Subst), together with the substitution $[(\voutprod{n}{q}{n})/\calX]$, it suffices to derive
  $$
   \apcor{\bigg}{\sum_{i \geq 0} \calX_i \outprod{i}{i}}{\CALL\ \toy}{\sum_{i \geq 0} \calX_i \outprod{i}{i}}
  $$
  By (Rp Rec), it suffices to show
  \begin{eqnarray*}
    \trueOrder, \avpcor{\big}{\sum_{i \geq 0} \calX_i \outprod{i}{i}}{\CALL\ \toy}{\sum_{i \geq 0} \calX_i \outprod{i}{i}}
    &\reVdash{\qBase}& \vpcor{\sum_{i \geq 0} \calX_i \outprod{i}{i}}{\ifStat{\Box m\cdot M[q] = m \rightarrow S_m}}{\sum_{i \geq 0} \calX_i \outprod{i}{i}}
  \end{eqnarray*}
  The routine verification of the above assertion is left to the reader,
  where (R Subst) will be applied with the substitution $[ ( \sum_{i \geq 0} \calX_{i+1} \outprod{i}{i} ) / \calX ]$
  to verifying the inner $\CALL\ \toy$.
\end{proof}
\begin{remark}
  Closer scrutiny of the above proof reveals that
  the postcondition $Q$ for the outer $\CALL\ \toy$ should have the form $\delta \voutprod{n}{q}{n}$
  \big(cf. Ass. (\ref{exam_counter_par_21}, \ref{exam_counter_par_22}, \ref{exam_counter_par_26})\big);
  in contrast, the postcondition $Q$ for the inner $\CALL\ \toy$ should have the form $\delta \voutprod{n-1}{q}{n-1}$
  \big(cf. Ass. (\ref{exam_counter_par_22}, \ref{exam_counter_par_25})\big).
  Unfortunately, the variation of $Q$ is beyond the expressibility of $\QPRED$.
  By choosing $P,Q$ as $\PQPT$s (containing parameters) and then applying (R Subst),
  one can achieve the transformation of $Q$ from $\delta \voutprod{n}{q}{n}$ to $\delta \voutprod{n-1}{q}{n-1}$
  as pointed out above.

\end{remark}

\subsection{Counterexample for (Rt Rec)}\label{app_exam_counter_tot}

{\bf Example \ref{exam_counter_tot}.}
Let the recursive procedure $\toy$ be as defined in Exm. \ref{exam_counter_par}.
Fix $n \geq 0$. We can derive the total correctness formula
\begin{equation}\label{toy_tot_cor}
  \atcor{\big}{\voutprod{n}{q}{n}}{\CALL\ \toy}{\voutprod{n}{q}{n}}
\end{equation}
by using (Rt Rec).
However, this is not the case if the use of (R Subst) is disallowed.

\begin{proof}
The proof of this lemma is divided into the following two parts:

\vspace{2mm}
\noindent {\bf Unprovability without (R Subst).}
Suppose for a contradiction that one can derive Hoare's triple (\ref{toy_tot_cor}) without using (R Subst).
By (R Order), we have to show that there exist $\PQPT$s $P$ and $Q$ s.t.
\begin{gather}
  \voutprod{n}{q}{n} \quad \sqsubseteq \quad P \nonumber\\
  \tcor{P}{\CALL\ \toy}{Q} \label{exam_counter_tot_21} \\
  Q \quad \sqsubseteq \quad \voutprod{n}{q}{n} \label{exam_counter_tot_22}
\end{gather}
where Hoare's triple (\ref{exam_counter_tot_21}) is derived by using (Rt Rec).
Then we have to show that there exists a sequence of $\PQPT$s $\{P_i\}_{i \geq 0}^{\sqsubseteq}$
with $P_0 = 0$ and $P \sqsubseteq \bigsqcup_{i = 0}^{\infty} P_i$ s.t.
\begin{eqnarray*}
  \trueOrder, \tcor{P_i}{\CALL\ \toy}{Q} &\reVdash{\qBase^-}& \tcor{P_{i+1}}{\ifStat{\Box m\cdot M[q] = m \rightarrow S_m}}{Q}
\end{eqnarray*}
for all $i \geq 0$ \big($\qBase^- \triangleq \qBase - \mbox{(R Subst)}$\big).
In particular (when $i = 0$), we need to show that
\begin{eqnarray*}
  \trueOrder, \tcor{0}{\CALL\ \toy}{Q} &\reVdash{\qBase^-}& \tcor{P_1}{\ifStat{\Box m\cdot M[q] = m \rightarrow S_m}}{Q}
\end{eqnarray*}
By (R Case), we have to show that there exist $\PQPT$s $\{R_m\}_{m = 0, 1}$ s.t.
\begin{eqnarray}
  P_1 &\sqsubseteq& \sum_{m = 0, 1} M_m^{\dag} R_m M_m \nonumber \\
  \trueOrder, \tcor{0}{\CALL\ \toy}{Q} &\reVdash{\qBase^-}& \tcor{R_0}{S_0}{Q} \nonumber \\
  \trueOrder, \tcor{0}{\CALL\ \toy}{Q} &\reVdash{\qBase^-}& \tcor{R_1}{S_1}{Q} \label{exam_counter_tot_24}
\end{eqnarray}
To prove Ass. (\ref{exam_counter_tot_24}), we have to show that
$$
\{ R_1 \}\ q \starequal U_{-1}\ \{ 0 \}\ \CALL\ \toy_1\ \{ Q \}\ q \starequal U_{+1}\ \{ Q \}
$$
By (A Unit, R Order), we have to show that
\begin{eqnarray}
  R_1 &=& 0 \nonumber \\
  Q &\sqsubseteq& (U_{+1})^{\dag} Q (U_{+1}) \label{exam_counter_tot_25}
\end{eqnarray}
By Ass. (\ref{exam_counter_tot_22}), it follows that
\begin{eqnarray*}
  Q &=& \delta \voutprod{n}{q}{n}, \quad \mbox{with } \delta \in [0, 1]
\end{eqnarray*}
This together with Ass. (\ref{exam_counter_tot_25}) implies that
\begin{eqnarray*}
  \delta \voutprod{n}{q}{n} &\sqsubseteq& \delta \voutprod{n-1}{q}{n-1}
\end{eqnarray*}
A contradiction.

\vspace{2mm}
\noindent {\bf Provability with (R Subst).}
We introduce quantum predicate variable $\calX_{q}$ for $q$ (abbr. $\calX$)
  with the $i$th main-diagonal element $\vinprod{i}{\calX}{i}$ (abbr. $\calX_i$).
  To prove Hoare's triple (\ref{toy_tot_cor}),
  by (R Subst), together with the substitution $[(\voutprod{n}{q}{n})/\calX]$, it suffices to derive
  $$
  \atcor{\bigg}{\sum_{j \geq 0} \calX_j \outprod{j}{j}}{\CALL\ \toy}{\sum_{j \geq 0} \calX_j \outprod{j}{j}}
  $$
  By (Rt Rec), it suffices to show, for all $i \geq 0$, that
  \begin{eqnarray*}
    \trueOrder, \avtcor{\big}{\sum_{0 \leq j < i} \calX_j \outprod{j}{j}}{\CALL\ \toy}{\sum_{j \geq 0} \calX_j \outprod{j}{j}} &\reVdash{\qBase}& \avtcor{\big}{\sum_{0 \leq j < i+1} \calX_j \outprod{j}{j}}{\ifStat{\Box m\cdot M[q] = m \rightarrow S_m}}{\sum_{j \geq 0} \calX_j \outprod{j}{j}}
  \end{eqnarray*}
  The routine verification of the above assertion is left to the reader,
  where (R Subst) will be applied with the substitution $[ ( \sum_{i > j \geq 0} \calX_{j+1} \outprod{j}{j} ) / \calX ]$
  to verifying the inner $\CALL\ \toy$.
\end{proof}

\begin{remark}
The reader might well wonder whether the contradiction of Ass. (\ref{exam_counter_tot_24})
is due to the fixed choice of postcondition (i.e. $Q$) of $\CALL\ \toy$ when using (Rt Rec).
In other words, one may derive Ass. (\ref{exam_counter_tot_21})
by using a variant of (Rt Rec) with increasing postconditions, e.g.
$$
\dfrac{\begin{array}{c}
    \exists\ \{P_n\}_{n \geq 0}^{\sqsubseteq}, \{Q_n\}_{n \geq 0}^{\sqsubseteq}  \mbox{ with } P_0 = 0 \mbox{ s.t.} \\
    \tcor{P_n}{\CALL\ \proc}{Q_n} \reVdash{\qBE} \tcor{P_{n+1}}{S}{Q_{n+1}} \mbox{ for all } n \geq 0, \\
    P \sqsubseteq\bigsqcup_{n=0}^{\infty} P_n \mbox{ and } \bigsqcup_{n=0}^{\infty} Q_n \sqsubseteq Q
    \end{array}
    }{ \tcor{P}{\CALL\ \proc}{Q} }
$$
However, we can refute it by redefining $S_1$ as
\begin{eqnarray*}
  S_1 &\triangleq& q \starequal U_{-1};\ \CALL\ \toy;\ q \starequal U_{-1};\ \CALL\ \toy;\ q \starequal U_{+2}
\end{eqnarray*}
To summarize, any variant of (Rt Rec) always has a counterexample for no (R Subst).
\end{remark}

\section{Soundness and completeness}\label{app_sec_sound_complete}

\begin{table}\footnotesize
  \centering

  \begin{minipage}[b]{\textwidth}
    \centering
    \begin{tabular}{rlcrl}
    (A Bot) & $\pcor{I}{\bottom}{P}$ \big(resp. $\tcor{0}{\bottom}{P}$\big) & &
    (A Skip) & $\pcor{P}{\SKIP}{P}$ \\
    \specialrule{0em}{4pt}{4pt}
    (A Unit) & $\dfrac{U U^{\dag} = U^{\dag} U = I_{\lst{q}}}{\pcor{U^\dag P U}{\lst{q} \starequal U}{ P }}$ & &
    (A Init) & $\dfrac{\sum_{i} \voutprod{i}{q}{i} = I_{q}}{\pcor{ \sum_{i} \ket{i}_q \bra{0} P \ket{0}_q \bra{i} }{ q \assnequal \ket{0} }{ P }}$ \\
    \specialrule{0em}{4pt}{4pt}
    (R Comp) & $\dfrac{\pcor{P}{S_1}{Q} \quad \pcor{Q}{S_2}{R}}{\pcor{P}{ S_1;S_2 }{R}}$ & &
    (R Case) & $\dfrac{\pcor{P_m}{S_m}{Q} \mbox{ \small for each } m}{\pcor{\sum_m M_m^\dag P_m M_m}{\IF}{Q}}$ \\
    \specialrule{0em}{4pt}{4pt}
    (R Order) & $\dfrac{P\sqsubseteq P' \quad \pcor{P'}{S}{Q'} \quad Q'\sqsubseteq Q}{\pcor{P}{S}{Q}}$ & &
    (R Subst) & $\dfrac{\pcor{P}{S}{Q}}{\pcor{P[R/\calX]}{S}{Q[R/\calX]}}$ \\
    \specialrule{0em}{4pt}{4pt}
    (R Loc) & $\dfrac{\pcor{P \otimes I_{\lst{r}}}{\lst{r} := \ket{0};S[\lst{r}/\lst{q}]}{Q \otimes I_{\lst{r}}}}{\pcor{P}{\LOCAL{\lst{q}}; S; \RELEASE{\lst{q}}}{Q}}$ & &
    (R Adap) & $\dfrac{\pcor{P}{S}{Q}}{\pcor{ P[\lst{p}/\lst{q}] }{S[\lst{p}/\lst{q}]}{Q[\lst{p}/\lst{q}] }}$
  \end{tabular}
  \subcaption{Extended base proof system $\qBE$.}
  \label{qBE}
  \end{minipage}

  \vfill

  \begin{minipage}[b]{\textwidth}
    \centering
    \begin{tabular}{rl}
    (Rp pRec) & $\dfrac{\big\{ \pcor{P_i}{\CALL\ \proc_i(\lst{y}_i)}{Q_i} \big\}_{1 \leq i \leq n} \reVdash{\qBE} \bigwedge_{1 \leq i \leq n} \pcor{P_i}{S_i}{Q_i}}{\bigwedge_{1 \leq i \leq n} \pcor{P_i}{\CALL\ \proc_i(\lst{y}_i)}{Q_i}}$ \\
    \specialrule{0em}{4pt}{4pt}
    (Rt pRec) & $\dfrac{\begin{array}{c}
    \mbox{for } 1 \leq i \leq n,\ \exists\ \{P_i^j\}_{j \geq 0}^{\sqsubseteq}  \mbox{ with } P_i^0 = 0 \mbox{ s.t.} \\
    \big\{ \tcor{P_i^j}{\CALL\ \proc_i(\lst{y}_i)}{Q_i} \big\}_{1 \leq i \leq n} \reVdash{\qBE} \bigwedge_{1 \leq i \leq n} \tcor{P_i^{j+1}}{S_i}{Q_i} \mbox{ for all } j \geq 0, \\
    P_i \sqsubseteq\bigsqcup_{j=0}^{\infty} P_i^j
    \end{array}
    }{\bigwedge_{1 \leq i \leq n} \tcor{P_i}{\CALL\ \proc_i(\lst{y}_i)}{Q_i} }$
    \end{tabular}
    \subcaption{Proof rules for parameterized procedures.}
    \label{proof_rules_pProc}
  \end{minipage}

  \caption{Proof systems $\qPP \triangleq \qBE + \mbox{(Rp pRec)}$ and $\qTP \triangleq \qBE + \mbox{(Rt pRec)}$.}
  \label{prf_sys_exp}
\end{table}

\paragraph{Notations and definitions.}
Quantum programming language $\EqPL$ is defined by:
\begin{displaymath}
\begin{array}{rcl}
  S & \triangleq & \bottom \mid \SKIP \mid q \assnequal \ket{0} \mid \lst{q} \starequal U \mid S_1;S_2 \mid \\
    & & \ifStat{\Box m\cdot M[\lst{q}] = m \rightarrow S_m} \mid \LOCAL{\lst{q}}; S; \RELEASE{\lst{q}}
\end{array}
\end{displaymath}
(an extension to the quantum base language $\qPL$).
Proof system $\qBE$ for both partial and total correctness of $\EqPL$
(an extension to $\qBase$ in Tab. \ref{tab_qBase}),
partial- and total-correctness proof systems $\qPP$ and $\qTP$ for $\EqRP$
are shown in Tab. \ref{prf_sys_exp} (also cf. Tab. \ref{tab_QHL_eRqPL}).

\paragraph{Organization and proof sketch}
We will discuss two versions of soundness and completeness
--- one for the general case and the other for the compact case.
Generally speaking,
the soundness and completeness of $\qPP$ and $\qTP$ will be reduced to that of $\qBE$
by simulating recursive procedures with their syntactic approximations
(To see this, we remark that
the inference rules for both partial and total correctness of recursive procedures are designed on the basis of $\qBE$).
In particular,
for the proof of the completeness,
we will take the notion of the most general partial (resp. total) correctness formula
$\apcor{\big}{\fwlp.(\CALL\ \proc(\lst{y})).\calX}{\CALL\ \proc(\lst{y})}{\calX}$
\big(resp. $\atcor{\big}{\fwp.(\CALL\ \proc(\lst{y})).\calX}{\CALL\ \proc(\lst{y})}{\calX}$\big)
whose original idea comes from the theory of classical Hoare logic \cite[Chap. 6]{Francez}.
Note that we call them the most general correctness formulas because any correct Hoare's triple
for the parameterized activation $\CALL\ \proc(\lst{a})$ can be deduced from them by using (R Adapt) and (R Subst).

\subsection{Soundness and completeness of $\qPP$}\label{app_subsec_qPP}

\begin{lemma}[Soundness and completeness of $\qBE$]\label{lem_sound_complete_EqPL}
For any $S\in \EqPL$ and any $\PQPT$s $P,Q$, we have that
\begin{equation}\label{ass_sound_complete_EqPL}
  \trueOrder \vdash_{\qBE} \pcor{P}{S}{Q} \iff \models_\mathbb{I} P \sqsubseteq \fwlp.S.Q
\end{equation}
in particular,
\begin{equation}\label{ass_cmpt_sound_complete_EqPL}
  \trueEquality \vdash_{\qBE} \pcor{P}{S}{Q} \iff \models_\mathbb{I} P = \fwlp.S.Q
\end{equation}
\end{lemma}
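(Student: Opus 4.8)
The plan is to reduce both biconditionals to the quantum expressiveness theorem (Thm.~\ref{app_thm_exp}), which identifies $\fwlp.S.Q$ with the semantic weakest liberal precondition $I-\sem{S}^*(I-Q)$, together with the semantic reading of L\"owner-ordering formulas (Lems.~\ref{lem_lowner_compr} and \ref{lem_pqpt_order}) and the fact that, by Def.~\ref{def_cor_qpr}, $\models_\mathbb{I}\pcor{P}{S}{Q}$ is by definition $\models_\mathbb{I} P\sqsubseteq\fwlp.S.Q$. Soundness ($\Rightarrow$ of (\ref{ass_sound_complete_EqPL})) is then an induction on the length of the $\qBE$-derivation, checking each rule against the clauses for $\fwlp$ in Tab.~\ref{app_tab_wp_wlp}: the axioms (A~Bot), (A~Skip), (A~Init), (A~Unit) literally have the shape $\pcor{\fwlp.S.Q}{S}{Q}$, well-defined $\PQPT$s by Lems.~\ref{lem_closure_of_pqpt} and \ref{lem_wp_wlp}; (R~Comp) and (R~Case) follow from monotonicity and additivity of $M\mapsto\sem{S_i}^*(M)$ and $M\mapsto M_m^{\dag}MM_m$, via Lems.~\ref{lem_dual_qop} and \ref{lem_pqpt_order}; (R~Order) is transitivity of $\sqsubseteq$; (R~Subst) and (R~Adap) rest on a substitution lemma asserting that $\fwlp.S.\,\cdot$ commutes with substitution of a predicate (resp.\ program) variable, since $S$ itself contains no such variable, so validity under all assignments is preserved; and (R~Loc) uses the block identity $\fwlp.(\lst r:=\ket{0}).M=(\bra{0}_{\lst r}M\ket{0}_{\lst r})\otimes I_{\lst r}$ together with the sequencing clause to match the (Loc) clause of Tab.~\ref{app_tab_wp_wlp}.

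For completeness ($\Leftarrow$ of (\ref{ass_sound_complete_EqPL})) the key step is a lemma I would prove by structural induction on $S\in\EqPL$: for every $\PQPT$ $Q$ the most general triple $\pcor{\fwlp.S.Q}{S}{Q}$ is derivable in $\qBE$ \emph{without using the assertion theory at all} (in particular without (R~Order)). The base cases are the $\qBE$ axioms; the inductive cases use (R~Comp), (R~Case), and, for variable localization, (R~Loc) applied to the already-treated program $\lst r:=\ket{0};S[\lst r/\lst q]$, after noting (as in the soundness argument) that $\fwlp.(\lst r:=\ket{0};S[\lst r/\lst q]).(Q\otimes I_{\lst r})=\big(\fwlp.(\LOCAL{\lst q};S;\RELEASE{\lst q}).Q\big)\otimes I_{\lst r}$. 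Given $\models_\mathbb{I} P\sqsubseteq\fwlp.S.Q$, this ordering formula lies in $\trueOrder$, so a single application of (R~Order) to $\pcor{\fwlp.S.Q}{S}{Q}$ (with the trivial weakening $Q\sqsubseteq Q$ at the back) yields $\trueOrder\vdash_{\qBE}\pcor{P}{S}{Q}$.

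The compact biconditional (\ref{ass_cmpt_sound_complete_EqPL}) reuses the same ingredients. For ($\Leftarrow$), from $\models_\mathbb{I} P=\fwlp.S.Q$ we get $P=\fwlp.S.Q\in\trueEquality$ (and $Q=Q\in\trueEquality$), and an equality in $\trueEquality$ supplies both the associated L\"owner orderings; applying (R~Order) to the theory-free derivation of $\pcor{\fwlp.S.Q}{S}{Q}$ then gives $\trueEquality\vdash_{\qBE}\pcor{P}{S}{Q}$. For ($\Rightarrow$) --- \emph{compact soundness} --- I would prove by induction on the $\trueEquality$-derivation that it forces $\models_\mathbb{I} P=\fwlp.S.Q$, not merely $\sqsubseteq$: the axioms and (R~Comp), (R~Case), (R~Subst), (R~Loc), (R~Adap) propagate equality exactly as in the soundness argument.

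I expect the (R~Order) case of this last induction to be the main obstacle. The decisive observation is that $\trueEquality$ certifies only equality formulas, so any ordering premise $P\sqsubseteq P'$ (resp.\ $Q'\sqsubseteq Q$) that a $\trueEquality$-derivation can invoke must come from an equality in $\trueEquality$, whence $\models_\mathbb{I} P=P'$ and $\models_\mathbb{I} Q'=Q$; combined with the inductive hypothesis $P'=\fwlp.S.Q'$ this upgrades to $P=\fwlp.S.Q$. Making this ``only equalities are available'' principle precise within the calculus of Tab.~\ref{prf_sys_exp}, and pinning down the auxiliary substitution and ancilla-cancellation lemmas it relies on, is where the real work lies; everything else parallels the treatment of $\qPD$ in \cite{Ying11}.
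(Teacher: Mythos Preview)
Your proposal is correct and contains all the ingredients the paper uses, but the paper organizes them more economically. Rather than running four separate inductions (soundness and completeness, each in general and compact form), the paper proves a single bidirectional property: every rule of $\qBE$ other than (R~Order) preserves the exact form $\models_\mathbb{I} P=\fwlp.S.Q$ \emph{in both directions}---the ``only if'' direction gives compact soundness of that rule, the ``if'' direction gives compact completeness. Once this is established for each rule (the paper only spells out (R~Loc), (R~Subst), (R~Adap); the rest are referred back to the intuition in Sec.~\ref{sec_prf}), both biconditionals fall out at once: with $\trueEquality$, (R~Order) is forced into the equality form and preserves $P=\fwlp.S.Q$ as well; with $\trueOrder$, (R~Order) can weaken, which is exactly what relaxes $=$ to $\sqsubseteq$. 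Your separate treatment of soundness-by-derivation-induction and completeness-by-structural-induction amounts to the same content unpacked.

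On your closing worry: the (R~Order) case in compact soundness is not where the real work lies. By definition $\trueEquality$ consists \emph{only} of equality formulas $P=Q$, so the only instances of (R~Order) available in a $\trueEquality$-derivation already have equalities as side premises---there is nothing to ``make precise.'' The substantive cases are exactly the ones you sketch for (R~Loc) (the ancilla identity $\fwlp.(\lst r{:=}\ket 0;S[\lst r/\lst q]).(Q\otimes I_{\lst r})=(\fwlp.(\LOCAL{\lst q};S;\RELEASE{\lst q}).Q)\otimes I_{\lst r}$) and for (R~Subst)/(R~Adap) (which the paper discharges by the expressiveness theorem, Thm.~\ref{app_thm_exp}, rather than a separate substitution lemma).
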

\begin{proof}
Note that the unique difference between
Ass. (\ref{ass_sound_complete_EqPL}, \ref{ass_cmpt_sound_complete_EqPL})
lies in applications of (R Order):
the former will take its original form;
while the latter can only take the form
$$
\dfrac{P = P' \quad \pcor{P'}{S}{Q'} \quad Q' = Q}{\pcor{P}{S}{Q}}
$$
So, in order to prove the lemma,
it suffices to show that every proof rule $\calR$ of $\qBE$ except for (R Order)
has the following  property:
\begin{quote}
  every Hoare's triple $\pcor{P}{S}{Q}$ in the antecedent of $\calR$ satisfies $\models_{\mathbb{I}} P = \fwlp.S.Q$

\end{quote}
if, and only if,
\begin{quote}
  every Hoare's triple $\pcor{P}{S}{Q}$ in the consequent of $\calR$ satisfies $\models_{\mathbb{I}} P = \fwlp.S.Q$.
\end{quote}
In other words, every axiom should have the form $\pcor{\fwlp.S.P}{S}{P}$
and every inference rule with exception of (R Order) should preserves this form bidirectionally
(That is that, the ``only if'' direction entails compact soundness of the rule $\calR$,
and the ``if'' direction entails compact completeness of $\calR$).
Consider the cases of (R Subst), (R Loc) and (R Adapt).
(For other cases, cf. the intuition behind $\qBase$ in Sec. \ref{sec_prf}.)

{\bf Case: (R Loc).}
It suffices to show the following two assertions
\begin{eqnarray}
  &\models_{\mathbb{I}}& P \otimes I_{\lst{r}} = \fwlp.\big( \lst{r} \assnequal \ket{0}; S[\lst{r}/\lst{q}] \big).(Q \otimes I_{\lst{r}}) \label{eq_cmpt_sound_qBE_4} \\
  &\models_{\mathbb{I}}& P = \fwlp.\big( \LOCAL{\lst{q}}; S; \RELEASE{\lst{q}} \big).Q \label{eq_cmpt_sound_qBE_5}
\end{eqnarray}
are equivalent. By definition of $\fwlp$ (cf. Tab. \ref{app_tab_wp_wlp}), we have that
\begin{align*}
  & \fwlp.\big( \lst{r} \assnequal \ket{0}; S[\lst{r}/\lst{q}] \big).(Q \otimes I_{\lst{r}}) \\
  & = \fwlp.(\lst{r} \assnequal \ket{0}).\big( \fwlp.S[\lst{r}/\lst{q}].(Q \otimes I_{\lst{r}}) \big) \\
  & = \sum_{i} \voutprod{i}{\lst{r}}{0} \big( \fwlp.S[\lst{r}/\lst{q}].(Q \otimes I_{\lst{r}}) \big) \voutprod{0}{\lst{r}}{i} \\
  & = \bra{0}_{\lst{r}} \big( \fwlp.S[\lst{r}/\lst{q}].(Q \otimes I_{\lst{r}}) \big) \ket{0}_{\lst{r}} \otimes I_{\lst{r}} \\
  & = \big( \fwlp.( \LOCAL{\lst{q}}; S; \RELEASE{\lst{q}}).Q \big) \otimes I_{\lst{r}}
\end{align*}
Then equivalence of Ass. (\ref{eq_cmpt_sound_qBE_4}, \ref{eq_cmpt_sound_qBE_5})
follows from the convention that $P \otimes I_{\lst{r}} = P$.

{\bf Case: (R Subst, R Adap).}
By Thm. \ref{thm_exp}.
\end{proof}

\begin{theorem}[Soundness and Completeness of $\qPP$]\label{thm_sound_complete_qPP}
$\qPP$ is both sound and complete.
That is, for any quantum program $S \in \EqRP$ and any $\PQPT$s $P,Q$, we have that
$$
\trueOrder \reVdash{\qPP} \pcor{P}{ S }{Q} \iff \models_\mathbb{I} \pcor{P}{ S }{Q}
$$
\end{theorem}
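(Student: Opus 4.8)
The plan is to reduce the soundness and completeness of $\qPP$ (the proof system for parameterized recursive programs $\EqRP$ with partial correctness) to that of $\qBE$ (the base proof system for the non-recursive fragment $\EqPL$), which is already established in Lem. \ref{lem_sound_complete_EqPL}. The bridge between the two is the syntactic approximation $S_i^{(k)}$ of the recursive procedures: each $S_i^{(k)}$ is a program in $\EqPL$ (after substituting actual for formal parameters), and by the (Proc) rule of the denotational semantics together with Thm. \ref{app_thm_exp} we have $\fwlp.(\CALL\ \proc_i(\lst{a}_i)).P = \bigsqcap_{n=0}^\infty \fwlp.S_i^{(n)}[\lst{a}_i/\lst{y}_i].P$. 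So the strategy is: for soundness, show every application of (Rp pRec) preserves validity; for completeness, show that the \emph{most general} correctness formula $\apcor{\big}{\fwlp.(\CALL\ \proc_i(\lst{y}_i)).\calX_i}{\CALL\ \proc_i(\lst{y}_i)}{\calX_i}$ is derivable in $\qPP$, from which an arbitrary valid triple $\pcor{P}{S}{Q}$ follows by (R Adap), (R Subst), and (R Order) via Def. \ref{def_cor_qpr} and the well-definedness of $\fwlp$.

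For the soundness direction, I would argue that if the premise of (Rp pRec), namely $\big\{\pcor{P_i}{\CALL\ \proc_i(\lst{y}_i)}{Q_i}\big\}_i \reVdash{\qBE} \bigwedge_i \pcor{P_i}{S_i}{Q_i}$, is valid whenever its hypotheses are, then $(\ldots, \WLP(\sem{\CALL\ \proc_i(\lst{y}_i)}, Q_i), \ldots)$ with $P_i$ in place is a fixed point (or post-fixed point) of the associated predicate-transformer functional. Concretely: assume $\models_\mathbb{I}\pcor{P_i}{\CALL\ \proc_i(\lst{y}_i)}{Q_i}$ for all $i$, i.e. $P_i \sqsubseteq \fwlp.(\CALL\ \proc_i(\lst{y}_i)).Q_i$; then by soundness of $\qBE$ applied to the premise we get $\models_\mathbb{I}\pcor{P_i}{S_i}{Q_i}$, i.e. $P_i \sqsubseteq \fwlp.S_i.Q_i$. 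Since $\fwlp.(\CALL\ \proc_i(\lst{y}_i)).Q_i$ is the greatest such $P_i$ (it equals $\sem{\CALL\ \proc_i}^*(Q_i) + (I - \sem{\CALL\ \proc_i}^*(I))$, the greatest fixed point of the relevant functional, by Rem. \ref{rem_closure_den} and the characterization of $\fwlp$), a Knaster–Tarski / co-induction argument yields that $\fwlp.(\CALL\ \proc_i(\lst{y}_i)).Q_i$ itself satisfies the hypothesis-discharged version, and hence $P_i \sqsubseteq \fwlp.(\CALL\ \proc_i(\lst{y}_i)).Q_i$ in the conclusion, which is validity. The technical care here is in showing that ``$\reVdash{\qBE}$'' is monotone in the hypotheses in the right way — that discharging a \emph{stronger} set of hypothesis triples still yields the conclusion — but this follows because $\qBE$-derivations only use the hypotheses positively (as antecedents), and (R Order) lets one weaken.

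For the completeness direction, I would proceed in stages. First, I would establish that $\qPP \vdash \apcor{\big}{\fwlp.(\CALL\ \proc_i(\lst{y}_i)).\calX_i}{\CALL\ \proc_i(\lst{y}_i)}{\calX_i}$ for all $i$ simultaneously, by a single application of (Rp pRec): one takes $P_i \triangleq \fwlp.(\CALL\ \proc_i(\lst{y}_i)).\calX_i$ and $Q_i \triangleq \calX_i$, and must show $\big\{\pcor{P_i}{\CALL\ \proc_i(\lst{y}_i)}{Q_i}\big\}_i \reVdash{\qBE} \bigwedge_i \pcor{P_i}{S_i}{Q_i}$. This inner derivation unfolds $S_i$ compositionally using $\qBE$; whenever an inner call $\CALL\ \proc_j(\lst{a})$ is encountered, one uses the available hypothesis triple $\pcor{P_j}{\CALL\ \proc_j(\lst{y}_j)}{Q_j}$, adapting it via (R Adap) with substitution $[\lst{a}/\lst{y}_j]$ and via (R Subst) to instantiate the quantum predicate variable $\calX_j$ to whatever $\PQPT$ the context demands — this is precisely where the parameters of $\PQPT$s are essential (cf. Exms. \ref{exam_counter_par}, \ref{exam_counter_tot}), since the postcondition of the inner call differs from that of the outer call. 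That the bookkeeping closes correctly rests on the identity $\fwlp.S_i.\calX_i = \fwlp.S_i[\ldots, \SKIP;\Omega_j,\ldots].\calX_i$ evaluated with $\Omega_j$ interpreted as $\CALL\ \proc_j$ together with the semantic equation for $\fwlp$ in Tab. \ref{wp_aux_fac}, i.e. that the $\qBE$-derivation built this way has precondition exactly $\fwlp.S_i.\calX_i$, which equals $\fwlp.(\CALL\ \proc_i(\lst{y}_i)).\calX_i$ by Thm. \ref{app_thm_exp}(b) and the fixed-point property. Second, given an arbitrary valid $\models_\mathbb{I}\pcor{P}{S}{Q}$ for a full program $S = D::S_0$ — so $\models_\mathbb{I} P \sqsubseteq \fwlp.S_0.Q$ — I would peel off the outermost structure of $S_0$ with $\qBE$-rules down to its constituent calls, discharge each call-hypothesis using the most-general triples just derived (again via (R Adap) and (R Subst)), and finally apply (R Order) with the true L\"owner ordering formula $P \sqsubseteq \fwlp.S_0.Q$ supplied by $\trueOrder$.

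The main obstacle I anticipate is the completeness direction's inner $\qBE$-derivation for the most-general formula: making rigorous that unfolding $S_i$ under $\qBE$, with every inner $\CALL$ discharged by the appropriate adapted-and-substituted hypothesis, yields a derivable triple whose precondition is \emph{exactly} $\fwlp.(\CALL\ \proc_i(\lst{y}_i)).\calX_i$ and not merely something weaker. This requires pairing the syntactic unfolding of $\fwlp$ (Tab. \ref{app_tab_wp_wlp}) against the structure of $S_i$ step by step, and crucially using that (R Subst) can realize arbitrary instantiation of the predicate variables $\calX_j$ carried by the hypotheses — the whole point of the two counterexamples is that without (R Subst) this fails. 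A secondary subtlety is handling mutual recursion and parameter passing together: the simultaneous form of (Rp pRec) and the interaction of (R Adap)'s variable substitution with (R Subst)'s predicate-variable substitution must be shown to commute appropriately, which is where I'd lean on the fact that actual parameters $\lst{p}$ and quantum predicate variables $\calX$ live in disjoint syntactic categories so their substitutions do not interfere. Everything else — the reductions, the appeals to Lems. \ref{lem_sound_complete_EqPL}, \ref{lem_well_def_den}, \ref{lem_wp_wlp} and Thm. \ref{app_thm_exp} — is routine once this core lemma about most-general formulas is in place.
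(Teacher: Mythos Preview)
Your completeness argument is essentially the paper's: both derive the most-general triples $\pcor{\fwlp.(\CALL\ \proc_i(\lst{y}_i)).\calX_i}{\CALL\ \proc_i(\lst{y}_i)}{\calX_i}$ by a single application of (Rp~pRec), discharging each inner call via (R~Adap) and (R~Subst) applied to the hypotheses, and then obtain an arbitrary valid triple by (R~Order).

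Your soundness argument, however, has a genuine gap. You write: ``assume $\models_\mathbb{I}\pcor{P_i}{\CALL\ \proc_i(\lst{y}_i)}{Q_i}$ for all $i$; then by soundness of $\qBE$ applied to the premise we get $\models_\mathbb{I}\pcor{P_i}{S_i}{Q_i}$.'' But the assumption is exactly the conclusion you are trying to establish, so the step is circular. Worse, even granting the assumption, Lem.~\ref{lem_sound_complete_EqPL} concerns soundness of $\qBE$ with respect to the \emph{actual} semantics of $\CALL\ \proc_j$; under that semantics $\fwlp.S_i.Q_i = \fwlp.(\CALL\ \proc_i(\lst{y}_i)).Q_i$ by the fixed-point property, so your implication collapses to a tautology and no co-induction has taken place. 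To make a Knaster--Tarski argument work you would need a \emph{parametric} semantics in which inner calls are interpreted by an arbitrary tuple $(\calE_j)_j$ of quantum operations, plus a parametric soundness lemma for $\qBE$ with hypotheses: from $P_j \sqsubseteq \WLP(\calE_j,Q_j)$ for all $j$, the premise derivation yields $P_i \sqsubseteq \WLP(\sem{S_i'}[\omega_j \mapsto \calE_j],Q_i)$. Only then is $(P_i)_i$ exhibited as a post-fixed point of the predicate-transformer functional, and only then does Park's lemma apply. You have not set up this infrastructure, and ``soundness of $\qBE$'' as stated in Lem.~\ref{lem_sound_complete_EqPL} does not supply it.

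The paper avoids all of this by a syntactic-approximation argument: it proves by induction on $j$ that $\trueOrder \reVdash{\qBE} \pcor{P_i}{S_i^{(j)}}{Q_i}$ \emph{unconditionally} for every $j\ge 0$. The base case is $(S_i^{(0)}=\bottom)$ together with (A~Bot), (R~Order) and $P_i \sqsubseteq I$; the inductive step replays the premise derivation with each hypothesis $\pcor{P_l}{\CALL\ \proc_l(\lst{y}_l)}{Q_l}$ replaced by the already-derived triple for $\SKIP; S_l^{(j)}$. Since every $S_i^{(j)}$ lies in $\EqPL$, ordinary soundness of $\qBE$ gives $\models_\mathbb{I}\pcor{P_i}{S_i^{(j)}}{Q_i}$ for all $j$, and the definition $\fwlp.(\CALL\ \proc_i(\lst{a}_i)).Q_i = \bigsqcap_{j} \fwlp.S_i^{(j)}[\lst{a}_i/\lst{y}_i].Q_i$ finishes.
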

\begin{proof}
  The proof is divided into two parts: one for $\implies$ and the other for $\impliedby$.

  {\bf ($\implies$).}
The soundness of $\qPP$ can be reduced to that of $\qBE$
by simulating recursive procedures with their syntactic approximations.
To see this, suppose that
$$\pcor{P_k}{ \CALL\ \proc_k(\lst{y}_k)}{Q_k}$$
is deduced by using (Rp pRec).
Then we have to prove that
\begin{eqnarray}\label{eq_sound_qPP_5}
  &\models_{\mathbb{I}}& \pcor{P_k}{\CALL\ \proc_k(\lst{y}_k)}{Q_k}
\end{eqnarray}
By the supposition,
there are parameterized procedures $\proc_i(\lst{y}_i)$ with bodies $S_i$, $1 \leq i \leq n$ and $i \neq k$
(Here it is required that $1 \leq k \leq n$)
and a set of $\PQPT$s $\{P_i, Q_i\}_{1 \leq i \leq n}^{i \neq k}$ s.t.
\begin{eqnarray}\label{eq_sound_qPP_4}
  \trueOrder, \Big\{ \pcor{P_i}{\CALL\ \proc_i(\lst{y}_i)}{Q_i} \Big\}_{1 \leq i \leq n}
  &\reVdash{\qBE}& \bigwedge_{1 \leq i \leq n}\pcor{P_i}{S_i}{Q_i}
\end{eqnarray}

\begin{claim}\label{claim_sound_qPP}
  For every $j\geq 0$, it is the case that
  \begin{eqnarray}\label{eq_sound_qPP_7}
    \trueOrder &\reVdash{\qBE}& \bigwedge_{1 \leq i \leq n} \pcor{P_i}{S_i^{(j)}}{Q_i}
  \end{eqnarray}
\end{claim}
\paragraph{Proof of Claim \ref{claim_sound_qPP}.}
  By induction on $j$.

\noindent {\bf (Basis).} By (A Bot) and (R Order), together with the fact that $(P_i \sqsubseteq I) \in \trueOrder$.

\noindent {\bf (Induction).}
    Recalling the definition of $S_i^{(j+1)}$, i.e.
    \begin{eqnarray*}
      S_i^{(j+1)} &\triangleq& S_i \big[ \ldots, \big( \SKIP; S_l^{(j)}[\lst{a}_{l_m}/\lst{y}_l] \big) \big/ \CALL\ \proc_l(\lst{a}_{l_m}), \ldots \big]
    \end{eqnarray*}
    the inductive step can be done by simulating the proof of Ass. (\ref{eq_sound_qPP_4})
    with $\SKIP; S_l^{(j)}$ \big(resp. $\SKIP; S_l^{(j)}[\lst{a}_{l_m}/\lst{y}_l]$\big) in place of $\CALL\ \proc_l(\lst{y}_l)$ \big(resp. $\CALL\ \proc_l(\lst{a}_{l_m})$\big).

By Def. \ref{def_cor_qpr}, together with definition of $\fwlp$,
the proof of Ass. (\ref{eq_sound_qPP_5}) is reduced to proving
\begin{eqnarray*}
  &\models_{\mathbb{I}}& \pcor{P_k}{S_k^{(j)}}{Q_k}
\end{eqnarray*}
for all $j\geq 0$,
following from soundness of $\qBE$ (cf. Lem. \ref{lem_sound_complete_EqPL}) and Claim \ref{claim_sound_qPP}.

{\bf ($\impliedby$).}
By (R Order), together with Def. \ref{def_cor_qpr}, it suffices to show that
\begin{eqnarray*}
  \trueEquality &\reVdash{\qPP}& \pcor{\fwlp.S.Q}{S}{Q}
\end{eqnarray*}
for any $S \in \EqRP$ and any $\PQPT$s $Q$.
In the following, we only consider the case of parameterized activation,
i.e. $S \equiv \CALL\ \proc_k(\lst{a}_k)$ (Cf. Lem. \ref{lem_sound_complete_EqPL} for other cases).

By (R Adap), together with (R Subst), it suffices to show that
\begin{eqnarray*}
   \trueEquality &\reVdash{\qPP}& \apcor{\big}{\fwlp.\big( \CALL\ \proc_k(\lst{y}_k) \big).\calX_k}{\CALL\ \proc_k(\lst{y}_k)}{\calX_k}
\end{eqnarray*}
where $\calX_k$ is a quantum predicate variable covering program variables involved.
By (Rp pRec), it suffices to show that
\begin{equation}\label{eq_complete_qPP}
  \trueEquality, \bigg\{ \avpcor{\big}{\fwlp.\big(\CALL\ \proc_i(\lst{y}_i)\big).\calX_i}{\CALL\ \proc_i(\lst{y}_i)}{\calX_i} \bigg\}_{1 \leq i \leq n} \reVdash{\qBE} \bigwedge_{1 \leq i \leq n} \avpcor{\big}{\fwlp.\big(\CALL\ \proc_i(\lst{y}_i)\big).\calX_i}{S_i}{\calX_i}
\end{equation}
where $\proc_i(\lst{y}_i)$ with $1 \leq i \leq n$ and $i \neq k$ are parameterized procedures with bodies $S_i$,
(Here it is required that $1 \leq k \leq n$)
and $\{\calX_i\}_{1 \leq i \leq n}^{i \neq k}$ is a set of quantum predicate variables.
Observe that every Hoare's triple $\pcor{P'}{S'}{Q'}$ in the proof of Ass. (\ref{eq_complete_qPP})
should satisfy the condition that $ \models_\mathbb{I} P' = \fwlp.S'.Q'$:
for those $S' \equiv \CALL\ \proc_i(\lst{p}_i)$,
Hoare's triple $\pcor{P'}{S'}{Q'}$ can be deduced from
$\pcor{\fwlp.\big(\CALL\ \proc_i(\lst{y}_i)\big).\calX_i}{\CALL\ \proc_i(\lst{y}_i)}{\calX_i}$
by using (R Adapt) and (R Subst); for other cases of $S'$, resort to Lem. \ref{lem_sound_complete_EqPL}.
Note that the case analysis entails that such a proof indeed exists.
This completes the proof of the theorem.
\end{proof}

\begin{theorem}[Compact Soundness and Completeness of $\qPP$]\label{thm_cmpt_sound_complete_qPP}
$\qPP$ is compactly complete but not compactly sound.
That is, for any quantum program $S \in \EqRP$ and any $\PQPT$s $P,Q$, we have that
\begin{equation}\label{ass_cmpt_sound_complete_qPP_1}
  \trueEquality \reVdash{\qPP} \pcor{P}{ S }{Q} \impliedby \models_\mathbb{I} P = \fwlp.S.Q
\end{equation}
However, there are quantum program $S \in \EqRP$ and $\PQPT$s $P,Q$ s.t.
\begin{equation}\label{ass_cmpt_sound_complete_qPP_2}
  \trueEquality \reVdash{\qPP} \pcor{P}{ S }{Q} \mbox{ but } \models_\mathbb{I} P \sqsubset \fwlp.S.Q
\end{equation}
\end{theorem}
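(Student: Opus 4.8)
The plan is to treat the two assertions of the theorem separately. For compact completeness, i.e.\ Ass.~(\ref{ass_cmpt_sound_complete_qPP_1}), I would refine the ``$\impliedby$'' direction of the proof of Thm.~\ref{thm_sound_complete_qPP}: that argument already produces, for every $S\in\EqRP$ and every $\PQPT$ $Q$, a $\qPP$-derivation of $\pcor{\fwlp.S.Q}{S}{Q}$ from $\trueEquality$, so the remaining work is to check that the derivation only ever appeals to \emph{equalities} from $\trueEquality$ and that the compact half of Lem.~\ref{lem_sound_complete_EqPL} (Ass.~(\ref{ass_cmpt_sound_complete_EqPL})) can be plugged in wherever its general half was used. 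For the failure of compact soundness, i.e.\ Ass.~(\ref{ass_cmpt_sound_complete_qPP_2}), the plan is to make precise the example outlined in Rem.~\ref{rem_exam_counter_prob}.

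\emph{Compact completeness.} Suppose $\models_\mathbb{I} P = \fwlp.S.Q$. Using the equality-restricted form of (R Order) with the antecedent $P = \fwlp.S.Q$ supplied by $\trueEquality$, together with Def.~\ref{def_cor_qpr}, it suffices to derive $\trueEquality \reVdash{\qPP} \pcor{\fwlp.S.Q}{S}{Q}$. For the constructs of $\EqPL$ this is exactly the compact half of Lem.~\ref{lem_sound_complete_EqPL}, so the only genuinely new case is $S\equiv\CALL\ \proc_k(\lst{a}_k)$. Here I would reduce, by (R Adap) and (R Subst), to the most general correctness formula $\apcor{\big}{\fwlp.(\CALL\ \proc_k(\lst{y}_k)).\calX_k}{\CALL\ \proc_k(\lst{y}_k)}{\calX_k}$, then apply (Rp pRec), arriving at precisely the proof obligation~(\ref{eq_complete_qPP}) used in the proof of Thm.~\ref{thm_sound_complete_qPP}. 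The invariant to maintain is that every Hoare's triple $\pcor{P'}{S'}{Q'}$ occurring in that derivation satisfies $\models_\mathbb{I} P' = \fwlp.S'.Q'$: when $S'$ is a parameterized activation this follows because the triple is adapted from the most general formula via (R Adap) and (R Subst), which preserve the ``compact'' form by Thm.~\ref{thm_exp}; for every other $S'$ it follows from compact completeness of $\qBE$. Since then only equality L\"{o}wner formulas are ever needed as antecedents of (R Order), $\trueEquality$ suffices, which is~(\ref{ass_cmpt_sound_complete_qPP_1}).

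\emph{Failure of compact soundness.} I would take $S\triangleq\CALL\ P_{\bottom}$ with $\recDec{P_{\bottom}}{\CALL\ P_{\bottom}}$ (cf.\ Exm.~\ref{exam_bottom}), and any $\PQPT$s $P,Q$ with $\models_\mathbb{I} P\sqsubset I$. Because each finite unrolling of $P_{\bottom}$ reduces to $\bottom$, we get $\sem{\CALL\ P_{\bottom}} = 0\diamond 0$, and hence, by Tab.~\ref{tab_wp_wlp} (or Thm.~\ref{thm_exp}(b)), $\fwlp.(\CALL\ P_{\bottom}).Q = \bigsqcap_{n=0}^\infty \fwlp.S^{(n)}.Q = I$, so $\models_\mathbb{I} P\sqsubset\fwlp.S.Q$. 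On the other hand, since the body of $P_{\bottom}$ is literally $\CALL\ P_{\bottom}$, the premise of (Rp pRec) is the trivially valid statement $\pcor{P}{\CALL\ P_{\bottom}}{Q}\reVdash{\qBE}\pcor{P}{\CALL\ P_{\bottom}}{Q}$, so (Rp pRec) yields $\trueEquality\reVdash{\qPP}\pcor{P}{\CALL\ P_{\bottom}}{Q}$ with no ordering antecedents used at all. This is precisely~(\ref{ass_cmpt_sound_complete_qPP_2}).

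The hard part will be the bookkeeping in the completeness half: one must verify that in the derivation witnessing~(\ref{eq_complete_qPP}) every use of (R Order) can be kept in its equality-only shape, and that the interplay of (R Adap), (R Subst), (Rp pRec) and the mutual-recursion index $k$ never forces a strict L\"{o}wner inequality anywhere; Thm.~\ref{thm_exp} and the compact half of Lem.~\ref{lem_sound_complete_EqPL} are exactly what make this go through. By contrast, the failure-of-soundness half is essentially immediate once the semantics of $\CALL\ P_{\bottom}$ and the shape of (Rp pRec) are unwound.
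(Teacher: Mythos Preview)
Your proposal is correct and follows essentially the same approach as the paper's own proof: the completeness half reuses the $\impliedby$ direction of Thm.~\ref{thm_sound_complete_qPP} (which already works over $\trueEquality$ via the most-general-formula argument and the compact half of Lem.~\ref{lem_sound_complete_EqPL}), and the failure of compact soundness is witnessed by the very same $\CALL\ P_{\bottom}$ example from Rem.~\ref{rem_exam_counter_prob}. Your write-up is in fact more explicit about the invariant maintained along the derivation of~(\ref{eq_complete_qPP}), but the underlying argument is identical.
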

\begin{proof}
  Note that the proof of Ass. (\ref{ass_cmpt_sound_complete_qPP_1})
  can be adapted from that of Thm. \ref{thm_sound_complete_qPP} ($\impliedby$).
  To prove Ass. (\ref{ass_cmpt_sound_complete_qPP_2}),
  let $\CALL\ P_{\bottom}$ be as defined in Exam. \ref{exam_bottom},
  and $P,Q$ $\PQPT$s with $\models_\mathbb{I} P \sqsubset I$.
  It's trivial that
  \begin{eqnarray*}
    \pcor{P}{ \CALL\ P_{\bottom} }{Q} &\reVdash{\qBE}& \pcor{P}{ \CALL\ P_{\bottom} }{Q}
  \end{eqnarray*}
  By (Rp pRec), it follows that
  \begin{eqnarray*}
    \trueEquality &\reVdash{\qPP}& \pcor{P}{ \CALL\ P_{\bottom} }{Q}
  \end{eqnarray*}
  However, by definition of $\fwlp$ (cf. Tab. \ref{app_tab_wp_wlp}), we have that
  \begin{eqnarray*}
     &\models_\mathbb{I}& P \sqsubset I = \fwlp.(\CALL\ P_{\bottom}).Q
  \end{eqnarray*}
  This completes the proof.
\end{proof}

\subsection{Soundness and completeness of $\qTP$}\label{app_sound_complete_qTP}

\begin{lemma}[Soundness and completeness of $\qBE$]\label{lem_sound_complete_EqPL_tot}
For any $S\in \EqPL$ and any $\PQPT$s $P,Q$, we have that
\begin{equation}\label{ass_sound_complete_EqPL_tot}
  \trueOrder \vdash_{\qBE} \tcor{P}{S}{Q} \iff \models_\mathbb{I} P \sqsubseteq \fwp.S.Q
\end{equation}
in particular,
\begin{equation}\label{ass_cmpt_sound_complete_EqPL_tot}
  \trueEquality \vdash_{\qBE} \tcor{P}{S}{Q} \iff \models_\mathbb{I} P = \fwp.S.Q
\end{equation}
\end{lemma}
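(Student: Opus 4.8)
The plan is to mirror, essentially verbatim, the proof of Lemma~\ref{lem_sound_complete_EqPL}, replacing throughout the liberal precondition $\fwlp$ by $\fwp$, partial-correctness triples $\pcor{\cdot}{\cdot}{\cdot}$ by total-correctness triples $\tcor{\cdot}{\cdot}{\cdot}$, and invoking Thm.~\ref{thm_exp}(a) (i.e. $\models_\mathbb{I}\fwp.S.P = \sem{S}^*(P)$) wherever the partial-correctness argument used Thm.~\ref{thm_exp}(b). As there, it suffices to show that every axiom of $\qBE$ has the canonical form $\tcor{\fwp.S.P}{S}{P}$ and that every inference rule of $\qBE$ other than (R Order) preserves this form \emph{bidirectionally}: the ``only if'' direction then yields (compact) soundness and the ``if'' direction (compact) completeness. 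The gap between Ass.~(\ref{ass_sound_complete_EqPL_tot}) and Ass.~(\ref{ass_cmpt_sound_complete_EqPL_tot}) is exactly the gap in the admissible shape of (R Order): over $\trueOrder$ it may be applied in its general form with antecedents $P\sqsubseteq P'$ and $Q'\sqsubseteq Q$, while over $\trueEquality$ only the restricted form with $P = P'$ and $Q' = Q$ is available — and this is precisely what turns the inequality $\models_\mathbb{I} P\sqsubseteq\fwp.S.Q$ into the equality $\models_\mathbb{I} P = \fwp.S.Q$.

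For the axioms and the rules (A Skip), (A Init), (A Unit), (R Comp), (R Case), (R Subst), (R Loc) and (R Adap) the verification is a word-for-word transcription of the one for $\qBE$ in the partial-correctness case. For instance, (R Loc) rests on the computation, read off from Tab.~\ref{app_tab_wp_wlp},
$$
\fwp.\big(\lst{r}\assnequal\ket{0}; S[\lst{r}/\lst{q}]\big).(Q\otimes I_{\lst{r}})
= \bra{0}_{\lst{r}}\big(\fwp.S[\lst{r}/\lst{q}].(Q\otimes I_{\lst{r}})\big)\ket{0}_{\lst{r}}\otimes I_{\lst{r}}
= \big(\fwp.(\LOCAL{\lst{q}}; S; \RELEASE{\lst{q}}).Q\big)\otimes I_{\lst{r}},
$$
together with the convention $P\otimes I_{\lst{r}} = P$; and (R Subst), (R Adap) are handled directly by Thm.~\ref{thm_exp}. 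The only genuinely new clause is (A Bot): in contrast with the partial-correctness system, where the axiom is $\pcor{I}{\bottom}{P}$ matching $\fwlp.\bottom.P = I$, the total-correctness axiom is $\tcor{0}{\bottom}{P}$, which matches $\fwp.\bottom.P = 0$ (Tab.~\ref{app_tab_wp_wlp}); it therefore still has the required form $\tcor{\fwp.\bottom.P}{\bottom}{P}$ — in fact trivially, since $0$ is the $\sqsubseteq$-least $\PQPT$, so any weaker precondition is already absorbed by (R Order).

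I do not expect a serious obstacle; the one point requiring care is the same as in the partial case, namely (R Case). To see it preserves the form $\tcor{\fwp.S.P}{S}{P}$ in the forward direction one uses that $P\sqsubseteq\sum_m M_m^\dag P_m M_m$ whenever $P_m\sqsubseteq\fwp.S_m.Q$ for each $m$ (soundness of the weakening step discussed in ``Intuition behind $\qBase$''); to see it preserves the form in the backward direction one must \emph{choose} $P_m \triangleq \fwp.S_m.Q$, after which $\sum_m M_m^\dag (\fwp.S_m.Q) M_m = \fwp.\IF.Q$ by the defining clause of $\fwp$ for the case statement (Tab.~\ref{app_tab_wp_wlp}). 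With this clause in place, every Hoare's triple appearing in a $\qBE$-derivation of $\tcor{P}{S}{Q}$ can be arranged to satisfy $\models_\mathbb{I} P = \fwp.S.Q$ (in the compact setting) or $\models_\mathbb{I} P\sqsubseteq\fwp.S.Q$ (in the general setting), and the lemma follows.
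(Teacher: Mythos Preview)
Your proposal is correct and follows exactly the approach the paper takes: the paper's own proof consists of the single line ``Adapted from the proof of Lem.~\ref{lem_sound_complete_EqPL} with $\fwp$ in place of $\fwlp$,'' and you have simply spelled out that adaptation in detail, correctly noting in particular that the only genuinely new clause is (A~Bot), where $\tcor{0}{\bottom}{P}$ matches $\fwp.\bottom.P = 0$.
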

\begin{proof}
  Adapted from the proof of Lem. \ref{lem_sound_complete_EqPL} with $\fwp$ in place of $\fwlp$.
\end{proof}

\begin{theorem}[Soundness and completeness of $\qTP$]\label{thm_sound_complete_EqRP}
  For any quantum program $S\in \EqRP$ and any $\PQPT$s $P,Q$, we have that
  \begin{equation}\label{ass_sound_complete_qTP}
    \trueOrder \reVdash{\qTP} \tcor{P}{S}{Q} \iff \models_\mathbb{I} P \sqsubseteq \fwp.S.Q
  \end{equation}
  in particular,
  \begin{equation}\label{ass_cmpt_sound_complete_qTP}
    \trueEquality \vdash_{\qTP} \tcor{P}{S}{Q} \iff \models_\mathbb{I} P = \fwp.S.Q
  \end{equation}
\end{theorem}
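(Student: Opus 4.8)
The plan is to mirror the proof of Theorem~\ref{thm_sound_complete_qPP}, replacing $\fwlp$ by $\fwp$ and the rule (Rp pRec) by (Rt pRec), and to bottom out everything in the soundness and completeness of $\qBE$ for $\EqPL$ recorded in Lemma~\ref{lem_sound_complete_EqPL_tot}. The standing tools are: Theorem~\ref{app_thm_exp}(a), which identifies $\fwp.S.P$ with $\sem{S}^{*}(P)$; Definition~\ref{def_cor_qpr}, which turns $\models_\mathbb{I}\tcor{P}{S}{Q}$ into $\models_\mathbb{I}P\sqsubseteq\fwp.S.Q$; Lemma~\ref{lem_wp_wlp}(a), which makes $\{\fwp.S_i^{(j)}.Q_i\}_{j\ge 0}$ a legitimate $\sqsubseteq$-increasing sequence of $\PQPT$s with supremum $\fwp.(\CALL\ \proc_i(\lst{y}_i)).Q_i$ (Table~\ref{app_tab_wp_wlp}); and the evident commutation of $\fwp$ with the program-variable substitutions $[\lst{p}/\lst{q}]$ of (R Adap) and the predicate-variable substitutions $[R/\calX]$ of (R Subst) (a routine structural induction over the clauses of Table~\ref{app_tab_wp_wlp}).

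For soundness ($\Rightarrow$) I would induct on the $\qTP$-derivation; for every rule of $\qBE$ the step is exactly as in Lemma~\ref{lem_sound_complete_EqPL_tot}, so the only new case is a conclusion $\bigwedge_i\tcor{P_i}{\CALL\ \proc_i(\lst{y}_i)}{Q_i}$ obtained from (Rt pRec) with data $\{P_i^j\}_{j\ge 0}^{\sqsubseteq}$, $P_i^0=0$, $P_i\sqsubseteq\bigsqcup_j P_i^j$, and a hypothetical $\qBE$-derivation of $\bigwedge_i\tcor{P_i^{j+1}}{S_i}{Q_i}$ from $\{\tcor{P_i^j}{\CALL\ \proc_i(\lst{y}_i)}{Q_i}\}_i$ for each $j$. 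First I would show by induction on $j$ that $\trueOrder\reVdash{\qBE}\bigwedge_i\tcor{P_i^j}{S_i^{(j)}}{Q_i}$: the base case is (A Bot) since $S_i^{(0)}=\bottom$ and $P_i^0=0$; the step re-runs the hypothetical derivation with $\SKIP;S_l^{(j)}[\lst{p}/\lst{y}]$ substituted for each inner $\CALL\ \proc_l(\lst{p})$, each such hypothesis-instance being supplied by applying (R Subst), (R Adap) to the induction hypothesis $\trueOrder\reVdash{\qBE}\tcor{P_l^j}{S_l^{(j)}}{Q_l}$ and then (A Skip), (R Comp) to prepend $\SKIP$. Soundness of $\qBE$ then gives $\models_\mathbb{I}P_i^j\sqsubseteq\fwp.S_i^{(j)}.Q_i$ for all $j$; passing to the supremum and using $P_i\sqsubseteq\bigsqcup_j P_i^j$ yields $\models_\mathbb{I}P_i\sqsubseteq\fwp.(\CALL\ \proc_i(\lst{y}_i)).Q_i$, i.e.\ $\models_\mathbb{I}\tcor{P_i}{\CALL\ \proc_i(\lst{y}_i)}{Q_i}$ by Definition~\ref{def_cor_qpr}.

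For completeness ($\Leftarrow$), by (R Order) and Definition~\ref{def_cor_qpr} it suffices to derive $\tcor{\fwp.S.Q}{S}{Q}$ for every $S\in\EqRP$ and every $\PQPT$ $Q$. For non-call constructs this is Lemma~\ref{lem_sound_complete_EqPL_tot}; for $S\equiv\CALL\ \proc_k(\lst{a}_k)$, (R Adap) followed by (R Subst) reduces the goal to the most general triple $\tcor{\fwp.(\CALL\ \proc_k(\lst{y}_k)).\calX_k}{\CALL\ \proc_k(\lst{y}_k)}{\calX_k}$ for a fresh predicate variable $\calX_k$ covering the variables in play. I would then apply (Rt pRec) with $Q_i\triangleq\calX_i$ and $P_i^j\triangleq\fwp.S_i^{(j)}.\calX_i$ (so $P_i^0=0$, the sequence is $\sqsubseteq$-increasing by Lemma~\ref{lem_wp_wlp}(a), and its supremum is exactly $\fwp.(\CALL\ \proc_i(\lst{y}_i)).\calX_i$), leaving the hypothetical $\qBE$-derivation, for each $j$, of $\bigwedge_i\tcor{\fwp.S_i^{(j+1)}.\calX_i}{S_i}{\calX_i}$ from $\{\tcor{\fwp.S_l^{(j)}.\calX_l}{\CALL\ \proc_l(\lst{y}_l)}{\calX_l}\}_l$. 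This is the canonical derivation that pushes $\calX_i$ backwards through $S_i$ by the $\qBE$-axioms; at an inner occurrence $\CALL\ \proc_l(\lst{p}_l)$ carrying postcondition $R$, commutation of $\fwp$ with the substitutions gives $\fwp.(\SKIP;S_l^{(j)}[\lst{p}_l/\lst{y}_l]).R=(\fwp.S_l^{(j)}.\calX_l)[\lst{p}_l/\lst{y}_l][R/\calX_l']$, which is precisely what (R Adap) then (R Subst) extract from the hypothesis, and every other triple in the derivation has the shape $\tcor{\fwp.S'.Q'}{S'}{Q'}$, so the required derivation exists by the case analysis already underlying Lemma~\ref{lem_sound_complete_EqPL_tot}.

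For the ``in particular'' compact statement, both directions are rerun with $\trueOrder$ replaced by $\trueEquality$ and every L\"owner ordering formula used — in (R Order) and, crucially, as the side condition of (Rt pRec) — available only in equality form. In the completeness argument every triple in the canonical derivations is already in the compact shape $\tcor{\fwp.S'.Q'}{S'}{Q'}$, so only equalities are consumed; in the soundness argument one checks that each $\qBE$-rule other than (R Order) preserves this shape in both directions and reruns the $j$-induction in compact form, so that the hypothetical derivation now forces $P_i^j=\fwp.S_i^{(j)}.Q_i$, while the side condition, now an equality $P_i=\bigsqcup_j P_i^j$, forces $P_i=\fwp.(\CALL\ \proc_i(\lst{y}_i)).Q_i$ — in sharp contrast with the partial-correctness case, where the absence of any such termination bookkeeping makes compact soundness fail (Theorem~\ref{thm_cmpt_sound_complete_qPP}). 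I expect the main obstacle to lie in the completeness step: making precise the assertion that the canonical $\qBE$-derivation actually exists, i.e.\ that pushing $\fwp$ backwards through $S_i$ with the inner calls pinned to the hypotheses never leaves the fragment of Hoare's triples that $\qBE$ can derive, together with the bookkeeping of the program- and predicate-variable substitutions that reconcile the pinned preconditions with the hypotheses via (R Adap) and (R Subst).
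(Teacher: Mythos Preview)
Your proposal is correct and follows exactly the route the paper intends: the paper's own proof of Theorem~\ref{thm_sound_complete_EqRP} is the single line ``Similar to the proof of Thm.~\ref{thm_sound_complete_qPP}'', and you have carried out precisely that adaptation, replacing $\fwlp$ by $\fwp$, (Rp pRec) by (Rt pRec), and the base case $P_i\sqsubseteq I$ by $P_i^0=0$, while invoking Lemma~\ref{lem_sound_complete_EqPL_tot} in place of Lemma~\ref{lem_sound_complete_EqPL}. Your extra elaboration of the compact case and of the $j$-induction in the soundness direction goes beyond what the paper spells out but is consistent with its intent.
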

\begin{proof}
Similar to the proof of Thm. \ref{thm_sound_complete_qPP}.
\end{proof}

\begin{theorem}[Reasoning about $\EqRP$ with probabilities]\label{thm_reas_EqPR_with_prob}
  For any quantum program $S\in \EqRP$, any $\PQPT$s $P,Q$ and any $\delta\in [0, 1]$, it is the case that
  \begin{equation}\label{ass_reas_EqPR_approx_prob}
    \trueOrder \vdash_{\qTP} \tcor{\delta P}{S}{Q} \mbox{ if and only if } \forall \rho.\ \tr(P\rho) = 1 \implies \tr \big( Q\sem{S}(\rho) \big) \geq \delta
  \end{equation}
  in particular,
  \begin{equation}\label{ass_reas_EqPR_exact_prob}
    \trueEquality \vdash_{\qTP} \tcor{\delta P}{S}{Q} \mbox{ if and only if } \forall \rho.\ \tr(P\rho) = 1 \implies \tr \big( Q\sem{S}(\rho) \big) = \delta
  \end{equation}
\end{theorem}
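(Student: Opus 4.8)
The plan is to obtain both biconditionals in Theorem~\ref{thm_reas_EqPR_with_prob} by composing three results already established — soundness and completeness of $\qTP$ (general and compact), the quantum expressiveness theorem, and the trace characterisation of the Löwner order — and then carrying out one elementary reformulation of the resulting universally quantified inequality. Throughout I read the trace conditions under a fixed assignment, so that $P$ and $Q$ may be taken to be (parameter-free) $\QPRED$s exactly as in Thms.~\ref{thm_reas_approx_prob} and~\ref{thm_reas_exact_prob}.

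First I would apply Theorem~\ref{thm_sound_complete_EqRP} with the precondition $\delta P$ in place of $P$. Its general form yields $\trueOrder\vdash_{\qTP}\tcor{\delta P}{S}{Q}\iff\models_\mathbb{I}\delta P\sqsubseteq\fwp.S.Q$, and its compact form yields $\trueEquality\vdash_{\qTP}\tcor{\delta P}{S}{Q}\iff\models_\mathbb{I}\delta P=\fwp.S.Q$; here I use that for $\delta\in[0,1]$ the scaling $\delta P$ is again a legitimate $\PQPT$ with the same parameter set as $P$ (Lem.~\ref{lem_closure_of_pqpt}), so both Hoare's triples are legitimate. Next, by the quantum expressiveness theorem (Thm.~\ref{app_thm_exp}(a)), $\models_\mathbb{I}\fwp.S.Q=\sem{S}^*(Q)$, so the right-hand sides become $\delta P\sqsubseteq\sem{S}^*(Q)$ and $\delta P=\sem{S}^*(Q)$. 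Then, using Lemma~\ref{lem_lowner_compr} together with the defining identity $\tr(\sem{S}^*(Q)\rho)=\tr(Q\sem{S}(\rho))$ of the Schrödinger–Heisenberg dual, I obtain
\[
  \models_\mathbb{I}\delta P\sqsubseteq\sem{S}^*(Q)\iff\forall\rho.\ \delta\tr(P\rho)\le\tr\big(Q\sem{S}(\rho)\big),
\]
and, applying Lemma~\ref{lem_lowner_compr} in both directions, $\models_\mathbb{I}\delta P=\sem{S}^*(Q)\iff\forall\rho.\ \delta\tr(P\rho)=\tr(Q\sem{S}(\rho))$.

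The final step is to reformulate ``$\forall\rho.\ \delta\tr(P\rho)\le\tr(Q\sem{S}(\rho))$'' as ``$\forall\rho.\ \tr(P\rho)=1\implies\tr(Q\sem{S}(\rho))\ge\delta$'' (and the analogous statement with $=$ in place of $\le$ and $\ge$). The forward direction is immediate: restrict the quantified $\rho$ to the pinned class and use $\delta\tr(P\rho)=\delta$. For the converse, the key observations are that since $P$ is a predicate one has $\tr(P\rho)\le\tr(\rho)\le 1$, so $\tr(P\rho)=1$ forces $\rho$ to be a density operator, and that $\sem{S}^*(Q)\sqsubseteq\sem{S}^*(I)\sqsubseteq I$ since $\sem{S}$ is trace-non-increasing and $Q\sqsubseteq I$; given an arbitrary $\rho$ I would renormalise it into the pinned class (for $P=I$ — the case $\tcor{\delta I}{S}{I}$ used for termination — simply by dividing by $\tr(\rho)$, and for a projection $P$ by passing through $P\rho P/\tr(P\rho)$) and then recover the desired bound for $\rho$ itself from the inequality $P\,\sem{S}^*(Q)\,P\sqsupseteq\delta P$ so obtained, using $\sem{S}^*(Q)\sqsubseteq I$ to pin $\mathrm{ran}(P)$ into the $1$-eigenspace where $\delta=1$, and the positivity of $Q$ and of $\sem{S}(\cdot)$ to dispose of the component of $\rho$ orthogonal to $\mathrm{ran}(P)$. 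I expect this renormalisation-and-positivity argument to be the main obstacle — it is the only place where the quantitative content beyond plain soundness and completeness enters, and it is where the hypotheses on $P$ (a projection, or $P=I$) are actually used. Once it is in place, specialising $P,Q$ to $\QPRED$s reads off Thms.~\ref{thm_reas_approx_prob} and~\ref{thm_reas_exact_prob} verbatim.
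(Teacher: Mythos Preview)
Your chain of reductions follows the paper's proof exactly (Thm.~\ref{thm_sound_complete_EqRP} $\to$ Thm.~\ref{app_thm_exp} $\to$ Lem.~\ref{lem_lowner_compr} $\to$ Schr\"odinger--Heisenberg duality), and you are right that the only nontrivial content is the final reformulation, which the paper records merely as ``an easy transformation''.

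However, your renormalisation-and-positivity argument for the converse direction breaks for a proper projection $P$ with $\delta<1$ once $\rho$ is confined to $\Dd(\Hh)$. On a single qubit take $S=\SKIP$, $P=\outprod{0}{0}$, $Q=\outprod{+}{+}$, $\delta=\tfrac12$: the only $\rho\in\Dd(\Hh)$ with $\tr(P\rho)=1$ is $\outprod{0}{0}$, and $\tr(Q\rho)=\tfrac12$, so the right side of~(\ref{ass_reas_EqPR_approx_prob}) holds; yet $\outprod{+}{+}-\tfrac12\outprod{0}{0}$ has the negative eigenvalue $\tfrac{1-\sqrt5}{4}$, so $\tfrac12 P\not\sqsubseteq\fwp.\SKIP.Q$ and the left side fails. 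The obstruction is exactly the cross terms $P\rho(I-P)$ your ``positivity'' step cannot control: for $\rho=\outprod{-}{-}$ they drive $\tr\big(\sem{S}^*(Q)\rho\big)$ down to $0$ while $\tr(\delta P\rho)=\tfrac14$. Your two special regimes ($P=I$, where there are no cross terms; $\delta=1$, where $\mathrm{ran}(P)$ is forced into the $1$-eigenspace of $\sem{S}^*(Q)$) are genuinely exceptional. The paper's ``easy transformation'' becomes a one-line rescaling only if on the right-hand side $\rho$ is allowed to range over \emph{all} positive operators rather than $\Dd(\Hh)$: then $\rho/\tr(P\rho)$ remains in the pinned class whenever $\tr(P\rho)>0$, and the residual case $\tr(P\rho)=0$ follows from positivity (for $\geq$) or by perturbing with a pinned state (for $=$). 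So either read the quantifier over all positive $\rho$, or restrict to $P=I$; over $\Dd(\Hh)$ the equivalence as written does not hold for general projections $P$, and your proposed patch cannot close that gap.
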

\begin{proof}

In the following, we only provide the proof for Ass. (\ref{ass_reas_EqPR_approx_prob}).
\big(The proof of Ass. (\ref{ass_reas_EqPR_exact_prob})
can be adapted from the proof of Ass. (\ref{ass_reas_EqPR_approx_prob}) with $\geq$ in place of $=$.\big)

By Thm. \ref{thm_sound_complete_EqRP}, it follows that
\begin{eqnarray*}
  \trueEquality &\vdash_{\qTP}& \tcor{\delta P}{S}{Q}
\end{eqnarray*}
if, and only if,
\begin{eqnarray*}
  &\models_{\mathbb{I}}& \delta P = \fwp.S.Q
\end{eqnarray*}
By Thm. \ref{app_thm_exp}, the last assertion is equivalent to saying that
\begin{eqnarray*}
  &\models_{\mathbb{I}}& \delta P = \sem{S}^*(Q)
\end{eqnarray*}
By Lem. \ref{lem_lowner_compr}, the last assertion is equivalent to saying that
$$
\forall \rho.\ \tr(\delta P \rho) = \tr\big(\sem{S}^*(Q) \rho \big)
$$
By definition of Schr\"{o}dinger-Heisenberg dual, the last assertion is equivalent to saying that
$$
\forall \rho.\ \tr(\delta P \rho) = \tr\big(Q\sem{S}(\rho)\big)
$$
By an easy transformation, the last assertion is equivalent to saying that
$$
\forall \rho.\ \tr(P\rho) = 1 \implies \tr\big( Q\sem{S}(\rho) \big) = \delta
$$
This completes the proof of the theorem.
\end{proof}

\section{Fixed-point Grover's search}\label{app_sec_qSearch}

\subsection{Basic idea of the algorithm}

Let $\ket{s}$ and $\ket{t}$ be the respective starting and target states in a Hilbert space,
where $\ket{s}$ is possibly superposed,
and $\ket{t}$ is a (not necessarily uniform) superposition of all the possible solutions.
The core of the algorithm is to design a search engine
--- a series of unitary operators $\{V_n\}_{n \geq 0}$ (given by an inductive definition)
\begin{equation}\label{def_search_engine}
  \begin{array}{c}
     V_0 \triangleq V, \quad V_{n+1} \triangleq V_n R_s V_n^{\dag} R_t V_n
  \end{array}
\end{equation}
where the $\frac{\pi}{3}$-phase shifts (i.e., unitary operators)
$R_s$ and $R_t$ for $\ket{s}$ and $\ket{t}$ are defined as
\begin{equation}\label{def_R_x}
  \begin{array}{cc}
     R_s \triangleq I - \big(1-\exp(i \frac{\pi}{3})\big)\outprod{s}{s}, & R_t \triangleq I - \big(1-\exp(i \frac{\pi}{3})\big)\outprod{t}{t}
  \end{array}
\end{equation}
such that the resulting state $V_n \ket{s}$ after applying $V_n$ to $\ket{s}$
converges monotonically to $\ket{t}$ as $n$ approaches infinity, that is to say,
\begin{eqnarray*}
  \lim\limits_{n \to \infty }{V_n \ket{s}} &=& \ket{t}
\end{eqnarray*}
As the last step, we fetch information of the solution $\ket{t}$ by a measurement on $V_n \ket{s}$.

Note that $V$ can be selected arbitrarily.
Suppose that $V$ drives the state vector from $s$ to $t$ with a probability of $(1-\epsilon)$, i.e.
\begin{eqnarray*}
  \leng{\bra{t}V\ket{s}}^2 &=& (1-\epsilon)
\end{eqnarray*}
Then it is straightforward but tedious to show that
the resulting state $V_n \ket{s}$ after applying $V_n$ deviates from $t$ with a probability of $\epsilon^{3^n}$, i.e.
\begin{eqnarray*}
  \leng{\bra{t}V_n\ket{s}}^2 &=& (1-\epsilon^{3^n})
\end{eqnarray*}
hence reducing the error probability from $\epsilon$ to $\epsilon^{3^n}$.

\subsection{Quantum programs of the algorithm}

\begin{table}
 \centering
 \begin{minipage}[b]{0.48\textwidth}
  \centering
  \begin{tabular}{rcl}
  \hline
  \multicolumn{3}{c}{$\recDec{\qSearch}{S}$} \\
  \hline
  $S$   & $\triangleq$ & $\ifStat{\Box m\cdot M[q_1] = m \rightarrow S_m}$ \\
  $S_0$ & $\triangleq$ & $q_2 \starequal V$ \\
  $S_1$ & $\triangleq$ & $q_1 \starequal U_{-1};$ \\
      & & $\CALL\ \qSearch;$ \\
      & & $q_2 \starequal R_t;$ \\
      & & $\CALL\ \qSearchDag;$ \\
      & & $q_2 \starequal R_s;$ \\
      & & $\CALL\ \qSearch;$ \\
      & & $q_1 \starequal U_{+1}$ \\
  \hline
\end{tabular}
  \subcaption{$\qSearch$}
  \label{app_qSearch}
  \end{minipage}
  \hfill
  \begin{minipage}[b]{0.48\textwidth}
  \centering
  \begin{tabular}{rcl}
  \hline
  \multicolumn{3}{c}{$\recDec{\qSearchDag}{S'}$}  \\
  \hline
  $S'$   & $\triangleq$ & $\ifStat{\Box m\cdot M[q_1] = m \rightarrow S_m'}$ \\
  $S_0'$ & $\triangleq$ & $q_2 \starequal V^\dag$ \\
  $S_1'$ & $\triangleq$ & $q_1 \starequal U_{-1};$ \\
      & & $\CALL\ \qSearchDag;$     \\
      & & $q_2 \starequal R_s^{\dag};$   \\
      & & $\CALL\ \qSearch;$ \\
      & & $q_2 \starequal R_t^{\dag};$  \\
      & & $\CALL\ \qSearchDag;$   \\
      & & $q_1 \starequal U_{+1}$ \\
  \hline
\end{tabular}
  \subcaption{$\qSearchDag$}
  \label{app_qSearchDag}
  \end{minipage}
  \vspace{-16pt}
\caption{
  $\qSearch$ and $\qSearchDag$ implement the search engine $V_n$ and its adjoint $V_n^{\dag}$.
  }
  \label{app_qSearch_examPrg}
  \vspace{-10pt}
\end{table}

To be precise, we define the involved Hilbert spaces carefully.
Define the search space $\Hh_s$ to be an $N$-dimensional Hilbert space
with orthonormal basis states $\big\{\ket{n}\colon 0 \leq n < N \big\}$,
for encoding a database with solutions represented as $\ket{t}$.
Define the counting space $\Hh_c$ to be a $2^m$-dimensional Hilbert space
with orthonormal basis states $\big\{ \ket{i} \colon 0 \leq i < 2^m \big\}$,
to encode an upper-bounded set of natural numbers.
Note that we use orthonormal basis states of $\Hh_c$
to encode the counter values of the search engine ($m$ should be large enough),
and therefore use a quantum variable of $\Hh_c$
to model the counter instead of a classical counter variable.
We then define $(+i)$-operator $U_{+i}$ of $\Hh_c$ by
$$
\begin{array}{c}
  U_{+i}\colon \ket{x}\rightarrow\ket{(x+i) \mod 2^m},
\end{array}
$$
to model the classical modular $(+i)$-operator,
and similarly for $(-i)$-operator $U_{-i}$.

Now the state space of the search algorithm is $\Hh_c \otimes \Hh_s$.
We set the initial state to be $\ket{n}\ket{s}$.
To achieve this,
we apply unitary operators $U_{+n}$ and $U_s$ to $\ket{0}_{\Hh_c}$ and $\ket{0}_{\Hh_s}$, respectively.
Here, the unitary operator $U_s$ is artificially devised to prepare the starting state $\ket{s}$.

In each step of the search procedure:

(1) Prepare the counting state $\ket{n}$ and starting state $\ket{s}$ by applying $U_{+n}\otimes U_s$ to $\ket{0}\ket{0}$.

(2) Apply the search engine $V_n$, as defined above,
to the starting state $\ket{s}$ with the counting state $\ket{n}$ to determine the recursion depth.
To do so, we first perform the measurement
\begin{eqnarray*}
  M &\triangleq& \bigg\{ M_0 \triangleq \ket{0}\bra{0},\; M_1 \triangleq \sum_{i=1}^{2^m-1} \ket{i}\bra{i} \bigg\}
\end{eqnarray*}
on the counting state $\ket{n}$; then execute the following depending on the measurement outcome.
\begin{itemize}
  \item if the outcome is $0$, the search procedure apply $V$;
  \item otherwise, the search procedure applies $V_{n-1} R_s V_{n-1}^{\dag} R_t V_{n-1}$
\end{itemize}

(3) Measure the resulting state $V_n \ket{s}$ to obtain the information of solution.
Here we can choose a standard (computational) basis measurement,
if elements of the database are encoded as a standard basis state.

Let $q_1$ and $q_2$ be respective quantum variables over $\Hh_c$ and $\Hh_s$.
Recursive quantum procedure $\qSearch$ for the search engine is designed in Tab. \ref{app_qSearch_examPrg}.

\subsection{Partial correctness}\label{app_qSearch_par}

\begin{table}[!htbp]\small

  \centering
  \begin{tabular}{llr}
    \hline
        & $\{ A_{0,0} \outprod{0}{0} \otimes B \}$ & \\
    (a) & $q_2 \starequal V$ $\{ A_{0,0} \outprod{0}{0} \otimes V_0 B V_0^\dag \}$ & (A Unit) \\
    (b) & $\{ \sum_{i=0}^{2^m-1} A_{i,i} \outprod{i}{i} \otimes V_i B V_i^\dag \}$ & $\trueOrder$ \\

    \hline
        & $\{ \sum_{i=1}^{2^m-1} A_{i,i} \outprod{i}{i} \otimes B \}$ \\
    (c) & $q_1 \starequal U_{-1};$ $\{ \sum_{i=0}^{2^m-2} A_{i,i}' \outprod{i}{i} \otimes B \}$ & (A Unit) \\
    (d) & $\CALL\ \qSearch;$ $\{ \sum_{i=0}^{2^m-2} A_{i,i}' \outprod{i}{i} \otimes V_i B V_i^\dag \}$
        & $\Prem_1$, (R Subst) \\
    (e) & $q_2 \starequal R_t;$ $\{ \sum_{i=0}^{2^m-2} A_{i,i}' \outprod{i}{i} \otimes R_t V_i B V_i^\dag R_t^\dag \}$
        & (A Unit) \\
    (f) & $\CALL\ \qSearchDag;$
          $\{ \sum_{i=0}^{2^m-2} A_{i,i}' \outprod{i}{i} \otimes V_i^\dag R_t V_i B V_i^\dag R_t^\dag V_i \}$
        & $\Prem_2$, (R Subst) \\
    (g) & $q_2 \starequal R_s;$
    $\{ \sum_{i=0}^{2^m-2} A_{i,i}' \outprod{i}{i} \otimes R_s V_i^\dag R_t V_i B V_i^\dag R_t^\dag V_i R_s^\dag \}$
        & (A Unit) \\
    (h) & $\CALL\ \qSearch$
    $\{ \sum_{i=0}^{2^m-2} A_{i,i}' \outprod{i}{i} \otimes V_i R_s V_i^\dag R_t V_i B V_i^\dag R_t^\dag V_i R_s^\dag V_i^\dag \}$
        & $\Prem_1$, (R Subst)\\
    (i) & $q_1 \starequal U_{+1}$ $\{ \sum_{i=1}^{2^m-1} A_{i,i} \outprod{i}{i} \otimes V_i B V_i^\dag \}$
         & (A Unit) \\
    (j) & $\{ \sum_{i=0}^{2^m-1} A_{i,i} \outprod{i}{i} \otimes V_i B V_i^\dag \}$
         & $\trueOrder$ \\

    \hline

    (k) & $\pcor{\sum_{i=0}^{2^m-1} A_{i,i} \outprod{i}{i} \otimes B}{S}{\sum_{i=0}^{2^m-1} A_{i,i} \outprod{i}{i} \otimes V_i B V_i^\dag}$
         & (a-b, c-j, R Case) \\
\hline
  \end{tabular}

  \caption{Proof of Ass. (\ref{eq_qSearch_pcor_3}).}
  \label{tab_proof_fgs_par} 
  \vspace{-12pt}
\end{table}

We claim that, on input $\ket{n}_{q_1}\otimes \ket{s}_{q_2}$,
quantum activation statement $\CALL\ \qSearch$ (cf. Tab. \ref{app_qSearch_examPrg} for $\qSearch$)
executes with output $\ket{n}_{q_1}\otimes V_n \ket{s}_{q_2}$ (if terminates).
Formally speaking, the claim can be expressed as a partially correct quantum Hoare's triple:
\begin{eqnarray*}
  &\models_\mathbb{I}& \apcor{\big}{\voutprod{n}{q_1}{n}\otimes \voutprod{s}{q_2}{s}}{\CALL\ \qSearch}{\voutprod{n}{q_1}{n}\otimes V_n \voutprod{s}{q_2}{s} V_n^\dag}
\end{eqnarray*}
By soundness and completeness of $\qPP$, it is to say that
\begin{eqnarray}\label{eq_qSearch_pcor_1}
  \trueOrder &\reVdash{\qPP}& \apcor{\big}{\voutprod{n}{q_1}{n}\otimes \voutprod{s}{q_2}{s}}{\CALL\ \qSearch}{\voutprod{n}{q_1}{n}\otimes V_n \voutprod{s}{q_2}{s} V_n^\dag}
\end{eqnarray}

Let $A$ be a quantum predicate variable on $\Hh_c$,
and $B$ a quantum predicate variable on $\Hh_s$.
The $(i,j)$-component $\mixprod{i}{A}{j}$ of $A$ is abbreviated as $A_{i,j}$,
so $A = \sum_{i,j} A_{i,j} \outprod{i}{j}$.
Following we shall use the main-diagonal elements of $A$ to encode classical information,
which is in accordance with the fact that quantum variable $q_1$ is used classically.
Define $\PQPT$ $A'$ by
\begin{eqnarray*}
  A' &\triangleq& \sum_{i=1}^{2^m-1} A_{i,i} \outprod{i-1}{i-1}
\end{eqnarray*}
To prove Ass. (\ref{eq_qSearch_pcor_1}), by (Subst Rule), together with the simultaneous substitution
$$
\big[\voutprod{n}{q_1}{n} / A,\ \voutprod{s}{q_2}{s} / B\big]
$$
it suffices to show that
\begin{eqnarray}\label{eq_qSearch_pcor_2}
  \trueOrder &\reVdash{\qPP}& \apcor{\bigg}{\sum_{i=0}^{2^m-1} A_{i,i} \outprod{i}{i} \otimes B}{\CALL\ \qSearch}{\sum_{i=0}^{2^m-1} A_{i,i} \outprod{i}{i} \otimes V_i B V_i^\dag}
\end{eqnarray}
Intuitively, the precondition (resp. postcondition) of the Hoare's triple in Ass. (\ref{eq_qSearch_pcor_2}) says that
the control flow arrives at each recursion depth $0 \leq i \leq 2^m-1$ (denoted by variable $q_1$) of procedure $\qSearch$ with probability $A_{i,i}$, and at depth $i$, the state of variable $q_2$ should satisfy
the predicate $B$ (resp. $V_i B V_i^\dag$).

Define a set of premises $\big \{ \Prem_i \big \}_{i = 1,2}$ by
\begin{eqnarray*}
  \Prem_1 &\triangleq& \apcor{\bigg}{\sum_{i=0}^{2^m-1} A_{i,i} \outprod{i}{i} \otimes B}{\CALL\ \qSearch}{\sum_{i=0}^{2^m-1} A_{i,i} \outprod{i}{i} \otimes V_i B V_i^\dag} \\
  \Prem_2 &\triangleq& \apcor{\bigg}{\sum_{i=0}^{2^m-1} A_{i,i} \outprod{i}{i} \otimes B}{\CALL\ \qSearchDag}{\sum_{i=0}^{2^m-1} A_{i,i} \outprod{i}{i} \otimes V_i^\dag B V_i}
\end{eqnarray*}
To prove Ass. (\ref{eq_qSearch_pcor_2}), by (Rp pRec), it suffices to show that
\begin{eqnarray}\label{eq_qSearch_pcor_3}
  \trueOrder, \big\{ \Prem_i \big\}_{i = 1,2} &\reVdash{\qBE}& \apcor{\bigg}{\sum_{i=0}^{2^m-1} A_{i,i} \outprod{i}{i} \otimes B}{S}{\sum_{i=0}^{2^m-1} A_{i,i} \outprod{i}{i} \otimes V_i B V_i^\dag}
\end{eqnarray}
illustrated in Tab. \ref{tab_proof_fgs_par}, and that
\begin{eqnarray}\label{eq_qSearch_pcor_4}
  \trueOrder, \big\{ \Prem_i \big\}_{i = 1,2} &\reVdash{\qBE}& \apcor{\bigg}{\sum_{i=0}^{2^m-1} A_{i,i} \outprod{i}{i} \otimes B}{S'}{\sum_{i=0}^{2^m-1} A_{i,i} \outprod{i}{i} \otimes V_i^\dag B V_i}
\end{eqnarray}
whose proof is similar to that of Ass. (\ref{eq_qSearch_pcor_3}) and is left as an exercise to the reader.

Note that for Hoare's triple (d) in Tab. \ref{tab_proof_fgs_par}, we use the substitution
\begin{equation*}
  \big[ A' \big/ A \big]
\end{equation*}
for (f), we use the substitution
\begin{equation*}
  \big[ A' \big/ A,\ \big( R_t V_i B V_i^\dag R_t^\dag \big) \big/ B \big]
\end{equation*}
for (h), we use the substitution
\begin{equation*}
  \big[ A' \big/ A,\ \big( R_s V_i^\dag R_t V_i B V_i^\dag R_t^\dag V_i R_s^\dag \big) \big/ B \big]
\end{equation*}

\subsection{Total correctness}\label{app_qSearch_tot}

We claim that quantum activation statement $\CALL\ \qSearch$ (cf. Tab. \ref{app_qSearch_examPrg} for $\qSearch$),
on input $\ket{n}_{q_1}\otimes \ket{s}_{q_2}$, always terminates with output $\ket{n}_{q_1}\otimes V_n \ket{s}_{q_2}$.
In a formal way, the claim can be expressed as a totally correct quantum Hoare's triple:
\begin{eqnarray*}
  &\models_\mathbb{I}& \atcor{\big}{\voutprod{n}{q_1}{n}\otimes \voutprod{s}{q_2}{s}}{\CALL\ \qSearch}{\voutprod{n}{q_1}{n}\otimes V_n \voutprod{s}{q_2}{s} V_n^\dag}
\end{eqnarray*}
By soundness and completeness of $\qTP$, it is to say that
\begin{eqnarray}\label{eq_qSearch_tcor_1}
  \trueOrder &\reVdash{\qTP}& \atcor{\big}{\voutprod{n}{q_1}{n}\otimes \voutprod{s}{q_2}{s}}{\CALL\ \qSearch}{\voutprod{n}{q_1}{n}\otimes V_n \voutprod{s}{q_2}{s} V_n^\dag}
\end{eqnarray}
Recall from Subsec. \ref{app_qSearch_par} that $A$ is a quantum predicate variable on $\Hh_c$,
and $B$ a quantum predicate variable on $\Hh_s$.
Note that $(i,j)$-component $\mixprod{i}{A}{j}$ of $A$ is abbreviated as $A_{i,j}$.
To prove Ass. (\ref{eq_qSearch_tcor_1}), by (R Subst), together with the simultaneous substitution
$$
\big[ \voutprod{n}{q_1}{n} / A,\ \voutprod{s}{q_2}{s} / B \big]
$$
it suffices to show that
\begin{eqnarray}\label{eq_qSearch_tcor_2}
  \trueOrder &\reVdash{\qTP}& \atcor{\bigg}{\sum_{i=0}^{2^m-1} A_{i,i} \outprod{i}{i} \otimes B}{\CALL\ \qSearch}{\sum_{i=0}^{2^m-1} A_{i,i} \outprod{i}{i} \otimes V_i B V_i^\dag}
\end{eqnarray}
Define a sequence of $\PQPT$s $\big\{ P_j[A,B] \big\}_{j \geq 0}^{\sqsubseteq}$ by
\begin{eqnarray}\label{eq_qSearch_tcor_3}
  P_j[A,B] &\triangleq&  \left\{
                      \begin{array}{ll}
                        \sum_{i=0}^{j} A_{i,i} \outprod{i}{i} \otimes B & \hbox{if $0 \leq j < 2^m$} \\
                        \sum_{i=0}^{2^m-1} A_{i,i} \outprod{i}{i} \otimes B & \hbox{if $j\geq 2^m$}
                      \end{array}
                    \right.
\end{eqnarray}
Then a set of premises $\big \{ \Prem_i^j \big \}^{j \geq 0}_{i = 1,2}$ is defined by
\begin{eqnarray*}
  \Prem_1^j &\triangleq& \atcor{\bigg}{P_j[A,B]}{\CALL\ \qSearch}{\sum_{i=0}^{2^m-1} A_{i,i} \outprod{i}{i} \otimes V_i B V_i^\dag} \\
  \Prem_2^j &\triangleq& \atcor{\bigg}{P_j[A,B]}{\CALL\ \qSearchDag}{\sum_{i=0}^{2^m-1} A_{i,i} \outprod{i}{i} \otimes V_i^\dag B V_i}
\end{eqnarray*}
By (Rt pRec), it suffices to show, for all $j \geq 0$, that
\begin{eqnarray}
  \trueOrder, \big \{Prem_i^j \big \}_{i = 1,2}
  &\reVdash{\qBE}& \atcor{\bigg}{P_{j+1}[A,B]}{S}{\sum_{i=0}^{2^m-1} A_{i,i} \outprod{i}{i} \otimes V_i B V_i^\dag} \label{eq_qSearch_tcor_4}\\
  \trueOrder, \big \{Prem_i^j \big \}_{i = 1,2}
  &\reVdash{\qBE}& \atcor{\bigg}{P_{j+1}[A,B]}{S'}{\sum_{i=0}^{2^m-1} A_{i,i} \outprod{i}{i} \otimes V_i^\dag B V_i} \label{eq_qSearch_tcor_5}
\end{eqnarray}
We remark that the proof of Ass. (\ref{eq_qSearch_tcor_4})
can be adapted from that of (\ref{eq_qSearch_pcor_3}) [cf. Tab. \ref{tab_proof_fgs_par}]
by replacing the superscript ($2^m - 1$) of some necessary but not all summation operators (including those in the definition of $A'$) with $(j+1)$.
Moreover, Ass. (\ref{eq_qSearch_tcor_5}) can be proved similarly to (\ref{eq_qSearch_tcor_4}).
We leave it as an exercise to the reader.

\section{Recursive quantum Fourier sampling}\label{app_sec_RQFS}

\subsection{Problem description}

Let us first briefly recall recursive quantum Fourier sampling,
following the literature \cite{mckague2012interactive}.
We begin by defining a type of tree.
Let $n,l$ be positive integers and consider a symmetric tree where each node,
except the leaves, has $2^n$ children, and the depth is $l$.
Let the root be labelled by $(\emptyset)$.
The root's children are labelled $(x_1)$ with $x_1\in \{0,1\}^n$.
Each child of $(x_1)$ is, in turn, labelled $(x_1,x_2)$ with $x_2 \in \{0,1\}^n$.
We continue until we have reached the leaves, which are labelled by $(x_1,\ldots,x_l)$.
Thus each node's label can be thought of as a path describing how to find the node from the root.

Now we add the Fourier component to the tree.
We begin by fixing an efficiently computable function $g\colon \{0,1\}^n\rightarrow \{0,1\}$.
With each node of the tree $(x_1,\ldots,x_k)$ we associate a ``secret'' string $s_{(x_1,\ldots,x_k)} \in \{0,1\}^n$.
These secrets are promised to obey
\begin{eqnarray*}
  g( s_{(x_1,\ldots,x_k)} ) &\triangleq&  s_{(x_1,\ldots,x_{k-1})} \cdot x_k \mod 2
\end{eqnarray*}
for $k \geq 1$. \big(Here we take $s_{(x_1,\ldots,x_{k-1})}$ to mean $s_{(\emptyset)}$ if $k = 1$.\big)
In this way, each node's secret encodes one bit of information about its parent's secret.
Suppose that we are given an oracle $o\colon (\{ 0, 1 \}^n)^l \rightarrow \{0,1\}$ which behaves as
\begin{eqnarray*}
  o(x_1,\ldots,x_l) &\triangleq& g\big(s_{(x_1,\ldots,x_l)}\big)
\end{eqnarray*}
Note that $o$ works for the leaves of the tree {\it only}.
Our goal is to find $g(s_{(\emptyset)})$.
This is the recursive Fourier sampling problem ($\RFS$).

\subsection{Quantum solution}

\begin{table}[!htbp]

\centering

  \begin{minipage}[b]{\textwidth}
  \centering
  \begin{tabular}{rcl}
  \hline
  \multicolumn{3}{c}{$\recDec{\RQFS(q,Y)}{\ifStat{\Box m\cdot M[q] = m \rightarrow S_m}}$} \\
  \hline
  $S_0$ & $\triangleq$ & $q \starequal U_{+1};$  \\
  &&$\LOCAL{\bbX[q],Y'};$ \\
  &&$\big( \bbX[q], Y' \big) \starequal H^{\otimes n}\otimes HX;$ \\
  &&$\CALL\ \RQFS(q,Y');$   \\
  &&$\bbX[q] \starequal H^{\otimes n};$ \\
  &&$\big( \bbX[q], Y \big) \starequal \calG;$  \\
  &&$\bbX[q] \starequal H^{\otimes n};$ \\
  &&$\CALL\ \RQFS(q,Y');$           \\
  &&$\big( \bbX[q], Y' \big) \starequal H^{\otimes n}\otimes XH;$ \\
  &&$\RELEASE{\bbX[q],Y'};$ \\
  &&$q \starequal U_{-1}$ \\
  \hline
  $S_1$ & $\triangleq$ & $\big( \bbX[1], \ldots, \bbX[l], Y \big) \starequal \calO$ \\
  \hline
  $S_2$ & $\triangleq$ & $\bottom$       \\
  \hline
  \end{tabular}
  \subcaption{Recursive procedure $\RQFS$.}
  \label{app_RQFS_proc}
  \end{minipage}
  \vfill
  \begin{minipage}[b]{\textwidth}
  \centering
  \begin{tabular}{rcl}
  \hline
  $\MainProgram$ & $\triangleq$ & $(p, Z) \assnequal \ket{0}$; \\
  && $\CALL\ \RQFS(p,Z)$; \\
  && $\ifStat{\Box m\cdot M'[Z] = m \rightarrow \SKIP}$ \\
  \hline
  \end{tabular}
  \subcaption{Main program $\MainProgram$.}
  \label{app_RQFS_main}
  \end{minipage}

\caption{Programming recursive quantum Fourier sampling.}

\label{tab_RQFS_prm} 

\end{table}

We now consider a quantum solution to $\RFS$.
Define the descendant space $\Hh_{d}$ to be the $2^n$-dimensional Hilbert space with orthonormal basis states
--- $\big\{\ket{i}\colon 0 \leq i < 2^n \big\}$ --- to index each of $2^n$ children for any parental node.
Define the counting space $\Hh_{c}$ to be the $2^m$-dimensional Hilbert space with orthonormal basis states
--- $\big\{\ket{i} \colon 0 \leq i < 2^m \big\}$ --- for indexing the depth of the tree, such that $l < 2^m$.
Define $(+i)$-operator $U_{+i}$ of $\Hh_c$ by
$$
U_{+i}\colon \ket{x}\rightarrow\ket{(x+i) \mod 2^m}
$$
and similarly for $(-i)$-operator $U_{-i}$.

Let $p,q$ be quantum (individual) variables over $\Hh_c$, $Y,Y',Z$ quantum variables over $\Hh_2$,
and $\bbX$ (resp., $\bbY$) a quantum array-like variable over $\Hh_{d}$ (resp., $\Hh_2$)
with one argument, say $q$, indexing each component of the array,
s.t. each component $\bbX[q]$ (resp., $\bbY[q]$) acts like a quantum variable over $\Hh_d$ (resp., $\Hh_2$).
We shall treat $\bbX[\ket{i}]$ (resp., $\bbY[\ket{i}]$) as $\bbX[i]$ (resp., $\bbY[i]$) for simplicity.
The starting state space is $\Hh_{p,Z} = \Hh_{c} \otimes \Hh_2$,
and the initial state is $(p,Z) = \ket{0}\otimes \ket{0}$.
The quantum solution is calling the recursive quantum procedure
$$
\RQFS(p/q, Z/Y)
$$
followed by a measurement $M'$ on the resulting qubit of $Z$ \big(viz. $\ket{g(s_{(\emptyset)})}$\big), with
\begin{eqnarray*}
  M' &\triangleq& \big \{ M_0' \triangleq \outprod{0}{0},\; M_1' \triangleq \outprod{1}{1} \big\}
\end{eqnarray*}

In each step of recursive quantum procedure $\RQFS(q, Y)$:

(1) Perform measurement $M$ with
\begin{eqnarray*}
  M &\triangleq& \bigg\{ M_0 \triangleq \sum_{0 \leq i < l} \ket{i}\bra{i},\; M_1 \triangleq \ket{l}\bra{l},\; M_2 \triangleq \sum_{l < i < 2^m} \ket{i}\bra{i} \bigg\}
\end{eqnarray*}
on the counting state $q$; then execute steps (2-4) according to the measurement outcome.

(2) If the outcome is 0, perform steps (21-29).
\begin{description}
  \item[(21)] Increment the value of $q$ by 1. That is, apply $(+1)$-operator $U_{+1}$ to $q$.
  \item[(22)] Introduce ancillas $\bbX[q]$, $\itY'$ in the state $\ket{0} \otimes \ket{0}$.
  \item[(23)] Prepare $\big( \bbX[q], \itY' \big)$ to $\frac{1}{\sqrt{2^n}} \sum_{x = 0}^{2^n - 1} \ket{x} \otimes \frac{1}{\sqrt{2}} ( \ket{0} - \ket{1} )$ by applying $H^{\otimes n}\otimes HX$.
  \item[(24)] Call $\RQFS(q/q,Y'/Y)$.
  \item[(25)] Apply $H^{\otimes n}$ on register $\bbX[q]$.
  \item[(26)] Apply quantum oracle $\calG$ to $\bbX[q]$, $\itY$ with $\calG$ calculating $g$ as
  \begin{eqnarray*}
    \calG\ \ket{s}\ket{y} &\triangleq& \ket{s}\ket{y\oplus g(s)}
  \end{eqnarray*}
  \item[(27)] Return $\bbX[q]$, $\itY'$ to their original state by reversing steps (23-26).
  \item[(28)] Release ancillas $\bbX[q]$, $\itY'$.
  \item[(29)] Return $q$ to its original state. That is, apply $(-1)$-operator $U_{-1}$ to $q$.
\end{description}

(3) If the outcome is 1, apply quantum oracle $\calO$ to $\bbX[1],\ldots,\bbX[l]$, $\itY$, with $\calO$ defined as
\begin{eqnarray*}
  \calO\ \ket{x_1}\ldots\ket{x_l}\ket{y} &\triangleq& \ket{x_1}\ldots\ket{x_l}\ket{y\oplus g(s_{(x_1,\ldots,x_l)})}
\end{eqnarray*}

(4) If the outcome is 2, the procedure collapses (implemented by $\bottom$).

We refine recursive quantum procedure $\RQFS$ and main program $\MainProgram$ by Tab. \ref{tab_RQFS_prm}.

\paragraph{Notations and Definitions.}
Following notations and definitions will be used in the subsequent two subsections.
Let $K$ be a quantum predicate variable over $\Hh_{c}$.
Let $\big\{\ket{i}\big\}_i$ be the computational basis of $\Hh_{q}$.
For notational convenience, the $(i,i)$-component $\mixprod{i}{K}{i}$ of $K$ is abbreviated as $K_{i}$
\big($K = \sum_{i,j} \mixprod{i}{K}{j} \outprod{i}{j}$\big).
We shall use the main-diagonal elements of $K$ to encode classical information.

Define $C(i)$ and $D(i)$ by
\begin{eqnarray*}
  C(i) &\triangleq& \bigotimes_{j=0}^i \Big ( \sum_{k = 0}^{ 2^n - 1 }\voutprod{k}{x_j}{k} \otimes \alpha_j \Big ) \\
  D(i) &\triangleq& \bigotimes_{j=0}^i \Big ( \sum_{k = 0}^{ 2^n - 1 }\voutprod{k}{x_j}{k} \otimes \beta_j \Big )
\end{eqnarray*}
where $\alpha_j$ and $\beta_j$ are defined by
\begin{eqnarray*}
  \alpha_j &\triangleq& \left\{
                      \begin{array}{ll}
                        \voutprod{0}{y_0}{0}, & \hbox{if $j = 0$} \\
                        \voutprod{-}{y_j}{-}, & \hbox{if $1 \leq j \leq l$}
                      \end{array}
                    \right. \\
  \beta_j &\triangleq& \left\{
                      \begin{array}{ll}
                        \voutprod{g(s_{(\emptyset)})}{y_0}{g(s_{(\emptyset)})}, & \hbox{if $j = 0$} \\
                        \voutprod{-}{y_j}{-}, & \hbox{if $1 \leq j \leq l$}
                      \end{array}
                    \right.
\end{eqnarray*}

Define $P(K)$ and $Q(K)$ by
\begin{eqnarray*}
  P(K) &\triangleq& \sum_{i=0}^l  K_i \voutprod{i}{q}{i} \otimes C(i) \\
  Q(K) &\triangleq& \sum_{i=0}^l  K_i \voutprod{i}{q}{i} \otimes D(i)
\end{eqnarray*}

Intuitively, $P(K)$ \big(resp. $Q(K)$\big) says that the control flow arrives
at each recursion depth $0 \leq i \leq l$ (denoted by variable $q$) of algorithm $\RQFS$ with probability $K_i$
(where $i$ corresponds to each level of the tree, in particular, $i=0$ points to the root and $i=l$ to the leaves),
and at depth $i$, variable $\bbY[0]$ lies in the state $\ket{0}$ \big(resp. $\ket{g(s_{(\emptyset)})}$\big),
$\bbY[j]$ with $1 \leq j \leq i$ in $\ket{-}$,
and $\bbX[j]$ with $0 \leq j \leq i$ can lie in any state (by the predicate $I_{x_j}$).

\subsection{Partial correctness}\label{app_RQFS_par}

\begin{table}[!htbp]

\centering

  \begin{minipage}[b]{\textwidth}
  \centering

  \begin{tabular}{llr}
  \hline

  (a) & $\pcor{P(K)}{\CALL\ \RQFS\big(q, \bbY[q]\big)}{Q(K)}$ & $\Prem$ \\

  \hline
      & $\{ \sum_{i = 0}^{l-1} K_i \voutprod{i+1}{q}{i+1} \otimes C(i) \otimes I_{x_{i+1}} \otimes I_{y_{i+1}} \}$ &  \\
  (b) & $\big( \bbX[q], \bbY[q] \big) \assnequal \ket{0}$; & (A Init) \\
      & $\{ \sum_{i = 0}^{l-1} K_i \voutprod{i+1}{q}{i+1} \otimes C(i) \otimes 0_{x_{i+1}} \otimes 0_{y_{i+1}} \}$ & \\
  (c) & $\big( \bbX[q], \bbY[q] \big) \starequal H^{\otimes n}\otimes HX;$ & (A Unit) \\
      & $\{ \sum_{i = 0}^{l-1} K_i \voutprod{i+1}{q}{i+1} \otimes C(i) \otimes I_{x_{i+1}} \otimes \voutprod{-}{y_{i+1}}{-} \}$ & \\
  (d) & $\{ P\big(\sum_{i = 1}^{l} K_{i-1}\outprod{i}{i}\big) \}$ & $\trueOrder$ \\
  (e) & $\CALL\ \RQFS\big( q, \bbY[q] \big)$; $\{ Q\big(\sum_{i = 1}^{l} K_{i-1} \outprod{i}{i}\big) \}$ & (a, R Subst) \\
  (f) & $\bbX[q] \starequal H^{\otimes n};$ & (A Unit) \\
      & $\{ \sum_{i = 1}^{l} K_{i-1} \voutprod{i}{q}{i} \otimes C(i-1)
  \otimes H^{\otimes n} I_{x_i} H^{\otimes n} \otimes \voutprod{-}{y_i}{-} \}$ & \\
  (g) & $\{ \sum_{i = 1}^{l} K_{i-1} \voutprod{i}{q}{i} \otimes \sum_{x_1,\ldots,x_i=0}^{2^n - 1} \bigotimes_{k = 1}^{i-1} ( \outprod{x_k}{x_k} \otimes \alpha_k )$ & $\trueEquality$ \\
      & $\otimes \: \outprod{s_{(x_1,\ldots,x_{i-1})}}{s_{(x_1,\ldots,x_{i-1})}} \otimes \voutprod{-}{y_i}{-} \}$ & \\
  (h) & $\big( \bbX[q],Y \big) \starequal \calG;$ & (A Unit) \\
  & $\{ \sum_{i = 1}^{l} K_{i-1} \voutprod{i}{q}{i} \otimes \sum_{x_1,\ldots,x_i=0}^{2^n - 1} \bigotimes_{k = 1}^{i-1} ( \outprod{x_k}{x_k} \otimes \alpha_k )$ & \\
  & $\otimes \: \outprod{s_{(x_1,\ldots,x_{i-1})}}{s_{(x_1,\ldots,x_{i-1})}} \otimes \voutprod{-}{y_i}{-} \}$ & \\
  (i) & $\bbX[q] \starequal H^{\otimes n};$ & (A Unit) \\
  & $\{ \sum_{i = 1}^{l} K_{i-1} \voutprod{i}{q}{i} \otimes \sum_{x_1,\ldots,x_i=0}^{2^n - 1} \bigotimes_{k = 1}^{i-1} ( \outprod{x_k}{x_k} \otimes \alpha_k )$ & \\
  & $\otimes \: H^{\otimes n} \outprod{s_{(x_1,\ldots,x_{i-1})}}{s_{(x_1,\ldots,x_{i-1})}} H^{\otimes n} \otimes \voutprod{-}{y_i}{-} \}$ & \\
  (j) & $\{ P\big(\sum_{i = 1}^{l} K_{i-1}\outprod{i}{i}\big) \}$ & $\trueEquality$ \\
  (k) & $\CALL\ \RQFS\big(q, \bbY[q]\big)$; $\{ Q\big(\sum_{i = 1}^{l} K_{i-1} \outprod{i}{i}\big) \}$ & (a, R Subst) \\
  (l) & $\big( \bbX[q],Y' \big) \starequal H^{\otimes n}\otimes XH;$ & (A Unit) \\
  & $\{ \sum_{i = 1}^{l} K_{i-1} \voutprod{i}{q}{i} \otimes D(i - 1) \otimes \voutprod{0}{x_i}{0} \otimes \voutprod{0}{y_i}{0} \}$ & \\
  (m) & $\{ \sum_{i = 0}^{l - 1} K_{i} \voutprod{i+1}{q}{i+1} \otimes D(i)\otimes I_{x_{i+1}} \otimes I_{y_{i+1}} \}$ & $\trueOrder$ \\
  \hline
  \end{tabular}

  \subcaption{Partial correctness of the body of variable localization.}
  \label{tab_loc_pcor}
  \end{minipage}

  \vfill

  \begin{minipage}[b]{\textwidth}
  \centering
  \begin{tabular}{llr}
  \hline
      & $\{ \sum_{i = 0}^{l-1} K_i \voutprod{i}{q}{i} \otimes C(i) \}$ &  \\
  (n) & $q \starequal U_{+1};$ $\{ \sum_{i = 0}^{l-1} K_i \voutprod{i+1}{q}{i+1} \otimes C(i) \}$ & (A Unit) \\
  (o) & $\LOCAL{\bbX[p],Y'};\ldots;\RELEASE{\bbX[p],Y'};$ & (b-m, R Loc) \\
      & $\{ \sum_{i = 0}^{l - 1} K_{i} \voutprod{i+1}{q}{i+1} \otimes D(i) \}$ & \\
  (p) & $q \starequal U_{-1}$ $\{ \sum_{i = 0}^{l - 1} K_{i} \voutprod{i}{q}{i} \otimes D(i) \}$ & (A Unit) \\
  (q) & $\{ Q(K) \}$ & $\trueOrder$ \\
  \hline
      & $\{ K_{l} \voutprod{l}{q}{l} \otimes C(l) \}$ &  \\
  (r) & $\big( \bbX[1],\ldots,\bbX[l],Y \big) \starequal \calO$ $\{ K_{l} \voutprod{l}{q}{l} \otimes D(l) \}$ & (A Unit) \\
  (s) & $\{ Q(K) \}$ & $\trueOrder$ \\
  \hline
  (t) & $\pcor{0}{\bottom}{Q(K)}$ & (A Bot, R Order) \\
  \hline
  (u) & $\pcor{P(K)}{\ifStat{\Box m\cdot M[q] = m \rightarrow S_m}}{Q(K)}$ & (n-t, R Case) \\
  \hline
  (v) & $\pcor{P(K)}{\CALL\ \RQFS(q, \bbY[q])}{Q(K)}$ & (a, u, Rp pRec) \\
  \hline
  \end{tabular}
  \subcaption{Partial correctness of recursive procedure $\RQFS$.}
  \label{tab_proc_pcor}
  \end{minipage}

  \vfill

  \begin{minipage}[b]{\textwidth}
  \centering
  \begin{tabular}{llr}
  \hline
      & $\{ I_{p}\otimes I_{Z} \}$ &  \\
  (w) & $(p, Z) \assnequal \ket{0};$ $\{ \voutprod{0}{p}{0}\otimes \voutprod{0}{Z}{0} \}$ & (A Init) \\
  (x) & $\CALL\ \RQFS(p,Z);$ $\{ \voutprod{0}{p}{0}\otimes \voutprod{g(s_{(\emptyset)})}{Z}{g(s_{(\emptyset)})} \}$ & TBA \\
  (y) & $\ifStat{\Box m\cdot M'[Z] = m \rightarrow \SKIP}$ $\{ \voutprod{0}{p}{0}\otimes \voutprod{g(s_{(\emptyset)})}{Z}{g(s_{(\emptyset)})} \}$ & (A Skip, R Case) \\
  \hline
  \end{tabular}
  \subcaption{Partial correctness of main program $\MainProgram$.}
  \label{tab_main_pcor}
  \end{minipage}

\caption{Partial correctness of recursive quantum Fourier sampling.}

\label{tab_RQFS_pcor} 

\end{table}

We claim that, on any input,
the main program $\MainProgram$ executes with output $\ket{0}_p \otimes \ket{g(s_{(\emptyset)})}_Z$ (if it terminates).
In a formal fashion, it is claimed that
\begin{proposition}\label{prop_main_pcor}
It is the case that
\begin{eqnarray}\label{ass_main_pcor}
  \trueOrder &\vdash_{\qPP}& \apcor{\big}{I_{p}\otimes I_{Z}}{\MainProgram}{\voutprod{0}{p}{0}\otimes \voutprod{g(s_{(\emptyset)})}{Z}{g(s_{(\emptyset)})}}
\end{eqnarray}
\end{proposition}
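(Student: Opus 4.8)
The plan is to establish Prop.~\ref{prop_main_pcor} by following the annotated derivation sketched in Tab.~\ref{tab_RQFS_pcor}, reducing it in stages to a single obligation about the body $\ifStat{\Box m\cdot M[q]=m\rightarrow S_m}$ of $\RQFS$ and then discharging that obligation by (R Case). First I would apply (R Comp) twice to decompose $\MainProgram = \big((p,Z)\assnequal\ket{0}\big);\,\CALL\ \RQFS(p,Z);\,\ifStat{\Box m\cdot M'[Z]=m\rightarrow\SKIP}$ along the intermediate assertions of Tab.~\ref{tab_main_pcor}: the initialization is handled by (A Init), sending $I_p\otimes I_Z$ to $\voutprod{0}{p}{0}\otimes\voutprod{0}{Z}{0}$; and the trailing measurement is handled by (R Case) together with (A Skip), using that $M'=\{\outprod{0}{0},\outprod{1}{1}\}$ leaves the computational-basis state $\ket{g(s_{(\emptyset)})}$ invariant, so that each branch obligation is $\pcor{\voutprod{g(s_{(\emptyset)})}{Z}{g(s_{(\emptyset)})}}{\SKIP}{\voutprod{g(s_{(\emptyset)})}{Z}{g(s_{(\emptyset)})}}$ and $\sum_m (M'_m)^\dag\,\voutprod{g(s_{(\emptyset)})}{Z}{g(s_{(\emptyset)})}\,M'_m=\voutprod{g(s_{(\emptyset)})}{Z}{g(s_{(\emptyset)})}$. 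This leaves the middle obligation $\apcor{\big}{\voutprod{0}{p}{0}\otimes\voutprod{0}{Z}{0}}{\CALL\ \RQFS(p,Z)}{\voutprod{0}{p}{0}\otimes\voutprod{g(s_{(\emptyset)})}{Z}{g(s_{(\emptyset)})}}$, which follows by (R Adap) with $[p/q,\,Z/\bbY[q]]$ from the corresponding statement for $\CALL\ \RQFS(q,\bbY[q])$.

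Next, since by the definitions of $C(0)$ and $D(0)$ the $\PQPT$s $P(\voutprod{0}{q}{0})$ and $Q(\voutprod{0}{q}{0})$ are exactly the pre- and post-conditions just mentioned, (R Subst) with $[\voutprod{0}{q}{0}/K]$ reduces that statement to the parameterized triple $\Prem\triangleq\pcor{P(K)}{\CALL\ \RQFS(q,\bbY[q])}{Q(K)}$, and (Rp pRec) in turn reduces $\Prem$ to $\trueOrder,\Prem\ \reVdash{\qBE}\ \pcor{P(K)}{\ifStat{\Box m\cdot M[q]=m\rightarrow S_m}}{Q(K)}$ for the procedure body. I would then apply (R Case) to the three-outcome measurement $M=\{M_0,M_1,M_2\}$ with $M_0=\sum_{0\le i<l}\outprod{i}{i}$, $M_1=\outprod{l}{l}$, $M_2=\sum_{l<i<2^m}\outprod{i}{i}$, choosing branch preconditions so that $P(K)=\sum_m M_m^\dag P_m M_m$; since $P(K)$ is supported on $\{\ket{i}:0\le i\le l\}$ this can be met with equality. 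The $M_2$-branch is then immediate by (A Bot) and (R Order); the $M_1$-branch ($i=l$, the leaves) is a single (A Unit) step for the oracle $\calO$, where the promise $o(x_1,\dots,x_l)=g(s_{(x_1,\dots,x_l)})$ is precisely what makes $\calO$ carry $C(l)$ to $D(l)$.

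The remaining $M_0$-branch, with body $S_0$, is where the real work lies, and I expect it to be the main obstacle. I would verify it as in Tabs.~\ref{tab_proc_pcor} and \ref{tab_loc_pcor}: bracket the $\LOCAL{\bbX[q],Y'};\dots;\RELEASE{\bbX[q],Y'}$ block by two (A Unit) steps for $q\starequal U_{\pm1}$, apply (R Loc) to the block, and inside it chain (A Init), (A Unit) for $H^{\otimes n}\otimes HX$, (R Subst) on $\Prem$ with the counter-shifting substitution for the first inner $\CALL\ \RQFS(q,Y')$, (A Unit) for $\bbX[q]\starequal H^{\otimes n}$, for $\big(\bbX[q],Y\big)\starequal\calG$ and for $\bbX[q]\starequal H^{\otimes n}$, (R Subst) on $\Prem$ for the second (uncomputing) inner $\CALL\ \RQFS(q,Y')$, (A Unit) for $\big(\bbX[q],Y'\big)\starequal H^{\otimes n}\otimes XH$, and finally a few weakenings from $\trueOrder$. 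The crux is the pair of L\"owner equalities at rows (g) and (j) of Tab.~\ref{tab_loc_pcor}: these encode, at the level of $\PQPT$s, the recursive-Fourier-sampling/phase-kickback identity --- that the first inner recursion, acting through the $\ket{-}$ ancilla $Y'$, places the bit $g(s_{(x_1,\dots,x_k)})=s_{(x_1,\dots,x_{k-1})}\cdot x_k\bmod 2$ into the relative phase of the uniform superposition held in $\bbX[q]$, so that the ensuing $\bbX[q]\starequal H^{\otimes n}$ reconstructs the parent secret $\ket{s_{(x_1,\dots,x_{k-1})}}$ and one application of $\calG$ records $\ket{g(s_{(x_1,\dots,x_{k-1})})}$ as $D(i)$ demands, while the second inner recursion together with the final Hadamards restores the ancillas. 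Establishing these equalities of $\PQPT$s --- in effect the soundness proof of recursive Fourier sampling, formalized with the secret-string promise as the only hypothesis on $g$ and on the family $\{s_{(\cdot)}\}$ and threaded through the array variable $\bbX$ and the counter $q$ --- is where the bulk of the effort concentrates; the remaining rows are routine (A Unit) and reindexing bookkeeping. Running the chain of reductions backwards then yields $\trueOrder\ \reVdash{\qPP}\ \apcor{\big}{I_p\otimes I_Z}{\MainProgram}{\voutprod{0}{p}{0}\otimes\voutprod{g(s_{(\emptyset)})}{Z}{g(s_{(\emptyset)})}}$, which is the claim; soundness of $\qPP$ (Thm.~\ref{thm_sound_complete_qPP}) additionally gives the semantic reading $\models_\mathbb{I}$.
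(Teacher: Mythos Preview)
Your proposal is correct and follows essentially the same route as the paper's proof: the same (A Init)/(R Case)/(A Skip) decomposition of $\MainProgram$, the same reduction via (R Adap) and (R Subst) to the parameterized premise $\Prem=\pcor{P(K)}{\CALL\ \RQFS(q,\bbY[q])}{Q(K)}$, the same appeal to (Rp pRec) and (R Case) on the three measurement outcomes, and the same handling of the $S_0$ branch through (R Loc) with the two inner recursive calls discharged by (R Subst) on $\Prem$. You also correctly isolate rows (g) and (j) of Tab.~\ref{tab_loc_pcor} as the technical heart, which the paper spells out explicitly using the promise $g(s_{(x_1,\dots,x_i)})=s_{(x_1,\dots,x_{i-1})}\cdot x_i$ together with the Hadamard identity $\sum_x H^{\otimes n}(-1)^{x\cdot y}\ket{x}=\ket{y}$.
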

\begin{proof}
The proof of Ass. (\ref{ass_main_pcor}) is shown in Tab. \ref{tab_main_pcor}, where ``TBA'' means ``To Be Announced''.
To prove Hoare's triple (x), by (R Adap), it is sufficient to show that
\begin{eqnarray*}
  \trueOrder &\vdash_{\qPP}& \apcor{\big}{\voutprod{0}{q}{0}\otimes \voutprod{0}{y_0}{0}}{\CALL\ \RQFS\big(q,\bbY[q]\big)}{\voutprod{0}{q}{0}\otimes \voutprod{g(s_{(\emptyset)})}{y_0}{g(s_{(\emptyset)})}}
\end{eqnarray*}
By definition of $P(K)$ and $Q(K)$, it suffices to show that
\begin{eqnarray*}
  \trueOrder &\vdash_{\qPP}& \apcor{\big}{P(\voutprod{0}{q}{0})}{\CALL\ \RQFS\big(q, \bbY[q]\big)}{Q(\voutprod{0}{q}{0})}
\end{eqnarray*}
By (R Subst), together with the substitution $[\voutprod{0}{q}{0}/K]$, it suffices to show that
\begin{eqnarray}\label{ass_RQFS_par}
  \trueOrder &\vdash_{\qPP}& \pcor{P(K)}{\CALL\ \RQFS\big(q, \bbY[q]\big)}{Q(K)}
\end{eqnarray}
The proof of Ass. (\ref{ass_RQFS_par}) is shown in Tab. \ref{tab_proc_pcor},
where partial correctness of the body of variable localization,
i.e. proof of Hoare's triple (o), is shown in Tab. \ref{tab_loc_pcor}.

To see Ass. (g), by definition of $C(i-1)$ and the fact that
\begin{eqnarray*}
  I_{x_i} &=& \sum_{k=0}^{2^n-1} \voutprod{k}{x_i}{k}
\end{eqnarray*}
we remark that
$$
\sum_{i = 1}^{l} K_{i-1} \voutprod{i}{q}{i} \otimes C(i-1) \otimes H^{\otimes n} I_{x_i} H^{\otimes n} \otimes \voutprod{-}{y_i}{-}
$$
is equivalent to
$$
\begin{array}{c}
\sum_{i = 1}^{l} K_{i-1} \voutprod{i}{q}{i} \otimes \sum_{x_1,\ldots,x_i=0}^{2^n - 1} \bigotimes_{k = 1}^{i-1} ( \outprod{x_k}{x_k} \otimes \alpha_k ) \\
\otimes \: H^{\otimes n} (-1)^{g(s_{(x_1,\ldots,x_i)})} \outprod{x_i}{x_i} (-1)^{g(s_{(x_1,\ldots,x_i)})} H^{\otimes n}
\otimes \voutprod{-}{y_i}{-}
\end{array}
$$
By Eq.
\begin{eqnarray*}
  g(s_{(x_1,\ldots,x_i)}) &=& s_{(x_1,\ldots,x_{i-1})} \cdot x_i
\end{eqnarray*}
it is to say
$$
\begin{array}{c}
\sum_{i = 1}^{l} K_{i-1} \voutprod{i}{q}{i} \otimes \sum_{x_1,\ldots,x_i=0}^{2^n - 1} \bigotimes_{k = 1}^{i-1} ( \outprod{x_k}{x_k} \otimes \alpha_k ) \\
\otimes \: H^{\otimes n} (-1)^{s_{(x_1,\ldots,x_{i-1})} \cdot x_i} \outprod{x_i}{x_i} (-1)^{s_{(x_1,\ldots,x_{i-1})} \cdot x_i} H^{\otimes n} \otimes \voutprod{-}{y_i}{-}
\end{array}
$$
By the fact that
\begin{eqnarray*}
   \sum_x H^{\otimes n} (-1)^{x\cdot y} \ket{x} &=& \ket{y}
\end{eqnarray*}
it is equivalent to saying that
$$
\begin{array}{c}
\sum_{i = 1}^{l} K_{i-1} \voutprod{i}{q}{i} \otimes \sum_{x_1,\ldots,x_i=0}^{2^n - 1} \bigotimes_{k = 1}^{i-1} ( \outprod{x_k}{x_k} \otimes \alpha_k ) \\
\otimes \: \outprod{s_{(x_1,\ldots,x_{i-1})}}{s_{(x_1,\ldots,x_{i-1})}} \otimes \voutprod{-}{y_i}{-}
\end{array}
$$

To see Ass. (j), by the fact
\begin{eqnarray*}
  H^{\otimes n} \ket{y} &=& \sum_x (-1)^{x\cdot y} \ket{x}
\end{eqnarray*}
we remark that
$$
\begin{array}{c}
\sum_{i = 1}^{l} K_{i-1} \voutprod{i}{q}{i} \otimes \sum_{x_1,\ldots,x_i=0}^{2^n - 1} \bigotimes_{k = 1}^{i-1} ( \outprod{x_k}{x_k} \otimes \alpha_k ) \\
\otimes \: H^{\otimes n} \outprod{s_{(x_1,\ldots,x_{i-1})}}{s_{(x_1,\ldots,x_{i-1})}} H^{\otimes n} \otimes \voutprod{-}{y_i}{-}
\end{array}
$$
is equivalent to
$$
\begin{array}{c}
\sum_{i = 1}^{l} K_{i-1} \voutprod{i}{q}{i} \otimes \sum_{x_1,\ldots,x_i=0}^{2^n - 1} \bigotimes_{k = 1}^{i-1} ( \outprod{x_k}{x_k} \otimes \alpha_k ) \\
\otimes \: (-1)^{s_{(x_1,\ldots,x_{i-1})} \cdot x_i} \outprod{x_i}{x_i} (-1)^{s_{(x_1,\ldots,x_{i-1})} \cdot x_i} \otimes \voutprod{-}{y_i}{-}
\end{array}
$$
An easy calculation yields
$$
\begin{array}{c}
\sum_{i = 1}^{l} K_{i-1} \voutprod{i}{q}{i} \otimes \sum_{x_1,\ldots,x_i=0}^{2^n - 1} \bigotimes_{k = 1}^{i-1} ( \outprod{x_k}{x_k} \otimes \alpha_k ) \otimes \outprod{x_i}{x_i} \otimes \voutprod{-}{y_i}{-}
\end{array}
$$
By definition of $C( i )$, it is equivalent to saying that
$$
\sum_{i = 1}^{l} K_{i-1} \voutprod{i}{q}{i} \otimes C(i)
$$
By definition of $P(K)$, it is exactly $P\big(\sum_{i = 1}^{l} K_{i-1}\outprod{i}{i}\big)$.

This completes the proof.
\end{proof}

\subsection{Total correctness}\label{app_RQFS_tot}

\begin{table}[!htbp]

\centering

  \begin{minipage}[b]{\textwidth}
  \centering

  \begin{tabular}{llr}
  \hline

  (a) & $\tcor{P_h(K)}{\CALL\ \RQFS\big(q, \bbY[q]\big)}{Q(K)}$ & $\Prem_h$ \\

  \hline
      & $\langle \sum_{i = l-h}^{l-1} K_i \voutprod{i+1}{q}{i+1} \otimes C(i) \otimes I_{x_{i+1}} \otimes I_{y_{i+1}} \rangle$ &  \\
  (b) & $\big( \bbX[q],\bbY[q] \big) \assnequal \ket{0}$; & (A Init) \\
      & $\langle \sum_{i = l-h}^{l-1} K_i \voutprod{i+1}{q}{i+1} \otimes C(i) \otimes 0_{x_{i+1}} \otimes 0_{y_{i+1}} \rangle$ & \\
  (c) & $\big(\bbX[q], \bbY[q]\big) \starequal H^{\otimes n}\otimes HX;$ & (A Unit) \\
      & $\langle \sum_{i = l-h}^{l-1} K_i \voutprod{i+1}{q}{i+1} \otimes C(i) \otimes I_{x_{i+1}} \otimes \voutprod{-}{y_{i+1}}{-} \rangle$ & \\
  (d) & $\langle P_h\big(\sum_{i = l+1-h}^{l} K_{i-1}\outprod{i}{i}\big) \rangle$ & $\trueOrder$ \\
  (e) & $\CALL\ \RQFS\big(q,\bbY[q]\big)$; $\langle Q\big(\sum_{i = l+1-h}^{l} K_{i-1} \outprod{i}{i}\big) \rangle$ & (a, R Subst) \\
  (f) & $\bbX[q] \starequal H^{\otimes n};$ & (A Unit) \\
      & $\langle \sum_{i = l+1-h}^{l} K_{i-1} \voutprod{i}{q}{i} \otimes C(i-1)
  \otimes H^{\otimes n} I_{x_i} H^{\otimes n} \otimes \voutprod{-}{y_i}{-} \rangle$ & \\
  (g) & $\langle \sum_{i = l+1-h}^{l} K_{i-1} \voutprod{i}{q}{i} \otimes \sum_{x_1,\ldots,x_i=0}^{2^n - 1} \bigotimes_{k = 1}^{i-1} ( \outprod{x_k}{x_k} \otimes \alpha_k )$ & $\trueEquality$ \\
      & $\otimes \: \outprod{s_{(x_1,\ldots,x_{i-1})}}{s_{(x_1,\ldots,x_{i-1})}} \otimes \voutprod{-}{y_i}{-} \rangle$ & \\
  (h) & $\big( \bbX[q],Y \big) \starequal \calG;$ & (A Unit) \\
  & $\langle \sum_{i = l+1-h}^{l} K_{i-1} \voutprod{i}{q}{i} \otimes \sum_{x_1,\ldots,x_i=0}^{2^n - 1} \bigotimes_{k = 1}^{i-1} ( \outprod{x_k}{x_k} \otimes \alpha_k )$ & \\
  & $\otimes \: \outprod{s_{(x_1,\ldots,x_{i-1})}}{s_{(x_1,\ldots,x_{i-1})}} \otimes \voutprod{-}{y_i}{-} \rangle$ & \\
  (i) & $\bbX[q] \starequal H^{\otimes n};$ & (A Unit) \\
  & $\langle \sum_{i = l+1-h}^{l} K_{i-1} \voutprod{i}{q}{i} \otimes \sum_{x_1,\ldots,x_i=0}^{2^n - 1} \bigotimes_{k = 1}^{i-1} ( \outprod{x_k}{x_k} \otimes \alpha_k )$ & \\
  & $\otimes \: H^{\otimes n} \outprod{s_{(x_1,\ldots,x_{i-1})}}{s_{(x_1,\ldots,x_{i-1})}} H^{\otimes n} \otimes \voutprod{-}{y_i}{-} \rangle$ & \\
  (j) & $\langle P_h\big(\sum_{i = l+1-h}^{l} K_{i-1}\outprod{i}{i}\big) \rangle$ & $\trueEquality$ \\
  (k) & $\CALL\ \RQFS\big(q, \bbY[q]\big)$; $\langle Q\big(\sum_{i = l+1-h}^{l} K_{i-1} \outprod{i}{i}\big) \rangle$ & (a, R Subst) \\
  (l) & $\big( \bbX[q],Y' \big) \starequal H^{\otimes n}\otimes XH;$ & (A Unit) \\
  & $\langle \sum_{i = l+1-h}^{l} K_{i-1} \voutprod{i}{q}{i} \otimes D(i - 1) \otimes \voutprod{0}{x_i}{0} \otimes \voutprod{0}{y_i}{0} \rangle$ & \\
  (m) & $\langle \sum_{i = l-h}^{l - 1} K_{i} \voutprod{i+1}{q}{i+1} \otimes D(i)\otimes I_{x_{i+1}} \otimes I_{y_{i+1}} \rangle$ & $\trueOrder$ \\
  \hline
  \end{tabular}

  \subcaption{Total correctness of the body of variable localization.}
  \label{tab_loc_tcor}
  \end{minipage}

  \vfill

  \begin{minipage}[b]{\textwidth}
  \centering
  \begin{tabular}{llr}
  \hline
      & $\langle \sum_{i = l-h}^{l-1} K_i \voutprod{i}{q}{i} \otimes C(i) \rangle$ &  \\
  (n) & $q \starequal U_{+1};$ $\langle \sum_{i = l-h}^{l-1} K_i \voutprod{i+1}{q}{i+1} \otimes C(i) \rangle$ & (A Unit) \\
  (o) & $\LOCAL{\bbX[p],Y'};\ldots;\RELEASE{\bbX[p],Y'};$ & (b-m, R Loc) \\
      & $\langle \sum_{i = l-h}^{l - 1} K_{i} \voutprod{i+1}{q}{i+1} \otimes D(i) \rangle$ & \\
  (p) & $q \starequal U_{-1}$ $\langle \sum_{i = l-h}^{l - 1} K_{i} \voutprod{i}{q}{i} \otimes D(i) \rangle$ & (A Unit) \\
  (q) & $\langle Q(K) \rangle$ & $\trueOrder$ \\
  \hline
  \end{tabular}
  \subcaption{Proof of Ass. (\ref{ass_rqfs_tcor_3}).}
  \label{tab_S_0_tcor}
  \end{minipage}

\caption{Total correctness of recursive quantum Fourier sampling.}

\label{tab_RQFS_tcor} 

\end{table}

We claim that, on any input, the main program $main$ always terminates with output
$\ket{0}_p \otimes \ket{g(s_{(\emptyset)})}_Z$.
In a formal fashion, it is claimed that
\begin{proposition}\label{prop_main_tcor}
It is the case that
\begin{eqnarray*}
  \trueOrder &\vdash_{\qTP}& \atcor{\big}{I_{p}\otimes I_{Z}}{main}{\voutprod{0}{p}{0}\otimes \voutprod{g(s_{(\emptyset)})}{Z}{g(s_{(\emptyset)})}}
\end{eqnarray*}
\end{proposition}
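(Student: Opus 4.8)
The plan is to mirror the proof of Proposition \ref{prop_main_pcor} step for step, replacing each angle‑bracket‑free triple by a total‑correctness triple and replacing the partial‑correctness rule (Rp pRec) by the total‑correctness rule (Rt pRec), with the $\sqsubseteq$‑increasing sequence of $\PQPT$s $\big\{P_h(K)\big\}_{h\geq 0}^{\sqsubseteq}$ and the premises $\big\{\Prem_h\big\}_{h\geq 0}$ already fixed in Subsec. \ref{app_RQFS_tot}. By soundness and completeness of $\qTP$ (Thm. \ref{thm_sound_complete_EqRP}) it suffices to exhibit a derivation in $\qTP$ from $\trueOrder$. First I would peel off the outer structure of $main$ exactly as in Tab. \ref{tab_main_pcor}: apply (A Init) to $(p,Z)\assnequal\ket{0}$, apply (R Case) with (A Skip) to the trailing measurement $\ifStat{\Box m\cdot M'[Z]=m\rightarrow\SKIP}$, and compose with (R Comp); this reduces the goal to $\atcor{\big}{\voutprod{0}{p}{0}\otimes\voutprod{0}{Z}{0}}{\CALL\ \RQFS(p,Z)}{\voutprod{0}{p}{0}\otimes\voutprod{g(s_{(\emptyset)})}{Z}{g(s_{(\emptyset)})}}$. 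By (R Adap) this follows from the triple on the formal parameters $q,\bbY[q]$, and then by (R Subst) with the substitution $[\voutprod{0}{q}{0}/K]$ it suffices to derive $\trueOrder\vdash_{\qTP}\tcor{P(K)}{\CALL\ \RQFS(q,\bbY[q])}{Q(K)}$.

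Next I would apply (Rt pRec) with $\big\{P_h(K)\big\}_{h\geq 0}^{\sqsubseteq}$. One checks immediately that $P_0(K)=0$ (the index sum $\sum_{i=l+1}^{l}$ is empty), that $h\mapsto P_h(K)$ is $\sqsubseteq$‑increasing (each increment of $h$ adjoins the term of one further tree level), and that $P_h(K)=P(K)$ for all $h\geq l$, whence $P(K)\sqsubseteq\bigsqcup_{h=0}^{\infty}P_h(K)$. It then remains to prove, for every $h\geq 0$,
$$\trueOrder,\Prem_h\reVdash{\qBE}\atcor{\big}{P_{h+1}(K)}{\ifStat{\Box m\cdot M[q]=m\rightarrow S_m}}{Q(K)}.$$
This I would establish by (R Case) on the three branches. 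The branch $S_2\triangleq\bottom$ contributes the $M_2$‑component of the precondition, which is $0$, so (A Bot) in the form $\tcor{0}{\bottom}{Q(K)}$ discharges it. The branch $S_1$, i.e. $\big(\bbX[1],\ldots,\bbX[l],Y\big)\starequal\calO$, is handled by (A Unit) and a $\trueOrder$ step exactly as in the partial case. The main branch $S_0$ is treated by peeling off $q\starequal U_{+1}$ and $q\starequal U_{-1}$ with (A Unit), applying (R Loc) to the block $\LOCAL{\bbX[q],Y'};\ldots;\RELEASE{\bbX[q],Y'}$, and carrying out the body derivation displayed in Tab. \ref{tab_loc_tcor}: the two inner activations $\CALL\ \RQFS(q,\bbY[q])$ are discharged from $\Prem_h$ by (R Subst) with the substitution $\big[\sum_{i=l+1-h}^{l}K_{i-1}\outprod{i}{i}\big/K\big]$, and the intervening steps are the Hadamard/Fourier identities used to justify rows (g) and (j) of Tab. \ref{tab_loc_pcor} in Subsec. \ref{app_RQFS_par} — in particular the promise $g(s_{(x_1,\ldots,x_i)})=s_{(x_1,\ldots,x_{i-1})}\cdot x_i$ and the identity $\sum_{x}H^{\otimes n}(-1)^{x\cdot y}\ket{x}=\ket{y}$ — now carried out with the summation range $\sum_{i=l+1-h}^{l}$ in place of $\sum_{i=1}^{l}$. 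Collecting the three branches by (R Case) and then (Rt pRec) yields the desired total‑correctness formula; the remaining $\trueOrder$‑steps relating the various $\PQPT$s are purely algebraic.

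The main obstacle is the index bookkeeping forced by (Rt pRec): one must verify that in the $P_{h+1}(K)$‑triple for the body, every inner activation of $\RQFS$ occurs with a precondition that is an instance, under (R Subst), of $\Prem_h$ rather than of $\Prem_{h+1}$, since this is what makes the unrolling strictly decrease. This works precisely because the tree has finite depth $l$ and the measurement $M$ routes depth‑$l$ nodes into the non‑recursive leaf branch $S_1$ and depths $>l$ into $\bottom$, so that a call at tree level $i$ only ever calls level $i+1$ and the recursion halts at level $l$; this is exactly the intuition behind defining $P_h(K)$ as the ``top $h$ levels'' of $P(K)$. Apart from this point, the proof is a routine transcription of Tabs. \ref{tab_loc_pcor}, \ref{tab_proc_pcor} and \ref{tab_main_pcor} with $\{\,\cdot\,\}$ replaced by $\langle\,\cdot\,\rangle$ and (Rp pRec) by (Rt pRec). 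As a byproduct, combining the derived formula with Thm. \ref{thm_reas_EqPR_with_prob} shows that $main$ outputs $\ket{g(s_{(\emptyset)})}$ on register $Z$ with probability $1$ on every input, i.e. the quantum solution to $\RFS$ is totally correct.
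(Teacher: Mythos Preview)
Your proposal is correct and follows essentially the same approach as the paper's own proof: reduce to the total-correctness triple for $\CALL\ \RQFS(q,\bbY[q])$, apply (Rt pRec) with the sequence $\{P_h(K)\}_{h\geq 0}$, and discharge the body triple for each $h$ by (R Case), reusing the partial-correctness computations with the truncated summation range. The only cosmetic difference is that the paper explicitly separates the stable case $h\geq l$ (where $P_h(K)=P_{h+1}(K)=P(K)$, so the body derivation is literally the partial-correctness one) from the genuinely inductive cases $0\leq h<l$, whereas you treat all $h$ uniformly; your observation that the sequence stabilizes at $P(K)$ for $h\geq l$ covers this.
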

\begin{proof}
The proof is as for Prop. \ref{prop_main_pcor}, with exception of the following assertion
\begin{eqnarray}\label{ass_rqfs_tcor_1}
  \trueOrder &\vdash_{\qTP}& \tcor{P(K)}{\CALL\ \RQFS\big(q, \bbY[q]\big)}{Q(K)}
\end{eqnarray}

Define a sequence of $\PQPT$s $\big\{ P_h(K) \big\}_{h \geq 0}$ by
\begin{eqnarray*}
  P_h(K) &\triangleq& \left\{
                    \begin{array}{ll}
                      \sum_{i=l+1-h}^l K_i \voutprod{i}{q}{i} \otimes C(i) & \hbox{$0 \leq h < l$} \\
                      P(K), & \hbox{otherwise}
                    \end{array}
                  \right.
\end{eqnarray*}
It's easy to see that
\begin{eqnarray*}
  &\models_\mathbb{I}& P_0(K) = 0 \\
  &\models_\mathbb{I}& P_h(K) \sqsubseteq P_{h+1}(K), \quad \forall h \geq 0
\end{eqnarray*}
For the sake of space savings, define a set of premises $\big\{ \Prem_h \big\}_{h \geq 0}$ by
\begin{eqnarray*}
  \Prem_h &\triangleq& \atcor{\big}{P_h(K)}{\CALL\ \RQFS(q, \bbY[q])}{Q(K)}, \quad \forall h \geq 0
\end{eqnarray*}
To prove Ass. (\ref{ass_rqfs_tcor_1}), by (Rt pRec), it suffices to show that
\begin{eqnarray}\label{ass_rqfs_tcor_2}
  \trueOrder, \Prem_h &\vdash_{\qBE}& \atcor{\big}{P_{h+1}(K)}{\ifStat{\Box m\cdot M[q] = m \rightarrow S_m}}{Q(K)}
\end{eqnarray}
for all $h \geq 0$.
The case of $h \geq l$ has been shown in the proof of Prop. \ref{prop_main_pcor},
while the remaining cases --- $0 \leq h < l$ --- can be uniformly dealt with as follows.

Fix $0 \leq h < l$.
To prove Ass. (\ref{ass_rqfs_tcor_2}), by (R Case), it suffices to show that
\begin{eqnarray}
  \trueOrder, \mathit{Prem_h} & \reVdash{\qBE} & \atcor{\bigg}{\sum_{i=l-h}^{l-1} K_i \voutprod{i}{q}{i} \otimes C(i)}{S_0}{Q(K)} \label{ass_rqfs_tcor_3} \\
  \trueOrder &\vdash_{\qBE}& \atcor{\big}{K_l \voutprod{l}{q}{l} \otimes C(l)}{S_1}{Q(K)} \label{ass_rqfs_tcor_4} \\
  \trueOrder &\vdash_{\qBE}& \atcor{\big}{0}{S_2}{Q(K)} \label{ass_rqfs_tcor_5}
\end{eqnarray}
where Ass. (\ref{ass_rqfs_tcor_5}) follows from (A Bot),
the proof of (\ref{ass_rqfs_tcor_4}) is as in proof of Prop. \ref{prop_main_pcor},
and, finally, the proof of (\ref{ass_rqfs_tcor_3}) can be adapted from that of Prop. \ref{prop_main_pcor}
by replacing $\sum_{i = 0}^{l-1}$ and $\sum_{i = 1}^{l}$ with $\sum_{i = l-h}^{l-1}$ and $\sum_{i = l+1-h}^{l}$, respectively (cf. Tab. \ref{tab_S_0_tcor}).
\end{proof}

\end{document}
\endinput